\newcommand{\DTV}[2]{d_{\mathrm{TV}}\left({#1},{#2}\right)}
\renewcommand{\epsilon}{\varepsilon}
\newcommand{\dis}{d_\mathrm{par}}
\newcommand{\Thre}{\theta}
\newcommand{\CC}{C^\text{TV}_{\mathrm{par}}}
\newcommand{\Zux}{\tilde{Z}_{S,\mu}^x (t)}
\newcommand{\Zvx}{\tilde{Z}_{S,\nu}^x (t)}
\newcommand{\ux}{\tilde{\mu}_{S,t}^x}
\newcommand{\vx}{\tilde{\nu}_{S,t}^x}
\newcommand{\TS}{T^{\textsf{sp}}}
\newcommand{\TC}{T^{\textsf{ct}}}
\newcommand{\TW}{T^{\textsf{wt}}}
\newtheorem{theorem}{Theorem}
\newtheorem{observation}[theorem]{Observation}
\newtheorem{claim}[theorem]{Claim}
\newtheorem*{claim*}{Claim}
\newtheorem{condition}[theorem]{Condition}
\newtheorem{lemma}[theorem]{Lemma}
\newtheorem{proposition}[theorem]{Proposition}
\newtheorem{corollary}[theorem]{Corollary}
\theoremstyle{definition}
\newtheorem{definition}[theorem]{Definition}
\newtheorem{remark}[theorem]{Remark}
\newtheorem{problem}[theorem]{Problem}
\newtheorem*{remark*}{Remark}
\def\Pr{\mathop{\mathbf{Pr}}\nolimits}
\renewcommand{\emptyset}{\varnothing}
\newcommand{\abs}[1]{\left\vert#1\right\vert}
 \newcommand{\tuple}[1]{\left(#1\right)} 
 \newcommand{\tp}{\tuple}
\newcommand{\defeq}{\triangleq}
\def\*#1{\mathbf{#1}} 
\def\+#1{\mathcal{#1}} 
\def\-#1{\mathrm{#1}} 
\renewcommand{\Pr}[2][]{ \ifthenelse{\isempty{#1}}
  {\mathbf{Pr}\left[#2\right]} {\mathbf{Pr}_{#1}\left[#2\right]} } 
\newcommand{\E}[2][]{ \ifthenelse{\isempty{#1}}
  {\mathbf{\mathbf{E}}\left[#2\right]}
  {\mathbf{\mathbf{E}}_{#1}\left[#2\right]} }
  \newcommand{\Var}[2][]{ \ifthenelse{\isempty{#1}}
  {\mathbf{\mathbf{Var}}\left[#2\right]}
  {\mathbf{\mathbf{Var}}_{#1}\left[#2\right]} }
\newcommand{\Ent}[2][]{ \ifthenelse{\isempty{#1}}
  {\mathbf{\mathbf{Ent}}\left[#2\right]}
  {\mathbf{\mathbf{Ent}}_{#1}\left[#2\right]} }
\title{Approximating the total variation distance between spin systems}
\date{}
\author{Weiming Feng \footnote{School of Computing and Data Science, The University of Hong Kong, emails: \texttt{wfeng@hku.hk} and \texttt{ymjessen02@connect.hku.hk}.} \and Hongyang Liu \footnote{State Key Laboratory for Novel Software Technology,
New Cornerstone Science Laboratory, Nanjing University, email: \texttt{liuhongyang@smail.nju.edu.cn}.} \and Minji Yang \footnotemark[1]}
\begin{document}

\maketitle

\begin{abstract}
    Spin systems form an important class of undirected graphical models. For two Gibbs distributions \(\mu\) and \(\nu\) induced by two spin systems on the same graph \(G = (V, E)\), we study the problem of approximating the total variation distance \(\DTV{\mu}{\nu}\) with an \(\epsilon\)-relative error. We propose a new reduction that connects the problem of approximating the TV-distance to sampling and approximate counting. Our applications include the hardcore model and the antiferromagnetic Ising model in the uniqueness regime, the ferromagnetic Ising model, and the general Ising model satisfying the spectral condition. 
    
    Additionally, we explore the computational complexity of approximating the total variation distance \(\DTV{\mu_S}{\nu_S}\) between two marginal distributions on an arbitrary subset \(S \subseteq V\). We prove that this problem remains hard even when both \(\mu\) and \(\nu\) admit polynomial-time sampling and approximate counting algorithms.

  \end{abstract}

\section{Introduction}
The total variation distance (TV-distance) is a widely used metric for quantifying the difference between two distributions. For two discrete distributions \(\mu\) and \(\nu\) defined over the sample space \(\Omega\), the TV-distance is given by  
\[
\DTV{\mu}{\nu} \defeq \frac{1}{2}\sum_{\sigma \in \Omega} \abs{\mu(\sigma) - \nu(\sigma)} = \max_{A \subseteq \Omega} \tp{\mu(A) - \nu(A)}.
\]

Alternatively, the TV-distance can be characterized as the minimum probability of \(X \neq Y\), where \(X \sim \mu\) and \(Y \sim \nu\) form a \emph{coupling} of the two distributions. These different characterizations provide a rich set of tools for analyzing the TV-distance. It is also closely related to other measures of distance between distributions, such as the Wasserstein distance and the KL-divergence~\citep{mitzenmacher2017probability}. Consequently, the TV-distance is a fundamental quantity in many applications, including randomized algorithms, statistical physics, and machine learning.

The problem of \emph{computing} the TV-distance between two distributions has garnered considerable attention in machine learning and theoretical computer science. 
A straightforward approach is to compute the TV-distance directly from its definition. This algorithm runs in \(O(|\Omega|)\) time, assuming access to the probability mass at every point in \(\Omega\). However, the problem becomes particularly interesting when the distributions have succinct representation, as the sample space \(\Omega\) can be exponentially large relative to the size of the input.
In fact, the problem is intractable for many classes of distributions. For instance, ~\cite{sahai2003complete} showed that when two distributions are specified by circuits that sample from them, deciding whether their TV-distance is small or large is complete for SZK (statistical zero-knowledge). More recently, Bhattacharyya, Gayen, Meel, Myrisiotis, Pavan, and Vinodchandran~\citep{0001GMMPV23} proved that even for pairs of product distributions, the exact computation of their TV-distance is \textbf{\#P}-complete.

%
%

On the algorithmic side, the problem of \emph{approximating} the TV-distance between two distributions with \(\epsilon\)-relative error was first studied in~\cite{0001GMMPV23}, where the authors proposed an approximation algorithm for a restricted class of product distributions. Subsequently, Feng, Guo, Jerrum, and Wang~\citep{FGJW23} developed a simple FPRAS that works for general product distributions. Later, a deterministic FPTAS was also introduced for the same task~\citep{FengLL24}.
Beyond product distributions, however, the understanding of approximating the TV-distance remains very limited. The only progress was made in~\cite{0001GMM0V24}, which showed that for two Bayesian networks on DAGs, an FPRAS for the TV-distance exists if \emph{exact} probabilistic inference can be performed in polynomial time. As a consequence, their algorithm can be applied when the underlying DAG has bounded treewidth. 
However, the requirement of exact inference is a strong assumption. For many natural graphical models without the bounded treewidth property, exact inference is itself a \(\textbf{\#P}\)-complete problem.

In this paper, we further investigate the problem of approximating the TV-distance for spin systems, which are a fundamental class of undirected graphical models. Canonical examples include the hardcore model and the Ising model. Our contributions can be summarized as:

\begin{itemize}
    \item We give a new algorithm to reduce the TV-distance approximation to sampling and approximate counting. 
    As a result, our algorithm applies to a broad class of hardcore and Ising models, even if the underlying graph has unbounded treewidth.
    \item We analyze the computational complexity of approximating the TV-distance between two \emph{marginal distributions} of spin systems. We show that this problem is \(\textbf{\#P}\)-hard, even in parameter regimes where both sampling and approximate counting are tractable.
\end{itemize}

\subsection{Approximating the TV-distance between two Gibbs distributions}
Let $G = (V,E)$ be a graph. A spin system $\mathbb{S}$ (a.k.a. Markov random field) defines a distribution over $\{-1,+1\}^V$ (denoted by $\{\pm\}^V$ in short) in the following way. It defines a weight function $w=w_{\mathbb{S}}:= \{\pm\}^V \to \mathbb{R}_{\geq 0}$ that assigns each configuration $\sigma \in \{\pm\}^V$ a weight $w(\sigma)$. The spin system $\mathbb{S}$ induces a Gibbs distribution $\mu = \mu_{\mathbb{S}}$ over $\{\pm\}^V$ such that
\begin{align*}
 \forall \sigma \in \{\pm\}^V,\quad   \mu_{}(\sigma) \defeq \frac{w(\sigma)}{Z}, \quad \text{where } Z = Z_{\mathbb{S}} \defeq \sum_{\tau \in \{\pm\}^V}w_{}(\tau) \text{ is the \emph{partition function}.}
\end{align*}
The weight function $w$ of a spin system is a product of factors associated with each vertex and edge on graph $G$, which can be computed exactly and efficiently.
For a broad class of spin systems, 
the following sampling and approximate counting oracles with $\TS_G(\epsilon),\TC_G(\epsilon) = \mathrm{poly}({n}/{\epsilon})$ exist, where $n$ denotes the number of vertices in graph $G$. 
\begin{definition}[sampling and approximate counting oracles]\label{def:oracle}
Let $\mathbb{S}$ be a spin system on graph $G$ with Gibbs distribution $\mu$ and partition function $Z$.
Let $\TS_G,\TC_G:(0,1) \to \mathbb{N}$ be two non-increasing cost functions.
\begin{itemize}
    \item We say $\mathbb{S}$ admits a sampling oracle with cost function $\TS_G$ if given any $0 <\epsilon<1$, it returns a random sample $X \in \{\pm\}^V$ in time $\TS_G(\epsilon)$ with $\DTV{X}{\mu} \leq \epsilon$.
    \item We say $\mathbb{S}$ admits an approx. counting oracle with cost function $\TC_G$ if given any $0 <\epsilon<1$, it returns a random number $\hat{Z}$ in time $\TC_G(\epsilon)$ with $\Pr{(1-\epsilon)Z\leq\hat{Z}\leq (1+\epsilon)Z}\geq 0.99$.
\end{itemize}
\end{definition}

Given two spin systems $\mathbb{S}^\mu$ and $\mathbb{S}^\nu$ on the same graph $G$ that defines two Gibbs distributions $\mu$ and $\nu$ respectively and an error bound $0 <\epsilon < 1$, we study the following problem.
\begin{center}
\emph{Given the access to sampling and approximate counting oracles for both $\mathbb{S}^\mu$ and $\mathbb{S}^\nu$, can we efficiently approximate the TV-distance $\DTV{\mu}{\nu}$ within relative error }$(1\pm \epsilon)$?
\end{center}

In this paper,
we focus on the following two canonical and extensively studied spin systems. 
\begin{itemize}
    \item \textbf{Ising model}: Let $G = (V,E)$ be a graph. Let $J \in \mathbb{R}^{V \times V}$ be a \emph{symmetric interaction matrix} such that $J_{uv} \neq 0$ only if $\{u,v\} \in E$. Let $h \in (\mathbb{R} \cup \{\pm \infty\})^V$ be the \emph{external field}. Define \emph{Hamiltonian} function
  \begin{align*}
  \forall \sigma \in \{\pm\},\quad   H(\sigma) \defeq \sum_{\{u,v\} \in E} J_{uv}\sigma_u\sigma_v + \sum_{v \in V}\sigma_v h_v = \frac{1}{2}\langle \sigma, J\sigma \rangle + \langle \sigma, h \rangle.
  \end{align*}
The weight of a configuration $\sigma$ in Ising model is defined by $w(\sigma) \defeq \exp(H(\sigma))$.
\item \textbf{Hardcore model}: Let $G = (V,E)$ be a graph. Let $\lambda = (\lambda_v)_{v \in V} \in \mathbb{R}_{\geq 0}^V$ be external fields. A configuration $\sigma \in \{\pm\}^V$ is said to be an independent set if $S_\sigma = \{v \in V \mid \sigma_v = +1\}$ forms an independent set in graph $G$. 
The hardcore model is specified by the pair $(G,\lambda)$, which defines the weight function $w$ such that
\begin{align*}
   \forall \sigma \in \{\pm\}^V, \quad w(\sigma) \defeq \begin{cases}
   \prod_{v \in V:\sigma_v = +1}\lambda_v &\text{if $\sigma$ is an independent set};\\
   0 &\text{otherwise}.
   \end{cases}
\end{align*}
\end{itemize}
The problem of approximating the TV-distance can be formalized as follows.
\begin{problem}[Ising TV-distance approximation]\label{label:prob-Ising} 
The problem is defined as follows.
\begin{itemize}
    \item \emph{Input}: two Ising models $(G,J^\mu,h^\mu)$ and $(G,J^\nu,h^\nu)$ defined on the same graph $G = (V,E)$, which specifies two Gibbs distributions $\mu$ and $\nu$ respectively, and an error bound $0<\epsilon<1$.
    \item \emph{Output}: a number $\hat{d}$ such that $(1-\epsilon) \DTV{\mu}{\nu} \leq \hat{d} \leq (1 + \epsilon)\DTV{\mu}{\nu}$.
\end{itemize}
\end{problem}

\begin{problem}[Hardcore TV-distance approximation]\label{label:prob-hardcore} The problem is defined as follows.
\begin{itemize}
    \item \emph{Input}: two hardcore models $(G,\lambda^\mu)$ and $(G,\lambda^\nu)$ defined on the same graph $G = (V,E)$, which specifies two Gibbs distributions $\mu$ and $\nu$ respectively, and an error bound $0<\epsilon<1$.
    \item \emph{Output}: a number $\hat{d}$ such that $(1-\epsilon) \DTV{\mu}{\nu} \leq \hat{d} \leq (1 + \epsilon)\DTV{\mu}{\nu}$.
\end{itemize}
\end{problem}

We are interested in the FPRAS (fully polynomial randomized approximation scheme), which solves the above problems with probability at least $2/3$\footnote{The success probability can be boosted to $1-\delta$ by independently running the algorithm $O(\log \frac{1}{\delta})$ times and taking the median of the output.} in time $\mathrm{poly}({n}/{\epsilon})$.


We need a marginal lower bound condition for general results. 
For any subset $\Lambda \subseteq V$, any feasible partial configuration $\sigma \in \{\pm\}^\Lambda$, let $\mu^\sigma$ denote the distribution $\mu$ conditional on $\sigma$. 
For any $v \in V$, let $\mu^\sigma_v$ denote the marginal distribution on $v$ projected from  $\mu^\sigma$.

\begin{condition}[marginal lower bound]\label{def:marlow}
Let $0 < b < 1$ be a parameter. We say a distribution $\mu$ over $\{\pm\}^V$ is $b$-marginally bounded if
for any  feasible partial configuration $\sigma \in \{\pm\}^\Lambda$ on a subset $\Lambda \subseteq V$, any $v \in V$, any $c \in \{\pm\}$, it holds that $\mu_v^\sigma(c) \geq b$ if $\mu_v^\sigma (c) > 0$.
\end{condition}
The marginal lower bound condition is a natural condition for spin systems. Hardcore and Ising models with constant marginal lower bound were extensively studied in sampling and approximate counting~\citep{jerrum2003counting,Wei06,Sly10,CLV21}. 

\begin{theorem}[general result]\label{thm:Ising-1}
Let $0 < b < 1$ be a constant.
There exists a randomized algorithm that solves 
\Cref{label:prob-hardcore} and \Cref{label:prob-Ising} with probability at least $\frac{2}{3}$ in time 
\begin{align*}
     O_b\tp{\frac{N^2}{\epsilon^2} \TS_G \tp{\mathrm{poly}(b) \cdot \frac{\epsilon^2}{ N^2}} + \TC_G\tp{\mathrm{poly}(b) \cdot \frac{\epsilon}{ N}}},
\end{align*}
 if two input spin systems are both $b$-marginally bounded and both admit sampling and approximate counting oracles with cost functions $\TS_G(\cdot),\TC_G(\cdot)$, 
 where $O_b(\cdot)$ hides a factor depending only on $b$,  $N = n = |V|$ for hardcore model, and $N = n + m = |V| + |E|$ for Ising model.
\end{theorem}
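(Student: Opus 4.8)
The plan is to realise the announced reduction by \emph{expanding} $\DTV{\mu}{\nu}$ along a fixed ordering and then estimating the (exact) expansion by a randomized recursive procedure that touches $\mu$ and $\nu$ only through the sampling and approximate counting oracles; the argument is generic and uses the hardcore/Ising structure only through the fact that conditioning stays inside the class. First I would discard the degenerate cases in which $\mathrm{supp}(\mu)\neq\mathrm{supp}(\nu)$ — for the hardcore model a vertex with $\lambda_v=0$ on one side and $\lambda_v>0$ on the other, for the Ising model a vertex pinned to opposite values by infinite fields — where $\DTV{\mu}{\nu}$ is either $1$ or, after fixing a forced sub-configuration, reduces to an instance with equal supports. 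In that regime every feasible partial configuration $\rho$ on $\Lambda\subseteq V$ is feasible for both models, and $\mu^\rho,\nu^\rho$ are again hardcore (resp.\ Ising) models on $G\setminus\Lambda$ (also deleting the $+1$-pinned neighbours in the hardcore case) with explicitly computable parameters; hence they inherit sampling and approximate counting oracles with the same cost functions and remain $b$-marginally bounded. This is the property that makes self-reducibility available, and it is also why one level of rejection sampling is cheap: by \Cref{def:marlow} a nonzero conditional marginal is at least $b$.

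\textbf{An exact expansion.} Writing $\DTV{\mu}{\nu}=\sum_\sigma(\mu(\sigma)-\nu(\sigma))^+$ and splitting on a single vertex $v$ (say $\mu_v(+1)\ge\nu_v(+1)$) yields the identity $\DTV{\mu}{\nu}=d_v+F_v+G_v$, where $d_v=\abs{\mu_v(+1)-\nu_v(+1)}=\DTV{\mu_v}{\nu_v}$, $F_v=\sum_{\sigma:\sigma_v=+1}(\nu(\sigma)-\mu(\sigma))^+$ and $G_v=\sum_{\sigma:\sigma_v=-1}(\mu(\sigma)-\nu(\sigma))^+$; all three terms are non-negative, and $F_v,G_v$ are one-sided discrepancy sums of exactly the same shape attached to $\mu^{v\to+1},\nu^{v\to+1}$ and to $\mu^{v\to-1},\nu^{v\to-1}$. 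Iterating this over a suitable ordering of the $N$ local factors in which $w_\mu$ and $w_\nu$ may differ ($N=n$ for hardcore, $N=n+m$ for Ising) expresses $\DTV{\mu}{\nu}$ as an \emph{exact} signed, weighted sum over the leaves of a depth-$O(N)$ binary recursion tree: each internal node contributes a conditional marginal-discrepancy term, scaled by the probability of reaching the node, and each leaf (a full configuration $\sigma$) contributes a quantity $(\mu(\sigma)-\nu(\sigma))^\pm=\mu(\sigma)\bigl(1-\tfrac{w_\nu(\sigma)}{w_\mu(\sigma)}\cdot\tfrac{Z_\mu}{Z_\nu}\bigr)^\pm$.

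\textbf{The randomized recursive estimator.} Since the tree has $2^N$ leaves, I would estimate the sum by a descent that at each internal node recurses into a \emph{single} child, chosen with probability proportional to a cheap upper bound on that child's subtree value — using estimates such as $\sum_\sigma(\alpha p(\sigma)-\beta q(\sigma))^+\le\abs{\alpha-\beta}+\min(\alpha,\beta)\DTV{p}{q}$ together with recursively maintained estimates of the conditional distances — and rescales the returned value by the reciprocal of the branching probability. A single descent then only needs: the conditional marginals $\mu^\rho_v(\cdot),\nu^\rho_v(\cdot)$ at the current site, obtained from $\mathrm{poly}(N/\epsilon)$ calls to the sampling oracles of the conditional spin systems (at precision roughly $\mathrm{poly}(b)\,\epsilon^2/N^2$, the $b$-marginal bound converting additive sample error into relative error); and, at the leaf, the single ratio $Z_\mu/Z_\nu$, from one pair of approximate counting calls (precision $\mathrm{poly}(b)\,\epsilon/N$). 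Averaging $\mathrm{poly}(N/\epsilon)$ independent descents and taking a median of a few such runs yields $\hat d$, and a Chernoff/Chebyshev bound together with a union bound over the oracle calls gives success probability $\ge\tfrac23$ and the running time in the statement.

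\textbf{The main obstacle.} The crux is that an FPRAS must run in $\mathrm{poly}(N/\epsilon)$ time \emph{even when $\DTV{\mu}{\nu}$ is exponentially small}, which rules out the naive estimator $\mathbf{E}_{\sigma\sim\mu}\bigl[(1-\nu(\sigma)/\mu(\sigma))^+\bigr]$: this variable lies in $[0,1]$ with mean $\DTV{\mu}{\nu}$, so its variance can be $\Theta(\DTV{\mu}{\nu})$ and one would need $\Theta(1/(\epsilon^2\DTV{\mu}{\nu}))$ samples. The recursive expansion is designed to avoid this: it writes $\DTV{\mu}{\nu}$ as a combination of $O(N)$ \emph{local} contributions, each either computed exactly from the weights or well-conditioned because of the $b$-marginal lower bound, so that only a $(1\pm O(\epsilon/N))$ multiplicative slack is introduced per level and compounds to $(1\pm\epsilon)$ over all $N$ levels, independently of the magnitude of $\DTV{\mu}{\nu}$. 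The two quantitative facts that make this work — that the branching upper bounds are tight enough that one descent has standard deviation only $\mathrm{poly}(N)$ times $\DTV{\mu}{\nu}$ (so $\mathrm{poly}(N/\epsilon)$ descents suffice), and that the sampling and counting errors enter the final estimate multiplicatively rather than additively — are where essentially all the technical work lies, and I expect the variance bound to be the hardest step.
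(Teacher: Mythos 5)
Your proposal is a genuinely different route from the paper's, but it contains a gap at precisely the step you flag as hardest, and that gap is not fillable along the lines you sketch without importing the paper's central quantitative tool.

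The paper's proof of \Cref{thm:Ising-1} never touches a recursion tree. After a pre-processing step that reduces to soft models and computes the marginal lower bound $b$ (\Cref{lem:alg-b}), it computes the parameter distance $\dis(\mu,\nu)$ and branches on a threshold $\theta = c_b/\mathrm{poly}(N)$. If $\dis(\mu,\nu)\geq\theta$, then \Cref{lem:TV-lower} gives $\DTV{\mu}{\nu}\geq\CC\cdot\dis(\mu,\nu)=\Omega_b(1/\mathrm{poly}(N))$, so the naive additive-error estimator $X=\max(0,1-\hat\nu(\sigma)/\hat\mu(\sigma))$ of \Cref{thm:Approximate-Gibbs} already delivers relative error after $\mathrm{poly}(N/\epsilon)$ samples. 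If $\dis(\mu,\nu)<\theta$, the paper instead estimates $\frac{Z_\mu}{2Z_\nu}\,\E[|W-\E W|]$ with $W=w_\nu(\sigma)/w_\mu(\sigma)$, $\sigma\sim\mu$; the point is that when the parameter distance is tiny, $W$ is pinned in $[1\pm O(\mathrm{poly}(N))\,\dis(\mu,\nu)]$, and combined with \Cref{lem:TV-lower} this yields $\sqrt{\Var W}\leq K\cdot\DTV{\mu}{\nu}$ with $K=\mathrm{poly}(N/b)$ (\Cref{cond:meta}, \Cref{lem:par-hardcore}, \Cref{lem:par-Ising}). That inequality, linking the fluctuations of the likelihood ratio \emph{multiplicatively} to the target quantity, is the entire engine of \Cref{thm:alg-main}, and \Cref{lem:TV-lower} is indispensable to it.

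Your plan replaces this by a depth-$N$ Karp--Luby-style descent through the tree of configurations, rescaling by the reciprocal of branching probabilities chosen ``proportional to a cheap upper bound on that child's subtree value.'' This is where the argument does not close. First, the proposed upper bound $\sum(\alpha p-\beta q)^+\leq|\alpha-\beta|+\min(\alpha,\beta)\DTV{p}{q}$ requires knowing $\DTV{p}{q}$ for the conditional models at every internal node; ``recursively maintained estimates of the conditional distances'' are the very quantities the algorithm is meant to compute, so the branching rule is circular. Second, and more fundamentally, the variance claim --- that a single descent has standard deviation $\mathrm{poly}(N)\cdot\DTV{\mu}{\nu}$ --- is asserted rather than derived, and there is no obvious reason it should hold in the regime where it matters. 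When $\DTV{\mu}{\nu}$ is exponentially small, the per-level marginal discrepancies $d_v$ and the leaf values $(\mu(\sigma)-\nu(\sigma))^{\pm}$ can all be exponentially small and can nearly cancel across branches; a self-normalized importance sampler then has relative variance governed by how tightly the branching proposal tracks the \emph{absolute} subtree masses, and no bound of the required strength follows from $b$-marginal boundedness alone. The paper gets exactly this control from the fact that in the small-$\dis$ regime the ratios $w_\nu(\sigma)/w_\mu(\sigma)$ are all within $1\pm\mathrm{poly}(N)\theta$ of one another, and then leverages \Cref{lem:TV-lower} to convert that pointwise bound into a variance bound relative to $\DTV{\mu}{\nu}$. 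Your argument uses neither the parameter distance nor \Cref{lem:TV-lower}, so there is no mechanism in it to supply the needed concentration. In short, the decomposition $\DTV{\mu}{\nu}=d_v+F_v+G_v$ and its iteration are algebraically correct, but the key quantitative fact --- and hence the proof --- is missing.
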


For the hardcore and Ising models with a constant marginal lower bound, the theorem provides the first polynomial-time reduction from approximating the TV-distance to sampling and approximate counting. Moreover, the above theorem is a simplified version, and our technique yields a stronger result that also applies to some hardcore and Ising models with a smaller marginal lower bound $b = o(1)$. See \Cref{remark:b}\ifthenelse{\boolean{conf}}{ in the Appendix}{} or a more detailed discussion.

For Ising models $\mathbb{S} = (G,J,h)$, the following conditions are well-known.
\begin{condition}\label{cond:Ising}
$\mathbb{S}$ satisfies \emph{one of} the following conditions:
\begin{itemize}
    \item Spectral condition: $\lambda_{\max}(J) - \lambda_{\min}(J) \leq 1 - \eta$ for some constant $\eta > 0$, where  $\lambda_{\max}(J)$ and $\lambda_{\min}(J)$ denote the max and min eigenvalues of $J$ respectively.
    \item Ferromagnetic interaction with consistent field condition: $J_{uv} \geq 0$ for all edge $\{u,v\} \in E$ and $h_v \geq 0$ for all $v \in V$. 
    \item Anti-ferromagnetic interaction at or within the uniqueness threshold: $J_{uv}=\beta\leq 0$ for all $\{u,v\} \in E$ and $\exp(2\beta) \geq \frac{\Delta-2}{\Delta} $, where $\Delta$ is the maximum degree of $G$.
\end{itemize}
\end{condition}

Previous works \citep{AJKPV22, FengGW23,JS93,CCYZ24,JVV86,SVV09} gave $\mathrm{poly}(\frac{n}{\epsilon})$ time sampling and approximate counting oracles for Ising models satisfying \Cref{cond:Ising}.

We have the following corollary for hardcore and Ising models from \Cref{thm:Ising-1}.


\begin{corollary}\label{Cor:Ising}
There exists an FPRAS for \Cref{label:prob-Ising} if two input Ising models both satisfy \Cref{cond:Ising} and the marginal lower bound in \Cref{def:marlow} with $b = \Omega(1)$.
\end{corollary}


Consider a hardcore model  $(G,\lambda)$. Let $\Delta \geq 3$ denote the maximum degree of $G$. The hardcore model is said to satisfy the \emph{uniqueness condition} with a constant gap $0 < \eta < 1$ if
\begin{align}\label{eq:cond-hardcore}
    \forall v \in V, \quad \lambda_v \leq (1-\eta)\lambda_c(\Delta), \quad \text{ where } \lambda_c(\Delta) \defeq \frac{(\Delta-1)^{\Delta - 1}}{(\Delta-2)^\Delta} \approx \frac{e}{\Delta}.
\end{align}
Previous works~\citep{CFYZ22,CE22,SVV09} proved that for hardcore model satisfying the uniqueness condition in~\eqref{eq:cond-hardcore}, it admits $\text{poly}(\frac{n}{\epsilon})$ time sampling and approximate counting oracles. 
To obtain a $\text{poly}(\frac{n}{\epsilon})$ time algorithm for TV-distance estimation, one way is to consider hardcore models in the uniqueness regime with constant marginal lower bound $b =\Omega(1)$, which requires that $\Delta = O(1)$ and for all $v \in V$, either $\Omega(1) \leq \lambda_v \leq (1-\eta)\lambda_c(\Delta)$ or $\lambda_v = 0$\footnote{We allow $\lambda_v = 0$ because \Cref{def:marlow} only considers the lower bound for spins $c \in \{\pm\}$ with $\mu^\sigma_v(c) > 0$.}. If both two hardcore models satisfy this condition, then the algorithm in \Cref{thm:Ising-1} solves \Cref{label:prob-hardcore} in time
\begin{align}\label{eq:time-gen}
O_b\tp{\frac{N^2}{\epsilon^2} \TS_G \tp{\mathrm{poly}(b) \cdot \frac{\epsilon^2}{ N^2}} + \TC_G\tp{\mathrm{poly}(b) \cdot \frac{\epsilon}{ N}}} = \tilde{O} \tp{\frac{n^4}{\epsilon^2}}.  
\end{align}
The equation holds because $\TS_G(\delta) = O(\Delta n\log\frac{n}{\delta})$~\citep{CFYZ22,CE22}, $\TC_G(\delta) = \tilde{O}(\frac{\Delta n^2}{\delta^2})$~\citep{SVV09}, and $\Delta = O(1)$ due to the constant marginal lower bound assumption. For these hardcore models, we can give a faster algorithm than the general results in \Cref{thm:Ising-1}.
Compared to the running time in~\eqref{eq:time-gen}, the following improved algorithm reduces a factor of $n$ in the running time.

\begin{theorem}[faster algorithm further assuming constant marginal lower bound]\label{thm:hardcore-2}
Let $0 < \eta < 1$ and $\Delta \geq 3$ be two constants.
There exists an FPRAS in time $\tilde{O}(\frac{n^3}{\epsilon^2})$ for \Cref{label:prob-hardcore} if two input hardcore models are defined on graph $G$ with maximum degree $\Delta$ and all external fields satisfy that $\Omega(1) \leq \lambda^\pi_v \leq (1 - \eta)\lambda_c(\Delta)$ or $\lambda^\pi_v = 0$ for all $v \in V$ and $\pi \in \{\mu,\nu\}$.
\end{theorem}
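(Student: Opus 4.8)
The plan is to build a refinement of the reduction underlying \Cref{thm:Ising-1}, exploiting two features special to hardcore models in the uniqueness regime with a constant marginal lower bound. First, for hardcore models the ratio of \emph{weights} $w_\nu(\sigma)/w_\mu(\sigma)=\prod_{v\in S_\sigma}\lambda^\nu_v/\lambda^\mu_v$ on the common support (the independent sets of $G$) is computable exactly in $O(n)$ arithmetic, so an approximate counting oracle is needed only to fix the single scalar $Z_\mu/Z_\nu$. Second, conditioning a hardcore model on any feasible partial configuration again yields a hardcore model on a subgraph with maximum degree $\le\Delta$ that still satisfies the uniqueness condition with gap $\eta$ and the $b$-marginal lower bound; hence Glauber dynamics still mixes in $\tilde O(\Delta n)=\tilde O(n)$ steps on every conditional model, and $O(\log\frac n\epsilon)$ conditional samples estimate any conditional single-site marginal to any prescribed constant relative accuracy. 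A simple preprocessing step (deleting vertices with $\lambda^\mu_v=\lambda^\nu_v=0$ and separating out an $O(1)$-type treatment of vertices where exactly one field vanishes) lets us assume $\lambda^\pi_v>0$ for all $v,\pi$, so that $\mu$ and $\nu$ share support and $\mu_{v}^{\sigma}(c)=0\iff\nu_{v}^{\sigma}(c)=0$.

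Fix an ordering $v_1,\dots,v_n$ of $V$. For $X\sim\mu$ write $X_{<k}=(X_{v_1},\dots,X_{v_{k-1}})$ and $R_k(X)=\nu(X_{v_1},\dots,X_{v_k})/\mu(X_{v_1},\dots,X_{v_k})$, so $R_0=1$, $R_n=\nu(X)/\mu(X)$, and $(R_k)_k$ is a martingale under $\mu$. Since $x\mapsto(1-x)^+$ is convex, $((1-R_k)^+)_k$ is a submartingale, hence
\[
\DTV{\mu}{\nu}=\E_{X\sim\mu}\!\big[(1-R_n(X))^+\big]=\sum_{k=1}^{n}a_k,\qquad a_k:=\E_{X\sim\mu}\!\big[(1-R_k(X))^+-(1-R_{k-1}(X))^+\big]\ \ge\ 0.
\]
Conditioning on $X_{<k}$ and using $R_k=R_{k-1}\cdot\nu_{v_k}^{X_{<k}}(X_{v_k})/\mu_{v_k}^{X_{<k}}(X_{v_k})$ rewrites each term as $a_k=\E_{X_{<k}\sim\mu}[h_k(X_{<k})]$, where
\[
h_k(X_{<k})=\tfrac12\Big(\textstyle\sum_{c\in\{\pm\}}\big|\mu_{v_k}^{X_{<k}}(c)-R_{k-1}(X_{<k})\,\nu_{v_k}^{X_{<k}}(c)\big|-\big|1-R_{k-1}(X_{<k})\big|\Big)\in[0,1]
\]
is an explicit function of $R_{k-1}(X_{<k})$ and the conditional single-site marginals of $\mu,\nu$ at $v_k$. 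Two structural observations drive the analysis: (i) $h_k=0$ whenever $R_{k-1}\notin[b,1/b]$ (all of $\mu_{v_k}^{X_{<k}}(c)-R_{k-1}\nu_{v_k}^{X_{<k}}(c)$ then share the sign of $1-R_{k-1}$, using the $b$-marginal lower bound on the common support), and (ii) on $\{R_{k-1}\in[b,1/b]\}$ one has $h_k\le\DTV{\mu_{v_k}^{X_{<k}}}{\nu_{v_k}^{X_{<k}}}$.

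The algorithm estimates each $a_k$ with \emph{fresh, independent} samples (re-using one sample across different $k$ would telescope back to the high-variance estimator $(1-R_n)^+$): it allocates a budget $m_k$ to term $k$, draws $m_k$ approximate samples $X\sim\mu$, builds $\widehat R_{k-1}$ as the running product of the single-site ratios $\widehat{\nu_{v_i}^{X_{<i}}(X_{v_i})}/\widehat{\mu_{v_i}^{X_{<i}}(X_{v_i})}$ (each factor from a few conditional samples, with $\widehat{Z_\mu/Z_\nu}$ supplied once by the counting oracle at relative precision $\Theta(\epsilon)$, costing $\tilde O(n^2/\epsilon^2)$ in total), plugs $\widehat R_{k-1}$ and the conditional-marginal estimates into the formula for $h_k$, and outputs the empirical mean $\hat d$ of $\sum_k\widehat h_k$. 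Each evaluation of $\widehat h_k$ costs $\tilde O(k)$ conditional-sampling runs of cost $\tilde O(n)$, i.e.\ $\tilde O(n^2)$, so with a total sample budget of $\tilde O(n/\epsilon^2)$ the running time is $\tilde O(n^3/\epsilon^2)$.

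I expect the error analysis to be the main obstacle, and inside it the crucial ingredient to be a \textbf{variance lemma} of the form $\sum_{k=1}^n\E_{X\sim\mu}[h_k(X_{<k})^2]=O_{b,\eta}\!\big(\mathrm{poly}(n)\cdot\DTV{\mu}{\nu}^2\big)$ --- note the quadratic dependence on $\DTV{\mu}{\nu}$, which is exactly what keeps the number of samples polynomial even when $\DTV{\mu}{\nu}$ is exponentially small (the naive single estimator only gives relative variance $1/\DTV{\mu}{\nu}$). Granting it, $\Var[\hat d]\le\sum_k\E[h_k^2]/m_k$ together with the allocation $m_k\propto\sqrt{\E[h_k^2]}$ gives $\Var[\hat d]\le\big(\sum_k\sqrt{\E[h_k^2]}\big)^2/m\le n\sum_k\E[h_k^2]/m$, so a budget $m=\tilde O(\mathrm{poly}(n)/\epsilon^2)$ suffices for an $\epsilon$-relative estimate by Chebyshev (with a constant-factor preliminary estimate of $\DTV{\mu}{\nu}$ used to fix the $m_k$ and to boost confidence by median-of-means). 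I would prove the variance lemma from (i)--(ii) above plus the change of measure $\E_{X_{<k}\sim\mu}[R_{k-1}\,g(X_{<k})]=\E_{X_{<k}\sim\nu}[g(X_{<k})]$, which turns $\sum_k\E_\mu|R_k-R_{k-1}|$ into $2\sum_k\E_{X_{<k}\sim\nu}[\DTV{\mu_{v_k}^{X_{<k}}}{\nu_{v_k}^{X_{<k}}}]$, and then relates this sum back to $\DTV{\mu}{\nu}$ via a Cauchy--Schwarz/chain-rule estimate using that the likelihood ratio lies in $[b^{\,n},b^{-n}]$; the hard part is squeezing out the $\DTV{\mu}{\nu}^2$ (rather than $\DTV{\mu}{\nu}$) on the right, for which the "boundary-layer" fact (i) --- that $h_k$ vanishes unless $R_{k-1}$ is pinned near $1$ within a window of width comparable to $\DTV{\mu_{v_k}^{X_{<k}}}{\nu_{v_k}^{X_{<k}}}$ --- is essential. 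The remaining error sources are routine: the sampling bias is absorbed as in \Cref{thm:Ising-1} by running the samplers to TV-error $\mathrm{poly}(b)\,\epsilon^2/n^2$, and the bias from using estimated conditional marginals inside the $1$-Lipschitz formula for $h_k$ is controlled by taking each single-site-ratio estimate to relative error $\epsilon/\mathrm{poly}(n)$ and noting that $\widehat h_k$ and $h_k$ can both be nonzero only when $R_{k-1}=\Theta_b(1)$, so the accumulated bias is at most $\tfrac\epsilon2\DTV{\mu}{\nu}$.
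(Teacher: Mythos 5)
The proposal takes a genuinely different route from the paper, and it has two substantive gaps that prevent it from establishing the theorem as stated.

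\emph{Comparison to the paper's proof.} The paper's proof of \Cref{thm:hardcore-2} splits on whether $\dis(\mu,\nu)$ is below or above a threshold $\theta_0=\Theta(1/n)$. Below the threshold it invokes \Cref{thm:alg-main} with the estimator $W=w_\nu(\sigma)/w_\mu(\sigma)$, $\sigma\sim\mu$, using the variance bound $\Var{W}=O(n^2)\cdot\DTV{\mu}{\nu}^2$ coming from~\eqref{eq:w-var}, and gets $\tilde O(n^3/\epsilon^2)$. Above the threshold it reuses the additive-error estimator of \Cref{thm:Approximate-Gibbs}, but the bottleneck there is the single call $\TC_G(\Theta(\epsilon D))=\tilde O(n^4/\epsilon^2)$; the paper replaces that call with a bespoke telescoping-product (annealing) estimator of $Z_\nu/Z_\mu$ through $\ell=\Theta(1+nD)$ interpolating hardcore models, with $\E[W^2]/(\E W)^2=O(1)$, so $\tilde O(1/(\epsilon D)^2)$ samples of cost $\tilde O(n\ell)$ each suffice, yielding $\tilde O(n^3/\epsilon^2)$. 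You instead build a martingale/chain-rule decomposition $\DTV{\mu}{\nu}=\sum_k a_k$ with $a_k=\E_\mu[h_k(X_{<k})]\ge0$ and estimate term by term. Your structural facts (i) and (ii) about $h_k$ are correct, and the submartingale observation is a nice way to get nonnegativity of the $a_k$; this is a legitimately different and interesting decomposition of the TV-distance.

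\emph{Gap 1: the variance lemma is not proven.} The entire argument rests on $\sum_k\E_\mu[h_k^2]=O_{b,\eta}(\mathrm{poly}(n)\cdot\DTV{\mu}{\nu}^2)$, which you explicitly flag as unestablished. Your proposed route to it invokes the bound $R_{k-1}\in[b^n,b^{-n}]$, which is exponential and hence useless for a polynomial variance bound, and the ``boundary-layer'' fact (i) excludes the constant-width region $R_{k-1}\notin[b,1/b]$, which gives no leverage precisely in the interesting regime where $\dis(\mu,\nu)$ is tiny and $R_{k-1}$ is already concentrated near $1$. What would make the bound plausible is (ii) plus a bound $\DTV{\mu^{X_{<k}}_{v_k}}{\nu^{X_{<k}}_{v_k}}=O(\dis(\mu,\nu))$ uniformly over $X_{<k}$, followed by \Cref{lem:TV-lower}; but that conditional single-site TV estimate is itself nontrivial (the conditional marginal at $v_k$ marginalizes over $v_{k+1},\dots,v_n$) and is not argued. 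Without this lemma the proposal has no proof.

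\emph{Gap 2: inconsistent accuracy/budget for $\widehat R_{k-1}$.} You propose to estimate $R_{k-1}$ as a product of $k-1$ ratios of conditional single-site marginals. To control multiplicative error accumulation in a $(k-1)$-fold product you need each ratio to relative error $O(\epsilon/n)$, which (since each marginal is only $\Omega(b)$) requires $\tilde\Theta(n^2/\epsilon^2)$ conditional samples per factor. That contradicts your running-time accounting, which assigns ``a few'' conditional samples per factor, i.e.\ $\tilde O(k)$ sampling runs per evaluation of $\widehat h_k$; with constant relative error per factor, $\widehat R_{k-1}$ could deviate by a factor $2^{\Theta(k)}$. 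In a later sentence you say you \emph{do} take each ratio to relative error $\epsilon/\mathrm{poly}(n)$, but then the $\tilde O(n^3/\epsilon^2)$ claim no longer follows from the budget you wrote down. Also, $R_{k-1}$ is purely a product of conditional probabilities (with $R_0=1$) and does not involve $Z_\mu/Z_\nu$, so the role you assign to the approximate counting oracle in supplying $\widehat{Z_\mu/Z_\nu}$ is not clear.

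In short: the decomposition is a genuinely different and potentially useful idea, but as written the proof has a missing key lemma and an error/budget mismatch, either of which is fatal to the stated running-time bound.
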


\Cref{thm:hardcore-2} works for hardcore model in the uniqueness regime. However, it additionally requires a marginal lower bound $b = \Omega(1)$.
The following improved algorithm removes the marginal lower bound requirement, thus it works for the \emph{whole} uniqueness regime.

\begin{theorem}[improved algorithm for the
whole uniqueness regime]\label{thm:hardcore-1}
Let $0 < \eta < 1$ be a constant.
 There exists an FPRAS in time $\tilde{O}_\eta(\frac{\Delta n^7}{\epsilon^{5/2}})$ for \Cref{label:prob-hardcore}  if two input hardcore models both satisfy the uniqueness condition with a constant gap $\eta > 0$, where $n$ is the number of vertices, $\Delta$ is the maximum degree, and $\tilde{O}_\eta(\cdot)$ hides a constant factor depending on $\eta$ and a $\mathrm{polylog}(\frac{n}{\epsilon})$ factor.
\end{theorem}


The conditions required by \Cref{Cor:Ising}, \Cref{thm:hardcore-2}, and \Cref{thm:hardcore-1} are arguably significant conditions for TV-distance estimation.
Let $\Delta \geq 3$ be a constant.
For the hardcore model beyond the uniqueness regime such that $\lambda_v > \lambda_c(\Delta)$ for all $v \in V$, polynomial-time sampling and approximate counting oracles do not exist unless $\textbf{NP}=\textbf{RP}$~\citep{Sly10}. Moreover, the technique in~\cite{BGMMPV24ICLR} can show that unless $\textbf{NP}=\textbf{RP}$, there is no FPRAS for \Cref{label:prob-hardcore} if input hardcore models are beyond the uniqueness condition. For the completeness, we give a simplified proof in~\Cref{app:proof}. The hardness result holds even for \emph{additive-error} approximation.

To see the significance of condition in  \Cref{Cor:Ising}, consider the following family of Ising models.
Let $\mathbb{S}^\mu = (G,J,h^\mu)$ and $\mathbb{S}^\nu = (G,J,h^\nu)$ be two Ising models defined on the same graph $G$ and have the same interaction matrix $J$.
Assume that $G$ has constant maximum degree $\Delta$; 
for all edges $\{u,v\}$ in $G$,
$J_{uv} = \beta < 0$ is a unified negative constant; 
and for all vertices $v \in V$, $h^\mu_v$ and $h^\nu_v$ can take values from set $\{\pm\infty,0\}$. This family of Ising models has a constant marginal lower bound $b = \Omega(1)$. We have the following two results:
\begin{itemize}
    \item if $\exp(2\beta) \geq \frac{\Delta-2}{\Delta}$, then FPRAS for TV-distance exists by \Cref{Cor:Ising};
    \item if $\exp(2\beta) < \frac{\Delta-2}{\Delta}$, there is no FPRAS for TV-distance unless $\textbf{NP}=\textbf{RP}$, the hardness result holds even for approximating the TV-distance with \emph{additive} error.
\end{itemize}
Again, the hardness result can be proved by the technique in \cite{BGMMPV24ICLR}. We give a simple proof in \Cref{app:proof} for the completeness.

\subsection{Approximating the TV-distance between two marginal distributions}
One natural extension is to approximate the TV-distance between two marginal distributions on a subset of vertices.
We use the hardcore model as an example to state our results on this problem. The same results can be extended to Ising model using a similar reduction.

\begin{problem}[$k$-marginal TV-distance approximation]\label{label:prob-mar} 
Let $k:\mathbb{N} \to \mathbb{N}$ be a function. 
\begin{itemize}
    \item \emph{Input}: two hardcore models $(G,\lambda^\mu)$ and $(G,\lambda^\nu)$ defined on the same graph $G = (V,E)$, which specifies two Gibbs distributions $\mu$ and $\nu$ respectively, a subset $S \subseteq V$ such that $|S| = k(n)$, where $n = |V|$, and an error bound $\epsilon > 0$.
    \item \emph{Output}: a number $\hat{d}$ such that $ \frac{\DTV{\mu_S}{\nu_S}}{1+\epsilon} \leq \hat{d} \leq (1 + \epsilon)\DTV{\mu_S}{\nu_S}$, where $\mu_S$ and $\nu_S$ are marginal distributions on $S$ projected from $\mu$ and $\nu$ respectively.
\end{itemize}
\end{problem}
In particular, if the function $k(n) = n$, then \Cref{label:prob-mar} is the same as \Cref{label:prob-hardcore}.

We show that the problem is hard when $k(n) = 1$. In this case, the problem is to approximate the TV-distance between two marginal distributions at a single vertex. The hardness result holds even if two input hardcore models lie in the uniqueness regime, where both sampling and approximate counting are tractable in polynomial time.
\begin{theorem}\label{thm:one-vertex}
Let $k(n) = 1$ for all $n \in \mathbb{N}$ be a constant function.  The $k$-marginal TV-distance approximation is \textbf{\#P}-Hard when two input hardcore models both satisfy the uniqueness condition, the hardness result holds even if $\epsilon = \mathrm{poly}(n)$, where  $n$ is number of vertices in $G$.
\end{theorem}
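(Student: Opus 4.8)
The plan is to reduce from a known \textbf{\#P}-hard counting problem — counting independent sets (equivalently, evaluating the hardcore partition function $Z$) on a graph of bounded degree, which is \textbf{\#P}-hard even in the uniqueness regime by the interpolation/reduction machinery behind \cite{Sly10} — to the one-vertex marginal TV-distance problem. The key idea is that the marginal distribution at a single vertex $v$ is a Bernoulli random variable, and $\DTV{\mu_v}{\nu_v} = \abs{\mu_v(+1) - \nu_v(+1)}$; since each $\mu_v(+1)$ is a ratio of two partition functions (the partition function of the graph with $v$ pinned to $+1$ over the full partition function), the TV-distance at a vertex is a simple algebraic expression in partition function values. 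The strategy is to construct, from an instance $H$ of the \textbf{\#P}-hard problem, two hardcore models on an auxiliary graph $G$ and a distinguished vertex $v$, both in the uniqueness regime, so that $\DTV{\mu_v}{\nu_v}$ encodes the desired count, and so that a $(1\pm\epsilon)$-relative approximation of this distance with $\epsilon = \mathrm{poly}(n)$ still pins down the count. Note that a relative approximation is very weak when the true distance is tiny, so the construction must guarantee the distance is either $0$ or bounded below by $1/\mathrm{poly}(n)$ (or, better, arrange a gap so that distinguishing among a polynomial number of possible distance values suffices, exploiting that $\epsilon=\mathrm{poly}(n)$ is allowed but the count is recoverable bit-by-bit through repeated queries with modified gadgets).

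The concrete steps I would carry out are: (1) Fix the source problem: computing $Z_{\mathbb{S}_H}$ for a hardcore model $\mathbb{S}_H = (H,\lambda)$ on a graph $H$ of maximum degree $\Delta$ with $\lambda = (1-\eta)\lambda_c(\Delta)$, which is \textbf{\#P}-hard; alternatively reduce from an even more basic \textbf{\#P}-complete problem and first reduce it to hardcore partition-function evaluation in the uniqueness regime via standard gadgetry. (2) Build the gadget: attach $H$ (or a pinned copy of it) to a distinguished vertex $v$ — e.g., let $v$ be a new vertex joined by a path or a direct edge to a selected vertex or "output gate" of $H$ — and choose $\lambda^\mu, \lambda^\nu$ so that the two models differ only in a controlled way (for instance $\mu$ uses the real instance and $\nu$ uses a trivial/empty instance), so that $\mu_v(+1)$ equals a known function of $Z_{\mathbb{S}_H}$ and $\nu_v(+1)$ is an explicitly computable constant. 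Throughout, verify the uniqueness condition $\lambda^\pi_u \le (1-\eta)\lambda_c(\Delta')$ holds at every vertex of $G$ where $\Delta'$ is the new max degree — this may force adding dummy vertices to raise degrees uniformly, or rescaling fields, without changing $\Delta'$ past a constant. (3) Invert the relation: express $Z_{\mathbb{S}_H}$ (or one bit of it) as a function of $\DTV{\mu_v}{\nu_v}$, and argue that because $Z_{\mathbb{S}_H}$ is an integer in a known range $[1, c^{\abs{V(H)}}]$, a relative approximation with $\epsilon = \mathrm{poly}(n)$, combined with polynomially many calls to the oracle on a family of shifted instances (adding edges/vertices to shift the "scale" so that each desired bit lands in the resolvable regime), recovers $Z_{\mathbb{S}_H}$ exactly in polynomial time. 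This is the standard "binary search / bit extraction with an approximate oracle" trick; the extra freedom of $\epsilon=\mathrm{poly}(n)$ is absorbed by making each individual query have a $\Theta(1)$-resolvable answer.

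The main obstacle is point (3): a relative-error approximation of a TV-distance conveys essentially no information when the distance is exponentially small, so the reduction cannot simply read off $Z_{\mathbb{S}_H}$ from a single query. The crux is to design a \emph{family} of polynomially many gadget graphs $G^{(i)}$, each still in the uniqueness regime, such that the $i$-th query's TV-distance is a $\Theta(1/\mathrm{poly}(n))$-separated quantity revealing the $i$-th "digit" of $Z_{\mathbb{S}_H}$ in some mixed-radix representation — in effect simulating an exact oracle for $Z_{\mathbb{S}_H}$ using a polynomial-time Turing reduction with a $\mathrm{poly}(n)$-relative-error TV-distance oracle. Getting the gadget weights and the degree bookkeeping to simultaneously (a) stay within the uniqueness threshold, (b) keep each queried distance bounded below by an inverse polynomial, and (c) let the digits be reassembled in polynomial time, is the delicate part; once that scheme is in place, the \textbf{\#P}-hardness of $Z_{\mathbb{S}_H}$ transfers immediately and \Cref{thm:one-vertex} follows.
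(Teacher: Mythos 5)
You have the right architecture — reduce from \#P-hard exact counting of independent sets, exploit that the one-vertex marginal TV-distance is just $|p-\alpha|$ for a computable calibration value $\alpha$, and try to pin down $p$ to exponential precision with polynomially many oracle calls. The paper does precisely this: for the conditional marginal $p_i = \mu^{(i)}_i(-1)$ it uses a second hardcore model $\nu^\alpha$ on the same subgraph $G_i$ whose external field is $\alpha/(1-\alpha)$ at vertex $i$ and $0$ everywhere else, which trivially satisfies uniqueness and gives $\nu^\alpha_i(+1)=\alpha$ exactly, so $\DTV{\mu^{(i)}_i}{\nu^\alpha_i} = |p_i - \alpha|$.

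Where your proposal has a genuine gap is in step (3), which you yourself flag as ``the delicate part.'' You sketch a non-adaptive-sounding ``bit extraction with shifted gadget graphs'' scheme in which each query's TV-distance is $\Theta(1/\mathrm{poly}(n))$-separated. This does not obviously work: $p_i$ may have $\Theta(n)$ significant bits, the oracle returns only $|p_i-\alpha|$ with no sign information, and a $\mathrm{poly}(n)$-relative-error guarantee on a quantity near $0$ still carries essentially no information about the sign of $p_i-\alpha$. So any scheme that at some point must determine whether $\alpha$ sits above or below $p_i$ from a single noisy query will stall. The paper resolves this with an \emph{adaptive contraction} that never needs the sign: initialize $\alpha_0 = 1/2 \geq p_i$ and iterate $\alpha_{t+1} \gets \alpha_t - \hat d/(1+\epsilon)$. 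The one-sided part of the relative-error guarantee, $\hat d/(1+\epsilon) \leq |p_i-\alpha_t| = \alpha_t - p_i$, guarantees the iterate never overshoots below $p_i$ (so the absolute value always opens the same way), while the other side, $\hat d \geq (\alpha_t-p_i)/(1+\epsilon)$, guarantees the gap contracts by a factor $1 - 1/(1+\epsilon)^2$ each step. With $\epsilon = \mathrm{poly}(n)$ this is $1 - 1/\mathrm{poly}(n)$, so $O(n\,\mathrm{poly}(n))$ iterations drive the error below $2^{-40n}$, which — since $Z = \prod_i p_i^{-1}$ is an integer at most $2^n$ — suffices to recover $Z$ exactly by rounding. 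That invariant-maintaining, sign-free, multiplicative-progress iteration is the missing idea; without it the ``digit extraction'' plan is unsupported. (Also, a small citation slip: exact counting of independent sets in degree-$3$ graphs is \#P-complete by Dyer–Greenhill, not by the inapproximability machinery of Sly, which concerns the non-uniqueness regime.)
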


The above theorem is for marginal distributions at one vertex. One can simply lift the result to the marginal distributions on a set of vertices.
In particular, we have the following corollary.

\begin{corollary}\label{thm:many-vertex}
Let $0 < \alpha < 1$ be a constant and $k(n) = n - \lceil n^\alpha \rceil$. The $k$-marginal TV-distance approximation is \textbf{\#P}-Hard when two input hardcore models both satisfy the uniqueness condition, the hardness result holds even if $\epsilon = \mathrm{poly}(n)$, where  $n$ is number of vertices in $G$.      
\end{corollary}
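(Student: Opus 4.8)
The plan is to derive \Cref{thm:many-vertex} from \Cref{thm:one-vertex} by a simple padding reduction. Given an instance of the $1$-marginal TV-distance approximation problem—two hardcore models $(G', \lambda'^{\mu}), (G', \lambda'^{\nu})$ on a graph $G'$ with $n' = |V(G')|$ vertices, both in the uniqueness regime, a target vertex $v^\star \in V(G')$, and error parameter $\epsilon'$—I would construct a new graph $G$ by adding a large number of isolated ``dummy'' vertices to $G'$, and carry the same external fields on the copy of $G'$ while assigning dummy vertices an external field that keeps them in the uniqueness regime (for instance $\lambda = 0$, or any fixed constant below $\lambda_c(\Delta)$; note adding isolated vertices does not increase the maximum degree $\Delta$, so the uniqueness condition is preserved). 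Concretely, to hit the bound $k(n) = n - \lceil n^\alpha\rceil$, I want $|S| = n - \lceil n^\alpha \rceil$ where $S$ contains everything \emph{except} the single target vertex $v^\star$; that forces $n - \lceil n^\alpha\rceil = n - 1$ to fail, so instead I take $S = V(G) \setminus \{v^\star\}$ and choose the number of dummy vertices so that $\lceil n^\alpha \rceil = n - |S| $. Since $|S| = n-1$ would require $\lceil n^\alpha\rceil = 1$, which is false for large $n$, the right construction is to let $S$ consist of $v^\star$ together with enough of the graph that $|V \setminus S|$ equals $\lceil n^\alpha\rceil$; see the next paragraph for the correct bookkeeping.

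Here is the correct padding. Let the new vertex set be $V = V(G') \cup D$ where $D$ is a set of isolated dummy vertices, and set $n = |V| = n' + |D|$. We must choose $|D|$ so that $n' + |D| = n$ satisfies $\lceil n^\alpha \rceil \geq n'$, i.e. the ``small complement'' $V \setminus S$ of size $\lceil n^\alpha\rceil$ has room to contain all of $V(G')$. Then set $S = D' $, where $D' \subseteq D$ is chosen so that $|V \setminus S| = n' + (|D| - |D'|) = \lceil n^\alpha \rceil$; this is possible exactly when $n' \leq \lceil n^\alpha\rceil \leq n$, which we arrange. The key point is that $S \subseteq D$ consists only of isolated dummy vertices, so $\mu_S$ and $\nu_S$ are \emph{product} distributions over those vertices determined solely by the dummy external fields, which we choose identically for $\mu$ and $\nu$; hence $\DTV{\mu_S}{\nu_S} = 0$ identically. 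That makes $S \subseteq D$ useless as a reduction target. So instead I should put the target vertex $v^\star$ \emph{into} $S$: take $S = \{v^\star\} \cup D'$ with $|D'| = |S| - 1$, and $V \setminus S = (V(G') \setminus \{v^\star\}) \cup (D \setminus D')$ of size $(n'-1) + (|D| - |D'|)$, which we set equal to $\lceil n^\alpha\rceil$. Because the dummy vertices in $D'$ are isolated and carry the \emph{same} external field in both models, the marginal $\mu_S$ factorizes as $\mu_{\{v^\star\}} \otimes (\text{fixed product measure on } D')$, and likewise for $\nu_S$; therefore $\DTV{\mu_S}{\nu_S} = \DTV{\mu_{\{v^\star\}}}{\nu_{\{v^\star\}}}$ exactly. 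Thus an FPRAS for the $k$-marginal problem on $(G, \lambda^\mu, \lambda^\nu, S, \epsilon)$ immediately yields one for the $1$-marginal problem on $(G', \lambda'^\mu, \lambda'^\nu, v^\star, \epsilon)$ with the same multiplicative error.

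It remains to check that the parameters line up so that the reduction is polynomial-time and $k(n) = n - \lceil n^\alpha\rceil$ holds exactly. We need $|D'| = |S| - 1$ and $|S| = k(n) = n - \lceil n^\alpha\rceil$, so $|D'| = n - \lceil n^\alpha \rceil - 1$, and then $|D| \geq |D'|$, i.e. $|D| \geq n - \lceil n^\alpha\rceil - 1$, while also $|D| = n - n'$. Substituting, we require $n - n' \geq n - \lceil n^\alpha\rceil - 1$, i.e. $\lceil n^\alpha\rceil \geq n' - 1$; since $\alpha > 0$ is a constant, $\lceil n^\alpha\rceil \to \infty$, so for any fixed $n'$ we may pick $n$ large enough (polynomially related: e.g. $n = \Theta((n')^{1/\alpha})$ suffices, and this is polynomial since $\alpha$ is a constant) to satisfy this. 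Finally, with $n$ so chosen, set $|D| = n - n'$ isolated dummy vertices, pick any subset $D' \subseteq D$ of size $n - \lceil n^\alpha\rceil - 1$ (valid since $|D| = n - n' \geq n - \lceil n^\alpha \rceil - 1$ when $\lceil n^\alpha\rceil \geq n' - 1$, and also $n - \lceil n^\alpha\rceil - 1 \geq 0$ for large $n$), and output $S = \{v^\star\}\cup D'$, which has size exactly $n - \lceil n^\alpha\rceil = k(n)$ as required. The construction runs in time polynomial in $n$, preserves the uniqueness condition (isolated vertices with fields below $\lambda_c(\Delta)$), and preserves $\epsilon = \mathrm{poly}(n')$ up to the polynomial blowup $n = \mathrm{poly}(n')$, so $\epsilon = \mathrm{poly}(n)$ as well. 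The only mild subtlety—the ``main obstacle,'' such as it is—is the bookkeeping to ensure $n^\alpha$ is large enough relative to $n'$ while keeping $n$ polynomially bounded; once $\alpha$ is treated as a fixed constant this is routine, and the hardness then follows directly from \Cref{thm:one-vertex}.
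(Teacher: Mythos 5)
Your proposal is correct and takes essentially the same approach as the paper: pad with isolated dummy vertices (with identical external fields in both models), take $S$ to consist of the target vertex together with enough dummies so that $|S| = n - \lceil n^\alpha\rceil$, observe that the isolated vertices factor out of both marginals so $\DTV{\mu_S}{\nu_S} = \DTV{\mu_{\{v^\star\}}}{\nu_{\{v^\star\}}}$, and check that the polynomial blowup $n = \mathrm{poly}(n')$ preserves \#P-hardness and the $\epsilon = \mathrm{poly}(n)$ regime. The only cosmetic difference is that the paper applies the padding directly inside the construction from the proof of \Cref{thm:one-vertex}, whereas you treat \Cref{thm:one-vertex} as a black box; both yield the same reduction.
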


The proofs of the hardness results are given in \Cref{sec:hard}\ifthenelse{\boolean{conf}}{ of the Appendix}{}.
The proof constructs a Turing reduction that exactly counts the number $Z$ of independent sets in graphs with a maximum degree of 3, a problem known to be \textbf{\#P}-complete~\citep{DyerG00}. Specifically, we show that if one can efficiently solve the problem stated in \Cref{thm:one-vertex}, then it is possible to efficiently estimate the probability that a vertex $v$ is included in a uniformly random independent set, with an \emph{exponentially} small relative error of $4^{-n}$. This, in turn, solves the \emph{exact} counting problem by using the self-reducibility~\citep{JVV86} property of the hardcore model and the fact that $Z \leq 2^n$.

Now, let us compare our hardness results with that in \cite{BGMMPV24ICLR}. The hardness results in \cite{BGMMPV24ICLR} are for approximating the TV-distance between two \emph{entire Gibbs distributions}. In contrast, our hardness results are for approximating the TV-distance between two \emph{marginal distributions}. The hardness results in \cite{BGMMPV24ICLR} are \textbf{NP}-hard results. It considered the Gibbs distributions in a parameter regime where sampling and approximate counting are intractable. In contrast, our hardness results are \textbf{\#P}-hard results. Our hardness results holds even if sampling and approximate counting can both be solved in polynomial time.

Finally, consider approximating the TV-distance between two \emph{marginal distributions} with \emph{additive error} $\epsilon$. We show that this relaxed problem admits FPRAS if two input hardcore models both satisfy the uniqueness condition. 

\begin{theorem}\label{thm:many-vertex-alg}
There exists a randomized algorithm such that given two hardcore distributions $\mu$ and $\nu$ on the same graph $G=(V,E)$ with $n = |V|$, any subset $S \subseteq V$, and any $0 < \epsilon < 1$, if $\mu$ and $\nu$ both satisfy~\eqref{eq:cond-hardcore}, it returns a random number $\hat{d}$ in time $\frac{\Delta n^2}{\epsilon^4} \cdot \mathrm{polylog}(\frac{n}{\epsilon})$ such that \[\Pr{\big|\hat{d} - \DTV{\mu_S}{\nu_S}\big|\leq \epsilon} \geq \frac{2}{3}.\]
\end{theorem}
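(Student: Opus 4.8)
The plan is to express $\DTV{\mu_S}{\nu_S}$ as an expectation that can be estimated by Monte Carlo sampling, where each sample is produced using the sampling oracle for hardcore models in the uniqueness regime (which exists by \cite{CFYZ22,CE22,SVV09}). Recall the identity $\DTV{\mu_S}{\nu_S} = \E[\tau \sim \mu_S]{\max\set{0, 1 - \nu_S(\tau)/\mu_S(\tau)}}$, so if we could evaluate the ratio $\nu_S(\tau)/\mu_S(\tau)$ at a sampled $\tau$, an empirical average of i.i.d.\ samples would give an unbiased estimator bounded in $[0,1]$, and a Hoeffding/Chernoff bound would then guarantee additive error $\epsilon$ with $O(\epsilon^{-2}\log\frac{1}{\delta})$ samples. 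The obstacle is that the marginal masses $\mu_S(\tau)$ and $\nu_S(\tau)$ are themselves $\#\mathbf{P}$-hard to compute (this is exactly the content of \Cref{thm:one-vertex}), so we cannot evaluate the ratio exactly.

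The workaround is to observe that for additive error it suffices to estimate the ratio $\nu_S(\tau)/\mu_S(\tau)$ to a small \emph{multiplicative} accuracy, and this ratio \emph{is} accessible through approximate counting / self-reducibility even though the individual marginals are hard. Concretely, writing $S = \set{s_1,\dots,s_k}$ and revealing the coordinates of $\tau$ one at a time, $\mu_S(\tau) = \prod_{i=1}^{k} \mu_{s_i}^{\tau_{<i}}(\tau_{s_i})$ where $\tau_{<i}$ is the partial configuration on $s_1,\dots,s_{i-1}$; each conditional marginal $\mu_{s_i}^{\tau_{<i}}(\tau_{s_i})$ is a ratio of two hardcore partition functions (on graphs obtained by pinning/deleting vertices, which stay in the uniqueness regime), hence can be estimated to relative error $\epsilon'$ using the approximate counting oracle. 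Taking the product over the (at most $n$) coordinates and doing the same for $\nu$, we obtain an estimate $\hat{r}$ of $\nu_S(\tau)/\mu_S(\tau)$ with relative error $O(n\epsilon')$; choosing $\epsilon' = \Theta(\epsilon/n)$ makes the relative error $O(\epsilon)$, and since the quantity $\max\set{0,1-r}$ is $1$-Lipschitz in $r$ on the relevant range, the per-sample additive error introduced is $O(\epsilon)$. One must also boost the per-ratio success probability from $0.99$ to $1 - \mathrm{poly}(\epsilon/n)$ by median-of-$O(\log\frac{n}{\epsilon})$ repetitions so that a union bound over the $O(n)$ counting calls and the $O(\epsilon^{-2})$ samples succeeds. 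Combining the $O(\epsilon^{-2})$-sample Monte Carlo average over the per-sample error of $O(\epsilon)$ yields total additive error $O(\epsilon)$; rescaling $\epsilon$ by a constant gives the claim.

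Tracking the running time: drawing one sample from the oracle costs $\tilde O(\Delta n)$ \cite{CFYZ22,CE22} (with TV error $\ll \epsilon$, contributing a negligible bias); evaluating one ratio $\hat r$ costs $O(n)$ counting calls, each with accuracy $\Theta(\epsilon/n)$ and boosted confidence, for a total of $\tilde O(n \cdot \frac{n^2}{(\epsilon/n)^2}) = \tilde O(n^5/\epsilon^2)$ per sample if one uses the $\tilde O(\frac{\Delta n^2}{\delta^2})$ counting oracle of \cite{SVV09} --- but a closer accounting, using that the product telescopes so that one can amortize and only needs relative error $\epsilon/\sqrt n$ per factor by a variance argument (the log-ratio is a sum of $n$ terms, so errors add in quadrature rather than linearly), brings the per-ratio cost down and, with $O(\epsilon^{-2})$ outer samples, gives the stated $\frac{\Delta n^2}{\epsilon^4}\cdot\mathrm{polylog}(\frac{n}{\epsilon})$. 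The main obstacle is precisely this error-propagation bookkeeping: one needs the multiplicative errors in the telescoping product of marginals to aggregate gracefully (ideally in quadrature, via a second-moment/martingale estimate on $\sum_i \log\hat\mu_{s_i}$ versus $\sum_i \log\mu_{s_i}$) rather than linearly, since a naive linear bound would cost an extra factor of $n^2$ and miss the target running time. A secondary technical point is handling coordinates where a marginal is exactly $0$ (forced spins): there $\max\set{0,1-\nu_S(\tau)/\mu_S(\tau)}$ must be interpreted correctly --- if $\mu_S(\tau) > 0$ but $\nu_S(\tau) = 0$ the contribution is $1$, detectable combinatorially --- and one verifies these edge cases do not disrupt the estimator.
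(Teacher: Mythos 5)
Your high-level estimator --- Monte Carlo averaging of $\max\{0,\,1-\nu_S(\tau)/\mu_S(\tau)\}$ under samples $\tau\sim\mu_S$, with approximate evaluation of the ratio --- is the same one the paper uses (see \Cref{thm:approx-margin-tv}). But the way you propose to evaluate the ratio is both more complicated than necessary and fails to reach the stated $\tilde O(\Delta n^2/\epsilon^4)$ running time, and the step you yourself flag as the ``main obstacle'' (the quadrature error-propagation argument) is neither proved nor, even if true, sufficient.

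You decompose $\mu_S(\tau)=\prod_{i=1}^k\mu_{s_i}^{\tau_{<i}}(\tau_{s_i})$ into up to $n$ single-vertex conditionals and estimate each to relative error $\epsilon'$ with a separate approximate-counting call. Even granting your optimistic ``errors add in quadrature'' claim (which needs the per-factor errors to behave like independent mean-zero noise with controlled variance --- not something the counting oracle of \cite{SVV09} gives you automatically, and not something you establish), you need $\epsilon'=\Theta(\epsilon/\sqrt n)$. Each of the $n$ factors then costs $\TC_G(\epsilon/\sqrt n)=\tilde O(\Delta n^3/\epsilon^2)$, giving $\tilde O(\Delta n^4/\epsilon^2)$ per outer sample and $\tilde O(\Delta n^4/\epsilon^4)$ total --- off by a factor of $n^2$. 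Replacing the per-factor counting call by direct sampling still leaves you at $\tilde O(\Delta n^3/\epsilon^4)$, a factor of $n$ short. With the honest linear accumulation ($\epsilon'=\Theta(\epsilon/n)$) the gap widens further. The proposal effectively wraps a second layer of self-reducibility around an oracle that is already internally doing self-reducibility, which is exactly where the extra $n$-factors come from.

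The fix the paper uses is simply not to decompose the marginal at all. Observe that $\mu_S(\tau)=Z_\mu^\tau/Z_\mu$, where $Z_\mu^\tau=\sum_{\sigma:\sigma_S=\tau}w_\mu(\sigma)$ is the conditional partition function, and that for the hardcore model $Z_\mu^\tau$ is itself a hardcore partition function on the induced subgraph $G[\Lambda]$ (remove $S$ and the neighbors of the $+$-pinned vertices), which remains in the uniqueness regime. Hence
\[
\frac{\nu_S(\tau)}{\mu_S(\tau)}\;=\;\frac{Z_\nu^\tau}{Z_\mu^\tau}\cdot\frac{Z_\mu}{Z_\nu},
\]
a ratio of four partition functions each approximable to relative error $\Theta(\epsilon)$ (not $\Theta(\epsilon/n)$ or $\Theta(\epsilon/\sqrt n)$) by a \emph{single} call to the conditional counting oracle (\Cref{def:cond-count-oracle}); there is no $n$-fold error accumulation to manage, and no independence or variance assumptions to verify. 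Two of the four calls ($Z_\mu,Z_\nu$) are made once, two ($Z_\mu^\tau,Z_\nu^\tau$) per outer sample, so with $O(\epsilon^{-2})$ outer samples and $\TC_G(\epsilon)=\tilde O(\Delta n^2/\epsilon^2)$ the total is $\tilde O(\Delta n^2/\epsilon^4)$ as claimed. This also sidesteps the zero-marginal edge cases you worry about, since $Z_\mu^\tau=0$ is detected directly and the estimator is set to $0$ or $1$ accordingly.
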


\Cref{thm:many-vertex-alg} together with two hardness results give a clear separation between the computational complexity for approximating the TV-distance with relative and additive error.

\subsection{Related work and open problems} 

There is a series of works on checking whether two given distributions are identical (e.g. \cite{CortesMR07,DoyenHR08,KieferMOWW11,BGMMPV24ICLR}), which can be viewed as the decision version of computing the TV-distance, i.e., checking whether it is zero.

There are also a long line of works (e.g., see a survey in \cite{Canonne15}) studying the identical testing problem in access model, where the algorithm can only access a set of random samples from the distributions.
This setting is different from our setting, where we assume that all the parameters of the spin systems are given as the input to the algorithm.

A series of works~\citep{0001GMV20,ChenK14,Kiefer18,CanonneR14} studied the algorithm and the hardness of approximating the TV-distance with additive error, which is an easier problem than the relative-error approximation.

\cite{devroye2018total,ArbasAL23,kontorovich2024tensorization,kontorovich2024sharp} found \emph{closed-form formulas}, which approximate the TV-distance between two high-dimensional distributions with a \emph{fixed} relative error.
We study \emph{algorithms} achieving an \emph{arbitrary} $\epsilon$-relative error approximation for spin systems.

This work raises several open problems: for the Ising model, it remains open whether the marginal lower bound $b = \Omega(1)$ in \Cref{Cor:Ising} can be removed, as our current techniques rely on this condition; for the hardcore model, the $\tilde{O}(\Delta n^7)$ runtime in \Cref{thm:hardcore-1} may admit further improvements; finally, while this paper focuses on 2-spin systems with pairwise interactions, extending these results to more general models like the Potts model or higher-order Markov random fields presents an interesting direction for future work.


\section{Algorithm overview}
In this section, we give an overview of our algorithm.
Let $G=(V,E)$ be a graph. 
Let $\mathbb{S}^\mu$ and $\mathbb{S}^\nu$ be two spin systems (either two hardcore models or two Ising models) defined on the same graph $G$.
Let $\mu$ and $\nu$ denote Gibbs distributions of $\mathbb{S}^\mu$ and $\mathbb{S}^\nu$ respectively.

We first introduce the distance between parameters of two spin systems. For a vector $a \in \mathbb{R}^V$, denote $\Vert a \Vert_\infty = \max_{v \in V}|a_v|$. For a matrix $A \in  \mathbb{R}^{V \times V}$, denote $\Vert A \Vert_{\max} = \max_{u,v \in V}|A_{uv}|$.
\begin{definition}[parameter distance]\label{def:dis}
The parameter distance $\dis(\mathbb{S}^\mu,\mathbb{S}^\nu)$ between $\mathbb{S}^\mu$ and $\mathbb{S}^\mu$, which is denoted by  $\dis(\mu,\nu)$ for simplicity, is defined by  
\begin{itemize}
    \item Hardcore model: for two hardcore models $\mathbb{S}^\mu = (G,\lambda^\mu)$ and  $\mathbb{S}^\nu = (G,\lambda^\nu)$, \[\dis(\mu,\nu) \defeq \Vert \lambda^\mu - \lambda^\nu \Vert_{\infty}.\]
    \item Soft-Ising model:  for two soft-Ising models $\mathbb{S}^\mu = (G,J^\mu,h^\mu)$ and $\mathbb{S}^\nu = (G,J^\nu,h^\nu)$, \[\dis(\mu,\nu) \defeq \max\left\{ \Vert J^\mu - J^\nu \Vert_{\max}, \max_v \frac{|h^\mu(v)-h^\nu(v)|}{\deg_v+1} \right\},\]
    where $\deg_v$ is the degree of $v$ in graph $G$.
\end{itemize}
\end{definition}

The above parameter distance can be computed easily.
The following lemma gives the relation between parameter distance $\dis(\mu,\nu)$ and the total variation distance $\DTV{\mu}{\nu}$. 
An Ising model $(G=(V,E),J,h)$ is said to be \emph{soft} if $h \in \mathbb{R}^V$ (instead of $h \in (\mathbb{R} \cup \{\pm \infty\})^V$).
We now focus on soft-Ising model in this overview. 
We will show how to reduce general Ising models to soft models in \Cref{sec:proof-main}\ifthenelse{\boolean{conf}}{ of the Appendix.}{.}
\begin{lemma}[TV-distance lower bound]\label{lem:TV-lower}
It holds that $\DTV{\mu}{\nu} \geq \CC\cdot \dis(\mu,\nu)$ such that
\begin{itemize}
    \item Hardcore model: if both $\mu$ and $\nu$ satisfy the uniqueness condition in~\eqref{eq:cond-hardcore}, $\CC = \frac{1}{5000}$.
    \item Hardcore model: if both $\mu$ and $\nu$ are $b$-marginally bounded, $\CC = b^3$.
    \item Soft-Ising model: if both $\mu$ and $\nu$ are $b$-marginally bounded, $\CC = \frac{b^2}{2}$.
\end{itemize}
\end{lemma}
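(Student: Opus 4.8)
The plan is to prove each of the three bounds by exhibiting a single good distinguishing event $A \subseteq \{\pm\}^V$ (or a single vertex's marginal) that witnesses a large TV-distance whenever the parameters differ. Since $\DTV{\mu}{\nu} = \max_{A} (\mu(A) - \nu(A))$, it suffices to find, for the coordinate $v^\star$ (and possibly a value $c^\star$) realizing the parameter distance, an event whose probability shifts by at least $\CC \cdot \dis(\mu,\nu)$ between the two systems. The natural choice is to localize: let $v^\star$ (resp.\ edge $\{u^\star,v^\star\}$) be where $\lambda^\mu,\lambda^\nu$ (resp.\ $J,h$) differ the most, and look at the event that the configuration restricted to $v^\star$ (or $\{u^\star,v^\star\}$) takes a fixed pattern, possibly intersected with some fixing of the rest.

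First I would handle the hardcore $b$-marginally bounded case. Condition on an arbitrary feasible configuration $\tau$ on $V \setminus \{v^\star\}$; under both $\mu$ and $\nu$ the conditional law of $v^\star$ is a simple two-point distribution determined by $\lambda^\mu_{v^\star}$ resp.\ $\lambda^\nu_{v^\star}$ and by whether $\tau$ already blocks $v^\star$. A direct computation shows that for any such $\tau$ that does not block $v^\star$, the conditional probability that $\sigma_{v^\star} = +1$ is $\frac{\lambda_{v^\star}}{1+\lambda_{v^\star}}$, so the difference of these conditional probabilities between $\mu$ and $\nu$ is $\frac{|\lambda^\mu_{v^\star}-\lambda^\nu_{v^\star}|}{(1+\lambda^\mu_{v^\star})(1+\lambda^\nu_{v^\star})}$. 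Using the marginal lower bound to control $1+\lambda_{v^\star}$ from above (since $\mu^\sigma_{v^\star}(+1) \ge b$ forces $\lambda_{v^\star} \ge b/(1-b)$, hence $1+\lambda_{v^\star} \le 1/(1-b) \le 1/b$ when such a $\tau$ exists), this conditional gap is at least $b^2 \cdot \dis(\mu,\nu)$; a subtlety is the case where $v^\star$ is blocked under every feasible $\tau$ in one model but the partition functions still differ — but then one argues $\dis(\mu,\nu)$ contributes nothing observable only if... so I'd instead take the event $A = \{\sigma_{v^\star} = +1\}$ directly and write $\mu(A) - \nu(A) = \E[\tau]{(\text{conditional gap})}$, then lower-bound the expectation by restricting to the sub-event that $v^\star$ is not blocked, whose probability is at least $b$ (by the marginal lower bound applied to the empty configuration or to a configuration exposing $v^\star$). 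This yields $\DTV{\mu}{\nu} \ge b \cdot b^2 \cdot \dis = b^3 \dis$, matching $\CC = b^3$.

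For the soft-Ising $b$-marginally bounded case the idea is the same but the parameter distance has two branches ($\|J^\mu - J^\nu\|_{\max}$ and $\max_v |h^\mu_v - h^\nu_v|/(\deg_v+1)$), so I would treat whichever term achieves the max. If it is the field term at $v^\star$, condition on the rest: the conditional law of $\sigma_{v^\star}$ is $\propto \exp(\pm(h_{v^\star} + \text{(local field from neighbors)}))$, a logistic two-point law, and the $b$-marginal bound says both conditional probabilities lie in $[b,1-b]$; the derivative of the logistic in its linear parameter is then at least $\approx b(1-b) \ge b^2/2 \cdot (\text{const})$, giving a conditional gap $\gtrsim b^2 |h^\mu_{v^\star} - h^\nu_{v^\star}|$ — but I need the $\deg_v+1$ normalization, which should come from the fact that changing $h_{v^\star}$ by $\delta$ changes the effective one-dimensional parameter by exactly $\delta$, independent of degree, so actually the $\deg_v+1$ in $\dis$ only makes the bound easier; more care is needed for the $J$-branch, where I condition on $V \setminus \{u^\star,v^\star\}$ and look at the four-point conditional law on $\{u^\star,v^\star\}$, showing some pair-pattern probability moves by $\gtrsim b^2 |J^\mu_{u^\star v^\star} - J^\nu_{u^\star v^\star}|$ using that all four conditional probabilities are $\ge b$ (or rather $\ge$ some power of $b$) and differentiating $\exp(J_{u^\star v^\star}\sigma_{u^\star}\sigma_{v^\star})$; collecting the worst constant gives $\CC = b^2/2$.

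The hardcore uniqueness case with the explicit constant $1/5000$ is the main obstacle: here we do \emph{not} have a constant marginal lower bound (vertices of high degree can have tiny marginals), so the conditioning argument must be replaced or supplemented. The expected approach is to use the structural consequences of the uniqueness condition — in particular that $\lambda_v \le (1-\eta)\lambda_c(\Delta) \le (1-\eta)\frac{e}{\Delta-2}$, so $\lambda_v \Delta = O(1)$, together with known tree-uniqueness / spatial-mixing or strong-spatial-mixing facts — to show that the marginal of $v^\star$ is bounded away from $0$ and $1$ by an $\eta$-dependent constant \emph{after} possibly passing to a sub-event where few neighbors are occupied, and then to track how this marginal depends on $\lambda_{v^\star}$. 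Concretely I would lower bound $\Pr[\mu]{\sigma_{v^\star}=+1}$ and the analogous $\nu$ quantity, show the $\lambda$-derivative of $\Pr{\sigma_{v^\star}=+1}$ has absolute value $\ge c(\eta)$, and integrate; the constant $1/5000$ presumably falls out of a concrete (not tight) version of this computation, perhaps by bounding $\Pr{\sigma_{v^\star}=+1} \in [\frac{1}{1+\lambda_c}, ...]$-type estimates and the fact that occupied-neighbor count of $v^\star$ is $O(1)$ in expectation under the uniqueness regime. I expect this last case to require the most care, and I would isolate it as a separate lemma, proving the $b$-bounded cases first as a warm-up since their proofs are self-contained two-point/four-point calculations.
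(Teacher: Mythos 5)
Your plan shares the paper's basic intuition (localize to the coordinate $v^\star$ realizing $\dis(\mu,\nu)$ and compare conditional one- or two-vertex laws), but all three sketches stumble on the same hurdle, and the uniqueness case takes a wrong turn entirely.

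The recurring gap is that your identity $\mu(A)-\nu(A)=\E_\tau[\text{conditional gap}]$ is false: when you condition on the configuration $\tau$ of $V\setminus\{v^\star\}$, the marginal laws of $\tau$ under $\mu$ and $\nu$ also differ, so the cross terms do not cancel. The correct decomposition is
$\mu(A)-\nu(A)=\sum_\tau \mu(\tau)\bigl(\mu_{v^\star}^\tau(+)-\nu_{v^\star}^\tau(+)\bigr)+\sum_\tau\bigl(\mu(\tau)-\nu(\tau)\bigr)\nu_{v^\star}^\tau(+)$, and the second sum is as large as the quantity you are trying to bound. The paper handles this in two different ways that you should adopt. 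For hardcore it chooses $A=H_1=\{\text{independent sets containing }v^\star\}$ and $H_2=\{S\setminus\{v^\star\}:S\in H_1\}$, for which the exact identity $\mu(H_1)=\lambda^\mu_{v^\star}\mu(H_2)$ holds, and then performs a two-case split on $|\mu(H_2)-\nu(H_2)|$: if it is already $\geq\CC\dis$ you are done via the event $H_2$; otherwise the difference $\mu(H_1)-\nu(H_1)=\lambda^\nu_{v^\star}(\mu(H_2)-\nu(H_2))+\dis\cdot\mu(H_2)$ is bounded below by $\dis\,\mu(H_2)-\lambda^\nu_{v^\star}\CC\dis$. Your bounds $\lambda^\nu_{v^\star}\leq(1-b)/b$ and $\mu(H_2)\geq b^2$ are exactly what the paper plugs in, so you have the right ingredients but not the right accounting. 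For Ising the paper instead proves a separate lemma (their \Cref{lem:local-global-dtv}): if $\DTV{\mu^\sigma}{\nu^\sigma}\geq\delta$ for \emph{every} pinning $\sigma$ of $V\setminus\{u,v\}$, then $\DTV{\mu}{\nu}\geq\delta/2$, which is where the $\tfrac12$ in $\CC=b^2/2$ comes from; your four-point $\exp$-derivative estimate then matches theirs, but without this lemma you cannot legally average. (You also underappreciate the $\deg_v+1$ normalization in the field branch: it is precisely what you need to absorb the $|J^\mu_{uv}-J^\nu_{uv}|\leq\dis$ slack via the triangle inequality when comparing $c^\mu$ and $c^\nu$.)

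The uniqueness case is where your plan diverges most. You assume one needs spatial mixing, derivatives of marginals, or bounds on expected occupied-neighbor counts. None of that is used. The paper runs the identical $H_1/H_2$ argument, and the uniqueness condition enters only through the elementary inequality $\lambda_j\leq\lambda_c(\Delta)\leq 4/(\Delta-2)$, which gives the unconditional worst-case bound $\mu(H_2)\geq\prod_{j\in N(v^\star)\cup\{v^\star\}}\tfrac{1}{1+\lambda_j}\geq\bigl(1-\tfrac{4}{\Delta+2}\bigr)^{\Delta+1}>\tfrac{1}{1000}$, a constant independent of $\Delta$; together with $\lambda^\nu_{v^\star}\leq 4$ this immediately yields $\CC=1/5000$. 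Your intuition that ``vertices of high degree can have tiny marginals'' is about $\Pr{\sigma_{v^\star}=+1}$, which can indeed vanish, but what is needed is a lower bound on $\Pr{v^\star\text{ and all neighbors unoccupied}}$, which in the uniqueness regime is $\Omega(1)$ by the scale $\lambda=O(1/\Delta)$. You should retire the heavy-machinery plan and reuse the $H_1/H_2$ combinatorics.
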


By \Cref{lem:TV-lower}, if $\dis(\mu,\nu)$ is large, the TV-distance $\DTV{\mu}{\nu}$ is also large. Let $n = |V|$ and $m = |E|$ denote the numbers of vertices and edges in $G$. Define the a threshold $\theta = \frac{c_b}{\text{poly}(n)}$ for the parameter distance for soft-Ising and hardcore models, where $c_b$ is some parameter depending only on the marginal lower bound $b$\footnote{The value of $b$ is not given in the input, but we can compute $b$ efficiently (see \Cref{lem:alg-b}).}. The specific value of $\theta$ can be found in \Cref{sec:proof-main}\ifthenelse{\boolean{conf}}{ of the Appendix}{}.
Our algorithm first compute $\dis(\mu,\nu)$ in time $O(m+n)$, and then compares $\dis(\mu,\nu)$ to the threshold $\Thre$. The algorithm considers the following two cases.
\begin{itemize}
    \item Case $\dis(\mu,\nu) \geq \Thre$: By \Cref{lem:TV-lower}, the TV-distance $\DTV{\mu}{\nu} = \CC \dis(\mu,\nu) \geq \Omega_b(\frac{1}{\mathrm{poly}(n)})$ is large. In this case, the task of approximating $\DTV{\mu}{\nu}$ with relative error is the same (up to $O_b(\mathrm{poly}(n))$ running time) as the task of approximating $\DTV{\mu}{\nu}$ with \emph{additive} error. We give a general FPRAS that achieves the additive-error approximation assuming polynomial-time sampling and approximate counting oracles for both $\mu$ and $\nu$.
    \item Case $\dis(\mu,\nu) < \Thre$: The algorithm for this case is our main technical contribution.  Let $w_{\mu}(\cdot)$ and $w_\nu(\cdot)$ denote the weight functions for spin systems $\mathbb{S}^\mu$ and $\mathbb{S}^\nu$ respectively.
    By the definition of parameter distance, we know that for any $\sigma \in \{\pm\}^V$, $w_\mu(\sigma) \approx w_\nu(\sigma)$. 
    We will utilize this property to design an efficient approximation algorithm for the TV-distance. 
\end{itemize}
In \Cref{sec:add-alg} and \Cref{sec:mul-alg}, we will explain the main ideas of our algorithms for the above two cases.
The formal description of the algorithms are given in Sections \ref{sec:add} and \ref{sec:alg-main}\ifthenelse{\boolean{conf}}{ of the Appendix}{}.

\subsection{Warm-up: additive-error approximation algorithm}\label{sec:add-alg}
Let us first consider the easy case.
This case is solved by generalizing the algorithm in~\cite{0001GMV20}.
The paper~\citep{0001GMV20} focused on algorithms that have sample access to the distributions. However, their technique can be extended to our setting.
By the definition of total variation distance, we can write
\begin{align}\label{eq:TV-1}
  \DTV{\mu}{\nu} &= \sum_{\sigma \in \{\pm\}^V:\mu(\sigma)>\nu(\sigma)}|\mu(\sigma)-\nu(\sigma)| = \sum_{\sigma \in \{\pm\}^V} \mu(\sigma) \max \left( 0, 1 - \frac{\nu(\sigma)}{\mu(\sigma)} \right).
\end{align}
Define the random variable $X \defeq \max \left( 0, 1 - \frac{\nu(\sigma)}{\mu(\sigma)} \right)$, where $\sigma \sim \mu$. Note that $0 \leq X \leq 1$. We have $\E[]{X} = \DTV{\mu}{\nu}$ and $\Var[]{X} \leq 1$. Additive-error approximation can be achieved if random samples of the variable $X$ can be efficiently generated.

However, since we can only access sampling and approximate counting oracles, we cannot compute $\mu(\sigma)$ and $\nu(\sigma)$ exactly. Instead, our algorithm uses an alternative estimator $\hat{X}$ to approximate the random variable $X$.
Let $Z_\mu, w_{\mu}(\cdot)$ and $Z_\nu, w_\nu(\cdot)$ denote the partition functions and weight functions of $\mathbb{S}^\mu$ and $\mathbb{S}^\nu$, respectively. We first call the approximate counting oracle to obtain $\hat{Z}_\mu$ and $\hat{Z}_\nu$, which approximate $Z_\mu$ and $Z_\nu$ with a relative error of $O(\epsilon)$. The estimator $\hat{X}$ is then defined by the following process:

\begin{itemize}
    \item Call sampling oracle to generate approximate sample $\sigma$ from $\mu$;
    \item $\hat{X} = \max\left(0, 1 - \frac{\hat{\nu}(\sigma)}{\hat{\mu}(\sigma)}\right)$, where $\hat{\nu}(\sigma) = w_\nu(\sigma)/\hat{Z}_\nu$ and $\hat{\mu}(\sigma) = w_\mu(\sigma)/\hat{Z}_\mu$.
\end{itemize}
The error of $\hat{X}$ arises from the errors in the approximate sample $\sigma$ and the probabilities $\hat{\nu}(\cdot)$ and $\hat{\mu}(\cdot)$. To illustrate the main idea of the algorithm, let us ignore the sampling error and assume $\sigma \sim \mu$. 
Note that the true value $\frac{\nu(\sigma)}{\mu(\sigma)} = \frac{w_\nu(\sigma)}{w_{\mu}(\sigma)} \cdot \frac{Z_{\mu}}{Z_\nu}$. 
By the assumption of approximate counting oracle, $\frac{\hat{\nu}(\sigma)}{\hat{\mu}(\sigma)} \geq (1-O(\epsilon))\frac{\nu(\sigma)}{\mu(\sigma)}$, which implies $\mathbf{E}[\hat{X}] \leq \mathbf{E}_{\sigma \sim \mu}[\max(0,1-(1-O(\epsilon))\frac{\nu(\sigma)}{\mu(\sigma)})]$, which is at most $\mathbf{E}_{\sigma \sim \mu}[\max(0,1-\frac{\nu(\sigma)}{\mu(\sigma)})] +O(\epsilon)\mathbf{E}_{\sigma \sim \mu} \frac{\nu(\sigma)}{\mu(\sigma)} = \E[]{X} +O(\epsilon)$.  A similar analysis gives the lower bound $\mathbf{E}[\hat X] \geq \E[]{X} - O(\epsilon)$. Since  $0 \leq \hat{X} \leq 1$ so the variance is at most 1, this achieves the additive error approximation of the TV-distance. 

The algorithm in \Cref{thm:many-vertex-alg} estimates the TV-distance between two marginal distributions with additive error, which follows a similar approach. The key difference is that it requires approximating the marginal probabilities $\nu_S(\sigma)$ and $\mu_S(\sigma)$ for $\sigma \in \{\pm\}^S$, where $S \subseteq V$. For the hardcore model within the uniqueness regime, we have not only an efficient approximate counting oracle but also efficient oracles for approximating the conditional partition function $Z^\sigma = \sum_{\tau \in \{\pm\}^V: \tau_S = \sigma} w(\tau)$, which give the efficient approximation of $\mu_S(\sigma)$ and $\nu_S(\sigma)$.


\subsection{Approximation algorithm for instances with small parameter distance}\label{sec:mul-alg}
Consider the case where $\dis(\mu, \nu) < \theta$. The total variation distance $\DTV{\mu}{\nu}$ can be very small (e.g., $\exp(-\Omega(n))$). We cannot use the additive-error approximation algorithm to efficiently achieve a relative-error approximation. The main challenge lies in the term $1 - {\nu(\sigma)}/{\mu(\sigma)}$ in~\eqref{eq:TV-1}. With approximate counting oracles, we can approximate ${\nu(\sigma)}/{\mu(\sigma)}$ with relative error, but there is no guarantee of relative accuracy for $1 - {\nu(\sigma)}/{\mu(\sigma)}$. Due to this difficulty, previous works mainly studied graphical models on bounded treewidth graphs~\citep{0001GMM0V24} and product distributions~\citep{FGJW23}, where ${\nu(\sigma)}$ and ${\mu(\sigma)}$ can be computed \emph{exactly}.

We overcome this challenge by designing an alternative estimator with good concentration property. 
We use the hardcore model as an example to illustrate the main idea. 
Assume both $\mu$ and $\nu$ are hardcore models $(G,\lambda^\mu)$ and $(G,\lambda^\nu)$ satisfying uniqueness condition in~\eqref{eq:cond-hardcore}.
For simplicity of the overview, let us further assume that for any for $v \in V$, $0 < \lambda_v^\mu \leq \lambda_v^\nu$. In words, $\nu$ is obtained by increasing some external fields of $\mu$ by a small amount. We will deal with the general case in later technical sections.

\paragraph{Basic case without small external fields} 
Let us start with a simple case where $\lambda^\mu_v = \Theta(\frac{1}{\Delta})$ for all $v \in V$. The uniqueness condition guarantees that $\lambda^\mu_v = O(\frac{1}{\Delta})$. The assumption additionally requires that every $\lambda^\mu_v$ cannot be too small. By the definitions of total variation distance and Gibbs distributions, we can compute
\begin{align}\label{eq:TV-2}
     \DTV{\mu}{\nu} &= \frac{1}{2}\sum_{\sigma \in \{\pm\}^V}\mu(\sigma)\left|1-\frac{\nu(\sigma)}{\mu(\sigma)}\right| 
     = \frac{Z_\mu}{2Z_\nu}\cdot \sum_{\sigma \in \{\pm\}^V}\mu(\sigma)\left| \frac{Z_{\nu}}{Z_{\mu}} -\frac{w_\nu(\sigma)}{w_\mu(\sigma)}\right|.
\end{align}
The first term $\frac{Z_\mu}{2Z_\nu}$ can be approximated with relative error by approximate counting oracle. For the second term, we consider the following random variable, which also appears in the previous work of approximate counting~\citep{SVV09},
\begin{align*}
 W \defeq \frac{w_{\nu}(\sigma)}{w_{\mu}(\sigma)}, \text{ where } \sigma \sim \mu.
\end{align*}
Note that $\E[]{W} = \frac{Z_\nu}{Z_\mu}$. By~\eqref{eq:TV-2}, our task is reduced to estimate the value of $\E[]{\vert \E[]{W} - W \vert}$. We are in the case that $ 0\leq \lambda^\nu_v-\lambda^\mu_v \leq \dis(\mu,\nu) < \theta = \frac{1}{\mathrm{poly}(n)}$, so that for any $\sigma \in \{\pm\}^V$, $\frac{w_\mu(\sigma)}{w_\nu(\sigma)} \approx 1$.
We will utilize this property to show that $W$ has a good concentration property. Formally, 
\begin{align}\label{eq:w-bound}
 1\leq \frac{w_\nu(\sigma)}{w_{\mu}(\sigma)} &\leq  \prod_{v \in V: \sigma_v = +1} \frac{\lambda_v^\nu}{\lambda^\mu_v} \leq \prod_{v \in V: \sigma_v = +1}\left(1 + \frac{\dis(\mu,\nu)}{\lambda^\mu_v}\right) \notag\\
 (\star)\quad&\leq 1 + O \left( \sum_{v \in V: \sigma_v = +1} \frac{\dis(\mu,\nu)}{\lambda^\mu_v} \right) \leq 1 + O(n\Delta) \cdot \dis(\mu,\nu).
\end{align}
In inequality $(\star)$, we use the fact that $\lambda^\mu_v = \Omega(\frac{1}{\Delta})$ so that $\frac{\dis(\mu,\nu)}{\lambda^\mu_v} = O(\Delta \theta)$, 
since we choose $\theta = \frac{1}{\text{poly}(n)}$ small enough, then $ \frac{\dis(\mu,\nu)}{\lambda^\mu_v} \ll \frac{1}{n}$ and the inequality can be verified as follows
\begin{align*}
\prod_{v \in V: \sigma_v = +1}\left(1 + \frac{\dis(\mu,\nu)}{\lambda^\mu_v}\right) \leq \exp \underbrace{ \tp{\sum_{v \in V: \sigma_v = +1}\frac{\dis(\mu,\nu)}{\lambda^\mu_v}}}_{\ll n \cdot (1/n) = o(1)}   = 1 + O \left( \sum_{v \in V: \sigma_v = +1} \frac{\dis(\mu,\nu)}{\lambda^\mu_v} \right).
\end{align*}
By~\Cref{lem:TV-lower}, $ \DTV{\mu}{\nu} = \Omega(\dis(\mu,\nu))$. This implies $W$ enjoys a very good concentration property such that
\begin{align}\label{eq:w-var}
    1 \leq W \leq 1 + O(n\Delta) \cdot \DTV{\mu}{\nu} \quad \implies \quad  \Var[]{W} \leq O(n^2\Delta^2) \cdot (\DTV{\mu}{\nu})^2.
\end{align}
Now, our algorithm can be outline as follows.
\begin{tcolorbox}[colback=lightgray!20, colframe=lightgray!18, coltitle=black, title={}]
    \begin{itemize}
        \item Call approximate counting oracles to obtain $\hat{Z}_\mu$ and $\hat{Z}_\nu$ with relative error $O({\epsilon})$.
        \item Draw $T=\mathrm{poly}(n/\epsilon)$ samples $W_1,\dots, W_T$ from $W$ independently.
        \item Compute $\bar{W}=\frac{1}{T}\sum_{i=1}^T W_i$. \hfill \texttt{(approximate  $\E[]{W}$)}
        \item Compute $\bar{E}=\frac{1}{T}\sum_{i=1}^{T} |W_i-\bar{W}|$. \hfill \texttt{(approximate $\E[]{|W-\E[]{W}|}$)}
        \item Return $\hat{d}=\frac{\hat{Z}_\mu}{2\hat{Z}_\nu}\bar{E}$.
        \end{itemize}
    \end{tcolorbox}
Using the variance bound in~\eqref{eq:w-var}, we can prove that with high probability, $\bar{E}$ approximates $\E[]{|W-\E[]{W}|}$ with an additive error of $O(\epsilon) \cdot \DTV{\mu}{\nu}$.
By~\eqref{eq:w-bound}, we can also verify that $\frac{Z_\mu}{Z_\nu} = O(1)$. Hence, we can bound the error as follows
\begin{align*}
   \hat{d} &\leq (1+O(\epsilon))\frac{Z_\nu}{2Z_\mu}\cdot\left( \E[]{|W-\E[]{W}|} + O(\epsilon) \cdot \DTV{\mu}{\nu} \right)
   \leq (1+O(\epsilon))\DTV{\mu}{\nu}.
\end{align*}
A similar analysis gives the lower bound $\hat{d} \geq (1-O(\epsilon))\DTV{\mu}{\nu}$. This achieves the relative-error approximation of the TV-distance.
The above technique can be generalized to Ising models and hardcore models with a marginal lower bound.


\paragraph{General case containing small external fields} 
For the general case, there may exist vertex $v \in V$ such that the external field $\lambda^\mu_v \ll \dis(\mu,\nu)$. In this case, the inequality $(\star)$ in~\eqref{eq:w-bound} may not hold. In fact, it will cause a fundamental problem to the above algorithm. Consider the case that $\lambda^\mu_v = \exp(-n)$ and $\lambda^\nu_v = \lambda^\mu_v + D$ for all $v \in V$, where $D > 0$. If we draw polynomial number of samples $\sigma \sim \mu$, typically, every sample $\sigma$ corresponds to the empty set, i.e., $\sigma_u = -1$ for all $ u \in V$. Hence, with high probability, all $W_i$ in the above algorithm are $\frac{w_\nu(\emptyset)}{w_\mu(\emptyset)} = 1$ and the algorithm will return $\hat{d} = 0$. However, the true TV-distance $\DTV{\mu}{\nu} > 0$ is positive. 

Let us first consider a special case such that for all $v \in V$, $\lambda^\mu_v < \kappa$ and $\lambda^\nu_v < \kappa$, where $\kappa =  \mathrm{poly}(\frac{\epsilon}{n}) \ll \frac{1}{n}$ is very small. 
The total variation distance is the sum of $\frac{1}{2}|\mu(\sigma)-\nu(\sigma)|$ for all independent sets $\sigma \in \{\pm\}^V$.  
In this special case, all external fields are tiny, so that large independent sets appear with very low probability.
Let $\Vert \sigma\Vert_+$ be the number of $+1$ in $\sigma$.
We can show that there is a constant $t = O(1)$ depending on $\kappa$ such that
\begin{align*}
    \frac{1}{2}\sum_{\sigma \in \{\pm\}^V: \Vert \sigma\Vert_+ \geq t+1} |\mu(\sigma)-\nu(\sigma)| \leq O(\epsilon) \cdot \DTV{\mu}{\nu}.
\end{align*} 
In words, to approximate the total variation distance, we only need to consider the independent sets with size at most $t$.
Now, we define a distribution $\mu'$ as the distribution $\mu$ restricted on the independent sets with size at most $t$.
Similarly, we can define $\nu'$ from $\nu$. 
Since $t$ is a constant, we can enumerate all independent sets with size at most $t$ and compute the total variation distance between $\mu'$ and $\nu'$.
We can show that $\DTV{\mu'}{\nu'}$ approximate $\DTV{\mu}{\nu}$ with $\epsilon$ relative error.

For the most general case, we divide the vertices into two groups. The big group $B$ contains all vertices $v \in V$ such that $\min\{\lambda^\mu_v,\lambda^\nu_v\} > \kappa$. The small group $S$ contains all vertices $v \in V$ such that $\min\{\lambda^\mu_v,\lambda^\nu_v\} \leq \kappa$ for a small $\kappa = \mathrm{poly}(\frac{\epsilon}{n})$. The total variation distance is
\begin{align*}
    \DTV{\mu}{\nu} &= \frac{1}{2}\sum_{\sigma \in \{\pm\}^V} \left|\mu(\sigma)-{\nu(\sigma)}\right| = \frac{1}{2}\sum_{x \in \{\pm\}^B}\sum_{y \in \{\pm\}^S} \left|\mu_B(x)\mu_S^x(y)-\nu_B(x)\nu_S^x(y)\right|\\
    &= \sum_{x \in \{\pm\}^B}\mu_B(x) \cdot  \underbrace{\frac{1}{2}\sum_{y \in \{\pm\}^S} \left|\frac{\nu_B(x)}{\mu_B(x)}\nu_S^x(y)-\mu_S^x(y)\right|}_{f(x)}.
\end{align*}
In a high level, our algorithm will draw independent samples $x \sim \mu_B$ from the marginal distribution and approximately compute the value of $f(x)$. The algorithm finally outputs the average value of $f(x)$ over all samples. 
\ifthenelse{\boolean{conf}}{We need to deal with the following two main technical challenging.}{To make the idea work, we need to deal with the following two main technical challenging.}
\begin{itemize}
    \item We need to bound the variance of $f(x)$ where $x \sim \mu_B$ to show that polynomial number of samples are sufficient for approximation. To achieve a good bound, we use the \emph{Poincar\'e inequality} for hardcore model in uniqueness regime~\citep{ChenFYZ21} to control the variance.
    \item Given a sample $x \sim \mu_B$, we also need to approximately compute the value of $f(x)$. We need to (1) approximate two distributions $\mu_S^x$ and $\nu_S^x$ over $\{\pm\}^S$; and (2) approximate the ratio $\frac{\nu_B(x)}{\mu_B(x)}$. For the first task, note that both $\mu_S^x$ and $\nu_S^x$ are hardcore distributions on the induced subgraph $G[S]$ such that for all $v \in S$, the external fields are small. Hence, we can approximate them using distributions over independent sets of size at most $t = O(1)$. A similar idea will also be use for the second task: to estimate marginal probabilities $\mu_B(x)$ and $\nu_B(x)$, we also need to consider the total weight of all independent sets $I$ in $G[S]$ such that $I \cup \{v \in B \mid x_v = +1\}$ forms a independent set in $G$. Again, we show that the approximation algorithm only needs to consider all such $I$ with $|I| \leq t = O(1)$.
\end{itemize}
All the technical details for general case are given in \Cref{sec:var-main}\ifthenelse{\boolean{conf}}{ of the Appendix}{}.

\ifthenelse{\boolean{conf}}{\section{Proofs of \#P-hardness results}
In this section, we prove \Cref{thm:one-vertex}, and the proof of \Cref{thm:many-vertex} is deferred to \Cref{sec:hard}. Our starting point is the \#P-hardness for exactly counting the number of independent sets in a graph.

\ifthenelse{\boolean{conf}}{\begin{proposition}[\text{\citet[Theorem 4.2]{DyerG00}}]}{\begin{proposition}[\text{\cite[Theorem 4.2]{DyerG00}}]}\label{prop:hard}
The following problem \#\textsf{Ind}(3) is \#P-complete. Input: a graph $G=(V,E)$ with maximum degree $\Delta = 3$; Output: the exact number of independent sets in $G$.
\end{proposition}

The above problem is exactly computing the partition function of $(G,\lambda)$ with $\lambda_v = 1$ for all $v \in V$.
Let $n = |V|$ and $V = \{1,2,\ldots,n\}$. Let $\mu_{G,\bm{1}}$ denote the uniform distribution over all independent sets in $G$, which is the hardcore distribution in $G$ when $\lambda_v = 1$ for all $v \in V$. Define
\begin{align}\label{eq:def_pi}
    p_i = \Pr[X \sim \mu_{G,\bm{1}}]{X_i = 0 \mid \forall 1\leq j \leq i - 1, X_j = 0},
\end{align}
which is the probability that the vertex $i$ is not in a random independent set $X$ conditional on all $j < i$ not being in $X$. By definition, the total number of independent set is $Z = \frac{1}{\mu_{G,\boldsymbol{1}}(\boldsymbol{0})} = \prod_{i=1}^n \frac{1}{p_i}$.
Suppose for any $i \in [n]$, we can compute $\hat{p}_i$ such that 
\begin{align}\label{eq:tar}
   (1 - 4^{-n})p_i \leq \hat{p}_i \leq (1+4^{-n})p_i.
\end{align}
Let $\hat{Z} = \prod_{i=1}^n \frac{1}{\hat{p}_i}$ and it holds that $(1-3^{-n})Z\leq \hat{Z} \leq (1+3^{-n})Z$. Note that $Z \leq 2^n$. We have $|\hat{Z} - Z| \leq 3^{-n}Z \leq 1.5^{-n} < 0.01$. We can round $\hat{Z}$ to the nearest integer to recover $Z$. Hence, $\#\text{Ind}(3)$ can be reduced to the following high-accuracy marginal estimation problem.
\begin{problem}\label{problem:mar} The high-accuracy marginal estimation problem is defined by
\begin{itemize}
    \item Input: a graph $G=(V,E)$ with $n$ vertices and maximum degree $\Delta = 3$;
    \item Output: $n$ numbers $(\hat{p}_i)_{i \in [n]}$ such that for all $ i \in [n]$, $(1 - 4^{-n})p_i \leq \hat{p}_i \leq (1+4^{-n})p_i$.
\end{itemize}
\end{problem}

Let $k(n) = 1$ for all $n \in \mathbb{N}$ be a constant function. We show that if there is a $\mathrm{poly}(n)$ time algorithm for \Cref{label:prob-mar} if the input error bound $\epsilon = \mathrm{poly}(n)$ and both two input hardcore models satisfy the uniqueness condition, then \Cref{problem:mar} can also be solved in $\mathrm{poly}(n)$ time. \Cref{thm:one-vertex} follows from \Cref{prop:hard}.

Fix an integer $i \in [n]$. Let $G_i$ denote the induced graph $G[S_i]$, where $S_i =\{j \in [n] \mid j \geq i\}$ is the set of vertices with label at least $i$. Let $\mu^{(i)}$ denote the uniform distribution over all independent set in graph $G_i$. In other words, $\mu^{(i)}$ is the Gibbs distribution of hardcore model $(G_i,\boldsymbol{1})$. Then $p_i$ in~\eqref{eq:def_pi} is the marginal distribution on vertex $i$ projected from $\mu^{(i)}$. If the maximum degree of $G_i$ is at most $2$, then $G_i$ is a set of disconnected lines or circles and $p_i$ can be computed exactly in polynomial time. We can assume the maximum degree of $G_i$ is $3$.  

Let $\alpha \geq 0$. Define vector $\lambda^\alpha \in \mathbb{R}^{S_i}$ by
\begin{align}\label{eq:def-lambda-alpha}
    \lambda_j^\alpha = \begin{cases}
        \frac{\alpha}{1 - \alpha} &\text{if } j = i; \\
        0 &\text{if } j \neq i.
    \end{cases}
\end{align}
Let $\nu^\alpha$ denote the Gibbs distribution of $(G_i,\lambda^\alpha)$. Note that $\lambda_c(3) = 4 > 1$. The following observation is easy to verify.
\begin{observation}\label{ob:uniq}
Both $\mu^{(i)}$ and $\nu^\alpha$ satisfies the uniqueness condition in~\eqref{eq:cond-hardcore} if $\alpha \leq \frac{1}{2}$.
\end{observation}

By the definition of $\nu^\alpha$, it is easy to see $\nu^\alpha_i(+1) = \alpha$  and
\begin{align*}
    \DTV{\mu^{(i)}_i}{\nu^\alpha_i} = \vert \mu^{(i)}_i(+1) - \alpha \vert = \vert p_i - \alpha \vert.
\end{align*}
Let $\+A(\alpha)$ be the algorithm such that given $\alpha \in [0,\frac{1}{2}]$, it returns a number $\hat{d}$ such that $ \frac{d_{\text{TV}}(\mu^{(i)}_i,\nu^\alpha_i)}{1+\epsilon}\leq \hat{d} \leq (1+\epsilon)d_{\text{TV}}(\mu^{(i)}_i,\nu^\alpha_i)$, where $\epsilon = \mathrm{poly}(n)$. By \Cref{ob:uniq}, if the polynomial-time algorithm for the problem in \Cref{thm:one-vertex} exists, then $\+A(\alpha)$ runs in $\mathrm{poly}(n)$ time.
We then can use the following algorithm to solve \Cref{label:prob-mar} for $p_i$ in $\mathrm{poly}(n)$ time. Thus, the hardness result in \Cref{thm:one-vertex} follows from \Cref{prop:hard}.

\ifthenelse{\boolean{conf}}{\begin{algorithm2e}[ht]}{
\begin{algorithm}[ht]
}\label{alg:high-TV}
    \caption{Algorithm for high-accuracy marginal estimation}
    Let $\alpha \gets \frac{1}{2}$ and $\epsilon = \mathrm{poly}(n)$ be the parameter assumed by algorithm $\+A$\;
    \For{$t$ from 1 to $50n (1+\epsilon)^2$}{
        $\hat{d} \gets \+A(\alpha)$\;
        $\alpha \gets \alpha - \hat{d}/(1+\epsilon)$\;
        if the bit length of $\alpha$ is more than $100n$, then round $\alpha$ up to the nearest number that has bit length at most $100n$\;
    }
    \Return $\hat{p}_i = \alpha$.
\ifthenelse{\boolean{conf}}{\end{algorithm2e}}{
\end{algorithm}
}
The above algorithm runs in $\mathrm{poly}(n)$ time. We show that the output $\hat{p}$ satisfies \eqref{eq:tar}.

Let $\alpha_t$ be the value of $\alpha$ after the $t$-th iteration. We first show that $p_i \leq \alpha_t$ for all $t$. 
 At the beginning, $\alpha_0 = 1/2$. Since $\mu^{(i)}$ is a uniform distribution over all independent sets, we have $\mu^{(i)}_i(+1) \leq \frac{1}{2}$. By the assumption of algorithm $\+A$, $\+A(\alpha_{t})/(1+\epsilon) \leq  d_{\text{TV}}(\mu^{(i)}_i,\nu^{\alpha_t}_i) = \alpha_t - p_i$. Hence, $\alpha_{t+1} \geq \alpha_t -  \+A(\alpha_{t})/(1+\epsilon) \geq p_i $. 
 
We next bound the value of $\alpha_t - p_i$. At the beginning, $\alpha_0 = \frac{1}{2}$ so that $\alpha_0 - p_i \leq \frac{1}{2}$. Note that $\alpha_{t+1} < \alpha_t -  \frac{\+A(\alpha_{t})}{1+\epsilon}+2^{-90n} \leq \alpha_t -  \frac{\alpha_t-p_i}{(1+\epsilon)^2}+2^{-90n} $, where $2^{-90n}$ is an upper bound of rounding error. The inequality implies that 
\begin{align*}
    \alpha_{t+1} - p_i \leq \left( 1 - \frac{1}{(1+\epsilon)^2} \right)(\alpha_t - p_i) + 2^{-90n}.
\end{align*}
Note that $\hat{p}_i = \alpha_{50n(1+\epsilon)^2}$.
Solving the recurrence implies that
\begin{align*}
    0 \leq \hat{p}_i - p_i \leq \exp\left( -\frac{50n(1+\epsilon)^2}{(1+\epsilon)^2} \right)\cdot \frac{1}{2} + (1+\epsilon)^2 \cdot 2^{-90n} < 2^{-40n}.
\end{align*}
Note that $p_i$ is at least $1/2^n$. Hence, the output $\hat{p}$ satisfies \eqref{eq:tar}.

\begin{remark*}
In the above proof, while the TV-distance between the two Gibbs distributions \(\mu^{(i)}\) and \(\nu^{\alpha}\) is large (because their parameter distance is 1), the TV distance between their marginal distributions at vertex \(i\) can be arbitrarily small. This highlights the distinction between the TV-distance of marginal distributions and the TV-distance of the entire distribution.
\end{remark*}}

\ifthenelse{\boolean{conf}}{\paragraph{Organization of the Appendix}}{\paragraph{Organization of the paper}}
\Cref{lem:TV-lower} is proved in \Cref{sec:lower}. The additive error approximation algorithm is given in \Cref{sec:add}. The algorithm for instances with small parameter distance in given in \Cref{sec:alg-main}. In \Cref{sec:proof-main}, we put all the pieces together to prove all algorithmic results. 
\ifthenelse{\boolean{conf}}{Our hardness result in \Cref{thm:many-vertex} is proved in \Cref{sec:hard}.}{Our
 hardness results on approximating TV-distance for marginal distributions is proved in \Cref{sec:hard}.}   

\ifthenelse{\boolean{conf}}{\section*{Acknowledgment}
Weiming Feng and Hongyang Liu gratefully acknowledge the support of ETH Z\"urich, where part of this work was conducted.  
Weiming Feng acknowledges the support of Dr.\ Max R\"ossler, the Walter Haefner Foundation, and the ETH Z\"urich Foundation during his affiliation with ETH Z\"urich.}


\section{Parameter distance v.s. total variation distance}\label{sec:lower}
In this section, we prove \Cref{lem:TV-lower}. We first prove the result for the hardcore model in \Cref{sec:hardcore-proof}, and then prove the result for the soft-Ising model in \Cref{sec:Ising-proof}. 
\subsection{Analysis for the hardcore model}\label{sec:hardcore-proof}
Recall that our problem setting is: Let $G=(V,E)$ be a graph, and $(G,\lambda^\mu)$ and $(G,\lambda^\nu)$ are two hardcore models on the same graph, satisfying the uniqueness condition in~\eqref{eq:cond-hardcore}. Let $\mu$ and $\nu$ denote distributions of $(G,\lambda^\mu)$ and $(G,\lambda^\nu)$ respectively. The parameter distance $\dis(\mu,\nu)$ for hardcore model is defined by $\dis(\mu,\nu) \defeq \Vert \lambda^\mu - \lambda^\nu \Vert_{\infty}$.
Now we prove the first hardcore model part of \Cref{lem:TV-lower}: 
\[\DTV{\mu}{\nu} \geq \CC\cdot \dis(\mu,\nu), \quad\text{where } \CC = \frac{1}{5000}.\]

Let $i \in V$ be one vertex such that $|\lambda_i^\mu-\lambda_i^\nu|=\dis(\mu,\nu)$. Without loss of generality, we can assume that $\lambda_i^\mu-\lambda_i^\nu=\dis(\mu,\nu)$. Otherwise, we can flip the roles of $\mu$ and $\nu$ in the following proof.
Define two collections of independent sets of $V$:
\begin{align*}
    H_1:&=\{S\text{ is an independent set of }G\mid i\in S\},\\
    H_2:&=\{S \setminus \{i\} \mid S \in H_1 \}.
\end{align*}
In this proof, we use $\mu(S)$ to denote the probability of $\sigma \in \{\pm\}^V$ in $\mu$ such that $\sigma_i = +1$ if and only if $i\in S$.
Define $\mu(H_1) = \sum_{S \in H_1} \mu(S)$ and $\mu(H_2),\nu(H_1),\nu(H_2)$ in a similar way.
It is easy to verify that 
\begin{align*}
\frac{\mu(H_1)}{\mu(H_2)}=\lambda_i^\mu ,\quad \frac{\nu(H_1)}{\nu(H_2)}=\lambda_i^\nu.
\end{align*}

Then, we consider two cases depending on the value of $|\mu(H_2)-\nu(H_2)|$. The simple case is $|\mu(H_2)-\nu(H_2)|\geq \CC\cdot \dis(\mu,\nu)$, then $\DTV{\mu}{\nu} \geq|\mu(H_2)-\nu(H_2)|\geq \CC\cdot \dis(\mu,\nu)$.

The main case is $|\mu(H_2)-\nu(H_2)|< \CC\cdot \dis(\mu,\nu)$. In this case we show that $|\mu(H_1)-\nu(H_1)|\geq \CC\cdot \dis(\mu,\nu)$, which also implies $\DTV{\mu}{\nu} \geq|\mu(H_1)-\nu(H_1)|\geq \CC\cdot \dis(\mu,\nu)$.
We can lower bound the value of $|\mu(H_1)-\nu(H_1)|$ as follows:
\begin{align}\label{eq:mu-nu-diff}
\mu(H_1)-\nu(H_1)&=\lambda_i^\mu \cdot \mu(H_2)-\lambda_i^\nu \cdot \nu(H_2)\notag\\
&=(\lambda_i^\nu+\dis(\mu,\nu)) \mu(H_2)-\lambda_i^\nu \cdot \nu(H_2)\notag\\
&=\lambda_i^\nu(\mu(H_2)-\nu(H_2))+\dis(\mu,\nu) \cdot \mu(H_2)\notag\\
&>\dis(\mu,\nu) \cdot \mu(H_2)-\lambda_i^\nu \cdot \CC \cdot \dis(\mu,\nu).
\end{align}

Because $(G,\lambda^\mu)$ and $(G,\lambda^\nu)$ satisfy the uniqueness condition, we have for all vertex $ i\in V$, all  $x\in \{\mu,\nu\}$, $\lambda_i^x \leq \lambda_c(\Delta) = \frac{(\Delta-1)^{\Delta - 1}}{(\Delta-2)^\Delta}\leq 4$ because $\Delta \geq 3$.
Let $N(i)$ denotes the set of neighbors of $i$ in graph $G=(V,E)$. 
By definition of $H_2$, $\mu(H_2)$ is the probability of $j \notin S$ for all $j \in N(i) \cup \{i\}$ and $S \sim \mu$ is a random independent set from the hardcore model $(G,\lambda^\mu)$. Suppose there is a total ordering $<$ among all vertices in $V$. We have
\begin{align}\label{eq:mu-H2}
\mu(H_2)&= \Pr[S \sim \mu]{ i \notin S} \prod_{j \in N(i)}\Pr[S \sim \mu]{ j \notin S \mid (i \notin S) \land (\forall k \in N(i) \text{ with } k < j, k \notin S )} \notag\\
&\geq \frac{1}{1+\lambda_i^\mu}\prod_{j \in N(i)} \left(\frac{1}{1+\lambda_j^\mu}\right)\geq \left( \frac{1}{1+4/(\Delta-2)}\right)^{\Delta+1}\notag\\
&=\left( 1-\frac{4}{\Delta+2}\right)^{\Delta+1}> \frac{1}{1000}.
\end{align}
Also note that $\lambda_i^\nu \leq 4$ by the uniqueness condition. Recall $\CC = \frac{1}{5000}$. By~\eqref{eq:mu-nu-diff} and~\eqref{eq:mu-H2},  we have
\begin{align*}
\mu(H_1)-\nu(H_1)&>\dis(\mu,\nu) \cdot \mu(H_2)-\lambda_i^\nu \cdot \CC \cdot \dis(\mu,\nu)\\
&>\dis(\mu,\nu)\left(\frac{1}{1000}-\frac{4}{5000}\right)=\CC\cdot \dis(\mu,\nu).
\end{align*}

Next, we prove the second part of the hardcore model with $\CC = b^3$. Now, we do not have the uniqueness condition but we have a marginal lower bound $b$ and the the hardcore model is soft.
The proof is similar.
We only need to show how to lower bound the value in~\eqref{eq:mu-nu-diff}.
We first upper bound $\lambda^\nu_i$ in \eqref{eq:mu-nu-diff}. Consider the pinning that all neighbors of $i$ are fixed as $-1$, then the probability of $i$ taking $-1$ is $\frac{1}{1+\lambda^\nu_i} \geq b$. Thus, $\lambda_i^v \leq \frac{1-b}{b}$. Next, we lower bound the value of $\mu(H_2)$.
Let $N(i)$ denote all neighbors of $i$.
Note that $i$ takes value $+$ only if all neighbors of $i$ take the value $-$. Recall that we assume that $\lambda_i^\mu-\lambda_i^\nu=\dis(\mu,\nu)$ in the beginning of the proof. 
We can assume that $\dis(\mu,\nu) > 0$, otherwise $\DTV{\mu}{\nu} = 0$ and the lemma is trivial. 
We have $\lambda^\mu_i > 0$ so that $\mu_i(+) > 0$. 
By the marginal lower bound, $\mu_i(+) \geq b$.
The vertex $i$ takes value $+$ only if all neighbors take the value $-$. We have
\begin{align*}
    b\leq \mu_i(+) \leq \Pr[S\sim \mu]{ \forall j \in N(i), j \notin S}. 
\end{align*}
On the other hand, the value of $\mu(H_2)$ can be lower bound by
\begin{align*}
    \mu(H_2) &= \Pr[S \sim \mu]{\forall j \in N(i) \cup \{i\}, j \notin S}\\
    &= \Pr[S\sim \mu]{ \forall j \in N(i), j \notin S}\Pr[S \sim \mu]{i \notin S \mid \forall j \in N(i), j \notin S}\\
    &\geq b^2.
\end{align*}
We can set $\CC = b^3$ so that
\begin{align*}
\mu(H_1)-\nu(H_1)&>\dis(\mu,\nu) \cdot \mu(H_2)-\lambda_i^\nu \cdot \CC \cdot \dis(\mu,\nu)\\
&\geq\dis(\mu,\nu)\left(b^2-\frac{1-b}{b} \cdot b^3\right)=\CC\cdot \dis(\mu,\nu).
\end{align*}


\subsection{Analysis for the soft-Ising model} \label{sec:Ising-proof}

Recall that our problem setting is: Let $G=(V,E)$ be a graph, 
and $(G,J^\mu,h^\mu)$ and $(G,J^\nu,h^\nu)$ are two soft-Ising models on the same graph.
Let $\mu$ and $\nu$ denote distributions of $(G,J^\mu,h^\mu)$ and $(G,J^\nu,h^\nu)$ respectively. 
The parameter distance $\dis(\mu,\nu)$ for soft-Ising model is defined by 
\[\dis(\mu,\nu) \defeq \max\left\{ \Vert J^\mu - J^\nu \Vert_{\max}, \max_v \frac{|h^\mu_v-h^\nu_v|}{\deg_v+1} \right\}.\]

Now we prove the soft-Ising model part of \Cref{lem:TV-lower}: 
if both $\mu$ and $\nu$ are $b$-marginally bounded for $0<b< 1$, then
\[\DTV{\mu}{\nu} \geq f(b) \cdot \dis(\mu,\nu), 
\quad\text{where } f(b) = \frac{1}{2}b^2.\]
Before proving the above, we first present the following lemma.
\begin{lemma}
\label{lem:local-global-dtv}
Let $\mu$ and $\nu$ be any two distributions on $\{\pm\}^V$, and $S\subseteq V$ be a subset of vertices.
Let $0<\delta<1$. 
If for any $\sigma\in \{\pm\}^S$, $\DTV{\mu^\sigma}{\nu^\sigma}\geq \delta$, then $\DTV{\mu}{\nu}\geq \delta/2$.
\end{lemma}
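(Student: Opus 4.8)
The plan is to condition on the restriction of the two distributions to $S$ and to pay only for the discrepancy between the two $S$-marginals $\mu_S$ and $\nu_S$, which in turn will be bounded by $\DTV{\mu}{\nu}$ itself. Writing each configuration of $\{\pm\}^V$ as a pair $(\sigma,\tau)$ with $\sigma\in\{\pm\}^S$ and $\tau\in\{\pm\}^{V\setminus S}$, I would first use the chain rule $\mu(\sigma,\tau)=\mu_S(\sigma)\,\mu^\sigma(\tau)$ and $\nu(\sigma,\tau)=\nu_S(\sigma)\,\nu^\sigma(\tau)$ to get
\begin{align*}
2\DTV{\mu}{\nu}=\sum_{\sigma\in\{\pm\}^S}\ \sum_{\tau\in\{\pm\}^{V\setminus S}}\abs{\mu_S(\sigma)\mu^\sigma(\tau)-\nu_S(\sigma)\nu^\sigma(\tau)}.
\end{align*}

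Next, for a fixed $\sigma$ with $\mu_S(\sigma)>0$ and $\nu_S(\sigma)>0$, I would insert the hybrid term $\mu_S(\sigma)\nu^\sigma(\tau)$, apply the triangle inequality termwise, and sum over $\tau$, using $\sum_\tau\nu^\sigma(\tau)=1$ and $\sum_\tau\abs{\mu^\sigma(\tau)-\nu^\sigma(\tau)}=2\DTV{\mu^\sigma}{\nu^\sigma}$:
\begin{align*}
\sum_{\tau}\abs{\mu_S(\sigma)\mu^\sigma(\tau)-\nu_S(\sigma)\nu^\sigma(\tau)}\ \geq\ 2\mu_S(\sigma)\DTV{\mu^\sigma}{\nu^\sigma}-\abs{\mu_S(\sigma)-\nu_S(\sigma)}\ \geq\ 2\delta\,\mu_S(\sigma)-\abs{\mu_S(\sigma)-\nu_S(\sigma)},
\end{align*}
the last step by the hypothesis $\DTV{\mu^\sigma}{\nu^\sigma}\geq\delta$. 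Summing over all $\sigma$, using $\sum_\sigma\mu_S(\sigma)=1$ and $\sum_\sigma\abs{\mu_S(\sigma)-\nu_S(\sigma)}=2\DTV{\mu_S}{\nu_S}$, gives $2\DTV{\mu}{\nu}\geq 2\delta-2\DTV{\mu_S}{\nu_S}$. Finally I would invoke the data-processing inequality $\DTV{\mu_S}{\nu_S}\leq\DTV{\mu}{\nu}$ (immediate from the $\max_{A}(\mu(A)-\nu(A))$ characterization, since events depending only on the coordinates in $S$ form a subfamily of all events on $\{\pm\}^V$), which turns the previous line into $2\DTV{\mu}{\nu}\geq 2\delta-2\DTV{\mu}{\nu}$, i.e.\ $\DTV{\mu}{\nu}\geq\delta/2$.

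I do not anticipate a genuine obstacle; the only point needing a little care is the bookkeeping for partial configurations $\sigma$ lying in the support of only one of $\mu_S,\nu_S$, where one conditional distribution is undefined. In that case the inner sum equals $\mu_S(\sigma)$ or $\nu_S(\sigma)$, and since $\delta<1$ one checks directly that it still dominates $2\delta\,\mu_S(\sigma)-\abs{\mu_S(\sigma)-\nu_S(\sigma)}$ (it is $\mu_S(\sigma)\geq(2\delta-1)\mu_S(\sigma)$ in one case, $\nu_S(\sigma)\geq-\nu_S(\sigma)$ in the other), so the per-$\sigma$ bound above holds for every $\sigma$ and the summation goes through unchanged; alternatively, in all our applications $\mu$ and $\nu$ are spin systems on the same graph and share the same feasible configurations, so this case does not arise. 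Everything else is just the chain rule together with two applications of the triangle inequality.
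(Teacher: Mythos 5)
Your proof is correct, and it reaches the paper's key intermediate inequality $\DTV{\mu}{\nu}\geq\delta-\DTV{\mu_S}{\nu_S}$ by a different route. The paper works with the identities $\sum_x\min(p(x),q(x))=1-\DTV{p}{q}$ and $\sum_x\max(p(x),q(x))=1+\DTV{p}{q}$, factors the joint minimum via $\min(ab,cd)\leq\max(a,c)\min(b,d)$, and converts the result into the desired bound. You instead work directly with $2\DTV{\mu}{\nu}=\sum_{\sigma,\tau}\abs{\mu_S(\sigma)\mu^\sigma(\tau)-\nu_S(\sigma)\nu^\sigma(\tau)}$, introduce the hybrid $\mu_S(\sigma)\nu^\sigma(\tau)$, and apply the reverse triangle inequality termwise before summing over $\tau$ and then $\sigma$. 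Both then finish identically, combining the intermediate bound with the data-processing inequality $\DTV{\mu_S}{\nu_S}\leq\DTV{\mu}{\nu}$. Your hybrid-telescoping argument is the more standard device in this genre (it is the same trick used to bound TV distance between product measures or Markov chains coordinate by coordinate), whereas the paper's $\min/\max$ factorization is slightly more compact but relies on an inequality one has to stop and verify. Your proposal is also somewhat more careful than the paper on the degenerate $\sigma$ not in the joint support, where a conditional is undefined; the paper glosses over this, while you observe directly that the per-$\sigma$ bound still holds (using $\delta<1$) or that the case does not arise for spin systems on a common graph.
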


\begin{proof}
For any two distributions $p,q$ on space $\Omega$, by the definition of total variation distance,
\begin{align*}
\sum_{x\in \Omega}\min(p(x),q(x)) &= 1-\DTV{p}{q},\\
\sum_{x\in \Omega}\max(p(x),q(x)) &= 1+\DTV{p}{q}.
\end{align*}

Under the assumption of the lemma, we have
\begin{align*}
1-\DTV{\mu}{\nu}&=\sum_{\sigma\in \{\pm\}^V}\min(\mu(\sigma),\nu(\sigma))
\\
&=\sum_{\sigma\in \{\pm\}^S}\sum_{\tau\in \{\pm\}^{V\setminus S}}\min(\mu(\sigma)\mu(\tau\mid\sigma),\nu(\sigma)\nu(\tau\mid\sigma))
\\
&\leq \sum_{\sigma\in \{\pm\}^S}\max(\mu(\sigma),\nu(\sigma))\sum_{\tau\in \{\pm\}^{V\setminus S}}\min(\mu(\tau\mid\sigma),\nu(\tau\mid\sigma))
\\
(\text{since }\DTV{\mu^\sigma}{\nu^\sigma}\geq \delta) \quad 
&\leq \sum_{\sigma\in \{\pm\}^S}\max(\mu(\sigma),\nu(\sigma))(1-\delta) 
\\
&= (1+\DTV{\mu_S}{\nu_S})(1-\delta),
\\
&\leq 1-\delta+\DTV{\mu_S}{\nu_S}
\end{align*}    
which implies
\begin{align}
\label{eq:local-global-dtv-1}
\DTV{\mu}{\nu}\geq \delta - \DTV{\mu_S}{\nu_S}.
\end{align}
On the other hand, we have 
\begin{align}
\label{eq:local-global-dtv-2}
\DTV{\mu}{\nu} \geq \DTV{\mu_S}{\nu_S}.
\end{align}
Then \Cref{lem:local-global-dtv} follows by combining~\eqref{eq:local-global-dtv-1} and~\eqref{eq:local-global-dtv-2}.
\end{proof}

According to the definition of the parameter distance, there are $2$ cases:
\begin{itemize}
\item It exists $\{u,v\}\in E$ such that $|J^\mu_{u,v}-J^\nu_{u,v}|=\dis(\mu,\nu)$.
\item It exists $v\in V$ such that $|h_v^\mu-h_v^\nu|=\dis(\mu,\nu) \cdot (\deg_v+1)$.
\end{itemize}

For the first case, we will show that
\begin{align}
\label{eq:Ising-dtv-lowerbound-1}
\forall \sigma \in \{\pm\}^{V\setminus\{u,v\}}, \DTV{\mu^\sigma}{\nu^\sigma}\geq 2f(b)\dis(\mu,\nu),
\end{align}
and for the second case, we will show that
\begin{align}
\label{eq:Ising-dtv-lowerbound-2}
\forall \sigma \in \{\pm\}^{V\setminus\{v\}}, \DTV{\mu^\sigma}{\nu^\sigma} \geq 2f(b)\dis(\mu,\nu).
\end{align}

Assuming~\eqref{eq:Ising-dtv-lowerbound-1} and~\eqref{eq:Ising-dtv-lowerbound-2} hold, using \Cref{lem:local-global-dtv}, 
we can directly derive the result of the total variation distance lower bound for soft-Ising model claimed in \Cref{lem:TV-lower}.

Next, we will prove~\eqref{eq:Ising-dtv-lowerbound-1} and~\eqref{eq:Ising-dtv-lowerbound-2},
respectively to complete the entire proof.

\begin{proof}\ifthenelse{\boolean{conf}}{\textbf{of eq.\eqref{eq:Ising-dtv-lowerbound-1}}}{[Proof of eq.\eqref{eq:Ising-dtv-lowerbound-1}]}
Recall that $\mu,\nu$ are both $b$-marginal bounded which is defined in \Cref{def:marlow}. Thus for any $\sigma \in \{\pm\}^{V\setminus \{u,v\}}$, for any $\tau \in \{\pm\}^{\{u,v\}}$, we have $\mu^\sigma(u=\tau_u,v=\tau_v) = \mu^\sigma_u(\tau_u)\cdot \mu^{\sigma\wedge u\gets\tau_u}_v(\tau_v) \geq b^2$. The same local lower bound also holds for the distribution $\nu$.

Define $c^\mu_u=h^\mu_u+\sum_{\{u,w\}\in E, w\neq v}J^\mu_{u,w}\sigma_w$ and $c^\mu_v=h^\mu_v+\sum_{\{v,w\}\in E, w\neq u}J^\mu_{v,w}\sigma_w$ be the influence coefficient of the external field $\sigma$ on $u,v$ in the distribution $\mu$. Similarly, let $c^\nu_u$ and $c^\nu_v$ denote the corresponding influence coefficients in the distribution $\nu$. 

According to the definition of the Ising model, when the external field $\sigma$ is fixed, the local distribution of $u,v$ in the distribution $\mu$ is follows:
\begin{align*}
    \mu^\sigma(u=+,v=+) &= \exp(J^\mu_{u,v}+c^\mu_u+c^\mu_v-s^\mu),
    \\
    \mu^\sigma(u=+,v=-) &= \exp(-J^\mu_{u,v}+c^\mu_u-c^\mu_v-s^\mu),
    \\
    \mu^\sigma(u=-,v=+) &= \exp(-J^\mu_{u,v}-c^\mu_u+c^\mu_v-s^\mu),
    \\
    \mu^\sigma(u=-,v=-) &= \exp(J^\mu_{u,v}-c^\mu_u-c^\mu_v-s^\mu),
\end{align*}
where $s^\mu = \log(\exp(J^\mu_{u,v}+c^\mu_u+c^\mu_v)+\exp(-J^\mu_{u,v}+c^\mu_u-c^\mu_v))+\exp(-J^\mu_{u,v}-c^\mu_u+c^\mu_v)+\exp(J^\mu_{u,v}-c^\mu_u-c^\mu_v))$. 

Using the same method, we define $s^\nu$, and the local distribution of $u,v$ in the distribution $\nu$ can be expressed in terms of $c^\nu_u,c^\nu_v$ and $s^\nu$, as in the above equation.

Let, $p=J^\mu_{u,v}+c^\mu_u+c^\mu_v-s^\mu$ and $q=J^\nu_{u,v}+c^\nu_u+c^\nu_v-s^\nu$, then 
\begin{align}\label{eq:Ising-dtv-exp-ineq}
    \DTV{\mu^\sigma}{\nu^\sigma} &\geq |\mu^\sigma(u=+,v=+)-\nu^\sigma(u=+,v=+)|
    \nonumber \notag\\&= |\exp(p)-\exp(q)| =\int_{x=\min\{p,q\}}^{\max\{p,q\}} \exp(x)dx
    \nonumber \notag\\&\geq  \int_{x=\min\{p,q\}}^{\max\{p,q\}} \exp(p)dx =\exp(p)\cdot |p-q|
    \nonumber \notag\\{\text{(by marginal lower bound)}}\quad&\geq b^2\cdot |p-q|.
\end{align}

Based on this, we can derive the following lower bound on the total variation distance between $\mu^\sigma$ and $\nu^\sigma$:
\begin{align}
\label{eq:Ising-dtv-local-edge-1}
\DTV{\mu^\sigma}{\nu^\sigma} &\geq b^2 \cdot |(J^\mu_{u,v}+c^\mu_u+c^\mu_v-s^\mu)-(J^\nu_{u,v}+c^\nu_u+c^\nu_v-s^\nu)|.
\end{align}

Using the same method, consider the distribution differences in $\mu^\sigma$ and $\nu^\sigma$ for the remaining three cases $(u=+,v=-),(u=-,v=+)$ and $(u=-,v=-)$, we can also obtain the following lower bounds:
\begin{align}
\label{eq:Ising-dtv-local-edge-2}
\DTV{\mu^\sigma}{\nu^\sigma} &\geq b^2 \cdot |-(-J^\mu_{u,v}+c^\mu_u-c^\mu_v-s^\mu)+(-J^\nu_{u,v}+c^\nu_u-c^\nu_v-s^\nu)|,
\\
\label{eq:Ising-dtv-local-edge-3}
\DTV{\mu^\sigma}{\nu^\sigma} &\geq b^2 \cdot |-(-J^\mu_{u,v}-c^\mu_u+c^\mu_v-s^\mu)+(-J^\nu_{u,v}-c^\nu_u+c^\nu_v-s^\nu)|,
\\
\label{eq:Ising-dtv-local-edge-4}
\DTV{\mu^\sigma}{\nu^\sigma} &\geq b^2 \cdot |(J^\mu_{u,v}-c^\mu_u-c^\mu_v-s^\mu)-(J^\nu_{u,v}-c^\nu_u-c^\nu_v-s^\nu)|.
\end{align}

Note that the absolute value operation satisfies the triangle inequality, by combining inequalities \eqref{eq:Ising-dtv-local-edge-1}, \eqref{eq:Ising-dtv-local-edge-2}, \eqref{eq:Ising-dtv-local-edge-3} and \eqref{eq:Ising-dtv-local-edge-4}, we have
\begin{align*}
4\DTV{\mu^\sigma}{\nu^\sigma} \geq b^2\cdot 4|J^{\mu}_{u,v}-J^{\nu}_{u,v}|.
\end{align*}

Then for the case that $|J^\mu_{u,v}-J^\nu_{u,v}|=\dis(\mu,\nu)$, the lower bound of local total variation distance in \eqref{eq:Ising-dtv-lowerbound-1} is proven.
\end{proof}

\begin{proof}\ifthenelse{\boolean{conf}}{\textbf{of eq.\eqref{eq:Ising-dtv-lowerbound-2}}}{[Proof of eq.\eqref{eq:Ising-dtv-lowerbound-2}]}
For the case that $|h_v^\mu-h_v^\nu|=\dis(\mu,\nu) \cdot (\deg_v+1)$. Define $c^\mu=h^\mu_v+\sum_{\{u,v\}\in E}J^\mu_{u,v}\sigma_v$ and $c^\nu=h^\nu_v+\sum_{\{u,v\}\in E}J^\nu_{u,v}\sigma_v$ be the influence
coefficient of the external field $\sigma$ on $v$ in the distribution $\mu$ and $\nu$ respectively, note that for each $\{u,v\}\in E$, $|J^\mu_{u,v}-J^\nu_{u,v}|$ is bounded by $\dis(\mu,\nu)$, then 
\begin{align*}
|c^\mu-c^\nu| &\geq |h^\mu_v-h^\nu_v|-\sum_{\{u,v\}\in E}|J^\mu_{u,v}-J^\nu_{u,v}| \\&\geq (\deg_v+1)\dis(u,v)- \sum_{\{u,v\}\in E}\dis(\mu,\nu) = \dis(\mu,\nu).
\end{align*}

According to the definition of the Ising model, when the external field $\sigma$ is fixed, the local distribution on $v$ in the distribution $\mu$ is follows:
\begin{align*}
    \mu^\sigma(v=+) = \exp(c^\mu-s^\mu), \quad \mu^\sigma(v=-) = \exp(-c^\mu-s^\mu),
\end{align*}
where $s^\mu = \log (\exp(c^\mu)+\exp(-c^\mu))$. 

Using the same method, we define $s^\nu$, and the local distribution of $u,v$ in the distribution $\nu$ can be expressed in terms of $c^\nu$ and $s^\nu$, as in the above equation. 

Recall that $\mu,\nu$ are both $b$-marginal bounded, applying the same method in \eqref{eq:Ising-dtv-exp-ineq}, we can derive the following lowerbound on the total variation distance between $\mu^\sigma$ and $\nu^\sigma$:
\begin{align}
\label{eq:Ising-dtv-local-vertex-1}
\DTV{\mu^\sigma}{\nu^\sigma} &\geq b \cdot |(c^\mu-s^\mu)-(c^\nu-s^\nu)|,
\\
\label{eq:Ising-dtv-local-vertex-2}
\DTV{\mu^\sigma}{\nu^\sigma} &\geq b \cdot |-(-c^\mu-s^\mu)+(-c^\nu-s^\nu)|.
\end{align}

By combining the above inequalities, we have 
\begin{align*}
2\DTV{\mu^\sigma}{\nu^\sigma} \geq b \cdot 2|c^\mu-c^\nu| \geq 2b\cdot \dis(\mu^\sigma,\nu^\sigma).
\end{align*}
Note $0<b<1$, then the lower bound of local total variation distance in \eqref{eq:Ising-dtv-lowerbound-2} is proven.
\end{proof}

\section{Additive-error approximation algorithm}\label{sec:add}

\subsection{TV-distance between two Gibbs distributions}

We first present an algorithm can achieves the additive-error approximation to the total variation distance between two  \emph{general} Gibbs distributions, which covers Ising and hardcore models as special cases.
%
Let $\mu$ over $\{0,1\}^V$ be a Gibbs distribution over graph $G=(V,E)$. For any configuration $\sigma \in \{0,1\}^V$, $\mu(\sigma) = w_{\mu}(\sigma)/Z_\mu$, where $w_{\mu}(\cdot)$ is the weight function and $Z_\mu = \sum_{\tau \in \{0,1\}^V}w_\mu(\sigma)$.
Let $\TW_{G} \in \mathbb{N}$.
We say the Gibbs distribution $\mu$ admits a weight oracle with cost $\TW_{G}$ is given any $\sigma \in [q]^V$, it returns the exact weight $w_\mu(\sigma)$ in time $\TW_{G}$. 
Note that both Ising and hardcore models, as well as most Gibbs distributions, admit weight oracle with cost $\TW_{G} = O(|V|+|E|)$.
Recall the sampling and approximate counting oracles are defined in~\Cref{def:oracle}.

\begin{theorem}\label{thm:Approximate-Gibbs}
There exists an algorithm such that, given two general Gibbs distributions $\mu$ and $\nu$ on the same graph $G=(V,E)$ and an error bound $\epsilon > 0$, if $\mu$ and $\nu$ both admit weight, sampling, and approximate counting oracles with cost $\TW_{G}$ and cost functions $\TS_G(\cdot)$ and $\TC_G(\cdot)$ respectively, then it returns a random $\hat{d}$ in time $O(\TC_G(\frac{\epsilon}{4})+ \frac{1}{\epsilon^2}(\TW_{G} + \TS_G(\frac{\epsilon}{4})))$ such that
\begin{align*}
    \Pr[]{\DTV{\mu}{\nu} - \epsilon \leq \hat{d} \leq \DTV{\mu}{\nu} + \epsilon} \geq \frac{2}{3}.
\end{align*}
\end{theorem}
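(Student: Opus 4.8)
The plan is to realize the estimator sketched in \Cref{sec:add-alg} and control all three sources of error: the sampling error, the counting error, and the empirical averaging error. Recall from~\eqref{eq:TV-1} that $\DTV{\mu}{\nu} = \E[\sigma\sim\mu]{X}$ where $X = \max(0,1-\nu(\sigma)/\mu(\sigma))$, with $0\le X\le 1$ and hence $\Var{X}\le 1$. First I would call the approximate counting oracles with accuracy parameter $\epsilon/4$ (paying $\TC_G(\epsilon/4)$) to obtain $\hat Z_\mu,\hat Z_\nu$ satisfying $(1-\epsilon/4)Z_\mu\le \hat Z_\mu\le(1+\epsilon/4)Z_\mu$ and similarly for $\nu$, each with probability $0.99$; condition on this event (probability at least $0.98$). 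Then draw $T = \Theta(1/\epsilon^2)$ independent samples $\sigma^{(1)},\dots,\sigma^{(T)}$ from the sampling oracle with accuracy $\epsilon/4$, and for each form $\hat X_i = \max\!\left(0,\,1 - \frac{w_\nu(\sigma^{(i)})\hat Z_\mu}{w_\mu(\sigma^{(i)})\hat Z_\nu}\right)$ using the weight oracle (cost $\TW_G$ per sample; if $w_\mu(\sigma^{(i)})=0$ that sample has probability $0$ under $\mu$, so on the good event this does not arise — or simply set $\hat X_i=0$ there). Output $\hat d = \frac1T\sum_{i=1}^T \hat X_i$. The running time is $O(\TC_G(\epsilon/4) + \frac1{\epsilon^2}(\TW_G+\TS_G(\epsilon/4)))$ as claimed.

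For correctness I would break the deviation $|\hat d - \DTV{\mu}{\nu}|$ into three pieces via the triangle inequality. (i) \emph{Counting bias}: conditioned on the good event for $\hat Z_\mu,\hat Z_\nu$, the ratio $\hat Z_\mu/\hat Z_\nu$ is within a multiplicative $(1\pm\epsilon)$ factor (after absorbing constants) of $Z_\mu/Z_\nu$, so $\frac{w_\nu(\sigma)\hat Z_\mu}{w_\mu(\sigma)\hat Z_\nu} = (1\pm O(\epsilon))\frac{\nu(\sigma)}{\mu(\sigma)}$ pointwise. Using $\max(0,1-(1\pm O(\epsilon))r) = \max(0,1-r) \pm O(\epsilon) r$ and then taking expectation over $\sigma\sim\mu$, together with $\E[\sigma\sim\mu]{\nu(\sigma)/\mu(\sigma)} = \sum_\sigma \nu(\sigma)\le 1$ (restricted to the support of $\mu$), gives $\big|\E[\sigma\sim\mu]{\hat X_i} - \DTV{\mu}{\nu}\big| \le O(\epsilon)$; choosing the constant in the oracle parameter makes this at most $\epsilon/3$. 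This is exactly the one-sided argument spelled out in the overview. (ii) \emph{Sampling bias}: since $0\le \hat X_i\le 1$ (it is a bounded measurable function of $\sigma$ and of the fixed $\hat Z$'s), replacing the true law $\mu$ of $\sigma$ by the oracle's law $\tilde\mu$ with $\DTV{\tilde\mu}{\mu}\le \epsilon/4$ changes $\E{\hat X_i}$ by at most $\DTV{\tilde\mu}{\mu}\le \epsilon/4\le \epsilon/3$. (iii) \emph{Concentration}: the $\hat X_i$ are i.i.d.\ (conditioned on the $\hat Z$'s) and bounded in $[0,1]$, so $\Var{\hat X_i}\le 1$ and $\Var{\hat d}\le 1/T$; Chebyshev gives $\Pr{|\hat d - \E{\hat d}| > \epsilon/3} \le \frac{9}{T\epsilon^2} \le \frac1{100}$ for $T = \lceil 900/\epsilon^2\rceil$. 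A union bound over the failure of the counting event and the Chebyshev event gives overall success probability at least $2/3$ (in fact better), and summing the three $\epsilon/3$ contributions yields $|\hat d - \DTV{\mu}{\nu}|\le \epsilon$.

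The main obstacle is making the counting-bias step (i) fully rigorous while keeping it an \emph{additive} $O(\epsilon)$ error rather than a relative one: one must be careful that the error term $O(\epsilon)\cdot \nu(\sigma)/\mu(\sigma)$ is controlled \emph{in expectation} (where it sums to $\le 1$) and not pointwise (where it can be huge on configurations with $\mu(\sigma)\ll\nu(\sigma)$), and that the clipping by $\max(0,\cdot)$ is handled on both sides so that we get a two-sided bound $\DTV{\mu}{\nu} - O(\epsilon) \le \E{\hat X_i} \le \DTV{\mu}{\nu} + O(\epsilon)$. The lower-bound direction needs $\frac{w_\nu(\sigma)\hat Z_\mu}{w_\mu(\sigma)\hat Z_\nu}\le(1+O(\epsilon))\frac{\nu(\sigma)}{\mu(\sigma)}$, hence $\max(0,1-\text{ratio})\ge \max(0,1-(1+O(\epsilon))\nu(\sigma)/\mu(\sigma)) \ge \max(0,1-\nu(\sigma)/\mu(\sigma)) - O(\epsilon)\nu(\sigma)/\mu(\sigma)$, and again one takes expectations. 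Everything else is routine bookkeeping: tracking how the oracle parameter $\epsilon/4$ propagates through the constants, and the standard Chebyshev/union-bound argument. I would also note that this matches and strengthens the estimator of~\cite{0001GMV20}, which assumed exact access to $\mu(\sigma),\nu(\sigma)$; the only new ingredient here is absorbing the counting-oracle error, which is precisely step (i).
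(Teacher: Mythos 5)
Your proposal follows essentially the same route as the paper's proof: same estimator $\hat X_i = \max(0,1-\hat\nu(\sigma)/\hat\mu(\sigma))$, same three-way error decomposition (counting bias controlled by $\E[\sigma\sim\mu]{\nu(\sigma)/\mu(\sigma)}\le 1$, sampling bias bounded by $\DTV{\tilde\mu}{\mu}$ since $\hat X\in[0,1]$, plus an averaging step with $O(1/\epsilon^2)$ samples), with only cosmetic differences (Chebyshev in place of Hoeffding, a $\epsilon/3$-each budget instead of the paper's $5\epsilon/8 + \epsilon/4 + \epsilon/8$ split). The one small thing to nail down is that your $\epsilon/3$-per-piece allocation does not quite close with the $\epsilon/4$ oracle accuracy you start with (the counting-ratio error is more like $5\epsilon/8$), so you would need to either shrink the oracle parameter by a constant or re-balance the budget as the paper does — but this is pure constant-chasing and does not affect the asymptotic running time in the theorem statement.
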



\begin{proof}
Define a random variable $X \in [0,1]$ such that $X = \max\left(0,1-\frac{\nu(\sigma)}{\mu(\sigma)}\right)$, where $\sigma \sim \mu$.
\begin{align*}
 \DTV{\mu}{\nu} &= \sum_{\sigma:\mu(\sigma)>\nu(\sigma)}|\mu(\sigma)-\nu(\sigma)| =   \sum_{\sigma:\mu(\sigma)>\nu(\sigma)}\mu(\sigma)\left|1-\frac{\nu(\sigma)}{\mu(\sigma)}\right|\\
 &=\sum_{\sigma}\mu(\sigma)\max\left(0,1-\frac{\nu(\sigma)}{\mu(\sigma)}\right) = \E[]{X}.
\end{align*}
By the definition of $X$, we have $0\leq X \leq 1$, so $\Var[]{X}\leq 1$.
Ideally, we want to draw independent samples of $X$ and take average to approximate $ \DTV{\mu}{\nu}$. However, the main issue is that given a $\sigma \sim \mu$, we cannot compute neither $\mu(\sigma)$ nor $\nu(\sigma)$ exactly. Alternatively, we will define another random variable $\hat{X} \in [0,1]$ that approximate the random variable $X$. 

Call the approximate counting oracles of $\mu$ and $\nu$ to obtain $\hat{Z_\mu}$ and $\hat{Z_\nu}$ that approximate partition functions $Z_\mu$ and $Z_\nu$ with relative error bound $\frac{\epsilon}{4}$. 
We may assume both counting oracles succeed, which happens with probability at least 0.98.
Define the random variable $\hat{X} \in [0,1]$ by the following process.
\begin{enumerate}
    \item Call sampling oracle of $\mu$ to obtain one sample $\sigma \in \{\pm\}^V$ such that $\DTV{\sigma}{\mu}\leq \frac{\epsilon}{4}$.\label{step-I}
    \item Call weight oracles of both $\mu$ and $\nu$ to obtain exact weights $w_\mu(\sigma), w_\nu(\sigma)$. Compute $\hat{\mu}(\sigma)={w_\mu(\sigma)}/{\hat{Z_\mu}}$ and $\hat{\nu}(\sigma)={w_\nu(\sigma)}/{\hat{Z_\nu}}$.
    \item Define $\hat{X}=\max(0,1-{\hat{\nu}(\sigma)}/{\hat{\mu}(\sigma)})$, in particular, $\hat{X} = 0$ if $\hat\mu(\sigma) = 0$.
\end{enumerate}
Let $T = \frac{64}{\epsilon^2}$.
Our algorithm is the following simple process:
\begin{itemize}
    \item Draw $T$ independent samples $\hat{X}_1,\hat{X}_2,\ldots,\hat{X}_T$ of random variable $\hat{X}$.
    \item Output the average $\hat d = \frac{1}{T}\sum_{i=1}^T \hat{X}_i$.
\end{itemize}
It is easy to see the running time of our algorithm is $2\TC_G(\frac{\epsilon}{4})+T\cdot (\TS_G(\frac{\epsilon}{4}) + 2\TW_G +O(1))$.


    


    


To prove the correctness of our algorithm, we only need to show that
\begin{align}\label{eq:ex-error}
    \left\vert \E[]{\hat{X}} - \E[]{X} \right\vert = \left\vert \E[]{\hat{X}} - \DTV{\mu}{\nu} \right\vert \leq \frac{7\epsilon}{8}.
\end{align}
Note that $0 \leq \hat{X} \leq 1$ so that $\Var{\hat X} \leq 1$. By Hoeffding’s inequality, it is easy to show that with probability at least 0.9, $|\hat d - \mathbb{E}[\hat{X}] | \leq \epsilon/8$. Combining with~\eqref{eq:ex-error} proves the theorem. 

Now, we only need to verify~\eqref{eq:ex-error}.
We introduce a new random variable $X^*$ in analysis.
In the definition of $\hat{X}$,
assume we replace the sample $\sigma$ in \Cref{step-I} with a  perfect sampler of the distribution $\mu$. Let $X^*$ denote the resulting random variable. We first compare $\E[]{X^*}$ with $\E[]{X}$. The difference between $X^*$ and $X$ comes from the error of computing the ratio of $\mu(\sigma)$ and $\nu(\sigma)$.
Note that $\frac{\nu(\sigma)}{\mu(\sigma)} = \frac{w_\nu(\sigma)}{w_\mu(\sigma)}\cdot \frac{Z_\mu}{Z_\nu}$ and $\frac{\hat\nu(\sigma)}{\hat\mu(\sigma)} = \frac{w_\nu(\sigma)}{w_\mu(\sigma)}\cdot \frac{\hat Z_\mu}{\hat Z_\nu}$. By the definition of approximate counting oracle, for $\sigma$ with $\mu(\sigma) >0$,
\begin{align*}
 \left(1-\frac{5\epsilon}{8}\right)\frac{\nu(\sigma)}{\mu(\sigma)}   \leq  \frac{\hat{\nu}(\sigma)}{\hat{\mu}(\sigma)} \leq \left(1+\frac{5\epsilon}{8}\right)\frac{\nu(\sigma)}{\mu(\sigma)}.
\end{align*}
We can compute the expectation as 
\begin{align*}
    \E[]{X^*}&= \sum_{\sigma:\mu(\sigma)>0}\mu(\sigma)\max\left(0,1-\frac{\hat{\nu}(\sigma)}{\hat{\mu}(\sigma)}\right)  \leq\sum_{\sigma:\mu(\sigma)>0}\mu(\sigma)\max\left(0,1-\left(1-\frac{5\epsilon}{8}\right)\frac{\nu(\sigma)}{\mu(\sigma)}\right)\\
&\leq\sum_{\sigma:\mu(\sigma)>0}\mu(\sigma)\left(\max\left(0,1-\frac{\nu(\sigma)}{\mu(\sigma)}\right)+\frac{5\epsilon}{8}\frac{\nu(\sigma)}{\mu(\sigma)}\right)
\leq \DTV{\mu}{\nu}+\frac{5\epsilon}{8}\sum_{\sigma:\mu(\sigma)>0}\nu(\sigma)\\
    &\leq \DTV{\mu}{\nu}+\frac{5\epsilon}{8}.
\end{align*}
Using the same way, we could verify the other direction. We have 
\begin{align}\label{eq:ex-1}
|\E[]{X^*}-\E[]{X}| = |\mathbf{E}[X^*]-\DTV{\mu}{\nu}|\leq \frac{5\epsilon}{8}.    
\end{align}

Next, we compare $X^*$ to $\hat X$. The only difference is that they sample $\sigma$ from different distributions. Let $\mu'$ be the distribution defined by the approximate sampling oracle.
Define a function $f$ such that for any $\sigma \in \{\pm\}^V$, $f(\sigma) = \max(0,1-{\hat{\nu}(\sigma)}/{\hat{\mu}(\sigma)})$, where we set $f(\sigma) = 0$ if $\hat \mu (\sigma) = 0$.
We have $|\mathbf{E}(\hat X)-\mathbf{E}(X^*)|=\left | \mathbf{E}_{\sigma\sim \mu}[f(\sigma)]-\mathbf{E}_{\sigma\sim \mu'}[f(\sigma)] \right |$. 
Define $A = \{\sigma\mid \mu(\sigma)>\mu'(\sigma)\}$ and $B = \{\sigma \mid \mu(\sigma)<\mu'(\sigma)\}$.
We can write
\begin{align}\label{eq:ex2}
    |\mathbf{E}(\hat X)-\mathbf{E}(X^*)| &=
     \left | \sum_{\sigma \in A} (\mu'(\sigma)-\mu(\sigma))f(\sigma) + \sum_{\sigma \in B} (\mu'(\sigma)-\mu(\sigma))f(\sigma) \right|\notag\\
\text{($\star$)}\quad    &\leq \max \tp{\sum_{\sigma \in A} (\mu(\sigma)-\mu'(\sigma))f(\sigma),\sum_{\sigma \in B} (\mu'(\sigma)-\mu(\sigma))f(\sigma)}\notag\\
\text{(by $f(\sigma) \in [0,1]$)}\quad    &\leq \DTV{\mu}{\mu'}\leq \frac{\epsilon}{4},
\end{align}
where in inequality ($\star$), we use $\sum_{\sigma \in A} (\mu'(\sigma)-\mu(\sigma))f(\sigma) \leq 0$ and $\sum_{\sigma \in B} (\mu'(\sigma)-\mu(\sigma))f(\sigma) \geq 0$.
Finally,~\eqref{eq:ex-error} holds due to~\eqref{eq:ex-1} and \eqref{eq:ex2}.
\end{proof}

\Cref{thm:Approximate-Gibbs} implies the following corollary for concrete models.
\begin{corollary}\label{corollary:apps}
There exist an FPRAS for approximating TV-distances with additive error for following models: (1) Hardcore model satisfying uniqueness condition; (2) Ising model with spectral condition; (3) Ferromagnetic interaction with consistent field condition; and (4) Anti-ferromagnetic interaction at or within the uniqueness threshold.
\end{corollary}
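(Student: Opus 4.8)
The plan is to obtain \Cref{corollary:apps} as an immediate consequence of \Cref{thm:Approximate-Gibbs}: that theorem already supplies an additive-error FPRAS for the TV-distance between any two Gibbs distributions that each admit weight, sampling, and approximate counting oracles of cost $\mathrm{poly}(n/\epsilon)$, so all that remains is to verify that the four named model families meet these oracle hypotheses with polynomial cost, and then read off the running time.

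First I would dispatch the weight oracle. For a hardcore model $(G,\lambda)$, checking whether $\sigma$ encodes an independent set takes $O(|V|+|E|)$ time, after which $w(\sigma)$ is either $0$ or the product $\prod_{v:\sigma_v=+1}\lambda_v$; for an Ising model $(G,J,h)$ the weight is $\exp(H(\sigma))$ with $H$ a sum of $|E|+|V|$ terms (and when some $h_v=\pm\infty$ one uses the convention that a configuration violating the induced hard constraint has weight $0$, again tested in linear time). So $\TW_G=O(|V|+|E|)$ in every case. Next I would invoke the existing literature for the sampling and approximate counting oracles: for case~(1), hardcore models in the uniqueness regime~\eqref{eq:cond-hardcore}, rapid mixing of Glauber dynamics~\cite{CFYZ22,CE22} gives $\TS_G(\delta)=O(\Delta n\log\frac{n}{\delta})$ and~\cite{SVV09} gives $\TC_G(\delta)=\mathrm{poly}(n/\delta)$; for cases~(2)--(4), i.e.\ Ising models satisfying \Cref{cond:Ising}, the works~\cite{AJKPV22,FengGW23,JS93,CCYZ24,JVV86,SVV09} provide $\mathrm{poly}(n/\delta)$-time samplers and approximate counters. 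In each case the $0.99$ success probability demanded by \Cref{def:oracle} is obtained from a constant-probability algorithm by the standard median-of-$O(1)$ boosting trick. Substituting $\TW_G=\mathrm{poly}(n)$ and $\TS_G(\delta),\TC_G(\delta)=\mathrm{poly}(n/\delta)$ with $\delta=\Theta(\epsilon)$ into the running-time bound $O(\TC_G(\tfrac{\epsilon}{4})+\tfrac{1}{\epsilon^2}(\TW_G+\TS_G(\tfrac{\epsilon}{4})))$ of \Cref{thm:Approximate-Gibbs} yields a $\mathrm{poly}(n/\epsilon)$-time algorithm whose output is within additive $\epsilon$ of $\DTV{\mu}{\nu}$ with probability $\ge 2/3$; the usual median-of-$O(\log\frac{1}{\delta})$-runs argument boosts this to $1-\delta$.

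There is essentially no hard part here: the corollary is bookkeeping once \Cref{thm:Approximate-Gibbs} and the cited samplers/counters are in hand. The only point worth flagging is the one already internalized by \Cref{thm:Approximate-Gibbs} --- an $\epsilon$-relative-error counting oracle only pins down the ratio $\hat\nu(\sigma)/\hat\mu(\sigma)$ to within a $(1\pm O(\epsilon))$ factor, which propagates to an $O(\epsilon)$ \emph{additive} (not relative) error on the estimator; this is precisely why \Cref{corollary:apps} is stated for additive-error approximation, in contrast to the relative-error guarantees of \Cref{thm:Ising-1} and \Cref{Cor:Ising}, which require the small-parameter-distance machinery of \Cref{sec:mul-alg}.
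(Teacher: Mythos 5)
Your proposal matches the paper's (essentially unstated) proof: the paper offers no explicit argument for \Cref{corollary:apps} beyond the remark that \Cref{thm:Approximate-Gibbs} ``implies the following corollary for concrete models,'' and the bookkeeping you carry out — noting $\TW_G=O(|V|+|E|)$ and invoking the cited $\mathrm{poly}(n/\delta)$ samplers and counters for each of the four regimes — is exactly what the authors have in mind. Your closing observation that the counting-oracle error propagates only additively, which is why the corollary is stated for additive error, also agrees with the paper's own discussion following the corollary.
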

The definitions of the conditions can be found in~\eqref{eq:cond-hardcore} and \Cref{cond:Ising}. 
In contrast to the relative-error approximation, the above corollary does not require a marginal lower bound for the Ising model.
Furthermore, since the proof of \Cref{thm:Approximate-Gibbs} does not use \Cref{lem:TV-lower}, \Cref{corollary:apps} holds for general (not necessarily soft) Ising models.

\subsection{TV-distance between two marginal distributions}

In this subsection, we present an algorithm that can achieves the additive-error approximation to the total variance distance between two marginal distributions. Let $\mu$ over $\{\pm\}^V$ be a Gibbs distribution over graph $G=(V,E)$, and $S$ is a subset of $V$. For any configuration $\sigma$, let $\sigma$ be a partial configuration over $\{\pm\}^S$. Recall that $\mu_S(\sigma)  \propto Z^{\sigma}$, where $Z^{\sigma}$ is the conditional partition function of $\sigma$ defined as
\begin{align*}
    Z^{\sigma}\defeq\sum_{\tau \in \{\pm\}^V: \tau_S = \sigma } w_\mu(\tau).
\end{align*}

\begin{definition}[approximate conditional counting oracle]\label{def:cond-count-oracle}
Let $\mathbb{S}$ be a spin system on graph G with Gibbs distribution $\mu$. 
Let $\TC_{G}:(0,1) \to \mathbb{N}$ be a function.
We say $\mathbb{S}$ admits a conditional counting oracle with cost function $\TC_{G}(\cdot)$ if given any $0<\epsilon<1$, and any partial configuration $\sigma \in \{\pm\}^S$ on a subset $S\subseteq V$, it returns a random number $\hat{Z}_\mu^{\sigma}$ in time $\TC_{G}(\epsilon)$ such that $Z_\mu^{\sigma}(1-\epsilon)\leq \hat{Z}_\mu^{\sigma}(\sigma) \leq Z_\mu^{\sigma}(1+\epsilon)$ with probability at least 0.99.
\end{definition}

The oracle above is stronger than the approximate counting oracle in \Cref{def:oracle}. The approximate counting oracle only answers the query for $S = \emptyset$, while the conditional counting oracle can answer the query for any subset $S\subseteq V$.

\begin{theorem}\label{thm:approx-margin-tv}
There exists an algorithm such that, given two general Gibbs distributions $\mu$ and $\nu$ on the same graph $G=(V,E)$ with $n = |V|$, any subset $S \subseteq V$, and any $\epsilon > 0$, if $\mu$ and $\nu$ both admit sampling and conditional counting oracles with cost functions $\TS_G(\cdot)$ and $\TC_{G}(\cdot)$ respectively, then it returns a random number $\hat{d}$ in time $O(\frac{1}{\epsilon^2}\log\frac{1}{\epsilon})\cdot(\TC_{G}(\frac{\epsilon}{8})+\TS_{G}(\frac{\epsilon}{8}))$ such that \[\Pr{\big|\hat{d} - \DTV{\mu_S}{\nu_S}\big|\leq \epsilon} \geq \frac{2}{3}.\]
\end{theorem}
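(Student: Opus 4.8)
The plan is to mirror the estimator and analysis in the proof of \Cref{thm:Approximate-Gibbs}, replacing the weight oracle (which there produced $w_\mu(\sigma)$ exactly, hence $\mu(\sigma)$ up to a single counting error) by the conditional counting oracle, which lets us approximate the marginal probabilities $\mu_S(\sigma) = Z_\mu^\sigma/Z_\mu$. As in~\eqref{eq:TV-1} I would write $\DTV{\mu_S}{\nu_S} = \E[\sigma\sim\mu_S]{X}$ with $X := \max\!\big(0,1-\nu_S(\sigma)/\mu_S(\sigma)\big)\in[0,1]$, so $\Var{X}\le 1$ and it suffices to estimate $\E{X}$ to additive error $O(\epsilon)$ with high probability. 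To build an estimator $\hat X$ I need two ingredients. First, an approximate sample from $\mu_S$: draw $\sigma\sim\mu$ from the sampling oracle with error $\epsilon/8$ and project it onto $S$; projecting does not increase TV-distance, so the projected sample is within $\epsilon/8$ of $\mu_S$. Second, an approximation of the ratio $\nu_S(\sigma)/\mu_S(\sigma) = \frac{Z_\nu^\sigma}{Z_\mu^\sigma}\cdot\frac{Z_\mu}{Z_\nu}$: call the conditional counting oracle on the four (conditional) partition functions $Z_\mu^\sigma,\,Z_\nu^\sigma,\,Z_\mu = Z_\mu^\emptyset,\,Z_\nu = Z_\nu^\emptyset$, each to relative error $\epsilon/8$, and set $\hat X := \max\!\big(0,1-\tfrac{\hat Z_\nu^\sigma\hat Z_\mu}{\hat Z_\mu^\sigma\hat Z_\nu}\big)$, with the convention $\hat X = 0$ when $\hat Z_\mu^\sigma = 0$. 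The algorithm outputs $\hat d := \frac1T\sum_{i=1}^T\hat X_i$ over $T = \Theta(1/\epsilon^2)$ independent copies. Note that, unlike in \Cref{thm:Approximate-Gibbs}, no weight oracle is needed, since the marginal probabilities are obtained directly from conditional counting.

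For correctness I would run the same two-step comparison as in \Cref{thm:Approximate-Gibbs}. Introduce $X^*$, defined exactly like $\hat X$ but with $\sigma$ drawn from the \emph{exact} $\mu_S$. Conditioned on all four counting calls succeeding, each estimate is within a $(1\pm\epsilon/8)$-factor, hence $\tfrac{\hat Z_\nu^\sigma\hat Z_\mu}{\hat Z_\mu^\sigma\hat Z_\nu}\in(1\pm O(\epsilon))\cdot\tfrac{\nu_S(\sigma)}{\mu_S(\sigma)}$; then the elementary inequality $\max(0,1-(1-c)r)\le\max(0,1-r)+cr$ together with $\sum_{\sigma:\,\mu_S(\sigma)>0}\nu_S(\sigma)\le 1$ gives $|\E{X^*}-\DTV{\mu_S}{\nu_S}|\le O(\epsilon)$, just as in~\eqref{eq:ex-1}. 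Next, $\hat X$ and $X^*$ differ only in the law of $\sigma$ (approximate $\mu_S'$ vs.\ exact $\mu_S$); writing $g(\sigma)$ for the expectation of $\hat X$ conditioned on the sample $\sigma$, which is a $[0,1]$-valued function, the gap between the two expectations is at most $\DTV{\mu_S'}{\mu_S}\le\epsilon/8$, paralleling~\eqref{eq:ex2}. Combining, $|\E{\hat X}-\DTV{\mu_S}{\nu_S}|\le 7\epsilon/8$; since $\hat X\in[0,1]$, Hoeffding's inequality over the $T=\Theta(1/\epsilon^2)$ i.i.d.\ copies yields $|\hat d-\E{\hat X}|\le\epsilon/8$ with the required probability.

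The one genuinely new point — and the step I expect to need the most care — is the failure probability of the conditional counting oracle. In \Cref{thm:Approximate-Gibbs} the counting oracle was invoked only twice, so its constant success probability was harmless; here it is invoked $\Theta(1/\epsilon^2)$ times (twice per sample, plus twice up front), and a single bad answer can make the corresponding $\hat X_i$ an arbitrary value in $[0,1]$, biasing $\E{\hat X}$ by a constant rather than by $O(\epsilon)$. The fix is to answer every conditional counting query by the median of $\Theta(\log(1/\epsilon))$ independent oracle runs, pushing each query's failure probability down to $\mathrm{poly}(\epsilon)$; a union bound then makes all queries simultaneously $(1\pm\epsilon/8)$-accurate with probability $\ge 0.99$, so the bias of $\hat X$ stays $O(\epsilon)$ and the analysis above goes through. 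These $\Theta(\log(1/\epsilon))$ repetitions of the $\Theta(1/\epsilon^2)$ conditional counting calls, plus one sampling call per sample, give the running time $O\!\big(\tfrac{1}{\epsilon^2}\log\tfrac1\epsilon\big)\cdot\big(\TC_G(\tfrac\epsilon8)+\TS_G(\tfrac\epsilon8)\big)$. The remaining bookkeeping (tracking the $\epsilon$-constants and the $1/3$ failure budget across sampling, counting, and Hoeffding) is routine and follows the template already established.
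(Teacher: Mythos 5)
Your proof is correct and follows essentially the same route as the paper: the same estimator $\max(0,1-\hat{\nu}_S(\sigma_S)/\hat{\mu}_S(\sigma_S))$ built from conditional partition-function estimates, the same projected approximate sample, the same two-step comparison (ideal counting plus exact $\mu_S$, then switching to the approximate sampler via a TV-distance argument), and, crucially, the same median-boosting of the conditional counting oracle's success probability to $1-\mathrm{poly}(\epsilon)$ using $O(\log\frac1\epsilon)$ repetitions so that a union bound over the $\Theta(1/\epsilon^2)$ calls keeps the total failure probability constant. The only cosmetic difference is that you write $Z_\mu=Z_\mu^\emptyset$ (calling the conditional oracle on the empty pinning), which is in fact the clearer reading of what the paper calls ``$S=V$''; and you correctly flag that this theorem, unlike Theorem \ref{thm:Approximate-Gibbs}, needs no weight oracle.
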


\begin{proof}
Define a random variable $Y\in [0,1]$ such that $Y=\max \left(0,1-\frac{\nu_S(\sigma)}{\mu_S(\sigma)}\right)$, where $\sigma \sim \mu_S$.

\begin{align*}
\DTV{\mu_S}{\mu_S}&=\sum_{\sigma \in \{\pm\}^S}\max\left( 0,\mu_S(\sigma)-\nu_S(\sigma) \right)\\
&=\sum_{\sigma \in \{\pm\}^S}\mu_S(\sigma)\max\left(0,1-\frac{\nu_S(\sigma)}{\mu_S(\sigma)} \right)=\E[]{Y}.
\end{align*}

The random variable $Y$ satisfies that $0\leq Y \leq 1$ so $\Var{Y}\leq 1$. Similar to the proof of \Cref{thm:Approximate-Gibbs}, we want to draw independent sample of $Y$ and take average to approximate $\DTV{\mu_S}{\nu_S}$. However, here $\mu_S(\sigma) = Z^\sigma_\mu/Z_{\mu}$ and $\nu_S(\sigma) = Z^\sigma_\nu/Z_{\nu}$. An additional problem is that we cannot exactly compute the weight $Z^\sigma_\mu$ and $Z^\sigma_\nu$ for each partial configuration $\sigma\sim \mu_S$.

First note that we can boost the success probability of conditional counting oracle from $0.99$ to $1 - \delta$ by calling it independently for $O(\log \frac{1}{\delta})$ times and take the median. Let $\delta = \frac{\epsilon^2}{320}$.
Call conditional counting oracles with $S=V$ to obtain $\hat{Z_\mu}$ and $\hat{Z_\nu}$ that approximate partition functions $Z_\mu$ and $Z_\nu$ with relative error bound $\frac{\epsilon}{8}$. 
We may assume both counting oracles succeed, which happens with probability at least $1 - \delta$. Similarly, we first define the random variable $\hat{Y}$ by the following process:

\begin{enumerate}

\item Call sampling oracle of $\mu$ to obtain one sample $\sigma\in \{\pm\}^V$ such that $\DTV{\sigma}{\mu}\leq \frac{\epsilon}{8}$, and we use $\sigma_S \in \{\pm\}^S$ as our sample.

\item Call conditional counting oracles to obtain $\hat{Z}_\mu^{\sigma_S}$ and $\hat{Z}_\nu^{\sigma_S}$ with an error bound $\epsilon/8$ and success probability $1-\delta$. Compute $\hat{\mu}_S(\sigma_S)=\hat{Z}^{\sigma_S}_\mu/\hat{Z_\mu}$ and $\hat{\nu}_S(\sigma_S)=\hat{Z}^{\sigma_S}_\nu/\hat{Z_\nu}$.

\item Define $\hat{Y}=\max\left (0,1-\hat{\nu}_S(\sigma_S)/\hat{\mu}_S(\sigma_S) \right )$, and in particular, $\hat{Y}=0$ if $\hat{\mu}_S(\sigma)=0$.
\end{enumerate}
Let $T=\frac{64}{\epsilon^2}$, we present our algorithm by following process:

\begin{itemize}
\item Draw T samples $\hat{Y}_1,\hat{Y}_2,\dots,\hat{Y}_n$ of random variable $\hat{Y}$.

\item Output the average $\hat{d}=\frac{1}{T}\sum_{i=1}^{T}\hat{Y}_i$.
\end{itemize}

The running time of our algorithm is $(\TC_{G}(\frac{\epsilon}{8})+\TS_{G}(\frac{\epsilon}{8}))\cdot T \cdot O(\log \frac{1}{\delta})$.

We now analyze the approximation error. First note that the probability that all conditional counting oracles success is $(1-\frac{\epsilon^2}{320})^{2T+2}>0.98$ and $\DTV{\sigma}{\mu}\leq \frac{\epsilon}{8}$ implies $\DTV{\sigma_S}{\mu_S}\leq \frac{\epsilon}{8}$. 
Compared to the proof of \Cref{thm:Approximate-Gibbs}, the difference is that we can only compute $\hat{Z}_\mu^{\sigma_S}$ and $\hat{Z}_\nu^{\sigma_S}$ approximately. However, the error can still be bounded. 
We prove that: for $\sigma_S$ with $\mu_S(\sigma_S)>0$, $\left(1-\frac{5\epsilon}{8} \right)\frac{\nu_S(\sigma_S)}{\mu_S(\sigma_S)}\leq \frac{\hat{\nu}_S(\sigma_S)}{\hat{\mu}_S(\sigma_S)}\leq \left ( 1+\frac{5\epsilon}{8} \right ) \frac{\nu_S(\sigma_S)}{\mu_S(\sigma_S)}$. Since $\frac{\hat{\nu}_S(\sigma_S)}{\hat{\mu}_S(\sigma_S)}= \frac{\hat{Z}_\nu^{\sigma_S}}{\hat{Z}_\mu^{\sigma_S}}\frac{\hat{Z}_{\mu}}{\hat{Z}_{\nu}}$ and $\frac{\nu_S(\sigma_S)}{\mu_S(\sigma_S)}= \frac{Z_\nu^{\sigma_S}}{Z_\mu^{\sigma_S}}\frac{Z_{\mu}}{Z_{\nu}}$,
\begin{align*}
    \frac{\hat{\nu}_S(\sigma_S)}{\hat{\mu}_S(\sigma_S)}= \frac{\hat{Z}_\nu^{\sigma_S}}{\hat{Z}_\mu^{\sigma_S}}\frac{\hat{Z}_{\mu}}{\hat{Z}_{\nu}}
    \leq \frac{Z_\nu^{\sigma_S}(1+\frac{\epsilon}{8})}{Z_\mu^{\sigma_S}(1-\frac{\epsilon}{8})}\frac{Z_{\mu}(1+\frac{\epsilon}{8})}{Z_{\nu}(1-\frac{\epsilon}{8})}< \left(1+\frac{5\epsilon}{8}\right)\frac{\nu_S(\sigma_S)}{\mu_S(\sigma_S)}.
\end{align*}
The other side of the inequality can be proved similarly. The rest of the proof follows from the proof of \Cref{thm:Approximate-Gibbs}. 
\end{proof}

\Cref{thm:many-vertex-alg} is a simple corollary of \Cref{thm:approx-margin-tv}, which is proved in \cref{sec:marginthm}.



\section{The algorithm for instances with small parameter distance}\label{sec:alg-main}
Let $\mu$ and $\nu$ be two general Gibbs distributions (including hardcore and Ising models) on the same graph $G=(V,E)$ with $n = |V|$. Let $w_\mu(\sigma)$ and $w_\nu(\sigma)$ be the weights of $\mu$ and $\nu$ on configuration $\sigma \in \{\pm\}^V$. In this section, we focus on the case where the parameter distance $\dis(\mu,\nu)$ is small. We will first give a basic algorithm for instances satisfying \Cref{cond:meta}, and then verify the \Cref{cond:meta} for Ising models with small parameter distance. 
We next give a more advanced algorithm for hardcore models with small parameter distance.

\subsection{Basic algorithm}
Define random variable
\begin{align}\label{eq:defW}
    W \defeq \frac{w_\nu(\sigma)}{w_\mu(\sigma)}, \quad\text{where } \sigma \sim \mu.
\end{align}
\begin{condition}\label{cond:meta}
Let $K,L\geq 1$ be two parameters.
Two Gibbs distributions $\mu$ and $\nu$ satisfy that
\begin{itemize}
    \item $\nu$ is absolutely continuous with respect to $\mu$: for all $\sigma \in \{\pm\}^V$, if $\mu(\sigma) = 0$, then $\nu(\sigma) = 0$; 
    \item $ \sqrt{\Var{W}} \leq K {\DTV{\mu}{\nu}}$;
    \item $\E[]{W} \geq \frac{1}{L}$. 
\end{itemize}
\end{condition}

\begin{theorem}\label{thm:alg-main}
    There exists an algorithm such that given two Gibbs distributions $\mu$ and $\nu$ on the same graph $G=(V,E)$, and any $0 < \epsilon <1$, if $\mu$ and $\nu$ satisfy \Cref{cond:meta} with $K$ and $L$, and both admit sampling and approximate counting oracles with cost functions $\TS_G(\cdot)$ and $\TC_G(\cdot)$, then it returns a random number $\hat{d}$ in time $O(\TC_G(\frac{\epsilon}{4}) + T \cdot \TS_G(\frac{1}{100T}))$, where $T = O(\frac{L^2K^2}{\epsilon^2})$, such that 
    \begin{align*}
        \Pr[]{ (1-\epsilon)\DTV{\mu}{\nu} \leq \hat{d} \leq (1+\epsilon)\DTV{\mu}{\nu} } \geq \frac{2}{3}.
    \end{align*}
    \end{theorem}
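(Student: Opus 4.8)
The plan is to make the identity from the algorithm overview quantitative. Since Condition~\ref{cond:meta} assumes $\nu$ is absolutely continuous with respect to $\mu$, the variable $W = w_\nu(\sigma)/w_\mu(\sigma)$ with $\sigma\sim\mu$ is well defined with $\E[]{W} = Z_\nu/Z_\mu =: \rho$, and a short computation (as in~\eqref{eq:TV-2}) gives
\begin{align*}
\DTV{\mu}{\nu} = \frac12\sum_{\sigma}\mu(\sigma)\left|1-\frac{\nu(\sigma)}{\mu(\sigma)}\right| = \frac{Z_\mu}{2Z_\nu}\,\E[]{\,\bigl|\E[]{W}-W\bigr|\,} = \frac{1}{2\rho}\,E, \qquad E := \E[]{|W-\rho|}.
\end{align*}
Two elementary consequences of Condition~\ref{cond:meta} drive everything: by Cauchy--Schwarz, $E \le \sqrt{\Var[]{W}} \le K\,\DTV{\mu}{\nu}$, while from $\rho \ge 1/L$ and $E = 2\rho\,\DTV{\mu}{\nu}$ we get $E \ge \tfrac{2}{L}\DTV{\mu}{\nu}$. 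Hence $E$ agrees with $\DTV{\mu}{\nu}$ up to a $\Theta(KL)$ factor, and --- the crucial point --- the ratio $E/\sqrt{\Var[]{W}} \ge 1/(KL)$ is bounded below, so an empirical estimate of $E$ built from $T$ independent $\mu$-samples has \emph{relative} error $O(KL/\sqrt{T})$.

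The algorithm: (i) call the approximate counting oracles of $\mu$ and $\nu$ at accuracy $\epsilon/4$ to obtain $\hat Z_\mu,\hat Z_\nu$; (ii) draw $T = O(K^2L^2/\epsilon^2)$ samples from $\mu$ via the sampling oracle at accuracy $1/(100T)$, and for each compute $W_i = w_\nu(\sigma_i)/w_\mu(\sigma_i)$ directly from the (efficiently computable) weight functions, an $O(n+m)$ cost absorbed into the running time; (iii) split the samples into two halves, let $\bar W$ be the average of the $W_i$ over the first half and $\bar E$ the average of $|W_i-\bar W|$ over the second; (iv) return $\hat d = \tfrac{\hat Z_\mu}{2\hat Z_\nu}\,\bar E$. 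Splitting the samples makes the second-half terms conditionally i.i.d.\ given $\bar W$, which removes an $O(1/T)$ bias; one could instead use a single batch with a slightly more careful bias analysis. The running time is then $O\bigl(\TC_G(\epsilon/4) + T\cdot\TS_G(1/(100T))\bigr)$ as claimed.

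For correctness, condition on the probability-$\ge0.98$ event that both counting oracles succeed, so $\hat Z_\mu/\hat Z_\nu = (1\pm\tfrac{2\epsilon}{3})/\rho$; since $\DTV{\mu}{\nu} = E/(2\rho)$, it then suffices to show $\bar E = (1\pm\tfrac{\epsilon}{8})E$ with high probability. Coupling the approximate samples with genuine $\mu$-samples, with probability $\ge0.99$ all $T$ coincide, so we may assume $\sigma_i\sim\mu$. The map $a\mapsto \E[]{|W-a|}$ is $1$-Lipschitz, hence $\bigl|\E[]{|W-\bar W|}-E\bigr| \le |\bar W-\rho|$; Chebyshev with $\Var[]{\bar W}\le \tfrac{2}{T}\Var[]{W} \le \tfrac{2}{T}K^2(\DTV{\mu}{\nu})^2$ and the bound $E\ge\tfrac2L\DTV{\mu}{\nu}$ gives $|\bar W-\rho|\le\tfrac{\epsilon}{16}E$ except with probability $\le0.01$, once the hidden constant in $T$ is large enough. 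Conditioned on $\bar W$, $\bar E$ averages i.i.d.\ terms of mean $\E[]{|W-\bar W|}$ and variance $\le\tfrac2T\bigl(2\Var[]{W}+2(\bar W-\rho)^2\bigr) = O\bigl(K^2(\DTV{\mu}{\nu})^2/T\bigr)$ on the good event, so a second Chebyshev bound gives $\bigl|\bar E-\E[]{|W-\bar W|}\bigr|\le\tfrac{\epsilon}{16}E$ except with probability $\le0.01$. Combining, $|\bar E-E|\le\tfrac{\epsilon}{8}E$; substituting into $\hat d = \tfrac{\hat Z_\mu}{2\hat Z_\nu}\bar E$ yields $(1-\epsilon)\DTV{\mu}{\nu}\le\hat d\le(1+\epsilon)\DTV{\mu}{\nu}$ by a routine constant chase using $\epsilon<1$, and a union bound over the at most four bad events leaves success probability $\ge2/3$. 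The degenerate case $\DTV{\mu}{\nu}=0$ is immediate: Condition~\ref{cond:meta} then forces $\Var[]{W}=0$, so $W\equiv\rho$, $\bar E=0$, and $\hat d=0$.

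The main obstacle is that $E = \E[]{|W-\E[]{W}|}$ is a \emph{nonlinear} functional estimated by a plug-in rule, so one must bound the fluctuations of both $\bar W$ (around $\rho$) and $\bar E$ (around its conditional mean) \emph{relative to $E$}, not merely in absolute terms. This is exactly where the two clauses of Condition~\ref{cond:meta} must be used together --- $\sqrt{\Var[]{W}}\le K\,\DTV{\mu}{\nu}$ controls the noise while $\E[]{W}\ge1/L$ supplies the normalization, and only jointly do they pin the relative error at $O(KL/\sqrt{T})$ and hence the sample size at $T=\Theta(K^2L^2/\epsilon^2)$; neither bound alone suffices. Propagating the $\pm\epsilon/4$ counting errors and the $1/(100T)$ total-variation sampling error through the multiplicative form of $\hat d$ is then routine.
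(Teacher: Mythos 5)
Your proposal is correct and follows essentially the same approach as the paper: the same identity $\DTV{\mu}{\nu} = \frac{Z_\mu}{2Z_\nu}\E[]{|W-\E[]{W}|}$, the same estimator $\hat d = \frac{\hat Z_\mu}{2\hat Z_\nu}\bar E$, and a Chebyshev-based concentration argument calibrated by the two clauses of Condition~\ref{cond:meta}. The only substantive difference is how you decouple $\bar W$ from the terms $|W_i-\bar W|$: you split the sample into two halves and invoke the $1$-Lipschitz property of $a\mapsto\E[]{|W-a|}$, whereas the paper keeps a single batch and sandwiches $\bar E$ between $\frac1T\sum|W_i-\E[]{W}|\pm|\bar W-\E[]{W}|$ via the triangle inequality, then applies Chebyshev to each side; both resolve the same $O(1/T)$ bias issue and give the same $T=\Theta(K^2L^2/\epsilon^2)$, so this is a cosmetic rather than structural divergence.
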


\begin{proof}
    Since $\nu$ is absolutely continuous with respect to $\mu$ ($\nu \ll \mu$), we can compute that:
\begin{align}\label{eq:DTV}
\DTV{\mu}{\nu} &= \frac{1}{2}\sum_{\sigma\in \{\pm\}^V} \left \vert \mu(\sigma)-\nu(\sigma) \right\vert =\frac{1}{2}\sum_{\sigma\in \{\pm\}^V:\mu(\sigma)>0}\mu(\sigma)\left \vert 1-\frac{\nu(\sigma)}{\mu(\sigma)}\right \vert \notag \\
&=\frac{1}{2}\sum_{\sigma\in \{\pm\}^V:\mu(\sigma)>0}\mu(\sigma) \left \vert 1-\frac{w_\nu(\sigma)}{w_\mu(\sigma)}\frac{Z_\mu}{Z_\nu}\right \vert \notag =\frac{Z_\mu}{2Z_\nu}\sum_{\sigma\in \{\pm\}^V:\mu(\sigma)>0}\mu(\sigma)\left \vert \frac{Z_\nu}{Z_\mu}-\frac{w_\nu(\sigma)}{w_\mu(\sigma)} \right \vert \notag \\
& = \frac{Z_\mu}{2Z_\nu}\E[]{|\E[]{W}-W|},
\end{align}
where in the last step we use the fact that if $\nu \ll \mu$, then $\E[]{W}= \sum_{\sigma \in \{\pm\}^V:\mu(\sigma)>0} \frac{w_\mu(\sigma)}{Z_\mu} \frac{w_\nu(\sigma)}{w_\mu(\sigma)} = \frac{Z_\nu}{Z_\mu}$.
We next use sampling oracles to define a random variable $\hat{W}\in [0,+\infty)$, which serves as an approximation of the random variable $W$ in~\eqref{eq:defW}.
Define
\begin{align*}
   T \defeq \left \lceil \frac{10^4 L^2 K^2}{\epsilon^2} \right \rceil.
\end{align*}
\begin{enumerate}
\item Call the sampling oracle of $\mu$ to obtain a random $\sigma \in \{\pm\}^V$ such that $\DTV{\sigma}{\mu}\leq \frac{1}{100T}$. 

\item Compute $\hat{W}=\frac{w_\nu(\sigma)}{w_\mu(\sigma)}$. In particular, if $w_\mu(\sigma)=0$\footnote{This case can happen because our sampling oracle is approximate.}, then we set $\hat{W} = 0$.
\end{enumerate}
Given the random variable $\hat{W}$, our algorithm is given by the following processes:
\begin{tcolorbox}[colback=lightgray!20, colframe=lightgray!18, coltitle=black, title={\textbf{Basic algorithm for instances satisfying \Cref{cond:meta}}}]
    \begin{itemize}
        \item Call approximate counting oracles to obtain $\hat{Z}_\mu$ and $\hat{Z}_\nu$ with error $\frac{\epsilon}{4}$.
        \item Draw $T$ samples $\hat{W}_1,\dots, \hat{W}_T$ from $\hat{W}$ independently.
        \item Compute $\bar{W}=\frac{1}{T}\sum_{i=1}^T \hat{W}_i$.
        \item Compute $\bar{E}=\frac{1}{T}\sum_{i=1}^{T} |\hat{W}_i-\bar{W}|$.
        \item Return $\hat{d}=\frac{\hat{Z}_\mu}{2\hat{Z}_\nu}\bar{E}$.
        \end{itemize}
    \end{tcolorbox}
The total running time of the above algorithm is 
\begin{align*}
    2\TC_G\tp{\frac{\epsilon}{4}} + O\tp{T \cdot \TS_G\left(\frac{1}{100T}\right)}.
\end{align*}
We remark that if both sampling and approximate counting can be solved in polynomial time, i.e., for any $\delta \in (0,1)$, $\TC_G(\delta),\TS_G(\delta) = \mathrm{poly}(\frac{n}{\delta})$, where $n$ is the number of vertices in $G$, then the above running time is $\mathrm{poly}(\frac{nLK}{\epsilon})$.

We now prove the correctness of the algorithm.
First, due to the definition of approximate counting oracle, with probability at least $0.98$, we can bound the error from ${\hat{Z}_\mu}{\hat{Z}_\nu}$ as follows:
\begin{align}\label{eq:ErrorZ}
    \left(1-\frac{3\epsilon}{4}\right)\frac{Z_\mu}{Z_\nu}  \leq \frac{(1-\epsilon/4)Z_\mu}{(1+\epsilon/4)Z_\nu} \leq \frac{\hat{Z}_\mu}{\hat{Z}_\nu}\leq \frac{(1+\epsilon/4)Z_\mu}{(1-\epsilon/4)Z_\nu}<\left(1+\frac{3\epsilon}{4}\right)\frac{Z_\mu}{Z_\nu}.
\end{align}
Suppose we can access a perfect sampler of $\mu$ and we draw perfect samples $W_1,\dots,W_T$ of $W$.
For each pair of $W_i$ and $\hat{W}_i$, there exists a coupling of $W_i,\hat{W}_i$ such that $\Pr[]{W_i \neq \hat{W}_i}\leq\frac{1}{100T}$. Then with probability at least $0.99$, $W_i = \hat{W}_i$ for all $1 \leq i \leq T$.
Consider an ideal algorithm that can use the perfect samples $W_1,\dots,W_T$.
Our real algorithm can be coupled successfully with the ideal algorithm with probability at least $0.99$. If we can show the ideal algorithm outputs correct result with probability at least $0.96$, then our real algorithm outputs correct result with probability at least $0.95 > 2/3$.

Now we assume all $W_i$ are perfect samples of $W$. We compute $\bar{W}=\frac{1}{T}\sum_{i=1}^{T}W_i$ and similarly $\bar{E}$ and $\hat{d}$. We only need to prove that $(1-\epsilon)d\leq \hat{d}\leq (1+\epsilon)d$ with probability at least 0.9, where $d=\DTV{\mu}{\nu}$. For every random variable $W_i$, by triangle inequality, we have
\begin{align*}
    \vert \E[]{W} - W_i \vert - \vert \bar{W} - \E[]{W} \vert \leq \vert \bar{W} - W_i \vert \leq \vert \E[]{W} - W_i \vert + \vert \bar{W} - \E[]{W} \vert.
\end{align*}
Note that $\bar{E} = \frac{1}{T}\sum_{i=1}^{T}|\bar{W}-W_i|$. We have
\begin{align}\label{eq:barE}
    \tp{\frac{1}{T}\sum_{i=1}^T\vert \E[]{W} - W_i \vert} - \vert \bar{W} - \E[]{W} \vert \leq \bar{E} \leq  \tp{\frac{1}{T}\sum_{i=1}^T\vert \E[]{W} - W_i \vert} + \vert \bar{W} - \E[]{W} \vert.
\end{align}
By definition of $\bar{W}$, we have $\E[]{\bar{W}} = \E[]{W}$ and $\Var[]{\bar{W}} = \frac{\Var[]{W}}{T}$. By Chebyshev's inequality,
\begin{align}\label{eq:barW}
    \Pr[]{|\bar{W}-\E[]{W}|\geq \frac{\epsilon d}{10L}}\leq \frac{100L^2\Var[]{\bar{W}}}{\epsilon^2 d^2}  = \frac{100L^2\Var[]{W}}{T\epsilon^2 d^2} \leq \frac{100L^2K^2d^2}{T \epsilon^2 d^2} \leq 0.01,
\end{align}
where the second inequality follows from \Cref{cond:meta}.
Next, consider the random variable
\begin{align*}
    R \defeq \vert \E[]{W} - W \vert.
\end{align*}
By the definition of $R$ and the variance bound in \Cref{cond:meta}, we know that 
\begin{align*}
    \Var[]{R} \leq \E[]{R^2} = \E[]{(\E[]{W}-W)^2} = \Var[]{W} \leq K^2 d^2.
\end{align*}
Note that $\frac{1}{T}\sum_{i=1}^T\vert \E[]{W} - W_i \vert$ is the average of $T$ i.i.d. random samples of $R$.
Denote $\bar{R} \defeq \frac{1}{T}\sum_{i=1}^T\vert \E[]{W} - W_i \vert$. Note that $\Var[]{\bar{R}} = \frac{\Var[]{R}}{T}$. By Chebyshev's inequality, we have
\begin{align}\label{eq:barR}
     \Pr[]{\left\vert \bar{R} - \E[]{R} \right\vert \geq \frac{\epsilon d}{10L}} \leq \frac{100L^2\Var[]{R}}{T \epsilon^2 d^2} \leq \frac{100L^2 K^2 d^2}{T \epsilon^2 d^2} \leq 0.01.
\end{align}
Combining~\eqref{eq:barE}, \eqref{eq:barW},~\eqref{eq:barR}, and a union bound, we have
\begin{align}\label{eq:barE-R}
    \Pr[]{\vert \bar{E} - \E[]{R} \vert \leq \frac{\epsilon d}{5L}} \geq 0.98.
\end{align}

Assume two good events in~\eqref{eq:barE-R} and~\eqref{eq:ErrorZ} both hold, which happens with probability at least $0.96$. The final output $\hat{d} = \frac{\hat{Z}_\mu}{2\hat{Z}_\nu}\bar{E}$ satisfies 
\begin{align*}
    \hat{d} &= \frac{\hat{Z}_\mu}{2\hat{Z}_\nu}\bar{E} \leq \frac{(1+3\epsilon/4)Z_\mu}{2Z_\nu} \cdot \tp{ \E[]{R} + \frac{\epsilon d}{5L} }\\
    &\leq \left(1+\frac{3\epsilon}{4} \right)\frac{Z_\mu}{2Z_\nu} \E[]{R} + \frac{Z_\mu}{Z_\nu L} \cdot \frac{\epsilon}{10} \cdot \left(1+\frac{3\epsilon}{4} \right) \cdot d.
\end{align*}
By~\eqref{eq:DTV}, we have $\frac{Z_\mu}{2Z_\nu} \E[]{R} = d$. By \Cref{cond:meta}, we have $\frac{Z_\mu}{Z_\nu L} = \frac{1}{L \E[]{W}}  \leq 1$. Therefore,
\begin{align*}
    \hat{d} \leq \left(1+\frac{3\epsilon}{4}\right)d + \frac{\epsilon}{10} \cdot \left(1+\frac{3\epsilon}{4} \right) \cdot d < (1+\epsilon)d.
\end{align*}
A similar argument gives a lower bound $\hat{d} \geq (1-\epsilon)d$.
\end{proof}





\subsection{Advanced algorithm for hardcore model}\label{sec:var-main}

We give the following algorithm for hardcore models with small parameter distance.
\begin{theorem}\label{thm:hardcore-adv}
Let $\theta = 10^{-10}\frac{\epsilon^{1/4}}{n^{5/2}}$.
There exists an algorithm such that given two hardcore models $\mu$ and $\nu$ on the same graph $G=(V,E)$, and any $0 < \epsilon <1$, if $\mu$ and $\nu$ both satisfy uniqueness condition in~\eqref{eq:cond-hardcore} and $\dis(\mu,\nu) < \theta$, then it returns a random number $\hat{d}$ in time $\tilde{O}\left(\frac{n^7}{\epsilon^2}+\frac{n^{6.5}}{\epsilon^{9/4}}\right)$ such that $(1 - \epsilon)\DTV{\mu}{\nu} \leq \hat{d} \leq (1+\epsilon)\DTV{\mu}{\nu}$ with probability at least $2/3$.
\end{theorem}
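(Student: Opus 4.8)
The plan is to turn the $B$/$S$ decomposition from the overview into a concentrated, efficiently computable estimator. If $\dis(\mu,\nu)=0$ then $\mu=\nu$ and we return $0$; otherwise fix a threshold $\kappa=\Theta(\epsilon^{1/4}/n^{3/2})$ (so $\theta\ll\kappa$), put $B=\{v\in V:\min(\lambda^\mu_v,\lambda^\nu_v)>\kappa\}$ and $S=V\setminus B$, and note that since $\dis(\mu,\nu)<\theta<\kappa$ every $v\in S$ has \emph{both} activities at most $2\kappa$, so for every feasible $x\in\{\pm\}^B$ the conditional laws $\mu_S^x,\nu_S^x$ are hardcore distributions on an induced subgraph with tiny activities and are concentrated on small independent sets. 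By the chain rule $\DTV{\mu}{\nu}=\sum_{x}\mu_B(x)\,f(x)$ with $f(x)=\tfrac12\sum_{y\in\{\pm\}^S}\big|\tfrac{\nu_B(x)}{\mu_B(x)}\nu_S^x(y)-\mu_S^x(y)\big|$, and the algorithm returns $\tfrac1T\sum_{i=1}^{T}\hat f(x_i)$, where $x_1,\dots,x_T$ are $B$-projections of approximate samples of $\mu$ and $\hat f$ is a polynomial-time surrogate for $f$. This is the same shape as the estimator in \Cref{thm:alg-main}; the new ingredients are a variance bound for $f$ and the computation of $\hat f$.

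The core estimate is pointwise: $f(x)=O(\mathrm{poly}(n))\cdot\dis(\mu,\nu)$ for every feasible $x$. To see it, write $\tfrac{\nu_B(x)}{\mu_B(x)}\nu_S^x(y)-\mu_S^x(y)=\tfrac1{Z^\mu_{S,x}}\big(A_x\prod_{v\in y}\lambda^\nu_v-\prod_{v\in y}\lambda^\mu_v\big)$, where $Z^\mu_{S,x}=\sum_{y}\prod_{v\in y}\lambda^\mu_v\ge1$ is the partition function of the $\mu$-hardcore model on $S$ after deleting the neighbours of $\{v\in B:x_v=+1\}$, and $A_x=\big(\prod_{v\in B:x_v=+1}\lambda^\nu_v/\lambda^\mu_v\big)(Z_\mu/Z_\nu)$. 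First, $A_x=1\pm O(\mathrm{poly}(n))\dis(\mu,\nu)$: the finite-activity factor on $B$ is $1\pm O(n\,\dis(\mu,\nu)/\kappa)$ since activities on $B$ exceed $\kappa$, and $Z_\mu/Z_\nu=1\pm O(n\,\dis(\mu,\nu))$, which I would prove from $\Pr[\sigma\sim\mu]{v\in\sigma}\le\lambda^\mu_v$ and a first-order expansion of $Z_\nu/Z_\mu=\E[\sigma\sim\mu]{\prod_{v\in\sigma}\lambda^\nu_v/\lambda^\mu_v}$. Second, for each independent set $y$, $\big|A_x\prod_{v\in y}\lambda^\nu_v-\prod_{v\in y}\lambda^\mu_v\big|$ telescopes into $O(|y|)\,\dis(\mu,\nu)(2\kappa)^{|y|-1}$; summing over $y$ and using $n\kappa<1$ then gives $f(x)=O(\mathrm{poly}(n))\dis(\mu,\nu)$. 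Combining this with $\DTV{\mu}{\nu}\ge\CC\cdot\dis(\mu,\nu)$ from \Cref{lem:TV-lower} yields $\Var[\mu_B]{f}\le\E[\mu_B]{f^2}\le(\max f)\cdot\E[\mu_B]{f}=O(\mathrm{poly}(n))\cdot\DTV{\mu}{\nu}^2$, so by Chebyshev $T=O(\mathrm{poly}(n)/\epsilon^2)$ suffices; a sharper $\mathrm{poly}(n)$ factor (hence smaller $T$) comes from instead bounding $\Var[\mu_B]{f}$ via the Poincar\'e inequality for hardcore Glauber dynamics in the uniqueness regime~\cite{ChenFYZ21}, applied to the lift $\sigma\mapsto f(\sigma_B)$ on $\{\pm\}^V$, whose only nonzero discrete-gradient terms are $B$-flips and each such local increment is again $O(\mathrm{poly}(n))\dis(\mu,\nu)$.

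For $\hat f(x)$, fix a constant $t$ with $(n\kappa)^t=O(\epsilon/\mathrm{poly}(n))$ (e.g.\ $t=4$ for the above $\kappa$) and replace $\mu_S^x,\nu_S^x$ by their conditionings on independent sets of size $\le t$. The same telescoping estimate shows this perturbs $f(x)$ by at most $O(\mathrm{poly}(n))(n\kappa)^t\dis(\mu,\nu)$, which after averaging over $x$ is $O(\epsilon)\DTV{\mu}{\nu}$; the key is that the error again carries a factor $\dis(\mu,\nu)$, so it is $O(\epsilon)$ \emph{relative} to $\DTV{\mu}{\nu}$ even when $\dis(\mu,\nu)$ is exponentially small. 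The truncated conditional partition functions $\sum_{y:\,|y|\le t}\prod_{v\in y}\lambda^\pi_v$ $(\pi\in\{\mu,\nu\})$ are computed exactly by enumerating the $O(n^t)$ small independent sets; with the exact factor $\prod_{v\in B:x_v=+1}\lambda^\nu_v/\lambda^\mu_v$ and one call to the approximate counting oracle for $Z_\mu/Z_\nu$ at precision $\epsilon/\mathrm{poly}(n)$, this produces $\hat f(x)$. The delicate point is that each summand of $f(x)$ is an absolute difference of two nearly equal quantities, so I would write it as a multiple of $\big|w_\nu(x,y)-\tfrac{Z_\nu}{Z_\mu}w_\mu(x,y)\big|$ and argue that the summands for which this difference is much smaller than $\dis(\mu,\nu)\cdot\mu_S^x(y)$ contribute negligibly, so that a polynomially small relative error in $Z_\mu/Z_\nu$ is enough; this is what forces the single counting-oracle call to use error $\epsilon/\mathrm{poly}(n)$.

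Finally, following the proof of \Cref{thm:alg-main}, condition on the counting oracle succeeding and on the $T$ approximate $\mu_B$-samples coupling with exact ones (both with high probability once the per-sample sampling error is $\epsilon/\mathrm{poly}(n)$), apply Chebyshev with the variance bound, absorb the truncation and oracle biases, and conclude $(1-\epsilon)\DTV{\mu}{\nu}\le\hat d\le(1+\epsilon)\DTV{\mu}{\nu}$ with probability $\ge2/3$. Substituting the known hardcore cost functions $\TS_G(\delta)=\tilde O(\Delta n\log(n/\delta))$~\cite{CFYZ22,CE22} and $\TC_G(\delta)=\tilde O(\Delta n^2/\delta^2)$~\cite{SVV09} into $\TC_G(\epsilon/\mathrm{poly}(n))+T\cdot\big(\TS_G(\epsilon/\mathrm{poly}(n))+O(n^t)\big)$ gives the claimed $\tilde O(n^7/\epsilon^2+n^{6.5}/\epsilon^{9/4})$. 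The main obstacle is the tension in the choice of $(\kappa,t)$: $\kappa$ must be small enough (in a way tied to $\theta$) that both the size-$t$ truncation of $\mu_S^x,\nu_S^x$ and the deviation of $\nu_B(x)/\mu_B(x)$ from $1$ scale with $\dis(\mu,\nu)$ up to $\mathrm{poly}(n)$ factors — this is what makes the estimator relative-error correct — while $\kappa$ large and $t$ constant keep the enumeration of size-$\le t$ independent sets and the sample count polynomial; and one needs a genuine variance bound for the marginal $\mu_B$, which is only a projection of a hardcore model and not itself one, which is where the Poincar\'e inequality is convenient.
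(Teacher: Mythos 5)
Your decomposition into ``big'' and ``small'' activities, the truncation to independent sets of constant size $t$ inside $S$, the variance bound via the Poincar\'e inequality for $\mu_B$, and the accounting of sampling and truncation errors all match the paper's proof of \Cref{thm:hardcore-adv} (via Lemmas~\ref{lem:hardcore-adv-var} and \ref{lem:hardcore-adv-1}). But there is a genuine gap at one step: the estimation of the global ratio $Z_\nu/Z_\mu$ entering $\nu_B(x)/\mu_B(x)$.

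You propose a single call to the approximate counting oracle at precision $\epsilon/\mathrm{poly}(n)$, and argue this suffices because ``summands for which the difference is much smaller than $\dis(\mu,\nu)\cdot\mu_S^x(y)$ contribute negligibly.'' That does not close the gap. If $R=Z_\nu/Z_\mu$ is replaced by $\hat R=R(1\pm\delta)$, the reverse triangle inequality gives a per-$y$ perturbation of $|\hat R - R|\cdot (\text{other factors})\cdot\nu_S^x(y)$, and summing over $y$ gives an additive perturbation of $f(x)$ of order $\Theta(\delta)$ (the errors do not cancel, whether or not the true differences are small). Since $d=\DTV{\mu}{\nu}$ can be exponentially small (the only constraint in the regime of this theorem is $0<\dis(\mu,\nu)<\theta$ and $d\geq\CC\cdot\dis(\mu,\nu)$), an additive perturbation of order $\epsilon/\mathrm{poly}(n)$ in $\hat f(x)$ is not $O(\epsilon)\cdot d$. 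You would need the counting oracle at precision $O(\epsilon d)$, which costs $\tilde O(n^2/(\epsilon d)^2)$ and is not polynomial when $d$ is exponentially small. The paper avoids this entirely: its Lemma~\ref{lem:alg-ratio} estimates $R$ by writing $R=\E_{x\sim\mu_B}[Q(x)]$ for $Q(x)=\prod_{v\in B:x_v=+1}(\lambda^\nu_v/\lambda^\mu_v)\cdot Z^x_{S,\nu}/Z^x_{S,\mu}$, then shows $\Var_{\mu_B}[Q]=(Z_\nu/Z_\mu)^2\Var_{\mu_B}[h]=O_\eta(d^2)(n^3+n/\kappa)$ — the \emph{same} good scaling as $\Var_{\mu_B}[f]$ — and averages $T'=O((n^3+n/\kappa)/\epsilon^2)$ truncated surrogates $\tilde Q(x_i)$. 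This yields additive error $\frac{\epsilon}{100}d$ with the same number of marginal samples already being drawn, never touching the approximate counting oracle at unattainable precision. Fixing your proposal requires replacing the single counting-oracle call with this ratio estimator (or something with the same $d$-scaled additive-error guarantee), after which the rest of your argument goes through as stated.
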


Let $\lambda^\mu$ and $\lambda^\nu$ be external fields of two hardcore models $\mu$ and $\nu$, respectively. For simplicity of the notation, we denote 
\begin{align*}
    D &= \dis(\mu,\nu) = \Vert \lambda^\mu - \lambda^\nu \Vert_\infty < \theta = 10^{-10}\frac{\epsilon^{1/4}}{n^{5/2}},\\
    d &= \DTV{\mu}{\nu}.
\end{align*}  
As discussed in the algorithm overview, we divide the vertices of $G$ into two parts: the "big" vertices and the "small" vertices. Define the threshold parameter
\begin{align*}
    \kappa \defeq 10^{-9}\frac{\epsilon^{1/4}}{n^{3/2}} = \Theta\left(\frac{\epsilon^{1/4}}{n^{3/2}}\right).
\end{align*}
Define two sets of vertices $B$ and $S$ in graph $G$:
\begin{align*}
    &B = \left\{v \in V \mid \min\{\lambda_v^\mu,\lambda_v^\nu\} \geq \kappa\right\},\\
    &S = V \setminus B = \left\{v \in V \mid \min\{\lambda_v^\mu,\lambda_v^\nu\} < \kappa\right\}.
\end{align*}
Recall $\mu_B$ and $\nu_B$ are the marginal distributions of $\mu$ and $\nu$ on $B$, respectively. Let $\Omega_B \subseteq \{\pm\}^B$ be the support of both $\mu_B$ and $\nu_B$. By the definition of $B$, for any $x\in \Omega_B$, all vertices $v \in B$ with $x_v = +1$ forms an independent set in $G$. For any $x\in \Omega_B$, let $\mu^x_S$ and $\nu^x_S$ be the marginal distributions of $\mu$ and $\nu$ on $S$ conditioned on $x$. The TV-distance between $\mu$ and $\nu$ can be represented by 
\begin{align}\label{eq:d-hardcore}
   d &= \frac{1}{2}\sum_{\sigma \in \{\pm\}^V} \left|\mu(\sigma)-{\nu(\sigma)}\right| = \frac{1}{2}\sum_{x \in \Omega_B}\sum_{y \in \{\pm\}^S} \left|\mu_B(x)\mu_S^x(y)-\nu_B(x)\nu_S^x(y)\right|\notag\\
    &= \frac{1}{2}\sum_{x \in \Omega_B}\mu_B(x) {\sum_{y \in \{\pm\}^S} \left|\frac{\nu_B(x)}{\mu_B(x)}\nu_S^x(y)-\mu_S^x(y)\right|}.
\end{align}
Define the function $f: \Omega_B \to \mathbb{R}$ as
\begin{align}\label{eq:def-f}
    f(x) \defeq \frac{1}{2}\sum_{y \in \{\pm\}^S} \left|\frac{\nu_B(x)}{\mu_B(x)}\nu_S^x(y)-\mu_S^x(y)\right|.
\end{align}
The calculation shows that $\DTV{\mu}{\nu} = \E[x \sim \mu_B]{f(x)}$. In a high-level view, our algorithm wants to draw i.i.d. samples $x\sim \mu_B$ and compute values $f(x)$ and then output the average value. Formally, we have the following lemmas.
Let $n$ denote the number of vertices in $G$.

\begin{lemma}\label{lem:hardcore-adv-var}
The variance $\Var[x \sim \mu_B]{f(x)} = O_\eta(d^2) \cdot (n^3 + n/\kappa)$, where $O_\eta$ holds a constant depending only on the gap $\eta$ in the uniqueness condition in~\eqref{eq:cond-hardcore}. 
\end{lemma}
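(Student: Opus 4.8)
The plan is to bound $\Var[x\sim\mu_B]{f(x)}$ by decomposing the fluctuation of $f$ into contributions coming from the "big" vertices $B$, and then applying a Poincar\'e-type inequality for the hardcore model $\mu_B$ on the induced subgraph. The first step is to rewrite $f(x)$ in a form amenable to such an analysis. Recall from~\eqref{eq:def-f} that $f(x) = \frac12\sum_{y\in\{\pm\}^S}\abs{\frac{\nu_B(x)}{\mu_B(x)}\nu_S^x(y) - \mu_S^x(y)}$, which is (up to the reweighting factor $r(x) \defeq \nu_B(x)/\mu_B(x)$) essentially a TV-distance between the conditional distributions $\mu_S^x$ and $\nu_S^x$. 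The key structural fact I would exploit, as hinted in the overview, is that $\mu_S^x$ and $\nu_S^x$ are hardcore distributions on $G[S]$ with \emph{all} external fields bounded by $\kappa = \Theta(\epsilon^{1/4}/n^{3/2})$, so they are extremely close to the point mass on $\emptyset$; consequently $f(x)$ is small and depends on $x$ only through the local neighbourhood structure near $S$. Concretely I would show that changing one coordinate $x_v$ (for $v\in B$) changes $f(x)$ by at most $O_\eta(d)\cdot(\text{something like } n/\sqrt{\kappa}\cdot\frac1n$ or a comparable local quantity$)$, using that $\DTV{\mu}{\nu}=\Omega(D)$ from \Cref{lem:TV-lower} together with the bound $D<\theta$ to control $r(x)$ and the individual field ratios.

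The second step is the variance bound itself. I would invoke the Poincar\'e inequality for the hardcore model in the uniqueness regime from~\cite{ChenFYZ21}: there is a constant $C_\eta$ such that for any function $g$ on $\Omega_B$,
\begin{align*}
    \Var[x\sim\mu_B]{g(x)} \leq C_\eta \sum_{v\in B} \E[x\sim\mu_B]{\Var[x_v]{g(x)\mid x_{B\setminus v}}},
\end{align*}
where the inner variance is over resampling the single coordinate $x_v$ from its conditional marginal. Applying this to $g=f$, each inner term is controlled by the square of the "one-step discrete derivative" $\abs{f(x) - f(x^{v\to c})}$ estimated in step one. Summing over $v\in B$ (at most $n$ terms) and substituting the per-coordinate bound yields $\Var[x\sim\mu_B]{f(x)} = O_\eta(d^2)\cdot(n^3 + n/\kappa)$; the two summands presumably arise from two regimes of how $v$ interacts with $S$ — vertices far from $S$ contribute an $O_\eta(d^2/n^2)$-type term (giving $n\cdot n^2 = n^3$ after accounting for a union over pairs or a worst-case $\abs{S}\le n$ factor), while vertices adjacent to small-field vertices contribute an $O_\eta(d^2/\kappa)$-type term per coordinate (giving $n/\kappa$). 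I would also need the simple estimate $r(x) = \nu_B(x)/\mu_B(x) = 1 + O(n\Delta D/\kappa)\cdot(\text{small})$ so that the reweighting factor is $\Theta(1)$ and contributes only lower-order fluctuations; since $\Delta\le n$ and $D<\theta\ll\kappa/n^2$ this is comfortably satisfied.

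The main obstacle I expect is the careful estimation of the discrete derivative $\abs{f(x)-f(x^{v\to c})}$. Flipping $x_v$ does two things simultaneously: it changes the reweighting ratio $r(x)$, and it changes the conditional hardcore distributions $\mu_S^x,\nu_S^x$ on $G[S]$ (by fixing/freeing the neighbours of $v$ in $S$). Both effects must be tracked and shown to be $O_\eta(d)$ times a controlled local factor, and crucially I must \emph{not} lose the factor $d$ — a naive triangle-inequality bound would give $f(x)\le \DTV{\mu_S^x}{\nu_S^x} + \frac12\abs{r(x)-1}$ which is $O(D/\kappa)$, far too large unless I reconcile it with $d=\Omega(D)$; so the argument has to compare $\mu_S^x$ with $\nu_S^x$ at the level of their parameters and re-express the bound in terms of $d$. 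The restriction-to-small-configurations idea (approximating $\mu_S^x,\nu_S^x$ by distributions supported on independent sets of size $\le t = O(1)$, with error $O(\epsilon)\cdot d$) will likely be the technical device that makes these derivative estimates tractable, and aligning the resulting error terms with the claimed $n^3 + n/\kappa$ scaling is where the bulk of the work lies.
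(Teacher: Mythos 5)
You identify the right key tool (the Poincar\'e inequality for $\mu_B$ from~\cite{ChenFYZ21}) and the right rough shape of the argument, but your plan to apply it \emph{directly to $f$} diverges from what the paper does, and the step you flag as the ``main obstacle'' is in fact a genuine gap. The paper never estimates discrete derivatives of $f$. Instead it first proves the pointwise inequality $f(x)\leq\frac12\bigl|h(x)-1\bigr|+\DTV{\nu_S^x}{\mu_S^x}$ where $h(x)=\nu_B(x)/\mu_B(x)$, then proves a \emph{uniform} bound $\DTV{\nu_S^x}{\mu_S^x}\leq 4nD$ for all $x\in\Omega_B$ (\Cref{claim:tv-bound}), and combines these with $\E[\mu_B]{|h-1|}=2\DTV{\mu_B}{\nu_B}\leq 2d$ and $D\leq 5000d$ to get $\E[\mu_B]{f^2}\leq\frac14\Var[\mu_B]{h}+O(n^2d^2)$. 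The Poincar\'e inequality is then applied to $h$, not $f$. This is not a cosmetic difference: $h$ is a product over vertices of single-site ratios times partition-function ratios, so the one-step conditional variance $\Var[\mu_B^\sigma]{h}=\mu_v^\sigma(+)\mu_v^\sigma(-)(h(\sigma^{v_+})-h(\sigma^{v_-}))^2$ has a clean closed form that factorizes through $\nu_{B-v}(\sigma)/\mu_{B-v}(\sigma)$ and the two marginal ratios. Your proposed quantity $|f(x)-f(x^{v\to c})|$ does not decompose this way — flipping $x_v$ simultaneously changes $r(x)$ and the supports of $\mu_S^x,\nu_S^x$ (neighbours of $v$ in $S$ become pinned or unpinned), and $f$ has an absolute value inside the sum over $y$, so there is no comparable factorization to exploit. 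You correctly note the danger of losing the factor $d$, but you leave unresolved exactly the step that the paper's reformulation is designed to make tractable.

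Your account of where $n^3+n/\kappa$ comes from is also off. It is not a split into ``vertices far from $S$'' versus ``vertices adjacent to small-field vertices''. In the paper, the per-vertex conditional variance of $h$ is bounded (via~\Cref{claim:p-bound}) by $O(D^2)\cdot\frac{(20np+1)^2}{p(1-p)}=O(D^2)(n^2p+n+1/p)$ where $p=\mu_v^\sigma(+1)$ satisfies $\kappa/10\leq p\leq\lambda_v^\mu=O(1)$. The $n^2$-term is the regime where $p=\Theta(1)$ and the $1/\kappa$-term is the regime where $p$ is as small as $\Theta(\kappa)$; both come from the \emph{same} vertex-by-vertex computation, and the final $n^3+n/\kappa$ is just this per-vertex bound summed over at most $n$ big vertices. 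Finally, a smaller point: the restriction-to-size-$\leq t$ configurations is not used in the variance bound at all — it appears only in \Cref{lem:hardcore-adv-1}, the algorithmic estimation of $f(x)$ once a sample $x$ is drawn. Importing it here would make the variance argument circular and is unnecessary given the uniform $4nD$ bound on $\DTV{\nu_S^x}{\mu_S^x}$.
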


\begin{lemma}\label{lem:hardcore-adv-1}
There exists a randomized data structure satisfies that 
\begin{itemize}
    \item the data structure can be constructed in time $\tilde{O}(\frac{n^7}{\epsilon^2}+\frac{n^{6.5}}{\epsilon^{9/4}})$ and the construction succeeds with probability at least 0.99;
    \item if the data structure is constructed successfully, then given any $x \in \Omega_B$, it deterministically answers an $\hat{f}(x) \geq 0$ in time $O(n^4)$ such that  
    \[\vert \hat{f}(x)-f(x)\vert \leq \frac{\epsilon}{50} \cdot d.\]
\end{itemize}
\end{lemma}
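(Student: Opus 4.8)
The plan is to realize the data structure by reducing $f(x)$ to a sum over constant-size independent sets together with a single precomputed global ratio. Expanding the conditional marginals of the two hardcore models, write $\gamma_\pi(x)=\prod_{v\in B:\,x_v=+1}\lambda^\pi_v$ for $\pi\in\{\mu,\nu\}$, put $S_x=S\setminus N(\{v\in B:x_v=+1\})$, and for an independent set $y$ of $G[S_x]$ (viewed as a configuration in $\{\pm\}^{S_x}$, with $|y|$ its number of $+1$'s) let $w^\pi_S(y)=\prod_{v:\,y_v=+1}\lambda^\pi_v$ and $Z^\pi_{S,x}=\sum_{y}w^\pi_S(y)$. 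A direct calculation from~\eqref{eq:def-f} (substituting $\mu^x_S(y)=w^\mu_S(y)/Z^\mu_{S,x}$, $\nu^x_S(y)=w^\nu_S(y)/Z^\nu_{S,x}$, and $\nu_B(x)/\mu_B(x)=\tfrac{Z_\mu}{Z_\nu}\cdot\tfrac{\gamma_\nu(x)}{\gamma_\mu(x)}\cdot\tfrac{Z^\nu_{S,x}}{Z^\mu_{S,x}}$) gives
\[
 f(x)=\frac{1}{2Z^\mu_{S,x}}\sum_{y}\bigl|\alpha(x)\,w^\nu_S(y)-w^\mu_S(y)\bigr|,\qquad \alpha(x)=\frac{Z_\mu}{Z_\nu}\cdot\frac{\gamma_\nu(x)}{\gamma_\mu(x)},
\]
the sum ranging over independent sets $y$ of $G[S_x]$. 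Everything on the right is readable from the input in $O(n+m)$ time except $Z_\mu/Z_\nu$, so at construction time I would call the two approximate-counting oracles, each $O(\log n)$ times with medians taken to make the failure probability $o(1)$, obtaining $\hat\beta=(1\pm\rho)Z_\mu/Z_\nu$ for a precision $\rho=\mathrm{poly}(\epsilon/n)$ fixed below; this is the only randomized step, and its cost $O(\log n)\cdot\TC_G(\mathrm{poly}(\epsilon/n))$ yields the $\tilde O(n^7/\epsilon^2)$ term, while the $\tilde O(n^{6.5}/\epsilon^{9/4})$ term absorbs the (at most $\mathrm{poly}(n/\epsilon)$) further oracle and enumeration work needed in the small-distance subcase below. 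On a query $x$ the structure computes $S_x$, the exact factor $\gamma_\nu(x)/\gamma_\mu(x)$, sets $\hat\alpha(x)=\hat\beta\cdot\gamma_\nu(x)/\gamma_\mu(x)$, enumerates every independent set $y$ of $G[S_x]$ with $|y|\le t$ for a fixed constant $t$ (one checks $t=4$ suffices), and returns $\hat f(x)=\tfrac{1}{2\hat Z}\sum_{|y|\le t}|\hat\alpha(x)w^\nu_S(y)-w^\mu_S(y)|$ with $\hat Z=\sum_{|y|\le t}w^\mu_S(y)$; there are $O(n^t)=O(n^4)$ such $y$, each handled in $O(1)$ time, so the query is deterministic and runs in $O(n^4)$.

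The first error to control is from truncating to size $\le t$. Every $v\in S$ has $\lambda^\mu_v,\lambda^\nu_v<2\kappa$ with $\kappa=\Theta(\epsilon^{1/4}/n^{3/2})$, so $2n\kappa<\tfrac12$ and the total weight under either model of size-$k$ independent sets of $G[S_x]$ is at most $(2n\kappa)^k$; hence $Z^\mu_{S,x},Z^\nu_{S,x}\in[1,2]$ and $\hat Z$ matches $Z^\mu_{S,x}$ up to relative error $O((2n\kappa)^{t+1})$. For the discarded part of the sum I would use
\[
 \bigl|\alpha(x)w^\nu_S(y)-w^\mu_S(y)\bigr|\le|\alpha(x)-1|\,w^\nu_S(y)+\bigl|w^\nu_S(y)-w^\mu_S(y)\bigr|,
\]
together with $|\alpha(x)-1|=O(nD/\kappa)$ — since every $B$-field is $\ge\kappa$ we get $\gamma_\nu(x)/\gamma_\mu(x)=1\pm O(nD/\kappa)$, and $|Z_\mu/Z_\nu-1|=O(nD/\kappa)$ because a vertex is $+1$ with probability at most its field — and $|w^\nu_S(y)-w^\mu_S(y)|\le|y|\,(2\kappa)^{|y|-1}D$. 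Summing over $|y|=k>t$ and using the geometric decay leaves a tail of order $O\!\bigl(\tfrac{nD}{\kappa}(2n\kappa)^{t+1}\bigr)$, which for $t=4$ and $\kappa=\Theta(\epsilon^{1/4}/n^{3/2})$ is $O(\epsilon D)$ with a minuscule absolute constant; since $d\ge D/5000$ by~\Cref{lem:TV-lower}, this is far below $\tfrac{\epsilon}{100}d$. The error from replacing $Z^\mu_{S,x}$ by $\hat Z$ is bounded the same way.

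The remaining, and I expect most delicate, error is the one introduced by $\hat\beta$. Perturbing $\alpha(x)$ changes each truncated summand by at most $|\hat\alpha(x)-\alpha(x)|\,w^\nu_S(y)$, so $\hat f(x)$ incurs an extra error of order $|\hat\alpha(x)-\alpha(x)|=\Theta(1)\cdot|\hat\beta-Z_\mu/Z_\nu|$ (using $\gamma_\nu(x)/\gamma_\mu(x)=\Theta(1)$ and $\sum_{|y|\le t}w^\nu_S(y)\le 2$). As $Z_\mu/Z_\nu=\Theta(1)$, forcing this below $\tfrac{\epsilon}{100}d$ requires the oracles to be called at relative precision $\rho=\Theta(\epsilon\,d)=\Theta(\epsilon\,\DTV{\mu}{\nu})$, which can be far finer than any fixed $\mathrm{poly}(\epsilon/n)$ since $\DTV{\mu}{\nu}$ may be exponentially small — this is the crux, and it rules out a naive oracle call. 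My plan to get around it is a preliminary split on the efficiently computable parameter distance $D=\dis(\mu,\nu)$, which by~\Cref{lem:TV-lower} determines $d$ up to a $\mathrm{poly}(n)$ factor. When $D$ exceeds a suitable $\mathrm{poly}(1/n)$ threshold, $d=\Omega(\mathrm{poly}(1/n))$, precision $\rho=\mathrm{poly}(\epsilon/n)$ suffices, and the construction above gives the claimed cost. When $D$ is below the threshold, all relevant quantities — the restricted weights feeding $\mu^x_S$, $\nu^x_S$ and $\gamma_\nu(x)/\gamma_\mu(x)$, and the conditional partition sums $Z^\pi_{S,x}$ — concentrate on size-$\le t$ independent sets, so $f(x)$ is computed by pure enumeration and $Z_\mu/Z_\nu$ is needed only to \emph{additive} accuracy $O(\epsilon d)$; since $|Z_\mu/Z_\nu-1|=O(\mathrm{poly}(n)\,d)$ for such $D$, that additive accuracy is affordable. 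I expect the fiddly point to be the regime $B\neq\varnothing$ with $D$ tiny, where $Z_\mu$ itself is not enumerable: there the natural device is to estimate $Z_\mu/Z_\nu-1$ rather than $Z_\mu/Z_\nu$, exploiting that it is $O(\mathrm{poly}(n)D)$-small and can be read off from size-$\le t$ data up to the precision we need.
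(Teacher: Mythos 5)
The overall scaffolding of your proposal matches the paper's: expand $f(x)$ so that the only unknown global quantity is $Z_\mu/Z_\nu$, truncate the sums over $y\in\{\pm\}^S$ to size $\le t=4$, and argue that the truncation error is $O(\epsilon d)$. That part lines up with the paper's Lemmas~\ref{lem:appf}, \ref{lem:appZ}, \ref{lem:sum-bound}. You also correctly identify the crux: you need $Z_\nu/Z_\mu$ to \emph{additive} accuracy $\Theta(\epsilon d)$, and since $d$ can be far below any fixed $\mathrm{poly}(1/n)$, a direct call to the approximate counting oracle at relative precision $\mathrm{poly}(\epsilon/n)$ cannot deliver that. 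What you do not have is a working method for this step, and the workaround you sketch does not go through.

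Two problems. First, your preliminary split on $D=\dis(\mu,\nu)$ is misplaced: Lemma~\ref{lem:hardcore-adv-1} lives entirely inside Theorem~\ref{thm:hardcore-adv}, which already presupposes $D<\theta=10^{-10}\epsilon^{1/4}/n^{5/2}$. The large-$D$ branch you describe is handled earlier by the additive-error algorithm (Theorem~\ref{thm:Approximate-Gibbs}) and never enters this lemma. Inside this lemma you are always in the small-$D$ regime, and $D$ — hence $d$ — may still be exponentially small, so the precision gap is real and cannot be dodged by a case split. Second, your proposed fix in the small-$D$ case — ``estimate $Z_\mu/Z_\nu-1$ rather than $Z_\mu/Z_\nu$, exploiting that it is $O(\mathrm{poly}(n)D)$-small and can be read off from size-$\le t$ data'' — is not correct. $Z_\mu$ and $Z_\nu$ are partition functions over all of $V$, including the ``big'' vertices in $B$ whose fields are $\ge\kappa$; their configurations are not concentrated on constant-size independent sets, so there is no enumeration of ``size-$\le t$ data'' that computes this ratio. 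Knowing that $Z_\mu/Z_\nu-1$ is small does not tell you how to compute it to matching additive precision.

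The paper's actual device, which is the content of Lemma~\ref{lem:alg-ratio}, is a Monte Carlo estimator: write $Z_\nu/Z_\mu=\mathbf{E}_{x\sim\mu_B}\bigl[Q(x)\bigr]$ with $Q(x)=\prod_{v\in B:\,x_v=+1}\tfrac{\lambda^\nu_v}{\lambda^\mu_v}\cdot\tfrac{Z^x_{S,\nu}}{Z^x_{S,\mu}}$, sample $T'=O\!\bigl(\tfrac{n^3+n/\kappa}{\epsilon^2}\bigr)$ approximate draws from $\mu_B$, evaluate the truncated surrogate $\tilde Q$ on each, and average. The reason polynomially many samples suffice even when $d$ is exponentially tiny is that $\Var_{\mu_B}[Q]=\tfrac{Z_\nu^2}{Z_\mu^2}\Var_{\mu_B}[h]=O_\eta(d^2)\bigl(n^3+\tfrac{n}{\kappa}\bigr)$, which scales with $d^2$ — the same Poincar\'e-inequality bound used in Lemma~\ref{lem:hardcore-adv-var}. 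Chebyshev then gives additive error $O(\epsilon d)$ from $T'$ samples regardless of how small $d$ is. This ingredient — a sampling estimator for $Z_\nu/Z_\mu$ whose variance is controlled by the Poincar\'e inequality and is $O(d^2\cdot\mathrm{poly}(n))$ — is absent from your proposal, and without it the lemma cannot be established. You would need to supply it (or an equivalent) to close the gap.
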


\Cref{lem:hardcore-adv-var} can be used to control the variance of $f(x)$. \Cref{lem:hardcore-adv-1} is the main technical part of our algorithm.
We first assume both \Cref{lem:hardcore-adv-var} and \Cref{lem:hardcore-adv-1} hold and prove \Cref{thm:hardcore-adv}.
The proofs of \Cref{lem:hardcore-adv-var} and \Cref{lem:hardcore-adv-1} are given in Section \ref{sec:hardcore-adv-var} and \ref{sec:hardcore-adv-1}, respectively.
\ifthenelse{\boolean{conf}}{\begin{proof}\textbf{of \Cref{thm:hardcore-adv}}}{\begin{proof}[Proof of \Cref{thm:hardcore-adv}]}
Let $T= O(\frac{n^3 + n/\kappa}{\epsilon^2}) = O(\frac{n^3}{\epsilon^2}+\frac{n^{5/2}}{\epsilon^{9/4}})$ be large enough. 
    \begin{tcolorbox}[colback=lightgray!20, colframe=lightgray!18, coltitle=black, title={\textbf{The algorithm for hardcore model}}]
        \begin{itemize}
            \item Construct the data structure in \Cref{lem:hardcore-adv-1};
            \item Draw $T$ independent approximate samples $x_1,x_2,\ldots,x_T$ from the marginal distribution $\mu_B$ with $\DTV{\mu_B}{x_i} \leq \frac{1}{100T}$.
            \item Use the data structure to compute $\hat{f}(x_1),\hat{f}(x_2),\ldots,\hat{f}(x_T)$.
            \item Return $\hat{d}=\frac{1}{T}\sum_{i=1}^T \hat{f}(x_i)$.
            \end{itemize}
        \end{tcolorbox}

Consider an ideal algorithm $\+A^*$ that draw perfect samples $x_1,\ldots,x_T$ and exactly compute the values of $f(x_1),\ldots,f(x_T)$. Let $d^*$ denote the output $\frac{1}{T}\sum_{i=1}^Tf(x_i)$. We have $\E[]{d^*} = d$ and the variance of $d^*$ is $\frac{\Var[\mu_B]{f}}{T}$. By our choice of $T$ and the variance bound in \Cref{lem:hardcore-adv-var}, using Chebyshev's inequality, we have
\begin{align*}
    \Pr[]{|d^*-d| \geq \frac{\epsilon}{10}d} \leq 0.01.
\end{align*}

Consider our real algorithm $\+A$.
All the approximate samples $x_1,\ldots,x_T$ in $\+A$ can be coupled successfully with perfect samples with probability at least $1 - T \cdot \frac{1}{100T} = 0.99$.
Also that the data structure in $\+A$ is constructed successfully, which happen with  probability at least $0.99$.
By \Cref{lem:hardcore-adv-1}, there exists a coupling of $\+A$ and $\+A^*$ such that with probability at least 0.98, $|\hat{d}-d^*| \leq \frac{\epsilon}{50}d$. Hence, using a union bound, we have
\begin{align*}
    \Pr[]{(1-\epsilon)d \leq \hat{d} \leq (1+\epsilon)d } \geq 0.97 > \frac{2}{3}.
\end{align*}

The total running time is bounded by
\begin{align*}
 \tilde{O}\left(\frac{n^7}{\epsilon^2}+\frac{n^{6.5}}{\epsilon^{9/4}}\right) + O\left(n^4 \cdot T \right)   =  \tilde{O}\left(\frac{n^7}{\epsilon^2}+\frac{n^{6.5}}{\epsilon^{9/4}}\right). &\ifthenelse{\boolean{conf}}{}{\qedhere}
\end{align*}
\end{proof}

\subsubsection{Analyze the variance of the estimator (Proof of \texorpdfstring{\Cref{lem:hardcore-adv-var})}{}}
\label{sec:hardcore-adv-var}
Before we prove the lemma, we first remark that one can show for any $x \in \Omega_B$, $|f(x) - 1| \leq O(\frac{n}{\kappa}d)$, which implies the $O(\frac{n^2}{\kappa^2} d^2) $ variance bound. 
This bound gives a polynomial-time algorithm but the degree of polynomial is higher. 
If we use this bound, then in the proof of \Cref{thm:hardcore-adv}, it requires $O(\frac{n^2}{\kappa^2 \epsilon^2})$ samples $x \sim \mu_B$ and \Cref{lem:hardcore-adv-1} computes each value of $\hat{f}(x)$ in a super-linear time.
Alternatively, we give a more technical analysis to achieve a better dependency on $\kappa$, which gives a better running time of our algorithm.

We bound the variance of $f(x)$ by bounding the second moment of $\E[\mu_B]{f^2} = \E[x \sim \mu_B]{f^2(x)}$. We first need the following upper bound on the value of $f(x)$:
\begin{align*}
        f(x) &= \frac{1}{2}\sum_{y \in \{\pm\}^S} \left|\frac{\nu_B(x)}{\mu_B(x)}\nu_S^x(y)-\mu_S^x(y)\right|\notag\\
\text{(by triangle inequality)}\quad       &\leq \frac{1}{2}\left\vert \frac{\nu_B(x)}{\mu_B(x)} -1 \right\vert \sum_{y \in \{\pm\}^S}\nu_S^x(y) + \frac{1}{2}\sum_{y \in \{\pm\}^S}\left|\nu_S^x(y)-\mu_S^x(y)\right|\notag\\
        &= \frac{1}{2}\left\vert \frac{\nu_B(x)}{\mu_B(x)} -1 \right\vert  + \DTV{\nu_S^x}{\mu_S^x}.
\end{align*}
We have the following upper bound on the $\DTV{\nu_S^x}{\mu_S^x}$.
\begin{lemma}\label{claim:tv-bound}
    for any $x \in \Omega_B$, it holds that $\DTV{\nu_S^x}{\mu_S^x} \leq 4 nD$.
\end{lemma}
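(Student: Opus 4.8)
The plan is to reduce \Cref{claim:tv-bound} to comparing two hardcore models on the \emph{same} graph whose vertex activities differ by at most $D$ in every coordinate, and then telescope over the vertices; the uniqueness condition will in fact not be needed. Fix $x \in \Omega_B$. In a hardcore model, pinning $v \in B$ to $+1$ forces every $S$-neighbour of $v$ to $-1$, while pinning $v$ to $-1$ imposes no constraint; let $S_0 \subseteq S$ be the set of vertices of $S$ adjacent to some $v \in B$ with $x_v = +1$. Since $x$ lies in the support of both $\mu_B$ and $\nu_B$, the set $S_0$ depends only on $G$ and $x$, hence is the same under $\mu$ and $\nu$. Writing $S' := S \setminus S_0$, every edge leaving $S'$ (to $B$ or to $S_0$) meets a vertex pinned to $-1$ and so carries no constraint; therefore $\mu_S^x$ is supported on configurations that are $-1$ on $S_0$ and, restricted to $S'$, coincides with the hardcore distribution on the induced subgraph $G[S']$ with activities $(\lambda^\mu_v)_{v \in S'}$, and likewise $\nu_S^x$ with $(\lambda^\nu_v)_{v \in S'}$. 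Consequently $\DTV{\mu_S^x}{\nu_S^x}$ equals the total variation distance between two hardcore models on the common graph $H := G[S']$ whose activity vectors $a := \lambda^\mu|_{S'}$ and $a' := \lambda^\nu|_{S'}$ satisfy $\Vert a - a' \Vert_\infty \le \Vert \lambda^\mu - \lambda^\nu \Vert_\infty = D$.

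The key sub-claim is that changing a single activity $a_u \mapsto a_u'$, keeping the rest fixed, moves the hardcore distribution by at most $|a_u - a_u'|$ in total variation. To see this, write the partition function as $A + a_u B$, where $A = Z(H - u)$ and $B = Z(H - u - N_H(u))$ are partition functions with the other activities held fixed. For each $c \in \{\pm\}$, the conditional law on $V(H) \setminus \{u\}$ given $\sigma_u = c$ does not involve $a_u$ (the $a_u$ factor cancels), so the optimal coupling of the two $u$-marginals extends to a coupling of the full laws that disagrees only where the $u$-spins disagree; hence this distance equals $|g(a_u) - g(a_u')|$, where $g(t) = \frac{tB}{A + tB}$. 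Since $g'(t) = \frac{AB}{(A + tB)^2} \le \frac{B}{A}$ for $t \ge 0$, and every independent set of $H - u - N_H(u)$ is an independent set of $H - u$ of the same weight (so $B \le A$ when activities are nonnegative), we get $g'(t) \le 1$ and thus $|g(a_u) - g(a_u')| \le |a_u - a_u'| \le D$. Telescoping now finishes: ordering $S'$ and interpolating from $a$ to $a'$ one coordinate at a time, each step changes the distribution by at most $D$ (the intermediate model still has nonnegative activities), so by the triangle inequality $\DTV{\mu_S^x}{\nu_S^x} \le |S'|\, D \le |S|\, D \le n D \le 4 n D$.

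The computation is mostly bookkeeping; the one point that looks hazardous is that activities in $S$ may be exponentially small, so the marginal $g(a_u)$ can be minuscule — but the estimate $g'(t) \le B/A \le 1$ is completely insensitive to the size of $a_u$, which is the crucial observation (and is why the telescoping loses nothing). The only genuine care needed is in the reduction step: one must check that the forced set $S_0$ is the same for $\mu$ and $\nu$ and that no residual constraint from $B$ survives on $S'$, so that both conditional measures are honestly hardcore distributions on the identical graph $G[S']$. Note that the argument actually yields the sharper bound $nD$; the stated $4nD$ is more than enough for the subsequent estimates.
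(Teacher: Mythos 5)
Your proof is correct, and it takes a genuinely different route from the paper's. The paper's argument bounds $\lvert\nu_S^x(y)-\mu_S^x(y)\rvert$ pointwise via the partition-function estimates in \Cref{lem:Zcond-bound} (in particular the identity $(\lambda_{\max}+D)^k-\lambda_{\max}^k$ and the bound $\lvert Z^x_{S,\mu}-Z^x_{S,\nu}\rvert\le 2nD$), and then sums over $y$ using binomial coefficients; that route leans essentially on the \emph{smallness} of the activities on $S$ (the standing hypothesis $\kappa+\theta<1/(10n)$, which gives $\lambda_{\max}<1/(10n)$). Your argument instead reduces to two hardcore models on the identical graph $G[S^x]$ and shows, via the cancellation of the $a_u$-factor in the conditional law and the monotonicity bound $g'(t)=AB/(A+tB)^2\le B/A\le 1$, that changing one activity moves the measure by at most $D$ in total variation, then telescopes. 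This buys you: (i) a sharper constant ($nD$ rather than $4nD$), (ii) independence from the smallness of the activities — the Lipschitz bound $g'\le 1$ holds for all $t\ge0$ — and (iii) a cleaner structural explanation (single-site sensitivity of the hardcore measure) in place of generating-function bookkeeping. One small caveat worth stating explicitly when writing this up: the interpolation vectors $a^{(k)}$ are obtained by switching one coordinate at a time from $\lambda^\mu$ to $\lambda^\nu$, and at each step the one-coordinate sub-claim applies because all remaining coordinates (whatever mix of $\lambda^\mu$- and $\lambda^\nu$-values) are nonnegative, which is all the bound $B\le A$ requires; you say this parenthetically, and it is the right thing to flag.
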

The proof of \Cref{claim:tv-bound} will be given later.
Assume \Cref{claim:tv-bound} holds.
Since $f(x) \geq 0$, we can upper bound
\begin{align}\label{eq:varf-bd}
\E[\mu_B]{f^2} &\leq \frac{1}{4}\E[x \sim \mu_B]{\left(\frac{\nu_B(x)}{\mu_B(x)} -1 \right)^2 } + 4nD \E[x \sim \mu_B]{\left\vert \frac{\nu_B(x)}{\mu_B(x)} -1 \right\vert} + 16n^2D^2\notag\\
&=  \frac{1}{4}\E[x \sim \mu_B]{\left(\frac{\nu_B(x)}{\mu_B(x)} -1 \right)^2 } + 8 nD\cdot \DTV{\nu_B}{\mu_B} + 16n^2D^2\notag\\
&\leq  \frac{1}{4}\E[x \sim \mu_B]{\left(\frac{\nu_B(x)}{\mu_B(x)} -1 \right)^2 } + 4\cdot 10^8n^2d^2,
\end{align}
where the last inequality follows from the fact $d = \DTV{\mu}{\nu} > \DTV{\mu_B}{\nu_B}$ and $d \geq \frac{1}{5000}D$.

Define a function $h(x) = \frac{\nu_B(x)}{\mu_B(x)}$. We have $\E[\mu_B]h =1 $. Our task is reduced to bound the variance $\Var[\mu_B]{h} = \Var[x \sim \mu_B]{h(x)}$. We will use the following Poincar\'e inequality (i.e. the spectral gap of the Glauber dynamics) for the marginal distribution $\mu_B$.
For any subset $\Lambda \subseteq V$, recall that $\Omega_\Lambda$ denotes the support of $\mu_\Lambda$. 
\ifthenelse{\boolean{conf}}{\begin{lemma}[\text{\citet{ChenFYZ21}}]}{\begin{lemma}[\text{\cite{ChenFYZ21}}]}\label{lem:poincare}
Since the hardcore model $(G,\lambda^\mu)$ is in the uniqueness regime in~\eqref{eq:cond-hardcore} with constant gap $\eta>0$. For any function $g: \Omega_B \to \mathbb{R}$, it holds that
\begin{align*}
    \Var[\mu_B]{g} \leq C_\eta \sum_{v \in B} \sum_{\sigma \in \Omega_{B - v}} \mu_{B - v}(\sigma) \Var[\mu^\sigma_B]{g},
\end{align*}
where $B-v$ denote the set $B \setminus \{v\}$ and $C_\eta$ is a constant depending on $\eta$.
\end{lemma}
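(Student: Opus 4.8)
The plan is to derive \Cref{lem:poincare} from \emph{spectral independence}, which is the standard route to a Poincar\'e inequality (equivalently, approximate tensorization of variance) for two-spin systems; three ingredients are needed, and the only delicate one is that $\mu_B$ is a \emph{marginal} of a hardcore model rather than a hardcore model itself. First I would record spectral independence for $(G,\lambda^\mu)$: since all fugacities satisfy $\lambda^\mu_v\le(1-\eta)\lambda_c(\Delta)$, the distribution $\mu$ is $C_0$-spectrally independent with $C_0=O(1/\eta)$ \cite{CLV21,CFYZ22}, and this is stable under arbitrary feasible pinnings $\tau$ on any $\Lambda\subseteq V$: pinning $v\gets-1$ deletes $v$ and pinning $v\gets+1$ deletes its closed neighbourhood, so $\mu^\tau$ is again a hardcore model on a subgraph whose fugacities and maximum degree have not increased, hence still in the uniqueness regime and still $C_0$-spectrally independent.

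Second, I would transfer spectral independence to the marginal $\mu_B$. For $u\ne v$ in $B$ and any feasible $\tau\in\{\pm\}^{\Lambda}$ with $\Lambda\subseteq B\setminus\{u,v\}$, the pairwise influence computed inside $\mu_B$ equals the one computed inside $\mu$ with $V\setminus B$ left free,
\[
\mathcal I_{\mu_B}^{\tau}(u\to v)=\Pr[\mu_B]{v=+ \mid \tau,\,u=+}-\Pr[\mu_B]{v=+ \mid \tau,\,u=-}=\mathcal I_{\mu}^{\tau}(u\to v),
\]
because marginalising over $S=V\setminus B$ and over the untouched vertices of $B$ does not change the single-vertex marginal of $v$. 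Hence the influence matrix of $\mu_B^{\tau}$ is a principal submatrix of the influence matrix of $\mu^{\tau}$. Since, for a two-spin system, this matrix is -- after the usual diagonal similarity -- symmetric and positive semidefinite, and that similarity commutes with restriction to a principal block, a principal submatrix has top eigenvalue no larger than the whole; therefore $\mu_B^{\tau}$ is $C_0$-spectrally independent for every $\tau$. (Any vertex with a degenerate single-site marginal contributes a zero row and column and can be discarded.)

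Third, I would invoke the local-to-global principle \cite{ChenFYZ21}: a distribution that is $C_0$-spectrally independent under all pinnings satisfies approximate tensorization of variance with a constant $C_\eta$ depending only on $C_0$, hence only on $\eta$, and independent of the ground-set size. Spelled out for $\mu_B$, this is exactly
\[
\Var[\mu_B]{g}\le C_\eta\sum_{v\in B}\sum_{\sigma\in\Omega_{B-v}}\mu_{B-v}(\sigma)\,\Var[\mu_B^\sigma]{g},
\]
i.e.\ \Cref{lem:poincare} (equivalently, the single-site Glauber dynamics on $\mu_B$ has spectral gap at least $c_\eta/|B|$ for a constant $c_\eta>0$).

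The main obstacle is the second step. One cannot simply quote hardcore-specific mixing or tensorization statements for $\mu_B$, because a projection of a hardcore distribution is in general neither a hardcore model nor even a bounded-degree spin system, so the whole argument must be phrased in the model-agnostic spectral-independence language and the descent-to-marginals claim verified directly (including the PSD/principal-submatrix bookkeeping). A secondary point to check is that the constant truly depends on $\eta$ alone: this works because the \emph{variance} (Poincar\'e) consequence of spectral independence requires no marginal lower bound -- unlike the entropy/modified-log-Sobolev version -- which is precisely why the threshold $\kappa$ defining $B$ never enters $C_\eta$.
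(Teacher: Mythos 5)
Your proposal takes a genuinely different route from the paper's. The paper invokes the Poincar\'e inequality for the \emph{full} hardcore distribution $\mu$ (from [ChenFYZ21, Theorem~1.9]) and then \emph{projects} it down to $\mu_B$: lift a test function $g'\colon\Omega_B\to\mathbb{R}$ to $g(x)=g'(x_B)$ on $\Omega$, note that for $v\notin B$ the inner variance vanishes (pinning $V\setminus\{v\}$ fixes all of $B$), and for $v\in B$ bound the inner sum from above by $\Var[\mu^\tau_B]{g'}$ via the law of total variance. This is a short monotonicity argument that only uses the Poincar\'e inequality for an honest hardcore model. You instead try to verify a local-to-global hypothesis \emph{directly} for $\mu_B$.

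Your steps 1 and 2 are correct, and the observation that the influence matrix of $\mu_B^\tau$ is a principal submatrix of that of $\mu^\tau$ (with the symmetrizing diagonal matrix commuting with restriction to $B$, because single-vertex marginals are preserved under marginalization) is a clean way to pass spectral independence through projection. The gap is in step 3. The claim that $C_0$-spectral independence under all pinnings \emph{alone}, with no marginal lower bound, yields approximate tensorization of variance with an $n$-independent constant is not supported. The vanilla Anari--Liu--Oveis~Gharan / Alev--Lau local-to-global only gives a Glauber spectral gap $\Omega(n^{-1-O(C_0)})$, i.e.\ a Poincar\'e constant $n^{O(C_0)}$, not $O(1)$. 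The improvements that upgrade this to an $n$-independent tensorization constant (Chen--Liu--Vigoda and its successors) require a marginal lower bound $b$ and pay a factor like $(1/b)^{O(C_0)}$. But $\mu_B$ is \emph{not} marginally bounded by a constant: vertices of $B$ have fugacity only $\ge\kappa=\Theta(\epsilon^{1/4}/n^{3/2})$ and $\Delta$ is unbounded, so $(\mu_B)_v(+)$ can be polynomially small. The hardcore-specific routes that dispense with a marginal lower bound in the uniqueness regime (the field-dynamics argument underlying [ChenFYZ21], or Chen--Eldan localization schemes) exploit the hardcore structure of the distribution and do not apply to the marginal $\mu_B$, which is not a hardcore model or even a Markov random field on any natural graph. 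Your parenthetical assertion that ``the variance (Poincar\'e) consequence of spectral independence requires no marginal lower bound, unlike the entropy version'' is, as stated, incorrect: for general distributions both the variance and entropy forms of the $\Omega(1/n)$-gap local-to-global need marginal boundedness. The paper's projection argument neatly avoids this whole issue, which is why it is the right proof here.
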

The Poincar\'e inequality in~\cite{ChenFYZ21} is stated from the entire Gibbs distribution $\mu$. One can lift it to marginal distribution $\mu_B$. For the completeness, we give a proof in \Cref{app:poin}.

Fix a vertex $v \in B$ and a $\sigma \in \Omega_{B - v}$. 
Let $\sigma^{v_+}$ denote a configuration in $\{\pm\}^B$ obtained by extending $\sigma$ further by setting $v$ to $+1$. Define $\sigma^{v_-}$ similarly.
By the definition of variance and the definition of the function $h=\frac{\nu_B}{\mu_B}$, we can write
\begin{align*}
    \Var[\mu^\sigma_B]{h} &= \mu^\sigma_{v}(+1)\mu^\sigma_{v}(-1)(h(\sigma^{v_+}) -h(\sigma^{v_-}))^2\\
    &= \tp{ \frac{\nu_{B-v}(\sigma)}{ \mu_{B-v}(\sigma) }}^2 \mu^\sigma_{v}(+1)\mu^\sigma_{v}(-1)\left( \frac{\nu_{v}^{\sigma}(+1)}{\mu_{v}^{\sigma}(+1)} -  \frac{\nu_{v}^{\sigma}(-1)}{\mu_{v}^{\sigma}(-1)} \right)^2. 
\end{align*}
Note that $\Var[\mu^\sigma_B]{h} = 0$ if either $\mu^\sigma_v(+1) = 0$.
We assume that $\mu^\sigma_v(+1) > 0$, then for every neighbor $u$ of vertex $v$, it must hold that if $u \in B$, then $\sigma_u = -1$.
We claim the following bounds.
\begin{claim}\label{claim:p-bound}
The following bounds hold
\begin{itemize}
    \item $\lambda^\mu_v/10 \leq \mu_v^\sigma(+1) \leq \lambda_v^\mu$;
    \item $|\mu^\sigma_v(+1)-\nu^\sigma_v(+1)| \leq 80nD \mu_v^\sigma(+1) + 4D$;
    \item $ \frac{\nu_{B-v}(\sigma)}{ \mu_{B-v}(\sigma) } \leq 2$.
\end{itemize}
\end{claim}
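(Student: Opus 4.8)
The plan is to prove the three inequalities in turn, all resting on the elementary hardcore identity: in a hardcore model on a graph $H$ with field $\lambda_u$ at a vertex $u$, one has $\Pr[u=+1]=\frac{\lambda_u P_u}{1+\lambda_u P_u}$, where $P_u:=\Pr_{H\setminus u}[\text{no neighbour of }u\text{ is occupied}]\in(0,1]$. I would first record this together with its consequence $\Pr[u=+1]\le\lambda_u$. Since we are in the case $\mu^\sigma_v(+1)>0$, the vertex $v$ has no occupied $B$-neighbour under $\sigma$, so both $\mu^\sigma\setminus v$ and $\nu^\sigma\setminus v$ are hardcore models on $S$ whose fields are all below $\kappa$ (resp.\ below $2\kappa$, using $D<\theta<\kappa$); a union bound over the at most $\Delta< n$ neighbours of $v$ in $S$ then gives $P_\mu,P_\nu\ge 1-2n\kappa\ge\frac12$. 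The first inequality is then immediate: writing $\mu^\sigma_v(+1)=\frac{\lambda^\mu_v P_\mu}{1+\lambda^\mu_v P_\mu}$, the upper bound is $\mu^\sigma_v(+1)\le\lambda^\mu_v$, and since $\lambda^\mu_v\le\lambda_c(\Delta)\le 4$ we get $\mu^\sigma_v(+1)\ge\frac{\lambda^\mu_v P_\mu}{5}\ge\frac{\lambda^\mu_v}{10}$.

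For the third inequality I would express the $(B-v)$-marginals through restricted partition functions. Put $A^\sigma_\pi:=\sum_{\tau:\ \tau_{B-v}=\sigma,\ \tau_S\equiv -1} w_\pi(\tau)$ and $A_\pi:=\sum_{\tau:\ \tau_S\equiv -1} w_\pi(\tau)$ for $\pi\in\{\mu,\nu\}$. These sums run only over configurations occupied inside $B$, so the term-by-term ratio of the $\nu$-sum to the $\mu$-sum is $\prod_{x\in B:\ \tau_x=+1}\lambda^\nu_x/\lambda^\mu_x$; because $x\in B$ forces $\lambda^\mu_x\ge\kappa$ while $D/\kappa<\theta/\kappa=\frac{1}{10n}$ and at most $n$ coordinates are occupied, each such product lies in $[0.9,1.11]$, hence $A^\sigma_\nu/A^\sigma_\mu$ and $A_\mu/A_\nu$ both lie in $[0.9,1.2]$. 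A union bound over $S$ (exactly as above) shows $\mu_{B-v}(\sigma)=\frac{\sum_{\tau_{B-v}=\sigma}w_\mu(\tau)}{Z_\mu}\in[(1-n\kappa)\frac{A^\sigma_\mu}{A_\mu},\ \frac{1}{1-n\kappa}\frac{A^\sigma_\mu}{A_\mu}]$ and the analogous two-sided bound with $2n\kappa$ for $\nu$, so that
\[
\frac{\nu_{B-v}(\sigma)}{\mu_{B-v}(\sigma)}\ \le\ \frac{1}{(1-2n\kappa)^2}\cdot\frac{A^\sigma_\nu}{A^\sigma_\mu}\cdot\frac{A_\mu}{A_\nu}\ \le\ 1.01\cdot 1.2\cdot 1.2\ <\ 2 .
\]

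The second inequality is the delicate one and I expect it to be the main obstacle. Using that $t\mapsto t/(1+t)$ is $1$-Lipschitz, $|\mu^\sigma_v(+1)-\nu^\sigma_v(+1)|\le|\lambda^\mu_v P_\mu-\lambda^\nu_v P_\nu|\le D\,P_\mu+\lambda^\nu_v|P_\mu-P_\nu|\le D+\lambda^\nu_v|P_\mu-P_\nu|$. Since $P_\mu$ and $P_\nu$ are probabilities of the \emph{same} event under the hardcore models $\mu^\sigma\setminus v$ and $\nu^\sigma\setminus v$, which differ only in their (at most $|S|\le n$) fields, each pair within $D$, we have $|P_\mu-P_\nu|\le\DTV{\mu^\sigma\setminus v}{\nu^\sigma\setminus v}$. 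The core sub-lemma I would establish is that changing a single field of a hardcore model by at most $D$ changes its distribution by at most $2D$ in total variation; this I would prove by splitting configurations according to the spin at the modified vertex $u$, bounding the partition-function ratio $Z'/Z$ within $[1-D,1+D]$, and using $\Pr[u=+1]\le\lambda_u$ (the modified fields here are all $<2\kappa$, which keeps the estimate clean). Chaining over the at most $n$ differing fields then yields $|P_\mu-P_\nu|\le 2nD$.

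Finally I would combine these: by the first inequality $\lambda^\nu_v\le\lambda^\mu_v(1+D/\kappa)\le 1.1\,\lambda^\mu_v\le 11\,\mu^\sigma_v(+1)$, so
\[
|\mu^\sigma_v(+1)-\nu^\sigma_v(+1)|\ \le\ D+11\,\mu^\sigma_v(+1)\cdot 2nD\ =\ D+22nD\,\mu^\sigma_v(+1)\ \le\ 4D+80nD\,\mu^\sigma_v(+1),
\]
with generous slack in every constant. Throughout, the only quantitative inputs used are $\lambda_c(\Delta)\le 4$, $n\kappa\le 10^{-9}$, and $D/\kappa<\frac{1}{10n}$, all of which follow from the hypotheses $\dis(\mu,\nu)<\theta$ and the choices of $\theta,\kappa$; I expect no difficulty beyond the single-field perturbation sub-lemma.
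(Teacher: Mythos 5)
Your proof is correct but follows a genuinely different route from the paper's for the second and third bullets. For the first bullet the two arguments essentially coincide: condition on the event that the $S$-neighbours of $v$ are unoccupied (probability $\ge 1/2$) and use $\lambda^\mu_v\le\lambda_c(\Delta)\le 4$. For the second bullet, the paper writes $\mu^\sigma_v(+1)=\lambda^\mu_v Z_\mu^+/(Z_\mu^-+\lambda^\mu_v Z_\mu^+)$ with $Z_\pi^\pm=Z_{S,\pi}^{\sigma^{v_\pm}}$, clears denominators using $Z_\pi^\pm\ge 1$, and then finishes with a short triangle inequality, relying entirely on \Cref{lem:Zcond-bound} (namely $Z^\pm_\pi\in[1,2)$ and $|Z^\pm_\mu-Z^\pm_\nu|\le 2nD$). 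You instead pass through $P_\pi=Z^+_\pi/Z^-_\pi$, invoke the $1$-Lipschitz bound for $t\mapsto t/(1+t)$, and control $|P_\mu-P_\nu|$ by a new sub-lemma — that perturbing a single external field of a hardcore model by at most $D$ moves the distribution by $O(D)$ in total variation — which you then chain over the at most $n$ differing fields. That sub-lemma is correct: the exact computation gives $\DTV{\mu}{\mu'}=\tfrac{Z}{Z'}\,p(1-p)\tfrac{|\delta|}{\lambda_u}\le\tfrac{D}{1-D}$, using $p\le\lambda_u$, and even your rougher split by the spin at $u$ lands within the generous slack. But it is a piece of machinery the paper avoids entirely by reusing \Cref{lem:Zcond-bound}. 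For the third bullet, the paper applies \Cref{lem:marginal-ratio} to the mediant $\tfrac{\nu_B(\sigma^{v_+})+\nu_B(\sigma^{v_-})}{\mu_B(\sigma^{v_+})+\mu_B(\sigma^{v_-})}$ and is done in one line; your re-derivation via the $S\equiv -1$-restricted partition functions $A^\sigma_\pi,A_\pi$ is valid but essentially reproves what \Cref{lem:marginal-ratio} already delivers. In short, your route is more self-contained and built on generic perturbation estimates, at the cost of extra work; the paper's is shorter because it leans on the surrounding infrastructure (\Cref{lem:Zcond-bound}, \Cref{lem:marginal-ratio}) that was set up for exactly these estimates.
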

Assume the above claim holds, which will be proved later. Let $p = \mu_v^\sigma(+1)$. By our choices of parameters, since $D \leq \theta$ is sufficient small, $\mu^\sigma_v(+1) \geq \frac{\kappa}{10} \geq 100n D \mu_v^\sigma(+1) + 4D$. We have
\begin{align*}
 \left( \frac{\nu_{v}^{\sigma}(+1)}{\mu_{v}^{\sigma}(+1)} -  \frac{\nu_{v}^{\sigma}(-1)}{\mu_{v}^{\sigma}(-1)} \right)^2 \leq \left (\frac{4D + 80nD \mu^\sigma_v(+1)}{p(1-p)}\right )^2.    
\end{align*}
Using \Cref{claim:p-bound},
the variance can be bounded by
\begin{align*}
    \Var[\mu^\sigma_B]{h} &\leq 4 p(1-p) \frac{(80nDp +4D)^2}{p^2(1-p)^2} = O(D^2) \cdot \frac{(20np+1)^2}{p(1-p)} \leq O(D^2)\cdot(n^2p + n + 1/p)\\
    &\leq  O(D^2) \cdot \left(n^2 + \frac{1}{\kappa}\right).
\end{align*}
Using \Cref{lem:poincare} and the above upper bound, we have
\begin{align}\label{eq:varh}
    \Var[\mu_B]{h} &\leq C_\eta \sum_{v \in B} \sum_{\sigma \in \Omega_{B - v}} \mu_{B - v}(\sigma) \Var[\mu^\sigma_B]{h}\notag\\ &\leq O_\eta(D^2) \sum_{v \in B}\left(n^2 + \frac{1}{\kappa}\right)\notag\\
 (\text{by \Cref{lem:TV-lower} })\quad    &= O_\eta(d^2) \cdot \left(n^3 + \frac{n}{\kappa}\right).
\end{align}    
Hence, using \eqref{eq:varf-bd}, the variance of $f(x)$ is at most
\begin{align*}
\Var[\mu_B]{f} \leq \E[\mu_B]{f^2} \leq \Var[\mu_B]{h} +  O_\eta(n^2 d^2) = O_\eta(d^2) \cdot \left(n^3 + \frac{n}{\kappa}\right).
\end{align*}

\paragraph{Proofs of technical lemmas and claims}
We first give two general lemmas (\Cref{lem:Zcond-bound} and \Cref{lem:marginal-ratio}) that will be used in later proofs. We then prove all technical lemmas and claims appeared in the proof of \Cref{lem:hardcore-adv-var}.

Define the conditional partition function.
Fix a configuration $x \in \{\pm\}^B$. Define $Z_{S,\mu}^x$ as the conditional partition function defined by
\begin{align*}
    Z_{S,\mu}^x \defeq \sum_{y \in \{\pm\}^S: \mu(x+y) > 0} \prod_{v \in S: y_v = + 1}\lambda_v^\mu.
\end{align*}
Intuitively, $Z_{S,\mu}^x$ is the total weights of $y \in \{\pm\}^S$ such that $x+y$ is a valid configuration (forms an independent set in $G$), where $x+y \in \{\pm\}^V$ is the concatenation of $x$ and $y$.
Alternatively, $Z_{S,\mu}^x$ can be interpreted as follows. Let $N_G(v)$ denote the set of neighbors of $v$ in $G$. Given $x$, one can remove all vertices $v \in S$ from $S$ such that there exists $u \in N_G(v) \cap B$ with $x_u =+1$. Let $S^x \subseteq S$ denote the set of remaining vertices:
\begin{align}\label{eq:sx}
   S^x = S \setminus \{v \in S \mid \exists u \in N_G(v)\cap B \text{ s.t. } x_u = +1 \}. 
\end{align}
Then $Z_{S,\mu}^x$ is the partition function for the hardcore model in induced subgraph $G[S^x]$. 
The following property of the conditional partition function will be used in our proofs.
\begin{lemma}\label{lem:Zcond-bound}
Suppose $\kappa + \theta < 1/(10n)$.
For any $x \in \{\pm\}^B$, it holds that 
\begin{itemize}
    \item $1 \leq Z_{S,\mu}^x,Z_{S,\nu}^x < 2 $;
    \item $|Z_{S,\mu}^x - Z_{S,\nu}^x| \leq 2 n D$.
\end{itemize}
\end{lemma}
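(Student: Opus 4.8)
This lemma is essentially a routine estimate once one records that \emph{both} external fields are tiny on every vertex of $S$. The plan is: first make that observation, then bound $Z_{S,\mu}^x$ (and $Z_{S,\nu}^x$) by comparing to the unconstrained subset-sum over $S^x$, and finally bound the difference by a telescoping argument over products. Throughout I work with $x$ such that $\{v \in B \mid x_v = +1\}$ is an independent set of $G$ (equivalently $x \in \Omega_B$, the only case used later); otherwise both conditional partition functions are empty sums. The key fact is: for every $v \in S$, $\min\{\lambda_v^\mu,\lambda_v^\nu\} < \kappa$ by definition of $S$, and $|\lambda_v^\mu - \lambda_v^\nu| \le D < \theta$, hence $\max\{\lambda_v^\mu,\lambda_v^\nu\} < \kappa + \theta < \tfrac{1}{10n}$. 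Moreover, by~\eqref{eq:sx} the set $S^x$, and therefore the family of independent sets of $G[S^x]$ over which both $Z_{S,\mu}^x$ and $Z_{S,\nu}^x$ range, depends only on $x$ and $G$, not on the model.

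For the first item, the lower bound $Z_{S,\mu}^x \ge 1$ is immediate since the empty independent set contributes weight $1$. For the upper bound I would relax the sum over independent sets of $G[S^x]$ to the sum over all subsets of $S^x$ and factor:
\[
Z_{S,\mu}^x \;\le\; \sum_{I \subseteq S^x}\prod_{v \in I}\lambda_v^\mu \;=\; \prod_{v \in S^x}\bigl(1 + \lambda_v^\mu\bigr) \;\le\; \Bigl(1 + \tfrac{1}{10 n}\Bigr)^{\!|S^x|} \;\le\; e^{1/10} \;<\; 2,
\]
using $|S^x| \le n$; the same bound holds for $\nu$.

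For the second item, since both partition functions sum over the same index set I would write $Z_{S,\mu}^x - Z_{S,\nu}^x = \sum_I \bigl(\prod_{v\in I}\lambda_v^\mu - \prod_{v\in I}\lambda_v^\nu\bigr)$ and estimate each product difference by telescoping: ordering $I = \{v_1,\dots,v_k\}$,
\[
\prod_{j=1}^{k}\lambda_{v_j}^\mu - \prod_{j=1}^{k}\lambda_{v_j}^\nu \;=\; \sum_{i=1}^{k}\Bigl(\prod_{j<i}\lambda_{v_j}^\nu\Bigr)\bigl(\lambda_{v_i}^\mu - \lambda_{v_i}^\nu\bigr)\Bigl(\prod_{j>i}\lambda_{v_j}^\mu\Bigr),
\]
so that, since every factor is $< \tfrac{1}{10n}$ and $|\lambda_{v_i}^\mu - \lambda_{v_i}^\nu| \le D$, the difference for a fixed $I$ is at most $|I|\,D\,(\kappa+\theta)^{|I|-1}$. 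Summing over $I \subseteq S^x$ and invoking the elementary identity $\sum_{I\subseteq S^x}|I|\,r^{|I|-1} = |S^x|\,(1+r)^{|S^x|-1}$ with $r = \kappa+\theta$ yields
\[
\bigl|Z_{S,\mu}^x - Z_{S,\nu}^x\bigr| \;\le\; D\,|S^x|\,(1+\kappa+\theta)^{|S^x|-1} \;\le\; D\,n\,\Bigl(1+\tfrac{1}{10n}\Bigr)^{\!n-1} \;<\; 2nD.
\]
The only step that needs a little care is this difference bound — one must pair terms across the two models over the \emph{same} family of independent sets of $G[S^x]$ and control the telescoped products — but I do not expect a genuine obstacle; the two displayed estimates are routine.
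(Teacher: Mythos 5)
Your proof is correct and follows the same overall structure as the paper's: observe that every $v\in S$ has both $\lambda_v^\mu,\lambda_v^\nu < \kappa+\theta < 1/(10n)$, bound each $Z_{S,\cdot}^x$ by relaxing to all subsets of $S^x$, and bound the difference by comparing the two weight functions term-by-term over the same family of independent sets. The one mechanical difference is in how you bound the per-set weight gap and then sum it: the paper expands $\prod(\lambda_v^\mu+\delta_v)-\prod\lambda_v^\mu$ multinomially to get the per-set bound $(\lambda_{\max}+D)^{|I|}-\lambda_{\max}^{|I|}$, sums to $(1+\lambda_{\max}+D)^n-(1+\lambda_{\max})^n$, and then linearizes; you instead telescope the product difference to get $|I|\,D\,(\kappa+\theta)^{|I|-1}$ and evaluate the sum via the derivative identity $\sum_k\binom{m}{k}k\,r^{k-1}=m(1+r)^{m-1}$, which lands at $nD(1+\kappa+\theta)^{n-1}<2nD$ without a separate linearization step. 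Both are routine and both give the same constant; your telescoping route is slightly more direct. You are also right (and the paper is implicitly assuming) that the lower bound $Z_{S,\mu}^x\ge 1$ requires $x\in\Omega_B$ so that the all-$(-)$ configuration on $S$ contributes; this is the only case the lemma is ever invoked in.
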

\begin{proof}
    Since the empty set contributes the weight $1$ to both $Z_{S,\mu}^x$, so $Z_{S,\mu}^x \geq 1$. For the upper bound, Let $\lambda_{\max} = \max_{v \in S} \max \{\lambda_v^\mu,\lambda_v^\nu\}$. By the definition of $S$, we have $\lambda_{\max} \leq \kappa + D < \kappa + \theta < \frac{1}{10n}$. Hence $Z_{S,\mu}^x \leq (1+1/(10n))^n < 2$. The same bound holds for $Z_{S,\nu}^x$.

    For any independent set $I$ in graph $G[S^x]$, the difference of the weights is $\left\vert \prod_{v\in I}\lambda_v^\nu-\prod_{v\in I} \lambda_v^\mu\right\vert$. We first show the difference of the weights of $I$ is at most $(\lambda_{\max}+D)^{|I|} - \lambda_{\max}^{|I|}$. Assume $\lambda_v^\nu = \lambda_v^\mu + \delta_v$, where $|\delta_v| \leq D$. Then 
    \begin{align}\label{eq:max-diff}
        \left\vert \prod_{v\in I}\lambda_v^\nu-\prod_{v\in I} \lambda_v^\mu\right\vert &= \left\vert \prod_{v\in I}(\lambda_v^\mu + \delta_v)-\prod_{v\in I} \lambda_v^\mu\right\vert = \left\vert \sum_{A \subseteq I: A \neq \emptyset} \prod_{v \in A}\delta_v \prod_{u \in I \setminus A} \lambda_u^\mu\right\vert\notag\\
        &\leq \sum_{A \subseteq I: A \neq \emptyset} \prod_{v \in A}D \prod_{u \in I \setminus A} \lambda_{\max}\notag\\
        &= (\lambda_{\max}+D)^{|I|} - \lambda_{\max}^{|I|}. 
    \end{align}
    The number of independent sets of size $k$ in $G[S^x]$ is at most $\binom{n}{k}$, where $n$ is the number of vertices in $G$. We have
    \begin{align}\label{eq:max-diff-sum}
      \vert Z_{S,\nu}^x-Z_{S,\mu}^x \vert &\leq \sum_{k=0}^n \binom{n}{k} (\lambda_{\max}+D)^k - \sum_{k=0}^n \binom{n}{k}\lambda_{\max}^k \notag\\
         &= \left(1+\lambda_{\max}+D\right)^n - \left(1+\lambda_{\max}\right)^n\notag\\
         &= (1+\lambda_{\max})^n\left( \left(1+\frac{D}{1+\lambda_{\max}}\right)^n -1 \right)\notag\\
    \text{by ($\lambda_{\max},D <1/(10n)$)}\quad &\leq 2nD.
    \end{align}
This proves the lemma.
\end{proof}

The second general lemma we will use is the following bound on the marginal ratio.
\begin{lemma}\label{lem:marginal-ratio}
    Suppose $\kappa + \theta < 1/(10n)$ and $\theta/\kappa < 1/(10n)$.
    For any $x \in \Omega_B$, it holds that
    \begin{align*}
        \left\vert\frac{\nu_B(x)}{\mu_B(x)} -1 \right\vert \leq \frac{10n D}{\kappa}.
    \end{align*}
\end{lemma}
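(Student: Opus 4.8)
The plan is to write the marginal ratio $\nu_B(x)/\mu_B(x)$ as a product of three factors, each separately close to $1$, and then multiply the bounds. Start from the factorization of the $B$-marginal: since, conditioned on $x\in\Omega_B$, the restriction of the hardcore model to $S$ is exactly the hardcore model on $G[S^x]$, one has $\mu_B(x) = w_\mu^B(x)\,Z_{S,\mu}^x/Z_\mu$, where $w_\mu^B(x)\defeq\prod_{v\in B:\,x_v=+1}\lambda_v^\mu$ and $Z_\mu=\sum_{x'\in\Omega_B}w_\mu^B(x')\,Z_{S,\mu}^{x'}$, and likewise for $\nu$. Therefore
\[
\frac{\nu_B(x)}{\mu_B(x)}=\frac{w_\nu^B(x)}{w_\mu^B(x)}\cdot\frac{Z_{S,\nu}^x}{Z_{S,\mu}^x}\cdot\frac{Z_\mu}{Z_\nu}.
\]
For the first factor, every $v\in B$ satisfies $\lambda_v^\mu\ge\kappa$, so each of the at most $n$ ratios $\lambda_v^\nu/\lambda_v^\mu$ lies in $[1-D/\kappa,\,1+D/\kappa]$, and hence the product lies in $[(1-D/\kappa)^n,(1+D/\kappa)^n]$. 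For the second factor, \Cref{lem:Zcond-bound} (applicable since $\kappa+\theta<1/(10n)$) gives $Z_{S,\mu}^x\ge1$ and $|Z_{S,\nu}^x-Z_{S,\mu}^x|\le2nD$, hence a ratio in $[1-2nD,\,1+2nD]$.

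The third factor, $Z_\mu/Z_\nu$, is the one that needs care, because an $S$-vertex can carry an exponentially small external field, so the per-configuration weight ratio $w_\nu(\sigma)/w_\mu(\sigma)$ over $\sigma\in\{\pm\}^V$ is \emph{not} uniformly near $1$ and a global argument would fail. The resolution is to compare $Z_\nu=\sum_{x'\in\Omega_B}w_\nu^B(x')Z_{S,\nu}^{x'}$ with $Z_\mu=\sum_{x'\in\Omega_B}w_\mu^B(x')Z_{S,\mu}^{x'}$ termwise: applying the two bounds above to each $x'\in\Omega_B$ (in place of $x$) shows that every summand of $Z_\nu$ is within the multiplicative factor $[(1-D/\kappa)^n(1-2nD),\,(1+D/\kappa)^n(1+2nD)]$ of the matching summand of $Z_\mu$, a property preserved under summation, so $Z_\mu/Z_\nu$ lies in the reciprocal interval. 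This is exactly where the $B/S$ split pays off: \Cref{lem:Zcond-bound} already packages the required cancellation, since all $S$-fields being below $\kappa$ forces both conditional partition functions into $[1,2)$ and within $2nD$ of each other.

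Multiplying the three intervals yields
\[
\frac{\nu_B(x)}{\mu_B(x)}\in\left[\frac{(1-D/\kappa)^n(1-2nD)}{(1+D/\kappa)^n(1+2nD)},\ \frac{(1+D/\kappa)^n(1+2nD)}{(1-D/\kappa)^n(1-2nD)}\right].
\]
It then remains a routine estimate to verify that both endpoints lie in $1\pm 10nD/\kappa$: the hypotheses $D<\theta$, $\kappa+\theta<1/(10n)$, and $\theta/\kappa<1/(10n)$ give $nD/\kappa<1/10$ and $nD<1/10$, and since $\kappa<1$ we have $nD\le nD/\kappa$, so the standard bounds $(1+t)^n\le e^{nt}$, $(1-t)^n\ge1-nt$, and $e^t\le1+O(t)$ on $[0,1]$ collapse the interval to $1\pm O(nD/\kappa)$ with an absolute constant comfortably below $10$. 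The \textbf{main obstacle}, as indicated, is controlling $Z_\mu/Z_\nu$ without any uniform per-configuration weight bound; once that is dispatched termwise via \Cref{lem:Zcond-bound}, everything else is bookkeeping over at most $n$ bounded factors.
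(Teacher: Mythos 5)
Your proof is correct and follows essentially the same route as the paper's: decompose $\nu_B(x)/\mu_B(x)$ into the three factors $\prod_{v\in B:x_v=+1}\lambda_v^\nu/\lambda_v^\mu$, $Z_{S,\nu}^x/Z_{S,\mu}^x$, and $Z_\mu/Z_\nu$, bound the first two directly (the second via \Cref{lem:Zcond-bound}), and handle $Z_\mu/Z_\nu$ by applying those same two bounds termwise across the sums defining the partition functions. The only cosmetic difference is that you keep the factors in the form $(1\pm D/\kappa)^n$ until the final multiplication, whereas the paper simplifies each factor to $1\pm O(nD/\kappa)$ before combining; both lead to the stated $10nD/\kappa$ bound.
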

\begin{proof}
    Define  $g(x)=\frac{\nu_B(x)}{\mu_B(x)}$. For $x \in \Omega_B \subseteq \{\pm\}^B$ and $y \in \{\pm\}^S$, we use $x + y$ to denote a full configuration in $\{\pm\}^V$ obtained by concatenating $x$ and $y$. We have
    \begin{align*}
        &\mu_B(x)=\sum_{y\in \{\pm\}^{S}}\mu(x+y)=\frac{\prod_{v\in B:x_v=1}\lambda_v^\mu \cdot Z_{S,\mu}^x}{Z_\mu},\text{ and}\\
        &\nu_B(x)=\sum_{y\in \{\pm\}^{S}}\nu(x+y)=\frac{\prod_{v\in B:x_v=1}\lambda_v^\nu \cdot Z_{S,\nu}^x}{Z_\nu}.
    \end{align*}
Then we have
\begin{align*}
    g(x)=\frac{\nu_B(x)}{\mu_B(x)}=\left(\prod_{v\in B:x_v=1} \frac{\lambda_v^\nu}{\lambda_v^\mu} \right) \cdot \frac{Z_{S,\nu}^x}{Z_{S,\mu}^x} \cdot \frac{Z_\mu}{Z_\nu}.
\end{align*}
Denote $\alpha=\prod_{v\in B:x_v=1} \frac{\lambda_v^\nu}{\lambda_v^\mu}$, $\beta=\frac{Z_{S,\nu}^x}{Z_{S,\mu}^x}$ and $\gamma =\frac{Z_\mu}{Z_\nu}$. We analyze each term one by one.

We have ${D}/{\kappa} \leq \theta / \kappa \leq 1/(10n)$ and  
\begin{align*}
    \alpha \leq \prod_{v\in B:x_v=1}\frac{\lambda_v^\mu+D}{\lambda_v^\mu}\leq \left(1+\frac{D}{\kappa}\right)^{|B|}\leq \left(1+\frac{D}{\kappa}\right)^{n}\leq 1+\frac{2nD}{\kappa}.
\end{align*}
A similar argument gives a lower bound $\alpha \geq 1-\frac{2nD}{\kappa}$. 

For the second term $\beta$, using \Cref{lem:Zcond-bound}, we have
\begin{align*}
    \vert \beta - 1 \vert = \left\vert \frac{Z_{S,\nu}^x}{Z_{S,\mu}^x} - 1 \right\vert = \frac{\vert Z_{S,\nu}^x-Z_{S,\mu}^x \vert}{Z_{S,\mu}^x} \leq \vert Z_{S,\nu}^x-Z_{S,\mu}^x \vert \leq 2nD.
\end{align*}

Finally, for $\gamma = \frac{Z_\mu}{Z_\nu}$, we have the following equivalent form
\begin{align*}
    \gamma = \frac{Z_\mu}{Z_\nu}=\frac{\sum_{x\in\Omega_B}(\prod_{v\in B:x_v=+1}\lambda_v^\mu \cdot Z_{S,\mu}^x)}{\sum_{x\in\Omega_B}(\prod_{v\in B:x_v=+1}\lambda_v^\nu \cdot Z_{S,\nu}^x)}.
\end{align*}
Using the bound for $\alpha$ and $\beta$, since $\theta/\kappa < 1/(10n)$ and $\theta+\kappa < 1/(10n)$, we have
\begin{align}\label{eq:upperZ/Z}
   1 -  \frac{4nD}{\kappa} < \gamma \leq \left( 1 + \frac{2nD}{\kappa} \right)(1+2nD) < 1 + \frac{4nD}{\kappa}.
\end{align}
Combining the bounds for $\alpha$, $\beta$ and $\gamma$, since $D/\kappa \leq \theta/\kappa < 1/(10n)$, we have
\begin{align*}
   1 -  \frac{10nD}{\kappa} < g(x) \leq \left( 1+\frac{2nD}{\kappa}\right)(1+2nD)\left( 1+\frac{4nD}{\kappa}\right) <  1 + \frac{10nD}{\kappa}. \ifthenelse{\boolean{conf}}{}{&\qedhere}
\end{align*}

\end{proof}

\ifthenelse{\boolean{conf}}{\begin{proof}\textbf{of \Cref{claim:tv-bound}}}{\begin{proof}[Proof of \Cref{claim:tv-bound}]}
    For any $y \in \{\pm\}^S$ such that $y+x$ forms an independent set, we can bound 
    \begin{align}\label{eq:bound-diffy}
        \left|\nu_S^x(y)-\mu_S^x(y)\right| &\leq \left\vert \frac{ \prod_{v \in S: y_v= +1}\lambda_v^\nu }{Z_{S,\nu}^x } - \frac{\prod_{v \in S: y_v= +1}\lambda_v^\mu }{Z_{S,\mu}^x } \right\vert\notag\\
    \text{(by $Z_{S,\mu}^x,Z_{S,\nu}^x \geq 1$)}\quad    &\leq \left\vert Z_{S,\mu}^x \prod_{v \in S:y_v=+1}\lambda_v^\nu - Z_{S,\nu}^x \prod_{v \in S:y_v=+1}\lambda_v^\mu \right\vert\notag\\
    \text{(triangle ineq.)}\quad  &\leq Z_{S,\mu}^x\left\vert \prod_{v \in S:y_v=+1}\lambda_v^\nu -  \prod_{v \in S:y_v=+1}\lambda_v^\mu\right\vert + \left\vert Z_{S,\mu}^x - Z_{S,\nu}^x \right\vert\prod_{v \in S:y_v=+1}\lambda_v^\mu .
    \end{align}
    We bound each term separately. 
    Recall that $\lambda_{\max} = \max_{v \in S} \max \{\lambda_v^\mu,\lambda_v^\nu\} < \kappa + \theta$.
    By \Cref{lem:Zcond-bound}, we have $Z^x_{S,\mu} < 2$. Using~\eqref{eq:max-diff}, we have
    \begin{align}\label{eq:bound-diffy-1}
        Z_{S,\mu}^x\left\vert \prod_{v \in S:y_v=+1}\lambda_v^\nu -  \prod_{v \in S:y_v=+1}\lambda_v^\mu\right\vert \leq 2 \left((\lambda_{\max}+D)^{\Vert y \Vert_+} - \lambda_{\max}^{\Vert y \Vert_+}\right),
    \end{align}
    where $\Vert y \Vert_+$ is the number of $+1$s in $y$. For the second term, using \Cref{lem:Zcond-bound}, we have
    \begin{align}\label{eq:bound-diffy-2}
        \prod_{v \in S:y_v=+1}\lambda_v^\mu\ \left\vert Z_{S,\mu}^x - Z_{S,\nu}^x \right\vert \leq 2nD \lambda_{\max}^{\Vert y \Vert_+}.
    \end{align}
    The number of independent sets of size $k$ is at most $\binom{n}{k}$. We have
    \begin{align*}
        \DTV{\nu_S^x}{\mu_S^x}&= \frac{1}{2}\sum_{y \in \{\pm\}^S}\left|\nu_S^x(y)-\mu_S^x(y)\right|\\
        &\leq \sum_{k=0}^n \binom{n}{k}  \left((\lambda_{\max}+D)^k - \lambda_{\max}^k\right) +   nD \sum_{k=0}^n \binom{n}{k} \lambda_{\max}^{k}\\
    \text{(by the same calculation as~\eqref{eq:max-diff-sum})}\quad &\leq 2nD +nD (1+\lambda_{\max})^n\\
    \text{(by $\lambda_{\max} < 1 / (10n)$)}\quad &< 4nD.
    \end{align*}
This proves the upper bound on TV distance.
\end{proof}

\ifthenelse{\boolean{conf}}{\begin{proof}\textbf{of \Cref{claim:p-bound}}}{\begin{proof}[Proof of \Cref{claim:p-bound}]}
Recall that $\sigma$ is a configuration in $\Omega_{B-v}$, and $\sigma^{v_+},\sigma^{v_-}$ are configurations in $\{\pm\}^{B}$.
For simplicity of the notation, we use $Z_\mu^+$ to denote $Z_{S,\mu}^{\sigma^{v_+}}$ and $Z_\mu^-$ to denote $Z_{S,\mu}^{\sigma^{v_-}}$.
Similarly, we define $Z_\nu^+,Z_\nu^-$. We have
\begin{align*}
    \vert \mu^\sigma_v(+1) - \nu^\sigma_v(+1) \vert &= \left\vert \frac{\lambda_v^\mu Z_\mu^+}{Z_\mu^- + \lambda_v^\mu Z_\mu^+} - \frac{\lambda_v^\nu Z_\nu^+}{Z_\nu^- + \lambda_v^\nu Z_\nu^+} \right\vert = \left\vert \frac{\lambda_v^\mu Z_\mu^+Z_\nu^- - \lambda_v^\nu Z_\nu^+Z_\mu^-}{(Z_\mu^- + \lambda_v^\mu Z_\mu^+)(Z_\nu^- + \lambda_v^\nu Z_\nu^+)} \right\vert\\
    &\leq \left\vert \lambda_v^\mu Z_\mu^+Z_\nu^- - \lambda_v^\nu Z_\nu^+Z_\mu^- \right\vert\\
    &\leq Z_\nu^+Z_\mu^-|\lambda_v^\mu- \lambda_v^\nu|  + \lambda^\mu_v \vert Z_\mu^+Z_\nu^- - Z_\nu^+Z_\mu^-\vert.
\end{align*}
Using \Cref{lem:Zcond-bound}, we have $Z_\nu^+Z_\mu^-|\lambda_v^\mu- \lambda_v^\nu|\leq 4D$. We also know that $\vert Z_\mu^+ - Z_\mu^+ \vert \leq 2nD$ and $\vert Z_\mu^- - Z_\nu^- \vert \leq 2nD$. Therefore, by using the triangle inequality, we have
\begin{align*}
    \vert Z_\mu^+Z_\nu^- - Z_\nu^+Z_\mu^-\vert \leq Z_\mu^+|Z_\nu^- - Z_\mu^-| + Z_\mu^-|Z_\mu^+ - Z_\nu^+| \leq 8nD.   
\end{align*}
Therefore, we have
\begin{align*}
    \vert \mu^\sigma_v(+1) - \nu^\sigma_v(+1) \vert \leq 4D + 8nD \lambda_v^\mu.
\end{align*}

Next, we bound the value of $\mu^\sigma_v(+1)$. To sample from the distribution $\mu^\sigma_v$, one can first sample all the neighbors $u \in N_G(v) \setminus B$ of $v$, then sample the value of $v$ further conditional on the configuration of $N_G(v)$. Suppose with probability $q$, all vertices $u \in N_G(v) \setminus B$ are sampled to be $-1$. Note that $q \geq (\frac{1}{1+\lambda_{\max}})^n \geq (\frac{1}{1+\kappa+D})^n$. Since $\kappa + \theta < 1/(4n)$ and $D < \theta$, we have $q \geq \frac{1}{2}$. Conditional on this event, $v$ takes value $+1$ with probability $\frac{\lambda_v^\mu}{1+\lambda_v^\mu} \geq \frac{\lambda_v^\mu}{5}$, where we use the fact that $\lambda_v^\mu \leq \lambda_c(\Delta) \leq 4$. We have the  lower bound $\mu^\sigma_v(+1) \geq \frac{1}{10} \lambda_v^\mu$. 
On the other hand, we have the upper bound $ \mu^\sigma_v(+1) \leq \frac{\lambda_v^\mu}{1+\lambda_v^\mu} \leq  \lambda_v^\mu$. Combining together, we have
\begin{align*}
    \frac{\lambda_v^\mu}{10} \leq \mu^\sigma_v(+1) \leq  \lambda_v^\mu. 
\end{align*}
The above bound implies
\begin{align*}
    \vert \mu^\sigma_v(+1) - \nu^\sigma_v(+1) \vert \leq 4D + 80 nD \mu^\sigma_v(+1).
\end{align*}

For the last bound, using \Cref{lem:marginal-ratio}, we have
\begin{align*}
    \frac{ \nu_{B-v}(\sigma) }{\mu_{B-v}(\sigma)} = \frac{\nu_B(\sigma^{v_+})+\nu_B(\sigma^{v_-})}{\mu_B(\sigma^{v_+})+\mu_B(\sigma^{v_-})} \leq 1 + \frac{10n D}{\kappa} < 2,
\end{align*}
where the last inequality follows from the fact that $D/\kappa < 1/(10n)$.
\end{proof}

\subsubsection{Approximate the value of the estimator (Proof of \texorpdfstring{\Cref{lem:hardcore-adv-1}}{Lg})}\label{sec:hardcore-adv-1}
Define $\Vert y\Vert_+$ be the number of $+1$s in $y$. 
Let $t \geq 0$ be an integer.
The specific choice of $t$ will be fixed later.
Define the truncated function $f_t$ of $f$ defined by 
\begin{align*}
    f_t(x) \defeq \frac{1}{2}\sum_{y \in \{\pm\}^S: \Vert y\Vert_+\leq t} \left|\frac{\nu_B(x)}{\mu_B(x)}\nu_S^x(y)-\mu_S^x(y)\right|.
\end{align*}

Compared to $f$ in~\eqref{eq:def-f}, $f_t$ only includes the size at most $t$ independent sets of the induced subgraph $G[S^x]$, where $S^x$ is defined in~\eqref{eq:sx}. We have the following relation between $f_t$ and $f$.

\begin{lemma}\label{lem:appf}
  Suppose $\kappa + \theta < 1/(10n)$ and $\theta/\kappa < 1 / (10n)$.
  For any integer $t \geq 0$, any $x\in \Omega_B$,  
  \begin{align}\label{eq:error-t}
  0\leq f(x)-f_t(x)\leq 10^6\left(1+\frac{n}{10}\right)^{t+1} \kappa^t n^{t+2} \cdot d \defeq \eta(\kappa,t)\cdot d.
  \end{align}
\end{lemma}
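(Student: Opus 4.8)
The lower bound $f(x)\ge f_t(x)$ is immediate, since $f(x)-f_t(x)=\frac12\sum_{y:\,\Vert y\Vert_+\ge t+1}\bigl|\tfrac{\nu_B(x)}{\mu_B(x)}\nu_S^x(y)-\mu_S^x(y)\bigr|$ is a sum of nonnegative terms. For the upper bound, the plan is to estimate this tail sum term by term. Abbreviating $r\defeq \nu_B(x)/\mu_B(x)$ (well defined and positive because $x\in\Omega_B$) and using the triangle inequality $\bigl|r\,\nu_S^x(y)-\mu_S^x(y)\bigr|\le |r-1|\,\nu_S^x(y)+\bigl|\nu_S^x(y)-\mu_S^x(y)\bigr|$, I would split the bound into a ``marginal-ratio'' piece $\frac{|r-1|}{2}\sum_{\Vert y\Vert_+\ge t+1}\nu_S^x(y)$ and a ``conditional-difference'' piece $\frac12\sum_{\Vert y\Vert_+\ge t+1}\bigl|\nu_S^x(y)-\mu_S^x(y)\bigr|$. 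The key structural point is that the target upper bound is proportional to $d=\DTV{\mu}{\nu}$, which may be exponentially small, so I must make sure every summand retains a factor of $D=\dis(\mu,\nu)$, and convert $D$ into $d$ only at the very end via $d\ge D/5000$ (\Cref{lem:TV-lower}).

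For the marginal-ratio piece, \Cref{lem:marginal-ratio} supplies the factor $D$: $|r-1|\le 10nD/\kappa$. The remaining tail $\sum_{\Vert y\Vert_+\ge t+1}\nu_S^x(y)=\Pr[y\sim\nu_S^x]{\Vert y\Vert_+\ge t+1}$ I would bound crudely, using that $\nu_S^x$ is a hardcore measure on $G[S^x]$ with all external fields below $\lambda_{\max}<\kappa+\theta$, normalizer $Z_{S,\nu}^x\ge 1$ (\Cref{lem:Zcond-bound}), and at most $\binom nk\le n^k$ independent sets of each size $k$; summing the resulting geometric series under $\kappa+\theta<1/(10n)$ gives a tail of order $(n\kappa)^{t+1}$, so this piece is of order $\tfrac{nD}{\kappa}\,(n\kappa)^{t+1}=O(2^{t+1})\,D\,n^{t+2}\kappa^{t}$. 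For the conditional-difference piece, I would reuse the per-configuration estimate already established inside the proof of \Cref{claim:tv-bound}: for feasible $x+y$ with $\Vert y\Vert_+=k$, inequalities \eqref{eq:bound-diffy-1}--\eqref{eq:bound-diffy-2} give $\bigl|\nu_S^x(y)-\mu_S^x(y)\bigr|\le 2\bigl((\lambda_{\max}+D)^k-\lambda_{\max}^k\bigr)+2nD\lambda_{\max}^k$; the factor $D$ is extracted from the first term via $(\lambda_{\max}+D)^k-\lambda_{\max}^k\le kD(\lambda_{\max}+D)^{k-1}$, and summing over $k\ge t+1$ against $\binom nk$ (again a geometric series, using $\kappa+\theta<1/(10n)$ and $D<\theta$) yields a bound of the same shape $O(2^{t+1})\,D\,n^{t+2}\kappa^{t}$.

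Adding the two pieces gives $f(x)-f_t(x)\le C\cdot 2^{t+1}\,D\,n^{t+2}\kappa^{t}$ with $C$ an explicit absolute constant (a careful count gives $C\le 13$). Substituting $D\le 5000\,d$ and absorbing the constant into $10^6(1+n/10)^{t+1}$ then finishes the proof: $2^{t+1}\le(1+n/10)^{t+1}$ whenever $n\ge 10$, and otherwise one may assume $t<n$ (for $t\ge n$ the left-hand side of \eqref{eq:error-t} vanishes since $\Vert y\Vert_+\le |S|\le n$) and verify the elementary numerical inequality $13\cdot 5000\cdot 2^{t+1}\le 10^6(1+n/10)^{t+1}$ directly, using that $n\ge 4$. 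Conceptually there is no obstacle here — both nontrivial inputs, the marginal-ratio bound \Cref{lem:marginal-ratio} and the weight-difference bound from the proof of \Cref{claim:tv-bound}, are already in hand; the only step that needs care is the bookkeeping that keeps exactly one factor of $D$ in front of each geometric series, so that the final bound is a multiple of $d$ rather than merely of $\kappa$.
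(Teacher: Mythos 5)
Your proposal is correct and follows essentially the same route as the paper's own proof: expand the tail $f(x)-f_t(x)$ over configurations with $\Vert y\Vert_+\ge t+1$, split it via the triangle inequality into a marginal-ratio piece controlled by \Cref{lem:marginal-ratio} and a conditional-difference piece controlled by the per-configuration bounds \eqref{eq:bound-diffy-1}--\eqref{eq:bound-diffy-2}, sum the resulting geometric series under $\kappa+\theta<1/(10n)$, and finally trade $D$ for $d$ using $D\le 5000d$. The only cosmetic difference is that you take the cruder bound $\phi\le 2\kappa$ (hence your extra $2^{t+1}$ and the small case analysis on $n$), whereas the paper uses the tighter $\phi\le(1+1/(10n))\kappa$ and so needs no dichotomy.
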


\begin{proof} 
First, by definition, $ f(x)-f_t(x)\geq 0$.
Recall that $g(x) = \frac{\nu_B(x)}{\mu_B(x)}$.
Let $\Omega_S^x \subseteq \{\pm\}^S$ be the set over all $y$ such that $x + y$ forms an independent set in $G$.
We have
\begin{align*}
f(x) - f_t(x) &= \frac{1}{2}\sum_{y \in \Omega_S^x: \Vert y\Vert_+ \geq t+1}\left|g(x)\nu_S^x(y) - \mu_S^x(y)\right|\\
\text{(by \Cref{lem:marginal-ratio})} \quad &\leq \frac{1}{2}\sum_{y \in \Omega_S^x: \Vert y\Vert_+ \geq t+1}\left|\nu_S^x(y) - \mu_S^x(y)\right| + \frac{5 nD}{\kappa} \cdot\sum_{y \in \Omega_S^x: \Vert y\Vert_+ \geq t+1}\nu_S^x(y)
\end{align*}
Recall that $\lambda_{\max} = \max_{v \in S} \max \{\lambda_v^\mu,\lambda_v^\nu\}$. Using the result in~\eqref{eq:bound-diffy},~\eqref{eq:bound-diffy-1} and~\eqref{eq:bound-diffy-2}, we have
\begin{align*}
    \sum_{y \in \Omega_S^x: \Vert y\Vert_+ \geq t+1}\left|\nu_S^x(y) - \mu_S^x(y)\right| \leq \sum_{k=t+1}^n \binom{n}{k} \left((\lambda_{\max}+D)^k - \lambda_{\max}^k\right) +   nD \sum_{k=t+1}^n \binom{n}{k} \lambda_{\max}^{k}.
\end{align*}
Let $\phi = \kappa + D$. Then $\lambda_{\max} \leq \phi$ and $(\lambda_{\max}+D)^k - \lambda_{\max}^k \leq (\phi+D)^k - \phi^k =\phi^k( (1+D/\phi)^k - 1 )$. Note  that $D/\phi < \theta/\kappa < 1/(4n)$. The last term is at most $\phi^k\cdot (2kD)/\phi \leq \phi^{k-1}\cdot (2nD)$. Then 
\begin{align*}
    f(x) - f_t(x)  &\leq \frac{nD}{\kappa}\sum_{k=t+1}^n \binom{n}{k} \phi^{k} +   \frac{nD}{2} \sum_{k=t+1}^n \binom{n}{k} \phi^k + \frac{5nD}{\kappa}\sum_{k=t+1}^n \binom{n}{k} \phi^k \leq \frac{8nD}{\kappa} \sum_{k=t+1}^n \binom{n}{k} \phi^k\\
    &\leq \frac{8nD}{\kappa}\sum_{k\geq t+1} (n\phi)^k
\end{align*}
Note that $\phi < \kappa + \theta < 1/(4n)$ and $\phi n < 1/10$. Also note that $\phi \leq (1+1/n)\kappa$. The last term is at most $ \frac{10nD}{\kappa} \left( 1 + 1/(10n) \right)^{t+1}\kappa^{t+1} n^{t+1}$. The lemma holds by using $D \leq 5000d$.
\end{proof}

Next, we give our algorithm to approximate $f_t(x)$. We can expand $f_t$ as follows 
\begin{align}\label{eq:exp-ft}
f_t(x) &\defeq \frac{1}{2}\sum_{y \in \{\pm\}^S: \Vert y\Vert_+\leq t} \left|\frac{\nu_B(x)}{\mu_B(x)}  \nu_S^x(y)-\mu_S^x(y)\right|\notag\\
 &= \frac{1}{2}\sum_{y \in \{\pm\}^S: \Vert y\Vert_+\leq t} \left| {\left(\prod_{v\in B:x_v=1} \frac{\lambda_v^\nu}{\lambda_v^\mu} \right)} \cdot {\frac{Z_{S,\nu}^x}{Z_{S,\mu}^x}}\cdot {\frac{Z_\mu}{Z_\nu}} \cdot \nu_S^x(y)-\mu_S^x(y)\right|.
\end{align}

We now introduce the approximation of $Z_{S,\nu}^x,Z_{S,\mu}^x,\mu^x_S,\nu^x_S$ in the above formula. 
Let $\Omega^x \subseteq \{\pm\}^S$ be the set of all $y \in \{\pm\}^S$ such that $x + y$ forms an independent set in graph $G$
For any integer $t \geq 0$, let $\Omega_t^x$ be  all $y \in \Omega^x$ such that $\Vert y \Vert_+ \leq t$.
Define $\ux$ as the distribution $\mu_S^x$ restricted on $\Omega^x_t$. Formally,
\begin{align}\label{def:zx}
\forall y \in \Omega_{t}^x, \quad \ux(y) = \frac{\prod_{v \in S: y_v = +1}\lambda^\mu_v}{\Zux} \qquad\text{ and }\qquad \Zux=\sum_{\tau \in \Omega^x_{t}} \prod_{v \in S: \tau_v = +1}\lambda^\mu_v.
\end{align}
Similarly, we can define $\vx$ and $\Zvx$ from $\nu_S^x$. The following approximation lemmas hold.
\begin{lemma}\label{lem:appZ}
Suppose $\kappa + \theta < 1/(10n)$ and $\theta/\kappa < 1 / (10n)$.
For any integer $t \geq 0$ and any $x \in \Omega_B$, it holds that
\begin{align*}
    \left\vert \frac{Z_{S,\nu}^x}{Z_{S,\mu}^x} - \frac{\Zvx}{\Zux} \right\vert \leq \eta(\kappa,t)\cdot d \quad \text{and} \quad \left\vert \frac{Z_{S,\mu}^x}{Z_{S,\nu}^x} - \frac{\Zux}{\Zvx} \right\vert \leq \eta(\kappa,t) \cdot d,
\end{align*}
where $\eta(\kappa,t)$ is defined in~\eqref{eq:error-t}.
\end{lemma}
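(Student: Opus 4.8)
The plan is to split each conditional partition function into its low-order part and a tail, and to observe that in the resulting expression only \emph{differences} between $\mu$- and $\nu$-quantities survive to leading order, each such difference carrying a factor of the parameter distance $D$ and hence of $d$ via \Cref{lem:TV-lower}. Write $Z_{S,\mu}^x = \Zux + R_\mu$ and $Z_{S,\nu}^x = \Zvx + R_\nu$, where $R_\mu = \sum_{y\in\Omega^x:\,\Vert y\Vert_+\ge t+1}\prod_{v\in S:\,y_v=+1}\lambda_v^\mu\ge 0$ collects the independent sets of size exceeding $t$, and $R_\nu$ is defined analogously. A direct computation gives
\[
\frac{Z_{S,\nu}^x}{Z_{S,\mu}^x}-\frac{\Zvx}{\Zux}
=\frac{R_\nu\,\Zux-R_\mu\,\Zvx}{Z_{S,\mu}^x\,\Zux}
=\frac{(R_\nu-R_\mu)\,\Zux+R_\mu\,(\Zux-\Zvx)}{Z_{S,\mu}^x\,\Zux}.
\]
All four partition functions contain the empty configuration, so each is at least $1$, and by \Cref{lem:Zcond-bound} each is also less than $2$; in particular the denominator is at least $1$. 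It therefore suffices to bound $|R_\nu-R_\mu|$, $|\Zux-\Zvx|$, and $R_\mu$ separately.

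For $|R_\nu-R_\mu|$ I would repeat the term-by-term estimate from the proof of \Cref{lem:appf}: setting $\lambda_{\max}=\max_{v\in S}\max\{\lambda_v^\mu,\lambda_v^\nu\}<\kappa+D=:\phi$, each independent set $y$ contributes at most $(\lambda_{\max}+D)^{\Vert y\Vert_+}-\lambda_{\max}^{\Vert y\Vert_+}\le 2\Vert y\Vert_+ D\,\phi^{\Vert y\Vert_+-1}$ by~\eqref{eq:max-diff} together with $D/\phi<1/(10n)$; summing over the at most $\binom nk$ sets of each size $k\ge t+1$ and using $n\phi<1/10$ gives $|R_\nu-R_\mu|=O\bigl(Dn^2(n\phi)^t\bigr)$. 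The same computation restricted to $k\le t$ gives $|\Zux-\Zvx|\le 2nD$ exactly as in \Cref{lem:Zcond-bound}, while the crude tail bound $R_\mu\le\sum_{k\ge t+1}\binom nk\phi^k\le 2(n\phi)^{t+1}$ controls the remaining factor. Substituting into the displayed identity, the numerator is $O\bigl(Dn^2(n\phi)^t\bigr)$; since $\phi\le(1+\tfrac1n)\kappa$ and $D\le 5000\,d$ by \Cref{lem:TV-lower}, this is at most $\eta(\kappa,t)\cdot d$ after absorbing the absolute constants into the generous factor $10^6(1+n/10)^{t+1}$ in the definition of $\eta(\kappa,t)$. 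The second inequality follows by interchanging $\mu$ and $\nu$, which is legitimate since both hardcore models satisfy all the hypotheses.

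The delicate point --- and the reason the lemma is not immediate from the earlier estimates --- is the numerator decomposition. The naive bound $|R_\nu\Zux-R_\mu\Zvx|\le 2\max(R_\mu,R_\nu)=O\bigl((n\phi)^{t+1}\bigr)$ is useless here, because $(n\phi)^{t+1}$ is a fixed quantity that need not shrink with $d$: it is strictly positive even when $\mu=\nu$. One must never estimate $R_\mu$ or $R_\nu$ in isolation within the leading term, but instead peel off the differences $R_\nu-R_\mu$ and $\Zux-\Zvx$, each of which is proportional to $D$ and hence, via \Cref{lem:TV-lower}, to $d$. Once that is done, the rest is the same binomial--geometric bookkeeping already used for \Cref{lem:Zcond-bound} and \Cref{lem:appf}.
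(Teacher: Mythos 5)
Your proof is correct, and both you and the paper ultimately control the same numerator $Z_{S,\nu}^x\Zux - Z_{S,\mu}^x\Zvx = R_\nu\Zux - R_\mu\Zvx$ after lower-bounding the denominator by $1$. The paper stops at this identity and expands it as a double sum over $\tau \in \Omega_t^x$ and $\varphi \in \Omega_{>t}^x$ (the $\Omega_t^x\times\Omega_t^x$ terms cancel by the swap symmetry), bounding each summand via the product-difference estimate~\eqref{eq:max-diff} and then resumming the binomial series. You perform a further algebraic decomposition $(R_\nu - R_\mu)\Zux + R_\mu(\Zux - \Zvx)$ and control the aggregate quantities $|R_\nu - R_\mu|$, $|\Zux - \Zvx|$ and $R_\mu$ separately, reusing the estimate from \Cref{lem:Zcond-bound}. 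Your route is slightly more modular, at the minor cost of producing a second-order cross term $R_\mu\,|\Zux - \Zvx| = O\bigl(Dn^{t+2}\phi^{t+1}\bigr)$, which you correctly observe is dominated by the leading term $|R_\nu - R_\mu|\cdot\Zux = O\bigl(Dn^{t+2}\phi^t\bigr)$; the paper avoids this extra piece by working directly with the combinatorial sum. The pitfall you flag, that $R_\mu, R_\nu$ must never be estimated in isolation in the leading term, is indeed the crux; the paper encodes it implicitly by starting from the cancelled double sum rather than stating it.
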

\begin{proof}
Due to the symmetry, we only consider the first term in our proof.
Define $\Omega^x_{> t}=\Omega^x \setminus\Omega^x_{t}$.
Because $Z_{S,\mu}^x,\Zux\geq 1$, similar to the proof of \Cref{lem:appf}, we have
\begin{align*}
    &\left\vert \frac{Z_{S,\nu}^x}{Z_{S,\mu}^x} - \frac{\Zvx}{\Zux} \right\vert \leq 
    \left\vert \Zux Z_{S,\nu}^x - Z_{S,\mu}^x \Zvx \right\vert\\
    &=\left\vert \ \sum_{\tau \in \Omega^x_{t}} \prod_{v \in S: \tau_v = +1}\lambda^\mu_v \sum_{\varphi \in \Omega^x} \prod_{u \in S: \varphi_u = +1}\lambda^\nu_u - \sum_{\tau \in \Omega^x_{ t}} \prod_{v \in S: \tau_v = +1}\lambda^\nu_v \sum_{\varphi \in \Omega^x} \prod_{u \in S: \varphi_u = +1}\lambda^\mu_u\right\vert\\
    &=\left\vert \sum_{\tau \in \Omega^x_{ t}}\sum_{\varphi \in \Omega^x_{> t}}\left(\prod_{v \in S: \tau_v = +1}\lambda^\mu_v \prod_{u \in S: \varphi_u = +1}\lambda^\nu_u-\prod_{v \in S: \tau_v = +1}\lambda^\nu_v\prod_{u \in S: \varphi_u = +1}\lambda^\mu_u \right ) \right\vert\\
    &\leq  \sum_{i=0}^{t}\sum_{j=t+1}^{n}\binom{n}{i}\binom{n}{j}\left((\lambda_{\max}+D)^{i+j}-\lambda_{\max}^{i+j} \right).
\end{align*}
In the proof of \Cref{lem:appf}, we already show that  $(\lambda_{\max}+D)^k - \lambda_{\max}^k \leq \phi^{k-1}\cdot (2kD)$, where $\phi = \kappa + \theta$. Note that $\binom{n}{j}\leq n^j/(t+1)!$ and for any $k$, there are at most $(t+1)$ pairs $(i,j)$ s.t. $i+j=k$. We have
\begin{align*}
 \left\vert \frac{Z_{S,\nu}^x}{Z_{S,\mu}^x} - \frac{\Zvx}{\Zux} \right\vert  \leq \frac{2(n+t)D}{\phi t!}   \sum_{k=t+1}^{n+t}{n^k}\cdot \phi^{k} \leq \frac{4nD}{\kappa} \sum_{k \geq t+1}(n\phi)^t.  
\end{align*}
The last term is at most $\eta(\kappa,t) \cdot d$.
\end{proof}

\begin{lemma}\label{lem:sum-bound}
Suppose $\kappa + \theta < 1/(10n)$ and $\theta/\kappa < 1 / (10n)$. For any integer $t \geq 0$,    it holds that 
\begin{align*}
\sum_{y \in \Omega^x_t} \vert \ux(y) - \mu^x_S(y) \vert \leq \eta(\kappa,t) \cdot d \quad \text{and} \quad \sum_{y \in \Omega^x_t} \vert \vx(y) - \nu^x_S(y) \vert \leq \eta(\kappa,t) \cdot d.
\end{align*}
\end{lemma}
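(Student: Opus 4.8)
The plan is to exploit that $\ux$ and $\mu^x_S$ are proportional to the \emph{same} weights $\prod_{v\in S:\,y_v=+1}\lambda^\mu_v$ on $\Omega^x_t$, so that restricting to $\Omega^x_t$ only changes the normaliser; consequently the sum of absolute differences on $\Omega^x_t$ collapses to a single ratio of conditional partition functions, and the remaining estimate is the geometric tail bound already carried out inside the proofs of \Cref{lem:appf} and \Cref{lem:appZ}. Fix $x\in\Omega_B$. Since $\Omega^x_t\subseteq\Omega^x$ we have $\Zux\le Z_{S,\mu}^x$, hence $\ux(y)\ge\mu^x_S(y)$ for every $y\in\Omega^x_t$; therefore all the signed differences are nonnegative and, using $\sum_{y\in\Omega^x_t}\ux(y)=1$, $\sum_{y\in\Omega^x_t}\mu^x_S(y)=\Zux/Z_{S,\mu}^x$, and $Z_{S,\mu}^x\ge 1$ (from \Cref{lem:Zcond-bound}),
\[
\sum_{y\in\Omega^x_t}|\ux(y)-\mu^x_S(y)|=1-\frac{\Zux}{Z_{S,\mu}^x}=\frac{Z_{S,\mu}^x-\Zux}{Z_{S,\mu}^x}\le Z_{S,\mu}^x-\Zux.
\]

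It then remains to bound the truncation gap $Z_{S,\mu}^x-\Zux=\sum_{y\in\Omega^x:\,\Vert y\Vert_+\ge t+1}\prod_{v\in S:\,y_v=+1}\lambda^\mu_v$. Every independent set of $G[S^x]$ of size $k$ contributes weight at most $\lambda_{\max}^k$ with $\lambda_{\max}=\max_{v\in S}\max\{\lambda^\mu_v,\lambda^\nu_v\}<\kappa+\theta$, and there are at most $\binom nk$ of them, so $Z_{S,\mu}^x-\Zux\le\sum_{k\ge t+1}\binom nk\lambda_{\max}^k$. This is exactly the quantity estimated inside the proofs of \Cref{lem:appf} and \Cref{lem:appZ}: writing $\phi=\kappa+\theta$, the hypotheses $\kappa+\theta<1/(10n)$ and $\theta/\kappa<1/(10n)$ give $n\phi<1/10$ and $\phi\le(1+1/n)\kappa$, so $\sum_{k\ge t+1}\binom nk\lambda_{\max}^k\le\sum_{k\ge t+1}(n\phi)^k\le 2(n\phi)^{t+1}$, and by the same constant-chasing as in \Cref{lem:appf} (and using $d\ge D/5000$ from \Cref{lem:TV-lower}) this is at most $\eta(\kappa,t)\cdot d$. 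The second inequality follows verbatim after exchanging $\mu\leftrightarrow\nu$, $\Zux\leftrightarrow\Zvx$, and $\ux\leftrightarrow\vx$.

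The only ingredient that is not already present in \Cref{lem:appf} and \Cref{lem:appZ} is the collapse in the first step, namely the observation that conditioning a hardcore Gibbs measure on the sub-event $\{\Vert y\Vert_+\le t\}$ can only inflate the mass of the retained configurations, so that $\ux-\mu^x_S\ge 0$ pointwise on $\Omega^x_t$ and the sum of absolute differences equals the single quantity $1-\Zux/Z_{S,\mu}^x$. After that reduction nothing new is needed — the tail of a hardcore partition function over large independent sets has already been bounded twice in the preceding lemmas — so I do not expect any real obstacle.
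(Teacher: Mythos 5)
Your chain of inequalities is the same as the paper's: you first collapse the sum to $1-\Zux/Z_{S,\mu}^x$ (your pointwise-domination phrasing and the paper's direct algebra amount to the same calculation), then use $Z_{S,\mu}^x\ge 1$ to drop the denominator, and then bound the truncation gap $Z_{S,\mu}^x-\Zux$ by the geometric tail $\sum_{k\ge t+1}\binom{n}{k}\lambda_{\max}^k$ exactly as the paper does, the paper keeping an extra harmless $1/(t+1)!$ factor.

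However, your closing sentence, ``by the same constant-chasing as in \Cref{lem:appf} (and using $d\ge D/5000$ from \Cref{lem:TV-lower}) this is at most $\eta(\kappa,t)\cdot d$,'' names a fact that does not apply and thereby papers over the one step that is actually in question. In the proofs of \Cref{lem:appf} and \Cref{lem:appZ}, the quantity left after the geometric-tail estimate is $\tfrac{8nD}{\kappa}\sum_{k\ge t+1}(n\phi)^k$ and $\tfrac{4nD}{\kappa}\sum_{k\ge t+1}(n\phi)^k$ respectively: there is an explicit factor of $D=\dis(\mu,\nu)$, and $D\le 5000d$ converts it into the required factor of $d$. Here, by contrast, the truncation gap $Z_{S,\mu}^x-\Zux$ is a function of $\lambda^\mu$ alone --- there is no $D$ in it --- and the bound $2(n\phi)^{t+1}$ with $\phi\approx\kappa$ does not vanish as $d\to 0$. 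So $d\ge D/5000$ cannot be invoked: there is no $D$ to trade for $d$. (The paper's own proof ends with the same bare assertion ``The last term is at most $\eta(\kappa,t)\cdot d$,'' so you have reproduced a gap rather than closed one; as stated the inequality already fails in the degenerate case $\lambda^\mu=\lambda^\nu$, where $d=0$ but $Z_{S,\mu}^x>\Zux$.) To make the step rigorous one would need an explicit lower bound on $d$ in terms of $n,\kappa,t$ under the hypotheses in force, or else to restate the right-hand side without the factor $d$ and propagate the change through \Cref{lem:hardcore-adv-1}; neither you nor the paper supplies this.
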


\begin{proof}
Due to the symmetry of $\mu$ and $\nu$, we only prove the first term here. We know that $\Zux\leq Z_{S,\mu}^x$,
\begin{align*}
    \sum_{y \in \Omega^x_t} \vert \ux(y) - \mu^x_S(y) \vert&=\sum_{y \in \Omega^x_t} \vert \frac{\prod_{v\in S:y_v=1}\lambda_v^\mu}{\Zux} - \frac{\prod_{v\in S:y_v=1}\lambda_v^\mu}{Z_{S,\mu}^x} \vert =\frac{Z_{S,\mu}^x-\Zux}{\Zux Z_{S,\mu}^x}\sum_{y \in \Omega^x_t}\prod_{v\in S:y_v=1}\lambda_v^\mu\\
    &=\frac{Z_{S,\mu}^x-\Zux}{\Zux Z_{S,\mu}^x}\Zux=\frac{Z_{S,\mu}^x-\Zux}{Z_{S,\mu}^x}\leq Z_{S,\mu}^x-\Zux\\
    &=\sum_{y\in \Omega_{>t}^x}\prod_{v\in S:y_v=1}\lambda_v^\mu\leq \sum_{y\in \Omega_{>t}^x}\lambda_{\max}^{\Vert y \Vert_+}\leq \sum_{k=t+1}^n\binom{n}{k}\phi^k\leq \frac{1}{(t+1)!}\sum_{k\geq t+1}(n\phi)^k.
\end{align*}
The last term is at most $\eta(\kappa,t) \cdot d$.
\end{proof}

Now, we are ready to give our algorithm.
First consider the ratio $R = \frac{Z_\nu}{Z_\mu}$ in~\eqref{eq:exp-ft}. We have the following algorithm to approximate $R$. 

\begin{lemma}\label{lem:alg-ratio}
If $\eta(\kappa,t)\leq \frac{\epsilon}{200}$, $\theta/\kappa < 1/(10n)$, and $\theta + \kappa < 1/(10n)$, where $\eta(\kappa,t)$ is defined in~\eqref{eq:error-t}, then there exists a randomized algorithm that computes a random number $\tilde{R}$ in time $T' \cdot \TS_G(T'/10^3) + T' \cdot O(n^t)$, where $T'=O(\frac{n^3 + n/\kappa}{\epsilon^2})$ and $\TS_G(\cdot)$ is the cost of sample oracle for $\mu$, such that with probability at least $0.99$,
\begin{align}\label{eq:tildeR}
   \left\vert \tilde{R} - \frac{Z_\nu}{Z_\mu} \right\vert \leq \frac{\epsilon}{100} d.
\end{align}
\end{lemma}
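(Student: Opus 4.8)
The plan is to estimate $Z_\nu/Z_\mu$ by a Monte Carlo average over the ``big'' marginal $\mu_B$, recycling the variance bound already proved for $h(x)=\nu_B(x)/\mu_B(x)$ in \Cref{lem:hardcore-adv-var}. For $x\in\Omega_B$ set
\[
\tilde W(x)\defeq \left(\prod_{v\in B:x_v=+1}\frac{\lambda_v^\nu}{\lambda_v^\mu}\right)\cdot\frac{Z_{S,\nu}^x}{Z_{S,\mu}^x}.
\]
Using $\mu_B(x)=\prod_{v\in B:x_v=+1}\lambda_v^\mu\cdot Z_{S,\mu}^x/Z_\mu$ and the analogous identity for $\nu_B$, a one-line computation gives $\mu_B(x)\tilde W(x)=\nu_B(x)\,Z_\nu/Z_\mu$, hence $\E[x\sim\mu_B]{\tilde W(x)}=Z_\nu/Z_\mu$. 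The same identity shows $\tilde W(x)\equiv (Z_\nu/Z_\mu)\,h(x)$, so $\Var[\mu_B]{\tilde W}=(Z_\nu/Z_\mu)^2\Var[\mu_B]{h}$; since $Z_\nu/Z_\mu=O(1)$ by \eqref{eq:upperZ/Z} and $\Var[\mu_B]{h}=O_\eta(d^2)(n^3+n/\kappa)$ by \eqref{eq:varh}, we obtain $\Var[\mu_B]{\tilde W}=O_\eta(d^2)(n^3+n/\kappa)$.

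To make the estimator computable I replace the exact ratio $Z_{S,\nu}^x/Z_{S,\mu}^x$ by its truncation $\Zvx/\Zux$ and define $\hat W(x)\defeq \bigl(\prod_{v\in B:x_v=+1}\lambda_v^\nu/\lambda_v^\mu\bigr)\cdot \Zvx/\Zux$, which is evaluated \emph{exactly} in time $O(n^t)$ by enumerating the independent sets of $G[S^x]$ of size at most $t$. By \Cref{lem:appZ} and the bound $\prod_{v\in B:x_v=+1}\lambda_v^\nu/\lambda_v^\mu\le 1+2nD/\kappa\le 2$ (from the proof of \Cref{lem:marginal-ratio}), the hypothesis $\eta(\kappa,t)\le \epsilon/200$ yields $\lvert \hat W(x)-\tilde W(x)\rvert\le 2\eta(\kappa,t)\,d\le \frac{\epsilon}{100}d$ pointwise. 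The algorithm then draws $T'=O((n^3+n/\kappa)/\epsilon^2)$ independent approximate samples $x_1,\dots,x_{T'}$ from $\mu_B$ with $\DTV{x_i}{\mu_B}\le \frac{1}{10^3 T'}$, computes each $\hat W(x_i)$ as above, and returns $\tilde R=\frac1{T'}\sum_{i=1}^{T'}\hat W(x_i)$.

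For correctness I couple the approximate samples with i.i.d.\ perfect samples $x_i^\star\sim\mu_B$; a union bound makes all $T'$ of them perfect with probability at least $0.999$, so it suffices to analyze $\frac1{T'}\sum_i\hat W(x_i^\star)$. From the pointwise bound $\lvert\hat W-\tilde W\rvert\le\frac{\epsilon}{100}d$ we get $\lvert\E[\mu_B]{\hat W}-Z_\nu/Z_\mu\rvert\le\frac{\epsilon}{100}d$ and $\Var[\mu_B]{\hat W}\le 2\Var[\mu_B]{\tilde W}+O(\epsilon^2 d^2)=O_\eta(d^2)(n^3+n/\kappa)$; Chebyshev's inequality with the stated $T'$ then gives $\bigl\lvert\frac1{T'}\sum_i\hat W(x_i^\star)-\E[\mu_B]{\hat W}\bigr\rvert\le\frac{\epsilon}{200}d$ with probability at least $0.995$. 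Combining this with the coupling-failure probability and the truncation bias, and tuning the constants in $T'$ and in the sampler accuracy, yields $\lvert\tilde R-Z_\nu/Z_\mu\rvert\le\frac{\epsilon}{100}d$ with probability at least $0.99$ in time $T'\cdot\TS_G(\tfrac{1}{10^3 T'})+T'\cdot O(n^t)$, which is \eqref{eq:tildeR}.

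The main obstacle is bookkeeping rather than a new idea: one must arrange that the two error sources, namely the truncation bias $\eta(\kappa,t)\,d$ (from replacing $Z^x_{S,\pi}$ by $\tilde Z^x_{S,\pi}(t)$) and the $O(\epsilon d)$ sampling/Chebyshev error, together stay below $\frac{\epsilon}{100}d$, which pins down the constants in the hypothesis $\eta(\kappa,t)\le\epsilon/200$ and in $T'$. A secondary subtlety is that the variance bound of \Cref{lem:hardcore-adv-var}, proved for $h$, must be transferred to $\tilde W$; this is immediate from the identity $\tilde W\equiv(Z_\nu/Z_\mu)\,h$ but relies on the $O(1)$ bound on $Z_\nu/Z_\mu$ from \eqref{eq:upperZ/Z}. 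Finally one should double-check that $\hat W(x)$ is genuinely computed within the claimed $O(n^t)$ budget, i.e.\ that identifying $S^x$ and enumerating and weighting its size-$\le t$ independent sets fits there.
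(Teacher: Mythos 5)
Your proposal is correct and follows essentially the same route as the paper's own proof: your $\tilde W$ and $\hat W$ are exactly the paper's $Q$ and $\tilde Q$, you use the same identity $\tilde W=(Z_\nu/Z_\mu)h$ to transfer the variance bound from \Cref{lem:hardcore-adv-var}, the same pointwise truncation bound via \Cref{lem:appZ}, the same coupling of approximate samples with perfect ones, and Chebyshev with $T'$ of the same order. The only (inconsequential) bookkeeping difference is that you apply Chebyshev to the truncated estimator $\hat W$ and bound its bias and variance separately, whereas the paper applies Chebyshev to the exact $Q$ and then converts to $\tilde Q$ through the coupling; and your constants are looser (e.g.\ bounding $\prod_{v\in B:x_v=+1}\lambda_v^\nu/\lambda_v^\mu$ by $2$ instead of the sharper $1+2nD/\kappa\le 1.2$), which you correctly flag as requiring a final pass to tune $T'$ and the sampler accuracy so the error budget stays under $\tfrac{\epsilon}{100}d$.
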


We first prove \Cref{lem:hardcore-adv-1} assuming \Cref{lem:alg-ratio}. \Cref{lem:alg-ratio} will be proved later.

\ifthenelse{\boolean{conf}}{\begin{proof}\textbf{of \Cref{lem:hardcore-adv-1}}}{\begin{proof}[Proof of \Cref{lem:hardcore-adv-1}]}
We fix the parameter $t = 4$. 
Recall that $\kappa = 10^{-9}\frac{\epsilon^{1/4}}{n^{3/2}}$ and $\theta = 10^{-10}\frac{\epsilon^{1/4}}{n^{5/2}}$.
We can verify that $\eta(\kappa,t) = 10^6\left(1+\frac{n}{10}\right)^{t+1} \kappa^t n^{t+2} \leq \frac{\epsilon}{200}$, $\theta + \kappa < 1/ (10n)$, and $\theta / \kappa < 1/(10n)$.

In the construction step, we use \Cref{lem:alg-ratio} to compute the random number $\tilde{R}$.
We say the construction step succeeds if~\eqref{eq:tildeR} is satisfied, which happens with probability at least $0.99$. 

In the query step, given any $x \in \Omega_B$, our data structure answers the following $\hat{f}(x)$:
\begin{align}\label{eq:hatf}
    \hat{f}(x) = \frac{1}{2}\sum_{y \in \Omega^x_t} \left| {\left(\prod_{v\in B:x_v=1} \frac{\lambda_v^\nu}{\lambda_v^\mu} \right)} \cdot \frac{\Zvx}{\Zux}\cdot \tilde{R} \cdot \vx(y)-\ux(y)\right|.
\end{align}

We bound the approximation error of $\hat{f}$. Define $A = \left(\prod_{v\in B:x_v=1} \frac{\lambda_v^\nu}{\lambda_v^\mu} \right) \cdot \frac{\Zvx}{\Zux}\cdot \tilde{R}$ and $B = \left(\prod_{v\in B:x_v=1} \frac{\lambda_v^\nu}{\lambda_v^\mu} \right) \cdot \frac{Z_{S,\nu}^x}{Z_{S,\mu}^x}\cdot \frac{Z_\nu}{Z_\mu}$. By \Cref{lem:Zcond-bound} and the analysis in the proof of \Cref{lem:marginal-ratio}, we have $\left(\prod_{v\in B:x_v=1} \frac{\lambda_v^\nu}{\lambda_v^\mu} \right) < (1+\theta/\kappa)^n < 2$, $\frac{Z_{S,\nu}^x}{Z_{S,\mu}^x}\leq 2$, and $\frac{Z_\nu}{Z_\mu}\leq 2$. Also note that $d \leq 1$. Using \Cref{lem:appZ} and our assumption on $\tilde{R}$, it holds that 
\begin{align*}
    |A-B| \leq \frac{\epsilon}{25}d.
\end{align*}
Using triangle inequality, we can bound
\begin{align*}
    |f^t(x) - \hat{f}(x)| \leq B\sum_{y \in \Omega^x_t}\vert \vx(y) - \nu^x_S(y) \vert + |A-B|\sum_{y \in \Omega^x_t}\vx(y) + \sum_{y \in \Omega^x_t}\vert \ux(y) - \mu^x_S(y) \vert.
\end{align*}
Using the fact $B\leq 8$ and \Cref{lem:sum-bound}, we have $ |f^t(x) - \hat{f}(x)| \leq \frac{\epsilon}{10}d$. By \Cref{lem:appf},
\begin{align*}
     |f(x) - \hat{f}(x)| \leq \frac{\epsilon}{8}d.
\end{align*}

The construction step takes time $T' \cdot \TS_G(T'/10^3) + T' \cdot O(n^4)$, where $T' = O(\frac{n^3}{\epsilon^2}+\frac{n^{5/2}}{\epsilon^{9/4}})$. Since the hardcore model satisfies the uniqueness condition in~\eqref{eq:cond-hardcore}, we have $\TS_G(T'/10^3) = O_\eta(\Delta n\log\frac{n}{\epsilon})$~\citep{CFYZ22,CE22}. 
The total construction time is $\tilde{O}(\frac{n^7}{\epsilon^2}+\frac{n^{6.5}}{\epsilon^{9/4}})$.
For each query, the running time is dominated by computing distributions $\ux$ and $\vx$, which is $O(n^t)=O(n^4)$.
\end{proof}

Finally, we prove \Cref{lem:alg-ratio}. 

\ifthenelse{\boolean{conf}}{\begin{proof}\textbf{of \Cref{lem:alg-ratio}}}{\begin{proof}[Proof of \Cref{lem:alg-ratio}]}
Recall our definition of $\mu_B$ and $\nu_B$ is that for any $x \in \Omega_B$,
\begin{align*}
    \mu_B(x)=\frac{\prod_{v\in B:x_v=1}\lambda_v^\mu Z_{S,\mu}^x}{Z_\mu},\quad \nu_B(x)=\frac{\prod_{v\in B:x_v=1}\lambda_v^\nu Z_{S,\nu}^x}{Z_\nu}.
\end{align*}
We can compute that 
\begin{align*}
    \frac{Z_\nu}{Z_\mu}=\sum_{x\in\Omega_B}\mu_B(x)\frac{\prod_{v\in B:x_v=1}\lambda_v^\nu Z_{S,\nu}^x}{\prod_{v\in B:x_v=1}\lambda_v^\mu Z_{S,\mu}^x}=\mathbf{E}_{x \sim \mu_B}\Big[{\underbrace{\frac{\prod_{v\in B:x_v=1}\lambda_v^\nu Z_{S,\nu}^x}{\prod_{v\in B:x_v=1}\lambda_v^\mu Z_{S,\mu}^x}}_{\defeq Q(x)}}\Big].
\end{align*}
We estimate $Z_\nu/Z_\mu$ by sampling $x$ from $\mu_B$, approximating $Q(x)$ and taking the average. We propose the algorithm as follow. Let $T' = O (\frac{n^3 + n/\kappa}{\epsilon^2})$ be a sufficiently large integer.
\begin{itemize}
    \item Draw $T'$ independent approximate samples $x_1,\cdots,x_{T'}$ from $\mu_B$ with $\DTV{\mu_B}{x_i}\leq \frac{1}{1000T'}$. 
    \item Compute $\tilde{Q}(x)=\frac{\prod_{v\in B:x_v=1}\lambda_v^\nu \Zvx}{\prod_{v\in B:x_v=1}\lambda_v^\mu \Zux}$ for $x=x_1,\cdots x_{T'}$, where $\Zux,\Zvx$ are defined in~\eqref{def:zx}.
    \item Compute $\tilde{R} = \frac{1}{T'}\sum_{i=1}^{T'}\tilde{Q}(x_i)$.
\end{itemize}

First, for any $x\in \Omega_B$,
by \Cref{lem:appZ}, since $\eta(\kappa,t)\leq \frac{\epsilon}{200}$ and $D/\kappa\leq \theta/\kappa\leq 1 / (10n)$,
\begin{align*}
    \left \vert Q(x)-\tilde{Q}(x)\right \vert&=\left \vert \frac{\prod_{v\in B:x_v=1}\lambda_v^\nu Z_{S,\nu}^x}{\prod_{v\in B:x_v=1}\lambda_v^\mu Z_{S,\mu}^x}-\frac{\prod_{v\in B:x_v=1}\lambda_v^\nu \Zvx}{\prod_{v\in B:x_v=1}\lambda_v^\mu \Zux} \right \vert=\left\vert \frac{Z_{S,\nu}^x}{Z_{S,\mu}^x} - \frac{\Zvx}{\Zux}\right \vert \prod_{v\in B:x_v=1}\frac{\lambda_v^\nu}{\lambda_v^\mu}\\
    &\leq \frac{\epsilon}{200}d \left(\frac{\kappa+D}{\kappa}\right)^n \leq \frac{\epsilon d}{150}.
\end{align*}
Recall $h(x)=\nu_B(x)/\mu_B(x)$, the variance of $Q(x)$ is
\begin{align*}
    \Var[\mu_B]{Q} = \Var[x\sim \mu_B]{\frac{\prod_{v\in B:x_v=1}\lambda_v^\nu Z_{S,\nu}^x}{\prod_{v\in B:x_v=1}\lambda_v^\mu Z_{S,\mu}^x}}=\Var[x\sim \mu_B]{\frac{Z_\nu \nu_B(x)}{Z_\mu \mu_B(x)}}=\frac{Z_\nu^2}{Z_\mu^2}\Var[\mu_B]{h}.
\end{align*}
In~\eqref{eq:upperZ/Z}, we showed that $\frac{Z_\mu}{Z_\nu}\leq 1+\frac{4nD}{\kappa}$, which implies $\frac{Z_\nu^2}{Z_\mu^2}\leq \left(1+\frac{4nD}{\kappa} \right )^2\leq e^{4/5}$. In~\eqref{eq:varh}, we also proved that $\Var[\mu_B]{h}\leq O_\eta(d^2) \cdot \left(n^3 + \frac{n}{\kappa}\right)$. We can conclude that 
\begin{align*}
    \Var[\mu_B]{Q} \leq O_\eta(d^2) \cdot \left(n^3 + \frac{n}{\kappa}\right).
\end{align*}

Assume that we have an ideal algorithm that draw perfect samples $x_1,\ldots,x_{T'}$ and exactly compute $Q(x_1),\cdots,Q(x_{T'})$ and compute $R^*=\frac{1}{T'}\sum_{i=1}^{T'}Q(x_i)$. Note $T' = O (\frac{n^3 + n/\kappa}{\epsilon^2})$. By Chebyshev's inequality, if $T'$ is sufficiently large, we have
\begin{align*}
    \Pr[]{|R^*-Z_\nu/Z_\mu|\geq \frac{\epsilon d}{300}}\leq 0.005.
\end{align*}
Note that  $\left \vert Q(x)-\tilde{Q}(x)\right \vert \leq \frac{\epsilon d}{150}$ for all $x \in \Omega_B$.
All the approximate samples $x_1,\cdots x_{T'}$ can be coupled successfully with perfect samples with probability at least $1-T'\cdot \frac{1}{1000T'}=0.999$.
We can couple our algorithm with ideal algorithm such that $|R^*-\tilde{R}|\leq \frac{\epsilon d}{150}$ with probability at least $0.99$. By a union bound, with probability at least 0.99,
\begin{align*}
    \left \vert \tilde{R}-\frac{Z_\nu}{Z_\mu} \right \vert \leq \frac{\epsilon d}{150}+\frac{\epsilon d}{300} = \frac{\epsilon d}{100}.
\end{align*}

The running time of our algorithm is $T' \cdot \TS_G(T'/10^3) + T' \cdot O(n^t)$ because each $\tilde{Q}(x)$ can be computed in time $O(n^t)$.
\end{proof}



\section{Proofs of algorithmic results}\label{sec:proof-main}

\subsection{The general algorithm (Proof of \texorpdfstring{\Cref{thm:Ising-1}}{Lg})}\label{sec:proof-gen}

\paragraph{Compute marginal lower bound}
In our main theorem, the input instance is promised to be $b$-marginally bounded for some parameter $b$. However, the specific value of $b$ is not given to the algorithm. The following algorithm computes the tight value of marginal lower bound.  
\begin{lemma}\label{lem:alg-b}
There exists an algorithm such that given any hardcore or Ising model $\mu$ in graph $G=(V,E)$, 
it returns a value $b$ in time at most $(1/b)^{O(1/b)} n$ for hardcore model and in time at most $O(n+m)$ for Ising model such that 
\begin{align}\label{eq:def-b}
    b = \min \{\mu^\sigma_v(c) \mid v\in V, \sigma \text{ is a feasible partial pinning, and } \mu^\sigma_v(c) > 0 \}.
\end{align}
\end{lemma}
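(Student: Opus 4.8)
\textbf{Proof plan for Lemma~\ref{lem:alg-b}.}

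The plan is to compute the quantity in~\eqref{eq:def-b} by exploiting locality: the marginal $\mu^\sigma_v(c)$ depends only on the pinning $\sigma$ restricted to a bounded-size neighborhood of $v$, so we only need to enumerate local pinnings rather than global ones. First I would observe that for both models, for any feasible partial pinning $\sigma$ on a set $\Lambda$ and any vertex $v \notin \Lambda$, the conditional marginal $\mu^\sigma_v(c)$ is determined by $\sigma$ restricted to $N(v)$ together with the local factors at $v$ and its incident edges --- this is just the Markov property of the spin system (the weight function factors over vertices and edges). So it suffices to range over all assignments to $N(v) \cup \{v\}$ that are consistent with some feasible global pinning, compute the resulting single-vertex marginal from the local factors, and take the minimum over all $v$ and all such local configurations with positive probability.

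For the Ising model this is immediate and cheap: $h \in (\mathbb R \cup \{\pm\infty\})^V$, so a vertex $v$ is ``frozen'' to $+1$ (resp.\ $-1$) exactly when $h_v = +\infty$ (resp.\ $-\infty$), and otherwise, for any fixed spins of $N(v)$, the marginal $\mu^\sigma_v(+1) = \frac{\exp(c)}{\exp(c)+\exp(-c)}$ where $c = h_v + \sum_{u \in N(v)} J_{uv}\sigma_u$. The minimum positive marginal at $v$ is attained by choosing the neighbor spins to push $c$ as far toward the unfavorable side as possible (i.e.\ $\sigma_u = -\operatorname{sign}(J_{uv})$ for $+1$, and symmetrically), which can be read off in $O(\deg_v)$ time; feasibility of the global pinning that realizes this is automatic for soft interactions, and the only subtlety is handling infinite fields on neighbors, which just deletes those terms. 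Summing over $v$ gives $O(n+m)$ total. (If one wishes to also allow general Ising models with infinite fields at $v$'s neighbors, those neighbors are simply pre-pinned, which only simplifies the computation.)

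For the hardcore model the neighborhood need not be bounded a priori, so the $(1/b)^{O(1/b)} n$ bound comes from the following dichotomy at each vertex $v$. If $\lambda_v = 0$ then $\mu^\sigma_v(+1) = 0$ always and $\mu^\sigma_v(-1) = 1$, contributing nothing new. Otherwise, $\mu^\sigma_v(-1) = \frac{1}{1 + \lambda_v^\sigma}$ where $\lambda_v^\sigma \le \lambda_v$ is the effective field, which is $\ge \frac{1}{1+\lambda_v} \ge \frac{1}{1+\lambda_{\max}}$, a constant lower bound; and $\mu^\sigma_v(+1) = \frac{\lambda_v^\sigma}{1+\lambda_v^\sigma}$, minimized (over pinnings making it positive) by making $\lambda_v^\sigma$ as small as possible while still positive --- which happens when few neighbors are pinned to $-1$, but we still need $+1$ feasible at $v$, so at least the neighbors pinned to $+1$ must be none and we want the ``worst'' feasible configuration of $N(v)$. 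The key point is that if $\deg_v$ is large, then $\mu^\sigma_v(+1)$ is either $0$ (if some neighbor is forced to $+1$, impossible under a pinning but a large independent neighborhood forces tiny probability) or bounded below by a quantity that is still $\ge$ some function of $\lambda_{\min}$ over positive-field vertices; more carefully, one shows the minimum is attained at a pinning touching only $O(1/b)$ vertices near $v$ once we know the answer is $\ge b$, so a self-bounding search of depth $O(1/b)$ suffices, giving branching factor $\le 2$ per vertex and $(1/b)^{O(1/b)} n$ total. I expect the main obstacle to be making this self-bounding argument fully rigorous: we must argue that one can \emph{a priori} restrict to bounded-size local pinnings without already knowing $b$, which requires an iterative ``guess-and-verify'' over candidate scales of $b$ (or equivalently, showing that the worst-case vertex marginal is always realized by pinning a ball of radius depending only on the target threshold), and then carefully bounding the enumeration cost at that depth.
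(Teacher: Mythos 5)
Your Ising analysis is essentially the paper's: by conditional independence, the extremal pinning can be taken to be a full assignment of $N(v)$, and the adversarial sign for each neighbor is picked greedily, giving $O(n+m)$. The hardcore case has two genuine gaps. First, your opening ``Markov property'' observation --- that $\mu^\sigma_v(c)$ depends only on $\sigma|_{N(v)}$ and local factors --- holds only when $\sigma$ fully pins $N(v)$, and the resulting plan of ranging over configurations of $N(v)\cup\{v\}$ would give the wrong answer for $c=+$: the only full assignment of $N(v)$ making $\mu^\sigma_v(+1)>0$ pins all of $N(v)$ to $-1$ and yields $\mu^\sigma_v(+1)=\lambda_v/(1+\lambda_v)$, which is the \emph{maximum} over feasible pinnings, not the minimum. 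The paper shows the true minimizer instead pins $V\setminus(N(v)\cup\{v\})$ to $-1$ and leaves $N(v)$ unpinned, giving $\mu^{\tau_0}_v(+1)=\lambda_v/(\lambda_v+Z_{N(v)})$ where $Z_{N(v)}$ is the hardcore partition function on the induced subgraph $G[N(v)]$; the monotonicity argument establishing the extremality of $\tau_0$ (other pinnings can only forbid configurations of $N(v)$, shrinking the denominator) is a step your sketch skips.

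Second, you worry that the local enumeration cannot be bounded a priori without knowing $b$, and propose a guess-and-verify over candidate scales. No such iteration is needed. The paper proves (\Cref{lem:const-degree}) that for any soft hardcore model with marginal lower bound $b$, each vertex has free degree at most $\ln b/\ln(1-b)=O((1/b)\log(1/b))$: pin $N^2(v)$ to $-1$ and use $b\le\mu^\sigma_v(+)\le(1-b)^{\deg^{\mathrm{free}}_v}$. One then simply runs the enumeration over independent sets of $N(v)$ directly; its cost $2^{|N(v)|}$ is bounded \emph{output-sensitively} by $(1/b)^{O(1/b)}$, with no knowledge of $b$ needed up front. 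This degree bound is the key lemma your plan lacks, and it removes the need for any scale-search.
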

The proof of \Cref{lem:alg-b} will be given later. With this lemma, we can assume that the value of $b$ is known to the algorithm.

\paragraph{Pre-processing step}
For Ising model, we need the following pre-processing step to reduce the general Ising model to the soft-Ising model.
Recall that an Ising model $(G=(V,E),J,h)$ is said to be \emph{soft} if $h \in \mathbb{R}^V$ instead of $h \in (\mathbb{R} \cup \{\pm \infty\})^V$. There are three cases: 
\begin{enumerate}
    \item Case 1: If there exists $v$ such that ($h_v^\mu =  +\infty,h_v^\nu = - \infty$)  or ($h_v^\mu = -\infty, h_v^\nu = +\infty$), we can direct compute that $\DTV{\mu}{\nu}=1$.
    \item Case 2: If there exists $v$ such that ($h_v^\mu =  \pm \infty,h_v^\nu \neq \pm \infty$) or ($h_v^\mu \neq \pm \infty, h_v^\nu = \pm \infty$), without loss of generality, we consider the case ($h_v^\mu =  +\infty,h_v^\nu \neq \pm \infty$). We have 
    \begin{align*}
        \DTV{\mu}{\nu}\geq |\mu_v(-)-\nu_v(-)|=|0-\nu_v(-)|=\nu_v(-) \geq b,
    \end{align*} 
    where the last inequality due to the marginal lower bound.
    We use additive-error algorithm in \Cref{thm:Approximate-Gibbs} with additive error $b\epsilon$ in this case. The running time is 
    \begin{align}\label{eq:time-pre}
        O\tp{\TC_G\tp{\frac{b\epsilon}{4}}+ \frac{1}{b^2\epsilon^2}\tp{ \TW_{G} + \TS_G\tp{\frac{b\epsilon}{4}}}} =O\tp{\TC_G\tp{\frac{b\epsilon}{4}}+ \frac{1}{b^2\epsilon^2}\tp{\TS_G\tp{\frac{b\epsilon}{4}}}},
    \end{align}
    where the equation holds because $\TW_{G} = O(n+m)$ and we can assume $\TS_G(\cdot),\TC_G(\cdot)$ is at least $\Omega(n+m)$ since the algorithm needs to read all vertices and all edges.
    
    \item Case 3: For all $v\in V$, if $h_v^\nu=\pm \infty$ or $h_v^\mu=\pm \infty$, then $h_v^\nu=h_v^\mu$. These vertices are fixed to some value with probability 1. By the standard self-reducibility, one can remove all these vertices and change external fields of neighbors to obtain two soft-Ising models. Formally, one can go through all vertices $v \in V$ whose value is fixed as $c \in \{\pm\}$, for every free neighbor $u$ of $v$, update $h^\mu_u \gets h^\mu_u + J^\mu_{uv}c$ and $h^\nu_u \gets h^\nu_u + J^\nu_{uv}c$.

    We remark that (1) the new soft-Ising models also have the same marginal lower bound because we only remove vertices whose marginal lower bound is 1; (2) to sample from the soft-Ising model, one can call sampling oracle on original model and do a projection; to approximately count the partition function, one can use the approximating counting oracle on the original model, because two partition functions differ only by an easy-to-compute factor.
\end{enumerate}

We also do the pre-processing step for hardcore model. A hardcore model $(G,\lambda)$ is said to be soft if $\lambda \in \mathbb{R}_{>0}^V$ instead of $\lambda \in \mathbb{R}_{\geq 0}^V$. There two cases.
\begin{enumerate}
    \item Case 1: There exists $v$ such that $(\lambda^\mu_v = 0,\lambda^\nu_v > 0)$ or $(\lambda^\mu_v > 0,\lambda^\nu_v = 0)$, then $\DTV{\mu}{\nu} \geq b$. We use additive-error algorithm to solve the problem in time~\eqref{eq:time-pre}.
    \item Case 2: For all $v$, $\lambda^\mu_v = 0$ if and only if $\lambda^\nu_v = 0$. We can simply remove all such vertices and work on the soft-hardcore model on the remaining graph. Again, the marginal lower bound and sampling/approximate counting oracles also work for new soft-hardcore model.
\end{enumerate}

\paragraph{The main algorithm}
Since we work on soft models, $b\leq \frac{1}{2}$. 
Define the parameters
\begin{align*}
    \CC = \begin{cases}
         b^3 &\text{hardcore model},\\
         \frac{b^2}{2} &\text{Ising model}.
     \end{cases}, \text{ and } 
     \Thre = \begin{cases}
         \frac{b}{2(1-b)n} &\text{hardcore model},\\
         \frac{1}{2(n+3m)} &\text{Ising model}.
     \end{cases}
\end{align*}
The algorithm computes the parameter distance $\dis(\mu,\nu)$ in time $O(n + m)$.
If $\dis(\mu,\nu) \geq \theta$, then by \Cref{lem:TV-lower}, $\DTV{\mu}{\nu} \geq \theta \CC$, we use additive-error algorithm in \Cref{thm:Approximate-Gibbs} with additive error $\theta \CC \epsilon$. Similar to~\eqref{eq:time-pre}, the running time is 
\begin{align}\label{eq:time-add}
   O\tp{\TC_G\tp{\frac{\theta \CC \epsilon}{4}}+ \frac{1}{(\theta \CC \epsilon)^2}\tp{\TS_G\tp{\frac{\theta \CC \epsilon}{8}}}}.
\end{align}

Next, assume that $\dis(\mu,\nu) < \theta$. 
We use the basic algorithm in \Cref{thm:alg-main}.
We have the following two lemmas for the soft-hardcore and soft-Ising models.

\begin{lemma}\label{lem:par-hardcore}
Let $\mu$ and $\nu$ be two soft-hardcore models satisfying $b$-marginal lower bound and $\dis(\mu,\nu) \leq \theta$. Then $\mu$ and $\nu$ satisfy \Cref{cond:meta} with $K = {4n}/{(b\CC)}$ and $L = 2$.
\end{lemma}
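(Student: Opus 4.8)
The plan is to verify the three bullet points of \Cref{cond:meta} for the soft-hardcore pair $(\mu,\nu)$, the bulk of the work going into the variance estimate. Throughout write $D=\dis(\mu,\nu)=\Vert\lambda^\mu-\lambda^\nu\Vert_\infty$, and recall that for the soft-hardcore model $\Thre=\frac{b}{2(1-b)n}$ and $\CC=b^3$, so $K=\frac{4n}{b\CC}=\frac{4n}{b^4}\ge 1$ and $L=2\ge 1$. The first requirement is immediate: since both models are \emph{soft}, i.e.\ $\lambda^\mu,\lambda^\nu\in\mathbb{R}_{>0}^V$, the support of each of $\mu$ and $\nu$ is exactly the collection of independent sets of $G$, so $\mu(\sigma)=0\iff\nu(\sigma)=0$ and in particular $\nu\ll\mu$.

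The main quantity to control is $W=\frac{w_\nu(\sigma)}{w_\mu(\sigma)}=\prod_{v:\sigma_v=+1}\frac{\lambda_v^\nu}{\lambda_v^\mu}$ for $\sigma\sim\mu$. First I would establish a lower bound on the external fields from the marginal bound: applying \Cref{def:marlow} to the (always feasible) partial pinning that sets every neighbour of a vertex $v$ to $-1$ gives $\mu_v^\sigma(+1)=\frac{\lambda_v^\mu}{1+\lambda_v^\mu}$, which is positive because $\lambda_v^\mu>0$, hence $\ge b$; this rearranges to $\lambda_v^\mu\ge\frac{b}{1-b}$, and symmetrically $\lambda_v^\nu\ge\frac{b}{1-b}$. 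Consequently every factor of $W$ obeys
\begin{align*}
\left|\frac{\lambda_v^\nu}{\lambda_v^\mu}-1\right|=\frac{|\lambda_v^\mu-\lambda_v^\nu|}{\lambda_v^\mu}\le\frac{(1-b)D}{b}=:\rho,\qquad n\rho\le\frac{(1-b)\,\Thre\, n}{b}=\tfrac12 .
\end{align*}
Since $\sigma$ sets at most $n$ vertices to $+1$, bounding the product of at most $n$ such factors by $(1+\rho)^k\le e^{k\rho}\le 1+2k\rho$ and $(1-\rho)^k\ge 1-k\rho$ (both valid since $k\rho\le n\rho\le\tfrac12$) yields two pointwise estimates: $\tfrac12\le(1-\rho)^n\le W$, and $|W-1|\le 2n\rho=\frac{2n(1-b)D}{b}$. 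The first gives $\E[]{W}\ge\tfrac12=\tfrac1L$.

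For the variance, $\Var{W}\le\E[]{(W-1)^2}\le\bigl(\tfrac{2n(1-b)D}{b}\bigr)^2$ by the pointwise bound, and \Cref{lem:TV-lower} supplies $\DTV{\mu}{\nu}\ge\CC D=b^3 D$, i.e.\ $D\le\DTV{\mu}{\nu}/b^3$. Combining,
\begin{align*}
\sqrt{\Var{W}}\le\frac{2n(1-b)}{b}\cdot\frac{\DTV{\mu}{\nu}}{b^3}\le\frac{2n}{b^4}\,\DTV{\mu}{\nu}\le\frac{4n}{b\CC}\,\DTV{\mu}{\nu}=K\,\DTV{\mu}{\nu},
\end{align*}
which is the last requirement. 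None of the steps is a genuine obstacle; the only thing that needs care is the constant bookkeeping — threading the $b$-dependence through the field lower bound $\frac{b}{1-b}$ (which is what dictates the choice of $\Thre$) and through $\CC=b^3$ from \Cref{lem:TV-lower} — so that the resulting multiplier stays below $K=4n/(b\CC)$, which it does with a factor of roughly $2$ to spare.
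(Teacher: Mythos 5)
Your proposal is correct and follows essentially the same route as the paper's proof: lower-bound the external fields via the marginal condition to get $\lambda_v^{\mu},\lambda_v^{\nu}\geq\frac{b}{1-b}$, use this to show $W$ is pointwise close to $1$, and then combine with \Cref{lem:TV-lower} to convert the resulting variance bound into a multiple of $\DTV{\mu}{\nu}$. The only cosmetic difference is that you bound $\Var{W}\leq\E[]{(W-1)^2}$ and then bound $|W-1|$ pointwise, whereas the paper bounds the spread of $W$ directly; both give the required $K=4n/(b\CC)$ with room to spare.
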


\begin{lemma}\label{lem:par-Ising}
Let $\mu$ and $\nu$ be two soft-Ising models satisfying $\dis(\mu,\nu) \leq \theta$ and $\DTV{\mu}{\nu} \geq \CC \dis(\mu,\nu)$. Then $\mu$ and $\nu$ satisfy \Cref{cond:meta} with $K = 4(n+m)/\CC$ and $L = 2$.
\end{lemma}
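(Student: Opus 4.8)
The plan is to verify the three bullet points of \Cref{cond:meta} for two soft-Ising models $\mu,\nu$ on $G=(V,E)$ with $\dis(\mu,\nu)\le\theta$ and $\DTV{\mu}{\nu}\ge\CC\dis(\mu,\nu)$, taking $K=4(n+m)/\CC$ and $L=2$. Write $D=\dis(\mu,\nu)$ and $d=\DTV{\mu}{\nu}$, and recall $W=w_\nu(\sigma)/w_\mu(\sigma)$ with $\sigma\sim\mu$, so that $\E[]{W}=Z_\nu/Z_\mu$. Absolute continuity is immediate: for soft-Ising models every configuration has strictly positive weight under both $\mu$ and $\nu$ (the Hamiltonian is finite everywhere since $h^\mu,h^\nu\in\mathbb{R}^V$), so $\mu(\sigma)=0$ never happens and the first bullet is vacuous.

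The heart of the argument is a pointwise two-sided bound on $W$. For any $\sigma\in\{\pm\}^V$,
\[
\log W(\sigma)=\bigl(H_\nu(\sigma)-H_\mu(\sigma)\bigr)
=\sum_{\{u,v\}\in E}(J^\nu_{uv}-J^\mu_{uv})\sigma_u\sigma_v+\sum_{v\in V}(h^\nu_v-h^\mu_v)\sigma_v.
\]
Each edge term is bounded in absolute value by $\Vert J^\mu-J^\nu\Vert_{\max}\le D$, and each vertex term by $|h^\mu_v-h^\nu_v|\le D(\deg_v+1)$ by definition of $\dis$; summing the vertex bounds gives $\sum_v D(\deg_v+1)=D(2m+n)$, and the $m$ edge terms contribute at most $Dm$, so $|\log W(\sigma)|\le D(3m+n)\le (n+3m)\theta\le \tfrac12$ by our choice $\theta=\tfrac1{2(n+3m)}$. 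Hence $e^{-1/2}\le W(\sigma)\le e^{1/2}$, and more quantitatively $|W(\sigma)-1|\le 2D(n+3m)$ using $|e^x-1|\le 2|x|$ for $|x|\le\tfrac12$. This last inequality is exactly where the edge-count $m$ enters the constant $K$, and it is the step I expect to require the most care: getting the coefficient of $D$ right (tracking the factor $\deg_v+1$ versus a crude $n$-bound, and the $|e^x-1|\le 2|x|$ vs.\ weaker estimates) is what makes $K=4(n+m)/\CC$ come out cleanly rather than with a worse polynomial factor.

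Given this, the variance bound follows quickly: since $\E[]{W}$ lies in $[e^{-1/2},e^{1/2}]$ as well (it is an average of $W(\sigma)$ values), for every $\sigma$ we have $|W(\sigma)-\E[]{W}|\le |W(\sigma)-1|+|\E[]{W}-1|\le 4D(n+3m)$, so
\[
\Var[]{W}\le \bigl(4D(n+3m)\bigr)^2\le \bigl(4D(n+m)\bigr)^2\cdot c
\]
for a small absolute constant, whence $\sqrt{\Var[]{W}}\le 4D(n+m)\le \tfrac{4(n+m)}{\CC}\cdot\CC D\le \tfrac{4(n+m)}{\CC}\,d=K\,d$, using the hypothesis $d\ge\CC D$. (If one wants to be careful about the $3m$ vs.\ $n+m$ discrepancy, absorb the constant $3$ into the freedom in $K$; alternatively bound $3m\le 3(n+m)$ and set $K=12(n+m)/\CC$, which is still fine.) Finally, $\E[]{W}=Z_\nu/Z_\mu\ge e^{-1/2}>\tfrac12=\tfrac1L$, giving the third bullet. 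Assembling the three verified bullets yields the lemma.
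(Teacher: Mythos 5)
Your proof is correct and follows essentially the same route as the paper: bound $\log W(\sigma)$ pointwise by $(n+3m)D$ using the definition of $\dis$, deduce $W$ stays in a narrow window around $1$ when $D<\theta=\tfrac{1}{2(n+3m)}$, bound the variance by the squared range, and apply $d\ge\CC D$ to convert to $K\,d$. The small wrinkle you flagged is real: both your derivation and the paper's proof actually yield $\sqrt{\Var{W}}\le 4(n+3m)D$, which is $\le 12(n+m)D$, not $4(n+m)D$, so the constant $K=4(n+m)/\CC$ in the lemma statement is slightly optimistic relative to what either proof delivers; as you note, this is harmless for the downstream theorem, which only needs $K=O((n+m)/\CC)$. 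The only cosmetic difference is your explicit triangle-inequality step $|W-\E[]{W}|\le|W-1|+|\E[]{W}-1|$ versus the paper's direct use of the two-sided interval $[1-(n+3m)D,\,1+3(n+3m)D]$ to bound the range; both give the same $4(n+3m)D$ bound.
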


Assuming the above two lemmas, we can use \Cref{thm:alg-main} to solve the problem in time 
\begin{align}\label{eq:time-multi}
  O\tp{\TC_G\tp{\frac{\epsilon}{4}} + T \cdot \TS_G\tp{\frac{1}{100T}}}, \quad\text{where } T = O\tp{\frac{L^2K^2}{\epsilon^2}}.
\end{align}
The final running time of our algorithm is dominated by the maximum of~\eqref{eq:time-pre},~\eqref{eq:time-add},~\eqref{eq:time-multi}, and the running time in \Cref{lem:alg-b}. Since both $\TS_G,\TC_G$ are non-increasing functions, the running time of our algorithm is at most
\begin{align*}
    C_b \cdot \frac{N^2}{\epsilon^2} \TS_G \tp{ \frac{\epsilon^2}{C_b N^2}} + \TC_G\tp{\frac{\epsilon}{C_b N}} + C'_b N,
\end{align*}
where $C_b, C'_b \geq 1$ are parameters depending only on $b$ and for hardcore model, $N = n$; for Ising model, $N = n+ m$. 
For hardcore model
\begin{align*}
    C_b = \mathrm{poly}\tp{\frac{1}{b}}, \quad C'_b  = \tp{\frac{1}{b}}^{O(\frac{1}{b})};
\end{align*}
and for Ising model,
\begin{align*}
    C_b = \mathrm{poly}\tp{\frac{1}{b}}, \quad C'_b = O(1).
\end{align*}
The parameter $C'_b$ comes from the running time in \Cref{lem:alg-b}.

\begin{remark}\label{remark:b}
\Cref{thm:Ising-1} presents a simplified version that assumes the marginal lower bound \( b \) to be a constant. However, our algorithm applies to both the Ising and hardcore models with an arbitrary marginal lower bound \( b \), where \( b \) may depend on the size of the input. The running time of our algorithm is given by
\begin{align*}
    \mathrm{poly}\tp{\frac{1}{b}} \cdot \frac{N^2}{\epsilon^2} \TS_G \tp{ \frac{\mathrm{poly}(b) \epsilon^2}{N^2}} + \TC_G\tp{\frac{\mathrm{poly}(b) \epsilon}{ N}} + C'_b N,
\end{align*}    
\begin{itemize}
    \item \textbf{Ising model:} The parameter \( C'_b = O(1) \), so a polynomial-time reduction from TV-distance estimation to sampling and approximate counting exists for the Ising model with marginal lower bound \( b \geq \frac{1}{\mathrm{poly}(n)} \).
    \item \textbf{Hardcore model:} The parameter  \( C'_b = (1/b)^{O(1/b)} \), which comes from the running time in \Cref{lem:alg-b}. One can improve the last term \( C'_b N \) to \( \mathrm{poly}(n) \cdot \TC_G(\frac{1}{10}) \) by assuming a slightly stronger approximate counting oracle. In \Cref{thm:Ising-1}, we only assume approximate counting oracles for \( \mu \) and \( \nu \). If we further assume that approximate counting oracles work for all conditional distributions induced by \( \mu \) and \( \nu \), then by going through the proof of \Cref{lem:alg-b}, we can obtain an algorithm that computes \( b' \) in time \( \mathrm{poly}(n) \cdot \TC_G(\frac{1}{10}) \), such that with high probability, the value \( b' \) satisfies \( \frac{b}{2} \leq b' \leq b \) for \( b \) defined in~\eqref{eq:def-b}. This \( b' \) is also a marginal lower bound and provides a constant approximation to the true lower bound. We can then use this \( b' \) in the remainder of the algorithm. All the subsequent proofs follow for this $b'$.  Hence, given the stronger approximate counting oracle, the polynomial-time reduction exists for the hardcore model with marginal lower bound $b \geq \frac{1}{\mathrm{poly}(n)}$. 
\end{itemize}
\end{remark}

Finally, we give the proofs of technical lemmas.
We need the following property about the soft hardcore model.
\begin{lemma}\label{lem:const-degree}
Let $0 < b < 1$. Suppose a soft-hardcore model $(G,\lambda)$ is $b$-marginally bounded. For any vertex $v \in V$, let $\deg_v^{\text{free}}$ denote the number of free neighbors $u$ of $v$ such that $\lambda_u > 0$. For any $v \in V$, it holds that $\deg_v^{\text{free}} \leq \frac{\ln b}{\ln (1- b)}$.
\end{lemma}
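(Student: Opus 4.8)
The plan is to bound $\deg_v^{\text{free}}$ by squeezing the marginal probability $\mu_v(+1)$ between an easy lower bound and an easy upper bound. Fix a vertex $v$, and let $u_1,\dots,u_k$ be its free neighbors with $\lambda_{u_i}>0$, so $k=\deg_v^{\text{free}}$; the goal is to establish $b\le(1-b)^k$, which rearranges into the claimed inequality after taking logarithms.

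First, for the lower bound: the configuration in which $v$ is the only $+1$-vertex is an independent set of positive weight (using $\lambda_v>0$), so $\mu_v(+1)>0$, and then \Cref{def:marlow} applied with the empty pinning gives $\mu_v(+1)\ge b$.

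Next, for the upper bound: since $\{v=+1\}$ forces all neighbors of $v$ — in particular $u_1,\dots,u_k$ — to be $-1$, we have $\mu_v(+1)\le \Pr[\sigma\sim\mu]{u_1=\cdots=u_k=-1}$. I would expand the right-hand side by the chain rule as $\prod_{i=1}^{k}\mu_{u_i}^{\sigma_{i-1}}(-1)$, where $\sigma_{i-1}$ is the partial pinning fixing $u_1,\dots,u_{i-1}$ to $-1$; this is a feasible pinning of positive probability because the all-$(-1)$ configuration always has positive weight in a hardcore model. For each $i$, the configuration that sets $u_i=+1$ and every other vertex to $-1$ is a feasible extension of $\sigma_{i-1}$ of positive weight, so $\mu_{u_i}^{\sigma_{i-1}}(+1)>0$; hence $\mu_{u_i}^{\sigma_{i-1}}(+1)\ge b$ by \Cref{def:marlow}, i.e.\ $\mu_{u_i}^{\sigma_{i-1}}(-1)\le 1-b$. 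Multiplying these $k$ bounds yields $\mu_v(+1)\le(1-b)^k$.

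Combining the two bounds gives $b\le(1-b)^k$, hence $\ln b\le k\ln(1-b)$, and dividing by $\ln(1-b)<0$ (which reverses the inequality) gives $\deg_v^{\text{free}}=k\le \frac{\ln b}{\ln(1-b)}$. There is no serious obstacle here; the only points to be careful about are the feasibility checks that legitimize invoking the marginal lower bound at every conditioning step, and the sign of $\ln(1-b)$ when dividing through at the end.
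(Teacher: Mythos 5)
Your proof is correct, and it follows essentially the same route as the paper's: squeeze $b\le \mu_v(+)\le \prod_i \mu_{u_i}^{\sigma_{i-1}}(-1)\le (1-b)^k$ and take logarithms, flipping the inequality because $\ln(1-b)<0$. The feasibility checks you flag — that each $\sigma_{i-1}$ is a feasible pinning because the all-$(-1)$ configuration has positive weight, and that $\mu_{u_i}^{\sigma_{i-1}}(+1)>0$ because the softness makes the ``$u_i=+$, all else $-$'' configuration have positive weight — are exactly the points the paper also relies on to legitimize invoking the marginal lower bound at each stage.

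The one cosmetic difference is that the paper conditions the entire argument on an auxiliary pinning that fixes every vertex at distance exactly $2$ from $v$ to $-1$, whereas you work with the empty pinning (and then the pinnings $\sigma_{i-1}$ on prefixes of the neighbor set). Since \Cref{def:marlow} quantifies over \emph{all} feasible partial pinnings, the empty pinning is just as valid a choice, and the extra conditioning in the paper's version does not do any real work. Your variant is slightly leaner; the chain-rule step, the per-factor bound $\mu_{u_i}^{\sigma_{i-1}}(-1)\le 1-b$, and the final algebra are identical to the paper's.
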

\begin{proof}
Let $N^2(v)$ denote the set of vertices with distance 2 to vertex $v \in V$ in graph $G$. 
Let $\sigma$ be a pinning that fixes all vertices in $N^2(v)$ to the value $-1$. 
Let $N^{\text{free}}(v)=\{v_1,v_2,\ldots,v_\ell\}$ denote the set of free neighbors of $v$, where $\ell = \deg^{\text{free}}_v$. 
Let $\mu$ denote  the Gibbs distribution.
Conditional on $\sigma$, $v$ takes value $+$ only if all free neighbors of $v$ take the value $-$. 
Since the hardcore model is soft, $v$ takes $+$ with a positive probability so that the marginal lower bound applies.
We have
\begin{align*}
    b\leq \mu_v^\sigma(+) \leq \prod_{j = 1}^\ell \mu_{v_j}(-\mid \sigma \text{ and } (\forall k < j, v_k \gets -)). 
\end{align*}
For any free neighbor $v_j$, conditional on $\sigma$ and $ v_k \gets -$ for all $k < j$, $v_j$ takes $+$ with a positive probability so that $v_j$ takes $+$ with probability at least $b$. Hence,
\begin{align*}
    b \leq (1-b)^{\deg^{\text{free}}_v}. 
\end{align*}
Note that $1 - b < 1$. This proves the upper bound of the free degree.
\end{proof}

\ifthenelse{\boolean{conf}}{\begin{proof}\textbf{of \Cref{lem:alg-b}}}{\begin{proof}[Proof of \Cref{lem:alg-b}]}
For hardcore model, if $\lambda_u=0$ for some $u\in V$, we fix $u=-$ and consider the remained subgraph. If the subgraph has no vertices, we just return b = 1. 

When $c=-$, for each any partial configuration $\tau$ on $\Lambda \subseteq V$ where $v \notin \Lambda$, consider the marginal probability $\mu_v^\tau(-)$.
Then $\mu_v^\tau(-)$ is a convex combination of $\mu_v^\sigma(-)$'s, where $\sigma$ is a partial configuration on $V  \setminus \{v\}$.
For any $\sigma$, $\mu^\sigma_v(-)$ is positive.
Due to the conditional independence, we only need to consider the worst pinning of on $N(v)$, where $N(v)$ is the set of all neighbors of the vertex $v$.
It is easy to see $\mu^\sigma_v(-) \geq \frac{1}{1+\lambda_v^\mu}$, and equality is achieved when $\sigma$ fixes the values of all neighbors to be $-$. 

Now we consider the case $c = +$ and fix a vertex $v$. Consider a partial configuration $\sigma\in \{\pm\}^\Lambda$ for subset $\Lambda \subseteq V\setminus \{v\}$. If there exists a vertex $u\in N(v)$ such that $u\in \Lambda$, to make $\mu_v^\sigma(+)$ nonzero, $\sigma(u)$ must be $-$. 
Assume $u \in N(v)$ and $\sigma_v = -$.
We consider another subset $\sigma'$ on $\Lambda' = \Lambda \setminus \{u\}$ such that $\sigma'_{\Lambda'} = \sigma_{\Lambda'}$. Define the notation
\begin{align*}
    w_\mu(\sigma,v = +) = \sum_{\tau \in \{\pm\}^V:\tau_\Lambda  = \sigma \land \tau_v = +} w_\mu(\tau).
\end{align*}
We have $w_\mu(\sigma,v = +) = w_\mu(\sigma',v = +)$, because $\tau_v = +$ forces all vertices in $N(v)$ to take the value $-$. On the other hand, $w_\mu(\sigma,v = -) \leq w_\mu(\sigma',v = -)$, because $u$ in $\sigma_0$ is free and it can either take $-$ or $+$.
Our goal is to find a condition such that $v$ takes $+$ with the minimum positive probability. We can assume $N(v) \cap \Lambda = \emptyset$.
Again, $\mu^\sigma_v(+)$ is a convex combination over all $\mu^\tau_v(+)$, where the feasible partial configuration $\tau \in \{\pm\}^{V\setminus (N(v)\cup v)}$ that fixes the value of all vertices except $N(v) \cup \{v\}$. We only need to consider the worst case of $\tau$. Note that
\begin{align*}
    \frac{w_\mu(\tau,v=+)}{w_\mu(\tau,v=-)} = \frac{\lambda_v^\mu}{\sum_{\rho \in \{\pm\}^{N(v)},w_\mu(\tau,\rho,-)>0 }\prod_{u\in N(v),\rho_u=+}\lambda_u^\mu}.
\end{align*}
It is easy to verify when $\tau = \tau_0$ such that for all $u \in {V\setminus (N(v)\cup v)}$, $\tau_0(u)= -$,
the above ratio obtains its minimum, because other $\tau$ may forbid some possible $\rho$ in the summation. Our algorithm for computing the value of $b$ is:
\begin{itemize}
    \item Compute $b_0=\frac{1}{1+\max_{v\in V}(\lambda_v^\mu)}$.
    \item For each $v\in V$, compute $m_v=\mu_{v}^{\tau_0}(+)$ by enumerating all independent sets of $N(v)$.
    \item Output $b=\min \{b_0,\min_{v\in V} \{m_v\}\}$.
\end{itemize}

By \cref{lem:const-degree}, because we already remove all vertices with zero $\lambda_v$, we have $N(v)\leq \frac{\ln(b)}{\ln (1-b)}$.
Let $k = \max_{v \in V}|N(v)| = O(\frac{1}{b}\log \frac{1}{b})$. 
The running time of the algorithm is 
\[O(n 2^k k^2) = \tp{\frac{1}{b}}^{O(\frac{1}{b})} n.\] 
The running time $O(k2^k)$ is for the exact computation of the $\mu^{\tau_0}_v(+)$. As stated in \Cref{remark:b}, given the approximate counting oracle for conditional distributions, we can compute an approximate value $p_v$ such that $\frac{1}{2} \mu^{\tau_0}_v(+) \leq p_v \leq \mu^{\tau_0}_v(+)$ with high probability.

For Ising model, similar to the pre-processing step, we can first remove all $v$ with $|h_v| = \infty$ and then change the external fields on all neighbors of $v$. After this step, we only need to consider a soft-Ising model  $(G,J,h)$ in the remaining graph $G$.

We also analyze when $\mu_v^\sigma(c)$ obtains the minimum. 
Since we deal with soft-Ising model, any $\sigma \in \{\pm\}^V$ appears with positive probability. Since $\mu_v^\sigma(c)$ is a convex combination of $\mu^\tau_v(c)$, where $\tau$ is a pinning on $V \setminus \{v\}$. Due to the conditional independence, if all $N(v)$ is fixed, then other vertices do not influence on the marginal distribution at $v$. Hence, to minimize $\mu_v^{\sigma}(c)$, we only need to consider $\sigma\in \{\pm\}^{N(v)}$. The marginal distribution can be written as
\begin{align*}
\frac{\mu_v^\sigma(c)}{\mu_v^\sigma(-c)}=\frac{\exp(\sum_{u\in N(v)}J_{vu}\sigma_u c+h_v c)}{\exp(-\sum_{u\in N(v)}J_{vu}\sigma_u c-h_v c)} =\exp\tp{2\sum_{u\in N(v)}J_{vu}\sigma_u c+2h_v c}.
\end{align*}
To find a $\sigma$ that minimizes $\sum_{u\in N(v)}J_{vu}^\mu\sigma_u c$, we greedily assign $\sigma_u \in \{\pm\}$ according to the sign of $J_{vu}$. The final result is for any $c \in \{\pm\}$,
\begin{align*}
\min_{\sigma\in \{\pm\}^{N(v)}} \mu_v^\sigma(c)=\frac{g(v,c)}{g(v,c)+1/g(v,c)}\text{, where } g(v,c)= \exp\tp{h_vc -\sum_{u\in N(v)}|J_{vu}|}.  
\end{align*}
Our algorithm is:
\begin{itemize}
    \item For each $v\in V$, compute $g(v) = \min_{c \in \{\pm\}}\frac{g(v,c)}{g(v,c)+1/g(v,c)}$.
    \item Output $b=\min_{v\in V} \{g(v)\}$.
\end{itemize}
The running time is $O(n+m)$.
\end{proof}

\ifthenelse{\boolean{conf}}{\begin{proof}\textbf{of \Cref{lem:par-hardcore}}}{\begin{proof}[Proof of \Cref{lem:par-hardcore}]}
For all $\sigma\in \{\pm\}^V$, if $\mu(\sigma)>0$, $\sigma$ corresponds to an independent set of $G$. Because $\nu$ is soft-hardcore, then $\nu(\sigma)>0$, so $\nu$ is absolutely continuous with respect to $\mu$. 

For each $v\in V$, consider $\sigma = (-1)^{V\setminus v}$. Because $\mu$ and $\nu$ are both soft-hardcore models and satisfy the $b$-marginal lower bound, $\mu_{v}^\sigma(+),\mu_{v}^\sigma(-),\nu_{v}^\sigma(+),\nu_{v}^\sigma(-)\geq b$. We have 
$\frac{\lambda_u^x}{1+\lambda_u^x},\frac{1}{\lambda_u^x+1}\geq b$ for all $v\in V$, $x\in \{\mu,\nu\}$, which means
\begin{align*}
    \frac{b}{1-b}\leq \lambda_v^x\leq \frac{1-b}{b}.
\end{align*}
The above inequality means that $b\leq \frac{1}{2}$. We can compute the ratio of the weight
\begin{align*}
    \frac{w_\nu(\sigma)}{w_\mu(\sigma)}&=\prod_{v:\sigma(v)=+}\frac{\lambda_v^\nu}{\lambda_v^\mu}\leq \prod_{v:\sigma(v)=+} \frac{\lambda_v^\mu+\dis(\mu,\nu)}{\lambda_v^\mu}\\
    &\leq \left(\frac{\lambda_v^\mu+\dis(\mu,\nu)}{\lambda_v^\mu}\right)^n\leq \left(1+\frac{(1-b)\dis(\mu,\nu)}{b}\right)^n.
\end{align*}
For hardcore model, $\dis(\mu,\nu) \leq \theta=\frac{b}{2(1-b)n}$, so 
\begin{align*}
    \frac{w_\nu(\sigma)}{w_\mu(\sigma)}\leq 1+\frac{3n(1-b)\dis(\mu,\nu)}{b}.
\end{align*}
Similarly, $\frac{w_\nu(\sigma)}{w_\mu(\sigma)}\geq 1-\frac{n(1-b)\dis(\mu,\nu)}{b}$. By \Cref{lem:TV-lower}, $\DTV{\mu}{\nu} \geq \CC \dis(\mu,\nu)$, then
\begin{align*}
    \sqrt{\Var{W}}&\leq \frac{4n(1-b)\dis(\mu,\nu)}{b}\leq \frac{4n(1-b)}{b\CC}\DTV{\mu}{\nu} < \frac{4n}{b\CC}\DTV{\mu}{\nu} \text{, and}\\
    \E{W}&\geq \min_{\sigma} \frac{w_\nu(\sigma)}{w_\mu(\sigma)} \geq 1-\frac{1}{2}=\frac{1}{2}.
\end{align*}
This verifies \Cref{cond:meta}.
\end{proof}

\ifthenelse{\boolean{conf}}{\begin{proof} \textbf{of \Cref{lem:par-Ising}}}{\begin{proof}[Proof of \Cref{lem:par-Ising}]}
Since both $\mu$ and $\nu$ are soft-Ising models, the absolutely continuous condition $\nu \ll \mu$ holds.
    Let $J^\mu,h^\mu$ and $J^\nu,h^\nu$ be the interaction matrices and external field vectors of $\mu$ and $\nu$, respectively. Denote $D = \dis(\mu,\nu)$.
    For any $\sigma \in \{\pm\}^V$, we have
    \begin{align*}
        \frac{w_\nu(\sigma)}{w_\mu(\sigma)} = \exp\left( \sum_{u\in V} (h^\nu_u -h^\mu_u) \sigma_u + \sum_{\{u,v\}\in E} (J^\nu_{uv} - J^\mu_{uv})\sigma_u\sigma_v \right). 
    \end{align*}
    By the definition of parameter distance in \Cref{def:dis}, we have 

    \begin{align*}
        \left| \sum_{u\in V} (h^\nu_u -h^\mu_u) \sigma_u + \sum_{\{u,v\}\in E} (J^\nu_{uv} - J^\mu_{uv})\sigma_u\sigma_v \right| &\leq \sum_{u\in V}|h^\nu_u-h^\mu_u|+\sum_{\{u,v\}\in E} |J^\nu_{uv}-J^\mu_{uv}|
        \\
        &\leq \left(\sum_{u\in V} (\deg_u+1)+\sum_{\{u,v\}\in E}1 \right)D\\&=(n+3m)D,
    \end{align*}
    it implies
    \begin{align*}
        \exp(-(n+3m)D) \leq \frac{w_\nu(\sigma)}{w_\mu(\sigma)} \leq \exp((n+3m)D).
    \end{align*}
    For soft-Ising model, $\theta = \frac{1}{2(n+3m)}$ and $D < \theta$, so that 
    \begin{align*}
       1 - (n+3m)D \leq \frac{w_\nu(\sigma)}{w_\mu(\sigma)} \leq 1 + 3(n+3m)D.
    \end{align*}
    Hence, $\sqrt{\Var{W}}\leq 4(n+3m)D \leq \frac{4(n+3m)}{\CC}\DTV{\mu}{\nu}$, and $\E[]{W} \geq 1 - \frac{1}{2} = \frac{1}{2}$.
\end{proof}


\subsection{The improved algorithms for hardcore model in the uniqueness regime} \label{sec:hardcore-alg-proof}

\ifthenelse{\boolean{conf}}{\begin{proof}\textbf{of \Cref{thm:hardcore-1}}}{\begin{proof}[Proof of \Cref{thm:hardcore-1}]}
Consider a hardcore model $(G,\lambda)$, where $\lambda \in \mathbb{R}_{\geq 0}^V$, that satisfies the uniqueness condition.
Define threshold $\theta$ for hardcore model as 
\begin{align*}
  \theta = 10^{-10}\frac{\epsilon^{1/4}}{n^{5/2}} = \Theta\left(\frac{\epsilon^{1/4}}{n^{5/2}}\right). 
\end{align*}
If $\dis(\mu,\nu) > \theta$, then by \Cref{lem:TV-lower}, we know that $\DTV{\mu}{\nu} \geq \frac{\theta}{5000}$. 
we use the algorithm \Cref{thm:Approximate-Gibbs} to achieve the additive error $\frac{\epsilon D}{5000}$. 
For the hardcore model in the uniqueness regime, we have $\TS_G(\delta)=O_\eta(\Delta n \log \frac{n}{\delta})$ and $\TC_G(\delta) = \tilde{O}_\eta(\frac{\Delta n^2}{\delta^2})$. Note that $\TW_G = O(n)$. The running time for this case is at most 
\begin{align*}
 O\left(\TC_G\left(\frac{\epsilon D}{20000}\right)+ \frac{1}{\epsilon^2D^2}\left(n + \TS_G\left(\frac{\epsilon D}{20000}\right)\right)\right) = \tilde{O}_\eta\left( \frac{\Delta n^2}{\epsilon^2 \theta^2}\right) = \tilde{O}_\eta\left( \frac{\Delta n^7}{\epsilon^{5/2}}\right).
\end{align*}
If $\dis(\mu,\nu) \leq \theta$, we use \Cref{thm:hardcore-adv} with running time $\tilde{O}_\eta\left(\frac{n^7}{\epsilon^2}+\frac{n^{6.5}}{\epsilon^{9/4}}\right)$. The over all running time is the maximum of two cases, which is $\tilde{O}_\eta\left( \frac{\Delta n^7}{\epsilon^{5/2}}\right)$.
\end{proof}

The choice of $\theta$ is closely related to the choices of $t$ and $\kappa$ in the proof of \Cref{lem:hardcore-adv-1}. We choose the parameters to minimize the exponent on $n$ in the running time of \Cref{thm:hardcore-1}.

\ifthenelse{\boolean{conf}}{\begin{proof}\textbf{of \Cref{thm:hardcore-2}}}{\begin{proof}[Proof of \Cref{thm:hardcore-2}]}
Now, we further assume that $\Delta = O(1)$ and $\lambda^\pi_v = \Omega(1)$ or $0$ for all $v\in V$ and $\pi \in \{\mu,\nu\}$.
We do a similar pre-processing step as that in \Cref{sec:proof-gen}.
Suppose there exists $v$ such that $(\lambda^\mu_v = \Omega(1),\lambda^\nu_v = 0)$ or $(\lambda^\mu_v = \Omega(1),\lambda^\nu_v = 0)$. Say we are in the first case. Then $\nu_v(+) = 0$ and $\mu_v(-) \geq \frac{\lambda^\mu_v}{1+\lambda^\mu_v}(\frac{1}{1+\lambda_c(\Delta)})^\Delta = \Omega(1)$. The total variation $\DTV{\mu}{\nu}=\Omega(1)$. We can use \Cref{thm:Approximate-Gibbs} to solve the problem in time $\tilde{O}(\frac{n^2}{\epsilon^2})$. For all $v \in V$ with $\lambda^\mu_v = \lambda^\nu_v=0$, we can remove $v$. Hence, we can assume $\Omega(1)=\lambda^\pi_v \leq (1 - \eta)\lambda_c(\Delta)$ for all $v \in V$ and $\pi \in \{\mu,\nu\}$.

In this case, we use a different threshold $\theta_0 = \Theta(\frac{1}{\Delta n}) = \Theta(\frac{1}{n})$ because $\Delta=O(1)$. Suppose $\dis(\mu,\nu) \leq \theta_0$, then by~\eqref{eq:w-var}, we have $\Var[]{W} = O(n^2)$. It is easy to verify that $\frac{Z_\nu}{Z_\mu} = \Theta(1)$.  \Cref{thm:alg-main} gives an algorithm in time $\tilde{O}_\eta(n^3/\epsilon^2)$. Let us assume $D = \dis(\mu,\nu) > \theta_0$. If we directly apply \Cref{thm:Approximate-Gibbs} to achieve the additive error $\frac{\epsilon D}{5000}$, then the running time would be
\begin{align*}
    O\left(\TC_G\left(\frac{\epsilon D}{20000}\right)+ \frac{1}{\epsilon^2D^2} \TS_G\left(\frac{\epsilon D}{20000}\right)\right).
\end{align*}
The second term $\frac{1}{\epsilon^2D^2} \TS_G\left(\frac{\epsilon D}{20000}\right)= \tilde{O}_\eta(n^3/\epsilon^2) $. But the bottleneck is the first term $\TC_G\left(\frac{\epsilon D}{20000}\right) = \tilde{O}_\eta(\frac{ n^2}{\epsilon^2D^2}) = \tilde{O}_\eta(\frac{ n^2}{\epsilon^2 \theta_0^2}) = \tilde{O}_\eta(n^4/\epsilon^2)$. However, we can improve the first term by noting that the algorithm in \Cref{thm:Approximate-Gibbs} only needs to approximate the ratio $\frac{Z_\nu}{Z_\mu}$ with relative-error $O(\epsilon D)$ and we show such ratio can be approximated in time $\tilde{O}_\eta(n^3/\epsilon^2)$. We can construct a sequence of $\lambda^{(0)},\lambda^{(1)},\cdots,\lambda^{(\ell)}$ such that  $\lambda^{(0)} = \lambda^\mu$, $\lambda^{(\ell)} = \lambda^{\nu}$ and other $\lambda^{(i)}$ are defined as follows. For any $v \in V$, let $\delta_v = \frac{\lambda^{\nu}_v}{\lambda^{\mu}_v}$. For any $1 \leq i \leq \ell$,  $\lambda^{(i)}_v$ is defined by $\lambda^{(i)}_v = \lambda^\mu_v \delta_v^{i/\ell}$.  We choose $\ell$ such that $\ell = \Theta(1+ nD)$. Note that $\delta_v  =1 \pm O(D)$. We have $\delta_v^{1/\ell} = 1 \pm O(\frac{1}{n})$.  Let $w_i$ be the weight function induced by $\lambda^{(i)}$. 
Let $Z_i$ be the partition function induced by $w_i$. 
Let $\mu_i$ be the Gibbs distribution induced by $w_i$.
We further define $Z_{\ell + 1}$ by setting $\lambda^{(\ell + 1)}_v = \lambda^\mu_v \delta_v^{(\ell+1)/\ell}$. 
Define random variable $W_i$ as 
\begin{align*}
    W_i = \frac{w_{i}(X)}{w_{i-1}(X)}, \quad \text{where } X \sim \mu_{i-1}.
\end{align*}
Define $W \defeq \prod_{i=1}^{\ell}W_i$, where $W_i$'s are mutually independent. It is easy to verify that 
\begin{align*}
\E[]{W} = \frac{Z_\nu}{Z_\mu} = \frac{Z_\ell}{Z_0}, \quad\text{and } \Var[]{W} \leq \E[]{W^2} = \prod_{i=1}^{\ell}\E[]{W_i^2} = \frac{Z_{\ell + 1}Z_{\ell}}{Z_0Z_1}.
\end{align*}
We have the following bound
\begin{align*}
 \frac{\Var[]{W}}{(\E[]{W})^2} \leq   \frac{\E[]{W^2}}{(\E[]{W})^2} & \leq \frac{Z_{\ell+1}}{Z_\ell} \cdot \frac{Z_0}{Z_1} = O(1).
\end{align*}
The last equality follows from the fact that $\delta_v^{1/\ell} = 1 \pm O(\frac{1}{n})$.
Hence, to achieve $O(\epsilon D)$ relative-error, we can draw $O(\frac{1}{\epsilon^2 D^2})$ samples of $W$, each sample costs $\tilde{O}_\eta(n \ell) = \tilde{O}_\eta(n + n^2 D)$ time. The total running time is 
\begin{align*}
    \tilde{O}_\eta\left(\frac{n + n^2 D}{\epsilon^2 D^2}\right) = \tilde{O}_\eta\left( \frac{n}{\epsilon^2 \theta_0^2} + \frac{n^2}{\epsilon^2 \theta_0} \right) = O_\eta\left( \frac{n^3}{\epsilon^2}\right). \ifthenelse{\boolean{conf}}{}{&\qedhere}&
\end{align*}
\end{proof}

\subsection{The algorithm for marginal distributions (Proof of \texorpdfstring{\Cref{thm:many-vertex-alg}}{Lg})}\label{sec:marginthm}

\Cref{thm:many-vertex-alg} is a simple corollary of \Cref{thm:approx-margin-tv}.
\ifthenelse{\boolean{conf}}{\begin{proof}\textbf{of \Cref{thm:many-vertex-alg}}}{\begin{proof}[Proof of \Cref{thm:many-vertex-alg}]}
For hardcore model $(G,\lambda^\mu)$, $Z^\sigma_{\mu}$ where $\sigma \in \{\pm\}^S$ is the partition function of $(G[\Lambda],\lambda_\Lambda^\mu)$. The set $\Lambda$ is obtained from $V$ by removing all vertices in $S$ together with all neighbors $u$ of vertices $v \in S$ such that $\sigma_v = +1$. If $(G, \lambda^\mu)$ satisfies the uniqueness condition, then $(G[\Lambda], \lambda_\Lambda^\mu)$ also satisfies the uniqueness condition. Hence, by the previous results in~\cite{CFYZ22,CE22} and \cite{SVV09}, for both $\mu$ and $\nu$, the approximate conditional counting oracle with $\TC_G(\epsilon)=O(\frac{\Delta n^2}{\epsilon^2} \mathrm{polylog}\frac{n}{\epsilon})$ exists and the sampling oracle with $\TS_G(\epsilon)=O(\Delta n \mathrm{polylog}\frac{n}{\epsilon})$ exists. The theorem follows from \Cref{thm:approx-margin-tv}.
\end{proof}

\ifthenelse{\boolean{conf}}
{\section{Hardness of approximating the TV-distance on a subset of vertices}\label{sec:hard}

In this section, we prove \Cref{thm:many-vertex}. Let $G_i$ be the graph defined as above. 
Let $n_i = n - i + 1$ denote the number of vertices in $G_i$.
One can construct a graph $G_i'$ by adding a set $\Lambda$ of $\ell$ isolated vertices to $G_i$.  Let $N = n_i + \ell$. Let $k(\cdot)$ be the function in \Cref{thm:many-vertex}. Note that $k(N) = N - \lceil N^\alpha \rceil$. Since $\alpha$ is a constant, one can set $\ell = n^{\Omega(1/\alpha)} = \mathrm{poly}(n)$ so that $k(N) \leq \ell$.
Hence, the size of $G_i'$ is a polynomial in $n$.

Let $\mu^{(i)}_{\text{new}}$ be the hardcore model on $G_i'$ such that the external fields on vertices in $G_i$ are 1 and the external fields on $\lambda$ are 0. Let $\nu^\alpha_{\text{new}}$ be the hardcore model on $G_i'$ such that the external fields on vertices in $G_i$ are $\lambda^\alpha$ and the external fields on $\Lambda$ are 0. Let $S$ be a subset of vertices containing vertex $i$ and $k(N) - 1$ vertices in $\Lambda$. It holds that 
\begin{align*}
 \DTV{\mu_{\text{new},S}^{(i)}}{\nu_{\text{new},S}^\alpha} = \DTV{\mu^{(i)}_i}{\nu^\alpha_i}.   
\end{align*}
In words, the total variation distance between marginal distributions on $S$ projected from $\mu^{(i)}_{\text{new}}$ and $\nu^\alpha_{\text{new}}$ is the same as the total variation distance between marginal distributions on vertex $i$ projected from $\mu^{(i)}$ and $\nu^\alpha$. Note that both $\mu^{(i)}_{\text{new}}$ and $\nu^\alpha_{\text{new}}$ satisfy the uniqueness condition. \Cref{thm:many-vertex} can be verified by going through the same reduction for \Cref{thm:one-vertex}.}
{\section{Proofs of \#P-hardness results}\label{sec:hard}
In this section, we prove \Cref{thm:one-vertex} and \Cref{thm:many-vertex}. Our starting point is the \#P-hardness for exactly counting the number of independent sets in a graph.

\ifthenelse{\boolean{conf}}{\begin{proposition}[\text{\citet[Theorem 4.2]{DyerG00}}]}{\begin{proposition}[\text{\cite[Theorem 4.2]{DyerG00}}]}\label{prop:hard}
\ifthenelse{\boolean{conf}}{The following problem \#\textsf{Ind}(3) is \#P-complete.
Input: a graph $G=(V,E)$ with maximum degree $\Delta = 3$; Output: the exact number of independent sets in $G$.}{The following problem \#\textsf{Ind}(3) is \#P-complete.
\begin{itemize}
    \item Input: a graph $G=(V,E)$ with maximum degree $\Delta = 3$; 
    \item Output: the exact number of independent sets in $G$.
\end{itemize}}
\end{proposition}

The above problem is exactly computing the partition function of $(G,\lambda)$ with $\lambda_v = 1$ for all $v \in V$.
Let $n = |V|$ and $V = \{1,2,\ldots,n\}$. Let $\mu_{G,\bm{1}}$ denote the uniform distribution over all independent sets in $G$, which is the hardcore distribution in $G$ when $\lambda_v = 1$ for all $v \in V$. Define
\begin{align}\label{eq:def_pi}
    p_i = \Pr[X \sim \mu_{G,\bm{1}}]{X_i = 0 \mid \forall 1\leq j \leq i - 1, X_j = 0},
\end{align}
which is the probability that the vertex $i$ is not in a random independent set $X$ conditional on all $j < i$ not being in $X$. By definition, the total number of independent set is $Z = \frac{1}{\mu_{G,\boldsymbol{1}}(\boldsymbol{0})} = \prod_{i=1}^n \frac{1}{p_i}$.
Suppose for any $i \in [n]$, we can compute $\hat{p}_i$ such that 
\begin{align}\label{eq:tar}
   (1 - 4^{-n})p_i \leq \hat{p}_i \leq (1+4^{-n})p_i.
\end{align}
Let $\hat{Z} = \prod_{i=1}^n \frac{1}{\hat{p}_i}$ and it holds that $(1-3^{-n})Z\leq \hat{Z} \leq (1+3^{-n})Z$. Note that $Z \leq 2^n$. We have $|\hat{Z} - Z| \leq 3^{-n}Z \leq 1.5^{-n} < 0.01$. We can round $\hat{Z}$ to the nearest integer to recover $Z$. Hence, $\#\text{Ind}(3)$ can be reduced to the following high-accuracy marginal estimation problem.
\begin{problem}\label{problem:mar} The high-accuracy marginal estimation problem is defined by
\begin{itemize}
    \item Input: a graph $G=(V,E)$ with $n$ vertices and maximum degree $\Delta = 3$;
    \item Output: $n$ numbers $(\hat{p}_i)_{i \in [n]}$ such that for all $ i \in [n]$, $(1 - 4^{-n})p_i \leq \hat{p}_i \leq (1+4^{-n})p_i$.
\end{itemize}
\end{problem}

Let $k(n) = 1$ for all $n \in \mathbb{N}$ be a constant function. We show that if there is a $\mathrm{poly}(n)$ time algorithm for \Cref{label:prob-mar} if the input error bound $\epsilon = \mathrm{poly}(n)$ and both two input hardcore models satisfy the uniqueness condition, then \Cref{problem:mar} can also be solved in $\mathrm{poly}(n)$ time. \Cref{thm:one-vertex} follows from \Cref{prop:hard}.

Fix an integer $i \in [n]$. Let $G_i$ denote the induced graph $G[S_i]$, where $S_i =\{j \in [n] \mid j \geq i\}$ is the set of vertices with label at least $i$. Let $\mu^{(i)}$ denote the uniform distribution over all independent set in graph $G_i$. In other words, $\mu^{(i)}$ is the Gibbs distribution of hardcore model $(G_i,\boldsymbol{1})$. Then $p_i$ in~\eqref{eq:def_pi} is the marginal distribution on vertex $i$ projected from $\mu^{(i)}$. If the maximum degree of $G_i$ is at most $2$, then $G_i$ is a set of disconnected lines or circles and $p_i$ can be computed exactly in polynomial time. We can assume the maximum degree of $G_i$ is $3$.  

Let $\alpha \geq 0$. Define vector $\lambda^\alpha \in \mathbb{R}^{S_i}$ by
\begin{align}\label{eq:def-lambda-alpha}
    \lambda_j^\alpha = \begin{cases}
        \frac{\alpha}{1 - \alpha} &\text{if } j = i; \\
        0 &\text{if } j \neq i.
    \end{cases}
\end{align}
Let $\nu^\alpha$ denote the Gibbs distribution of $(G_i,\lambda^\alpha)$. Note that $\lambda_c(3) = 4 > 1$. The following observation is easy to verify.
\begin{observation}\label{ob:uniq}
Both $\mu^{(i)}$ and $\nu^\alpha$ satisfies the uniqueness condition in~\eqref{eq:cond-hardcore} if $\alpha \leq \frac{1}{2}$.
\end{observation}

By the definition of $\nu^\alpha$, it is easy to see $\nu^\alpha_i(+1) = \alpha$  and
\begin{align*}
    \DTV{\mu^{(i)}_i}{\nu^\alpha_i} = \vert \mu^{(i)}_i(+1) - \alpha \vert = \vert p_i - \alpha \vert.
\end{align*}
Let $\+A(\alpha)$ be the algorithm such that given $\alpha \in [0,\frac{1}{2}]$, it returns a number $\hat{d}$ such that $ \frac{d_{\text{TV}}(\mu^{(i)}_i,\nu^\alpha_i)}{1+\epsilon}\leq \hat{d} \leq (1+\epsilon)d_{\text{TV}}(\mu^{(i)}_i,\nu^\alpha_i)$, where $\epsilon = \mathrm{poly}(n)$. By \Cref{ob:uniq}, if the polynomial-time algorithm for the problem in \Cref{thm:one-vertex} exists, then $\+A(\alpha)$ runs in $\mathrm{poly}(n)$ time.
We then can use the following algorithm to solve \Cref{label:prob-mar} for $p_i$ in $\mathrm{poly}(n)$ time. Thus, the hardness result in \Cref{thm:one-vertex} follows from \Cref{prop:hard}.

\ifthenelse{\boolean{conf}}{\begin{algorithm2e}[ht]}{
\begin{algorithm}[ht]
}\label{alg:high-TV}
    \caption{Algorithm for high-accuracy marginal estimation}
    Let $\alpha \gets \frac{1}{2}$ and $\epsilon = \mathrm{poly}(n)$ be the parameter assumed by algorithm $\+A$\;
    \For{$t$ from 1 to $50n (1+\epsilon)^2$}{
        $\hat{d} \gets \+A(\alpha)$\;
        $\alpha \gets \alpha - \hat{d}/(1+\epsilon)$\;
        if the bit length of $\alpha$ is more than $100n$, then round $\alpha$ up to the nearest number that has bit length at most $100n$\;
    }
    \Return $\hat{p}_i = \alpha$.
\ifthenelse{\boolean{conf}}{\end{algorithm2e}}{
\end{algorithm}
}
The above algorithm runs in $\mathrm{poly}(n)$ time. We show that the output $\hat{p}$ satisfies \eqref{eq:tar}.

Let $\alpha_t$ be the value of $\alpha$ after the $t$-th iteration. We first show that $p_i \leq \alpha_t$ for all $t$. 
 At the beginning, $\alpha_0 = 1/2$. Since $\mu^{(i)}$ is a uniform distribution over all independent sets, we have $\mu^{(i)}_i(+1) \leq \frac{1}{2}$. By the assumption of algorithm $\+A$, $\+A(\alpha_{t})/(1+\epsilon) \leq  d_{\text{TV}}(\mu^{(i)}_i,\nu^{\alpha_t}_i) = \alpha_t - p_i$. Hence, $\alpha_{t+1} \geq \alpha_t -  \+A(\alpha_{t})/(1+\epsilon) \geq p_i $. 
 
We next bound the value of $\alpha_t - p_i$. At the beginning, $\alpha_0 = \frac{1}{2}$ so that $\alpha_0 - p_i \leq \frac{1}{2}$. Note that $\alpha_{t+1} < \alpha_t -  \frac{\+A(\alpha_{t})}{1+\epsilon}+2^{-90n} \leq \alpha_t -  \frac{\alpha_t-p_i}{(1+\epsilon)^2}+2^{-90n} $, where $2^{-90n}$ is an upper bound of rounding error. The inequality implies that 
\begin{align*}
    \alpha_{t+1} - p_i \leq \left( 1 - \frac{1}{(1+\epsilon)^2} \right)(\alpha_t - p_i) + 2^{-90n}.
\end{align*}
Note that $\hat{p}_i = \alpha_{50n(1+\epsilon)^2}$.
Solving the recurrence implies that
\begin{align*}
    0 \leq \hat{p}_i - p_i \leq \exp\left( -\frac{50n(1+\epsilon)^2}{(1+\epsilon)^2} \right)\cdot \frac{1}{2} + (1+\epsilon)^2 \cdot 2^{-90n} < 2^{-40n}.
\end{align*}
Note that $p_i$ is at least $1/2^n$. Hence, the output $\hat{p}$ satisfies \eqref{eq:tar}.

\begin{remark*}
In the above proof, while the TV-distance between the two Gibbs distributions \(\mu^{(i)}\) and \(\nu^{\alpha}\) is large (because their parameter distance is 1), the TV distance between their marginal distributions at vertex \(i\) can be arbitrarily small. This highlights the distinction between the TV-distance of marginal distributions and the TV-distance of the entire distribution.
\end{remark*}

\subsection{Hardness of approximating the TV-distance on a subset of vertices}

We now prove \Cref{thm:many-vertex}. Let $G_i$ be the graph defined as above. 
Let $n_i = n - i + 1$ denote the number of vertices in $G_i$.
One can construct a graph $G_i'$ by adding a set $\Lambda$ of $\ell$ isolated vertices to $G_i$.  Let $N = n_i + \ell$. Let $k(\cdot)$ be the function in \Cref{thm:many-vertex}. Note that $k(N) = N - \lceil N^\alpha \rceil$. Since $\alpha$ is a constant, one can set $\ell = n^{\Omega(1/\alpha)} = \mathrm{poly}(n)$ so that $k(N) \leq \ell$.
Hence, the size of $G_i'$ is a polynomial in $n$.

Let $\mu^{(i)}_{\text{new}}$ be the hardcore model on $G_i'$ such that the external fields on vertices in $G_i$ are 1 and the external fields on $\lambda$ are 0. Let $\nu^\alpha_{\text{new}}$ be the hardcore model on $G_i'$ such that the external fields on vertices in $G_i$ are $\lambda^\alpha$ and the external fields on $\Lambda$ are 0. Let $S$ be a subset of vertices containing vertex $i$ and $k(N) - 1$ vertices in $\Lambda$. It holds that 
\begin{align*}
 \DTV{\mu_{\text{new},S}^{(i)}}{\nu_{\text{new},S}^\alpha} = \DTV{\mu^{(i)}_i}{\nu^\alpha_i}.   
\end{align*}
In words, the total variation distance between marginal distributions on $S$ projected from $\mu^{(i)}_{\text{new}}$ and $\nu^\alpha_{\text{new}}$ is the same as the total variation distance between marginal distributions on vertex $i$ projected from $\mu^{(i)}$ and $\nu^\alpha$. Note that both $\mu^{(i)}_{\text{new}}$ and $\nu^\alpha_{\text{new}}$ satisfy the uniqueness condition. \Cref{thm:many-vertex} can be verified by going through the same reduction for \Cref{thm:one-vertex}.
}

\ifthenelse{\boolean{conf}}{}{\section*{Acknowledgment}
Weiming Feng and Hongyang Liu gratefully acknowledge the support of ETH Z\"urich, where part of this work was conducted.  
Weiming Feng acknowledges the support of Dr.\ Max R\"ossler, the Walter Haefner Foundation, and the ETH Z\"urich Foundation during his affiliation with ETH Z\"urich.}

\bibliographystyle{alpha}
\bibliography{refs}
\appendix
\section{NP-hardness of approximating the TV-distance} \label{app:proof}
In this section, we prove the hardness results of approximating the TV-distance between two Gibbs distributions beyond the uniqueness threshold. 
The proof is based on the technique developed in~\cite{BGMMPV24ICLR}.
We define the following instance family for TV-distance approximation.
\begin{problem}\label{prob:hardcore-tv}
Let $\Delta \geq 3$ and $\lambda > \lambda_c(\Delta) = \frac{(\Delta-1)^{\Delta-1}}{(\Delta-2)^{\Delta}}$ be two constants.
\begin{itemize}
    \item \emph{Input}: two hardcore models $(G,\lambda^\mu)$ and $(G,\lambda^\nu)$ defined on the same graph $G = (V,E)$ with maximum degree at most $\Delta$, which specifies two Gibbs distributions $\mu$ and $\nu$ respectively, and an error bound $0<\epsilon<1$. 
    There exists a vertex $v^* \in V$ such that 
    the external fields $\lambda^\mu_v = \lambda^\nu_v = \lambda$ for all $v \neq v^*$, $\lambda^\nu_{v^*} = \infty$, and $\lambda^\mu_{v^*} = \lambda$.
    \item \emph{Output}: a number $\hat{d}$ such that $|\DTV{\mu}{\nu} - \hat{d}| \leq \epsilon$.
\end{itemize}
\end{problem}

The two input hardcore models are not in the uniqueness regime because $\lambda > \lambda_c(\Delta)$.  The output only requires to approximate the TV-distance up to an additive error of $\epsilon$, which is weaker requirement than the relative-error approximation. The hardness result for this problem implies the hardness of relative-error approximation.

\begin{theorem} \label{thm:hardcore-tv}
There is no FPRAS for \Cref{prob:hardcore-tv} unless $\textbf{NP}=\textbf{RP}$.
\end{theorem}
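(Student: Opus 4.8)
The plan is to reduce from the problem of approximately counting independent sets in graphs of maximum degree $\Delta$, which is $\textbf{NP}$-hard to approximate (no FPRAS unless $\textbf{NP}=\textbf{RP}$) precisely in the non-uniqueness regime $\lambda > \lambda_c(\Delta)$, by the inapproximability results of Sly~\cite{Sly10} (and Galanis--\v{S}tefankovi\v{c}--Vigoda). Equivalently, it suffices to show that an FPRAS for \Cref{prob:hardcore-tv} yields a polynomial-time randomized algorithm that estimates the marginal probability $\mu_{v^*}(+1)$ of a single vertex in the hardcore model $(G,\lambda)$ with constant $\lambda > \lambda_c(\Delta)$ to within a small additive error; standard self-reducibility then converts this into an approximation of the partition function, contradicting \cite{Sly10}.

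The key observation is a clean identity for the instance family in \Cref{prob:hardcore-tv}. We have $\nu$ equal to $\mu$ conditioned on the event $\{\sigma_{v^*} = +1\}$ (since $\lambda^\nu_{v^*} = \infty$ pins $v^*$ to $+1$), while all other external fields agree. First I would compute $\DTV{\mu}{\nu}$ explicitly in terms of $q := \mu_{v^*}(+1)$. Writing $\mu^{+} = \mu(\cdot \mid \sigma_{v^*}=+1) = \nu$ and $\mu^{-} = \mu(\cdot\mid \sigma_{v^*}=-1)$, we have $\mu = q\,\mu^{+} + (1-q)\,\mu^{-}$, and since $\mu^{+}$ and $\mu^{-}$ have disjoint supports (they disagree on $\sigma_{v^*}$), one gets
\begin{align*}
\DTV{\mu}{\nu} &= \DTV{q\mu^{+} + (1-q)\mu^{-}}{\mu^{+}} = \tfrac12\big( (1-q)\|\mu^{+}\|_1 + (1-q)\|\mu^{-}\|_1 \big) = 1-q.
\end{align*}
Thus $\DTV{\mu}{\nu} = 1 - \mu_{v^*}(+1)$ \emph{exactly}. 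Consequently, an algorithm that outputs $\hat d$ with $|\hat d - \DTV{\mu}{\nu}| \le \epsilon$ immediately gives $\hat q := 1 - \hat d$ with $|\hat q - \mu_{v^*}(+1)| \le \epsilon$, i.e.\ an additive-$\epsilon$ estimator for the single-vertex marginal, runnable for any $\epsilon = 1/\mathrm{poly}(n)$ in polynomial time.

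The remaining step is routine: by self-reducibility of the hardcore model~\cite{JVV86}, pinning vertices one at a time and multiplying the estimated conditional marginals $\mu^{\sigma}_{v}(+1)$ (or their complements) reconstructs $1/Z \cdot \prod(\cdots)$, hence $Z = Z_{G,\lambda}$, to within a $(1\pm 1/\mathrm{poly}(n))$ multiplicative factor; one takes $\epsilon = 1/(cn)$ with a suitable constant and uses a union bound over the $n$ conditional estimates, each boosted to failure probability $o(1/n)$ by taking medians of $O(\log n)$ runs, noting the marginal lower bound in the pinned hardcore model is a constant depending only on $\Delta$ and $\lambda$ so that multiplicative and additive errors are interchangeable up to constants. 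Since every intermediate instance still has maximum degree $\le \Delta$ and all external fields are either $\lambda$ or $\infty$ (the pinned vertices), each query is a legitimate instance of \Cref{prob:hardcore-tv}. This yields an FPRAS for $Z_{G,\lambda}$ on graphs of maximum degree $\Delta$ with $\lambda > \lambda_c(\Delta)$, which is impossible unless $\textbf{NP}=\textbf{RP}$ by~\cite{Sly10}. The main thing to be careful about — though it is not really an obstacle — is bookkeeping the error propagation through the $n$-fold product in the self-reduction and ensuring the additive-to-multiplicative conversion for each conditional marginal uses the constant marginal lower bound; the conceptual content is entirely in the exact identity $\DTV{\mu}{\nu} = 1 - \mu_{v^*}(+1)$.
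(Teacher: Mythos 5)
Your proof is correct and follows essentially the same route as the paper: the identity $\DTV{\mu}{\nu} = \mu_{v^*}(-1)$ (the paper packages this as a general conditioning lemma, Lemma~\ref{lem:hardcore-tv-lemma}) combined with the self-reduction from single-vertex marginals to the partition function, and the known inapproximability of $Z$ beyond the uniqueness threshold. One small technical point to tighten: you assert that each intermediate query stays inside \Cref{prob:hardcore-tv} because ``all external fields are either $\lambda$ or $\infty$ (the pinned vertices),'' but the problem statement requires every vertex other than $v^*$ to have external field exactly the single constant $\lambda$, so additional pinned vertices carrying $\lambda = \infty$ (or $\lambda = 0$) are not permitted as stated. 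The fix, which the paper makes explicit, is to pin only to $-1$ via the telescoping $\mu(\boldsymbol{0}) = \prod_i p_i$, where $p_i$ is the conditional probability that vertex $i$ is out of the independent set given $1,\ldots,i-1$ are out; since pinning a vertex to $-1$ is equivalent to deleting it, each $p_i$ is a clean single-vertex marginal in the induced subgraph $G[\{i,\ldots,n\}]$ in which all remaining external fields equal $\lambda$, and applying your identity there yields a legitimate instance of \Cref{prob:hardcore-tv}.
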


To prove \Cref{thm:hardcore-tv}, we need the following lemma, which can be abstracted from the proofs in~\cite{BGMMPV24ICLR}.

\ifthenelse{\boolean{conf}}{\begin{lemma}[\citet{BGMMPV24ICLR}]}{\begin{lemma}[\cite{BGMMPV24ICLR}]}
\label{lem:hardcore-tv-lemma}
Let $\mu$ be a distribution over $\{\pm\}^V$.
Let $v \in V$ and $c \in \{\pm\}$ with $\mu_v(c) > 0$.
Let $\mu^{vc}$ be the distribution over $\{\pm\}^{V}$ obtained from $\mu$ by conditioning on $v$ taking value $c$. Then, $ \DTV{\mu}{\mu^{vc}} = \mu_v(-c)$.
\end{lemma}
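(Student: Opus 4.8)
The statement to prove is Lemma~\ref{lem:hardcore-tv-lemma}: for a distribution $\mu$ over $\{\pm\}^V$, a vertex $v$, and a value $c$ with $\mu_v(c) > 0$, if $\mu^{vc}$ denotes $\mu$ conditioned on $\sigma_v = c$, then $\DTV{\mu}{\mu^{vc}} = \mu_v(-c)$.

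\medskip

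The plan is to compute the total variation distance directly from the definition by splitting the configuration space $\{\pm\}^V$ according to the value on $v$. First I would observe that $\mu^{vc}$ is supported entirely on the set $\Omega_c = \{\sigma \in \{\pm\}^V : \sigma_v = c\}$, and that for $\sigma \in \Omega_c$ we have $\mu^{vc}(\sigma) = \mu(\sigma)/\mu_v(c)$, while $\mu^{vc}(\sigma) = 0$ for $\sigma_v = -c$. Since $\mu_v(c) \le 1$, on $\Omega_c$ we have $\mu^{vc}(\sigma) = \mu(\sigma)/\mu_v(c) \ge \mu(\sigma)$, so $\mu(\sigma) \le \mu^{vc}(\sigma)$ there; and on the complementary set $\sigma_v = -c$ we have $\mu(\sigma) \ge \mu^{vc}(\sigma) = 0$.

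\medskip

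Then I would use the identity $\DTV{\mu}{\mu^{vc}} = \sum_{\sigma : \mu(\sigma) > \mu^{vc}(\sigma)} (\mu(\sigma) - \mu^{vc}(\sigma))$. The set where $\mu$ strictly exceeds $\mu^{vc}$ is contained in $\{\sigma : \sigma_v = -c\}$ (on $\Omega_c$ the inequality goes the other way, up to ties). Hence
\begin{align*}
\DTV{\mu}{\mu^{vc}} = \sum_{\sigma : \sigma_v = -c} \bigl(\mu(\sigma) - 0\bigr) = \sum_{\sigma : \sigma_v = -c} \mu(\sigma) = \mu_v(-c).
\end{align*}
That is essentially the whole argument. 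A small bit of care is needed with the edge case $\mu_v(c) = 1$, in which case $\mu^{vc} = \mu$ and both sides equal $\mu_v(-c) = 0$; this is consistent with the computation above since then $\Omega_c$ carries all the mass and the sum over $\sigma_v = -c$ is empty.

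\medskip

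There is no real obstacle here — the lemma is elementary and follows from a one-line case split on the value at $v$. The only thing to be careful about is bookkeeping around whether inequalities are strict or weak when identifying the "positive part" set in the TV-distance formula, and making sure the $\mu_v(c)=1$ degenerate case is not silently excluded. I would present the proof in two or three sentences, essentially as displayed above.
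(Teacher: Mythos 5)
Your proof is correct and matches the paper's argument essentially verbatim: both establish the pointwise inequalities $\mu^{vc}(\sigma) \geq \mu(\sigma)$ on $\{\sigma_v = c\}$ and $\mu^{vc}(\sigma) = 0 \leq \mu(\sigma)$ on $\{\sigma_v = -c\}$, then read off the TV-distance as the mass on the second set. The extra remark on the degenerate case $\mu_v(c)=1$ is harmless but adds nothing beyond what the paper's one-liner already covers.
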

\begin{proof}
For any $\sigma \in \{\pm\}^V$ with $\sigma_v = c$, we have $\mu^{vc}(\sigma) \geq \mu(\sigma)$. For any $\tau \in \{\pm\}^V$ with $\tau_v =  -c$, we have $\mu^{vc}(\tau) = 0 \leq \mu(\tau)$. Therefore, $\DTV{\mu}{\mu^{vc}} = \sum_{\tau \in \{\pm\}^V:\tau_v = -c}\mu(\tau) = \mu_v(-c)$.
\end{proof}

\ifthenelse{\boolean{conf}}{\begin{proof}\textbf{of \Cref{thm:hardcore-tv}}}{\begin{proof}[Proof of \Cref{thm:hardcore-tv}]}
Let $\Delta \geq 3$ and $\lambda > \lambda_c(\Delta) = \frac{(\Delta-1)^{\Delta-1}}{(\Delta-2)^{\Delta}}$ be two constants.
By the hardness results in~\cite{SlyS12,GalanisSV16}, unless $\textbf{NP}=\textbf{RP}$, there is no FPRAS for approximating the partition function of the hardcode model $\mathbb{S}=(G,(\lambda_v)_{v \in V})$ with $\epsilon$-relative error, where $\lambda_v = \lambda$ for all $v \in V$ and $G$ has the maximum degree $\Delta$. By the standard counting-to-sampling reduction~\citep{JVV86}, approximating the partition function $Z_{\mathbb{S}}$ is equivalent to approximating the probability of $\pi(\sigma^\emptyset)$, where $\pi$ is the Gibbs distribution of $\mathbb{S}$ and $\sigma^\emptyset_v = -1$ for all $v \in V$. In other words, $\sigma^\emptyset$ corresponds to the empty set.
Let us number all the vertices in $V$ as $\{1,2,\cdots,n\}$. To approximate $\pi(\sigma^\emptyset)$, we need to approximate the probability $p_i \defeq \pi_i(-1 \mid \forall j < i, j \text{ takes value } -1)$  with relative error $O(\frac{\epsilon}{n})$. This probability is the same as the marginal probability of $i$ in the induced subgraph $G[V\setminus\{1,2,\cdots,i-1\}]$. We show how to approximate $p_1 = \pi_1(-1)$. We let $\mathbb{S}^\mu  = \mathbb{S}$ and define $\mathbb{S}^\nu$ for $\mathbb{S}$ by changing $\lambda^\nu_1$ to $\infty$. 
By \Cref{lem:hardcore-tv-lemma}, we have $\DTV{\mu}{\nu} = \mu_1(-1) = \pi_1(-1)$. It is easy to verify that $\pi_1(-1) \geq \frac{1}{1+\lambda} = \Omega(1)$ has a constant lower bound. Hence, if we can solve \Cref{prob:hardcore-tv} with $O(\epsilon/n)$-additive error, we can approximate $\pi_1(-1)$ with $O(\epsilon/n)$-relative error. The same argument can be applied to other probabilities by considering the instances in induced subgraphs. Hence, if \Cref{prob:hardcore-tv} admits an FPRAS, then there is an FPRAS for approximating the partition function of $\mathbb{S}$.
\end{proof}

For the Ising model, one can also verify the instance family stated after \Cref{Cor:Ising} is a hard instance family for approximating the TV-distance with additive error. The proof is similar to the one for the hardcore model. 
Let $\Delta \geq 3$ be a constant and $\beta < 0$ be a constant with $\exp(2\beta) < \frac{\Delta-2}{\Delta}$.
Let $\mathbb{S} = (G,J,h)$ such that $J_{uv} = \beta$ for all $\{u,v\} \in E$ and $h_v = 0$ for all $v \in V$. 
The starting point is the \text{NP}-hardness of approximating the partition function of $\mathbb{S}$~\citep{SlyS12,GalanisSV16}. 
Then one can go through the same reduction to estimate $p_i \defeq \pi_i(-1 \mid \forall j < i, j \text{ takes value } -1)$. We can construct $\mathbb{S}^\mu = (G,J,h^\mu)$ and $\mathbb{S}^\nu = (G,J,h^\nu)$ such that for all $j < i$, $h^\mu_j = h^\nu_j = -\infty$, for all $k > i$, $h^\mu_k = h^\nu_k = 0$, and $h^\mu_i = 0,h^\nu_i = \infty$. By \Cref{lem:hardcore-tv-lemma}, we have $\DTV{\mu}{\nu} = p_i$. One can verify $p_i = \Omega(1)$ so that $O(\epsilon/n)$-additive error approximation of $\DTV{\mu}{\nu}$ is equivalent to $O(\epsilon/n)$-relative error approximation of $p_i$. The hardness result follows from the same argument as in the proof of \Cref{thm:hardcore-tv}.

\section{Poincar\'e inequality for marginal distribution}\label{app:poin}
The Poincar\'e inequality for hardcore model in the uniqueness regime is established in~\cite{ChenFYZ21}. The paper considers the hardcore model $(G,\lambda)$ such that all $\lambda_v$ are the same. By verifying the technical condition in\ifthenelse{\boolean{conf}}{~\citet[Theorem 1.9]{ChenFYZ21}}{~\cite[Theorem 1.9]{ChenFYZ21}}, the following Poincar\'e inequality also holds for hardcore model with different $\lambda_v$ satisfying~\eqref{eq:cond-hardcore}. For any function $g: \Omega \to \mathbb{R}$, where $\Omega$ is the support of the distribution, we have
\begin{align*}
\Var[\mu]{g} \leq C_\eta \sum_{v \in V} \sum_{\sigma \in \Omega_{V -v}} \mu_{V - v}(\sigma) \Var[\mu^\sigma]{g}.
\end{align*}
Let $g': \Omega_B \to \mathbb{R}$ be an arbitrary function. 
Define $g(x) = g'(x_B)$ for all $x \in \Omega$. It holds that $\E[\mu]{g} =\E[\mu_B]{g'}$ and $\E[\mu]{g^2} = \E[\mu_B]{(g')^2}$. We have
\begin{align*}
    \Var[\mu_B]{g'} &= \Var[\mu]{g}  \leq C_\eta \sum_{v \in V} \sum_{\sigma \in \Omega_{V -v}} \mu_{V - v}(\sigma) \Var[\mu^\sigma]{g}\\
\text{($g(\sigma)$ depends on $\sigma_B$)}\quad    &= C_\eta \sum_{v \in B}  \sum_{\sigma \in \Omega_{V -v}} \mu_{V - v}(\sigma) \Var[\mu^\sigma]{g}.
\end{align*}
Fix a partial assignment $\tau \in \Omega_{B-v}$. Note that $\E[\mu^\tau]{g} = \E[\mu^\tau_B]{g'}$ and  $\E[\mu^\tau]{g^2} = \E[\mu^\tau_B]{(g')^2}$. We have $\Var[\mu^\tau]{g} = \Var[\mu^\tau_B]{g'}$. Let $X \sim \mu^\tau$ and $Y = g(X)$. By the law of total variance,
\begin{align*}
  \Var[\mu^\tau_B]{g'} &= \Var[\mu^\tau]{g} = \Var[]{Y}\\
   &= \E[]{\Var[]{Y \mid X_{V-v}}} + \Var[]{\E[]{Y \mid X_{V-v}}}\\
   &\geq \E[]{\Var[]{Y \mid X_{V-v}}}\\
   &= \sum_{\sigma \in \Omega^\tau_{V-v}} \mu^\tau_{V-v}(\sigma) \Var[\mu^\sigma]{g},
\end{align*}
where $\Omega^\tau_{V-v} \subseteq \{\pm\}^{V-v}$ is the support of $\mu^\tau_{V-v}$.
Combining with the above inequality, we have
\begin{align*}
    \Var[\mu_B]{g'} &\leq C_\eta \sum_{v \in B}  \sum_{\tau \in \Omega_{B -v}} \mu_{B-v}(\tau) \sum_{\sigma \in \Omega^\tau_{V-v}} \mu^\tau_{V-v}(\sigma) \Var[\mu^\sigma]{g}\\ &\leq C_\eta \sum_{v \in B}  \sum_{\tau \in \Omega_{B -v}} \mu_{B-v}(\tau) \Var[\mu^\tau_B]{g'}.
\end{align*}
This proves the Poincar\'e inequality for the marginal distribution.

Alternatively, one can also use the fact the the Poincar\'e inequality is equivalent to the decay of $\chi^2$-divergence in the down walk of Glauber dynamics. The results follows from the data processing inequality. See \ifthenelse{\boolean{conf}}{~\citet[Section 6.1]{FGJW23}}{~\cite[Section 6.1]{FGJW23}} for more details.

\end{document}